\documentclass[11pt,a4paper,final]{article}
\usepackage[utf8]{inputenc}
\usepackage[english]{babel}
\usepackage{amsmath}
\usepackage{amsfonts}
\usepackage{amssymb}
\usepackage{appendix}
\usepackage{faktor}
\usepackage{yfonts}
\usepackage{upgreek}
\usepackage{lipsum}
\usepackage[hidelinks]{hyperref}
\hypersetup{
    colorlinks=true,
    urlcolor=blue,
    linkcolor=blue,
    citecolor=red,
    }
\usepackage{amsthm}
\usepackage{bbold}
\usepackage{mathrsfs}
\usepackage{wasysym}
\usepackage{mathtools}
\usepackage{bm}
\usepackage{cancel}
\usepackage{color}
\usepackage{enumitem,multicol}
\usepackage{multicol}
\usepackage{stackrel}
\usepackage{tikz}
\usepackage{tikz-cd}
\usepackage[Glenn]{fncychap}
\usepackage{subfig}
\usepackage{accents}
\usepackage{slashed}
\usepackage{psfrag}
\usepackage{todonotes}

\usepackage{stackengine}
	
\usetikzlibrary{babel}
\makeatletter
\newcommand*\owedge{\mathpalette\@owedge\relax}
\newcommand*\@owedge[1]{%
	\mathbin{%
		\ooalign{%
			$#1\m@th\bigcirc$\cr
			\hidewidth$#1\m@th\wedge$\hidewidth\cr
		}%
	}%
}
\makeatother

\makeatother
\makeatletter
\newcommand{\dotr}[1]{%
	\mathpalette\@dotr{#1}%
}
\newcommand*{\@dotr}[2]{%
	\sbox0{$\m@th#1#2$}%
	\usebox{0}%
	\raisebox{\dimexpr\ht0-\height}{$\m@th#1\@smallbullet#1\bullet$}%
	\kern\scriptspace
}
\newcommand*{\@smallbullet}[2]{%
	\scalebox{.25}{$\m@th#1#2$}%
}
\makeatother

\newcounter{mnotecount}

\newcommand{\mnote}[1]
{\protect{\stepcounter{mnotecount}}$^{\mbox{\tiny
						$\,\bullet$\themnotecount}}$ \marginpar{\hspace{-3mm}\color{red}
				\raggedright\tiny\em
				$\,\bullet$\themnotecount: #1} }

\newtheorem{teo}{Theorem}[section]
\newtheorem{cor}[teo]{Corollary}
\newtheorem{prop}[teo]{Proposition}
\newtheorem{lema}[teo]{Lemma}
\newtheorem{defi}[teo]{Definition}

\newtheorem{rmk}[teo]{Remark}
\newtheorem{con}[teo]{Conjecture}
\newtheorem{nota}[teo]{Notation}

\usepackage{makeidx}
\usepackage{graphicx}
\usepackage[left=2.50cm, right=2.50cm, top=2.45cm, bottom=2.50cm]{geometry}

\usepackage{scalerel}
\usepackage{stackengine,wasysym}

\newcommand{\uwh}[1]{%
	\mathpalette\douwidehat{#1}%
}

\makeatletter
\newcommand{\douwidehat}[2]{%
	\sbox0{$\m@th#1\widehat{\hphantom{#2}}$}%
	\sbox2{$\m@th#1x$}
	\sbox4{$\m@th#1#2$}
	\dimen0=\ht0
	\advance\dimen0 -.8\ht2
	\dimen2=\dp4
	\rlap{%
		\raisebox{\dimexpr\dimen0-\dimen2}{%
			\scalebox{1}[-1]{\box0}%
		}%
	}%
	{#2}%
}
\makeatother

\usepackage{color}
    \makeatletter
\renewcommand\part{%
	\if@openright
	\cleardoublepage
	\else
	\clearpage
	\fi
	\thispagestyle{empty}%
	\if@twocolumn
	\onecolumn
	\@tempswatrue
	\else
	\@tempswafalse
	\fi
	\null\vfil
	\secdef\@part\@spart}
\makeatother

\newcommand{\scri}{\mathscr{I}}

\newcommand{\wt}{\widetilde}
\newcommand{\wh}{\widehat}
\newcommand{\sch}{\operatorname{Sch}}

\newcommand{\scal}{\operatorname{Scal}}

\newcommand{\ol}{\overline}
\newcommand{\elltwo}{\ell^{(2)}}

\newcommand{\nablacero}{\accentset{\circ}{\nabla}}

\newcommand{\Id}{\operatorname{Id}}

\newcommand{\bY}{\textup{\textbf Y}}

\newcommand{\Y}{\textup{Y}}
\newcommand{\bcY}{\mathbb Y}
\newcommand{\bcr}{\mathbb r}
\newcommand{\bck}{\mathbb k}
\newcommand{\Q}{\mathcal{Q}}
\renewcommand{\L}{\mathcal{L}}

\newcommand{\bT}{\textup{\textbf T}}
\newcommand{\T}{\textup{T}}

\newcommand{\bU}{\textup{\textbf U}}
\newcommand{\U}{\textup{U}}

\newcommand{\bF}{\textup{\textbf F}}
\newcommand{\F}{\textup{F}}
\newcommand{\br}{\textup{\textbf r}}
\renewcommand{\r}{\textup{r}}

\renewcommand{\H}{\mc{H}}

\newcommand{\bs}{\textup{\textbf s}}
\newcommand{\s}{\textup{s}}

\newcommand{\bg}{\bm{\gamma}}

\newcommand{\mf}{\mathfrak}
\newcommand{\real}{\mathbb R}
\newcommand{\tr}{\operatorname{tr}}

\renewcommand{\div}{\operatorname{div}}

\newcommand{\st}{\stackrel}
\newcommand{\stb}{\stackbin}
\newcommand{\hess}{\operatorname{Hess }}

\newcommand{\para}{\parallel}
\newcommand{\Rcero}{\accentset{\circ}{R}}
\newcommand{\sph}{\mathbb S}
\newcommand{\eps}{\varepsilon}
\newcommand{\X}{\mathfrak{X}}
\renewcommand{\d}{\coloneqq}

\newcommand{\ric}{\operatorname{Ric}}

\newcommand{\lie}{\pounds}
\newcommand{\mc}{\mathcal}
\newcommand{\sto}{\rightarrow}
\renewcommand{\to}{\longrightarrow}

\title{\vspace*{-1.35cm}\textbf{Transverse expansion of the metric at null infinity}}
\author{Marc Mars\footnote{\href{mailto:marc@usal.es}{marc@usal.es}}\,\, and Gabriel Sánchez-Pérez\footnote{\href{mailto:gasape21@usal.es}{gasape21@usal.es}} \\
	Departamento de Física Fundamental, Universidad de Salamanca\\
	Plaza de la Merced s/n, 37008 Salamanca, Spain}
\date{\today}

\begin{document}
	\maketitle
	
\begin{abstract}
In this paper we analyze the conformal Einstein equations to all orders at null infinity without imposing any restriction on the spacetime dimension, the topology of $\scri$, or fall-off conditions for the Weyl tensor. In particular, we study how the equations constrain the geometry of null infinity when it is assumed to be foliated by cross-sections, not necessarily spheres. Our approach is coordinate-free and treats the conformal factor $\Omega$ as a dynamical variable. After identifying the free data at $\scri$, we show that any two asymptotically flat spacetimes sharing the same free data at null infinity are necessarily isometric to infinite order. In addition, we provide a detached definition of null infinity and prove an existence theorem for asymptotically flat spacetimes solving the field equations to infinite order at $\scri$ realizing the prescribed initial data. 
\end{abstract}

\section{Introduction}

The study of gravitational radiation is one of the central problems in mathematical relativity. It is also technically difficult because the metric that defines fall-off at infinity is itself a dynamical variable on which boundary conditions must be imposed. This difficulty is most naturally addressed within Penrose’s conformal framework \cite{Penrose1964LightCone,penrose1965zero}, where asymptotic flatness and radiation are encoded in the geometry of a null boundary $\scri$ attached to an unphysical spacetime $(\mc M,g,\Omega)$, allowing infinity to be treated as a genuine geometric object. Within this setting, several notions of asymptotic flatness have been proposed (see e.g. \cite{Penrose1968Structure,geroch1978asymptotically}), typically requiring the fulfillment of Einstein field equations in a neighbourhood of $\scri$, as well as some global assumptions such as (weak) asymptotic simplicity \cite{Penrose1968Structure}. In four spacetime dimensions, this forces null infinity to have topology $\real\times\sph^2$ \cite{newman1989global}, which in turn implies the vanishing of the Weyl tensor at $\scri$ \cite{kroon2017conformal,fernandez2022asymptotic}. As a consequence, most standard definitions of asymptotic flatness and radiation, as well as the associated notions of news \cite{geroch1977asymptotic,geroch1977asymptotic,fernandez2025news}, are inherently tied to spherical topology and four dimensions, and rely on the assumption that the rescaled Weyl tensor vanishes at null infinity. However, smooth solutions of the Einstein equations with non-spherical $\scri$ are known to exist, already in four dimensions, and the Weyl tensor need not vanish there \cite{schmidt1996vacuum}.\\

 While given an already compactified spacetime it is relatively straightforward to analyze its geometry induced at null infinity, the converse problem, namely constructing an asymptotically flat spacetime from prescribed data at $\scri$, is far more delicate and relies on a regular conformal formulation of the field equations \cite{friedrich1981asymptotic,friedrich1981regular}. Current existence results for the asymptotic characteristic problem \cite{Rendall,kannar1996existence,hilditch2020improved} are restricted to four dimensions and assume spherical topology and vanishing Weyl tensor at $\scri$. For the asymptotic hyperboloidal problem \cite{friedrich1983cauchy,friedrich1986existence,andersson1992regularity}, where spherical topology is likewise imposed, generic initial data lead to a non-vanishing Weyl tensor at null infinity, which is in turn associated with the appearance of logarithmic terms in the asymptotic expansion that spoil the smoothness of $\scri$ \cite{andersson1994hyperboloidal}. This behavior is consistent with other works on non-smooth null infinity \cite{winicour1985logarithmic,chrusciel1995gravitational,valiente2002polyhomogeneous,valiente2004new,kehrberger2022case}. \\

Motivated by the fact that comparatively fewer results are available in higher dimensions and non-spherical topologies, the aim of this paper is to analyze conformal null infinity in full generality, without assuming any restriction on the spacetime dimension, the topology of $\scri$ beyond admitting a foliation by cross-sections, or the vanishing of the Weyl tensor at infinity. More precisely, given a conformal manifold admitting a null infinity, we study how the geometry of $\scri$ is constrained by the requirement of asymptotic flatness, understood here as the validity of the conformal Einstein equations order by order at infinity. Analyzing these equations allows us to identify the free data at $\scri$, regardless of its dimension and the topology of its cross-sections. Our approach is entirely coordinate-free and treats the conformal factor $\Omega$ as a dynamical variable rather than as a fixed background quantity. Once the free data have been isolated, we prove that any two asymptotically flat spacetimes sharing the same free data at $\scri$ are necessarily isometric to infinite order. We also provide a detached definition of null infinity and prove that, given such free data, there exists an asymptotically flat spacetime in which the prescribed null hypersurface arises as its null infinity.\\

In order to study detached null hypersurfaces, it is convenient to employ the hypersurface data formalism developed in \cite{Marc3,Marc1,Marc2}. The basic building block of this formalism is the notion of \emph{metric hypersurface data} $\{\mc H,\bg,\bm\ell,\elltwo\}$, where $\bg$ is a symmetric tensor with one degenerate direction, $\bm\ell$ is a one-form, and $\elltwo$ is a scalar function on $\mc H$. When $\mc H$ is embedded in an ambient manifold, the collection $\{\bg,\bm\ell,\elltwo\}$ encodes respectively the fully tangent, tangent-transverse, and fully transverse components of the ambient metric $g$ at $\mc H$. Metric hypersurface data uniquely determine contravariant data $\{P,n\}$, together with a torsion-free connection $\nablacero$ on $\mc H$. The vector field $n$ spans the kernel of $\bg$, i.e.~$\bg(n,\cdot)=0$.\\

While metric hypersurface data capture only zeroth-order information of the ambient metric, it is natural to consider higher-order derivatives at $\mc H$. Let $2\bY^{(k)}$ denote the $k$-th Lie derivative of $g$ along an arbitrary transverse vector field $\xi$. We refer to the collection $\{\bY^{(k)}\}_{k\ge1}$ as the \emph{transverse} or \emph{asymptotic expansion}. Informally, this expansion allows one to reconstruct the ambient metric order by order in the transverse direction. In a previous work \cite{Mio4}, we made this statement precise by proving that, given null metric hypersurface data $\{\mc H,\bg,\bm\ell,\elltwo\}$ and a prescribed collection of tensors $\{\bcY^{(k)}\}_{k\ge1}$ on $\mc H$, there exists an ambient manifold $(\mc M,g)$ such that $\bY^{(k)}=\bcY^{(k)}$ for all $k\ge1$. A priori, the resulting spacetime need not satisfy any field equations. In order to characterize those collections $\{\bcY^{(k)}\}_{k\ge1}$ for which $(\mc M,g)$ satisfies prescribed equations, such as the Einstein equations, we derived in \cite{Mio3} a system of identities relating the transverse derivatives of the ambient Ricci tensor at $\mc H$ to the tensors $\{\bY^{(k)}\}_{k\ge1}$.\\

The identities obtained in \cite{Mio3,Mio4} are well adapted to the Einstein equations in the bulk, but not to the study of null infinity. The reason is that the Einstein equations are naturally expressed in terms of the Ricci tensor, whereas the conformal Einstein equations involve additional structures. Consequently, the first step of the present analysis is to rewrite these identities in a form suitable for the conformal field equations. The resulting identities depend not only on the tensors $\{\bY^{(k)}\}_{k\ge1}$, but also on the transverse derivatives of the conformal factor $\Omega$ at $\scri$, which we denote by $\{\sigma^{(k)}\}_{k\ge1}$. \\

Once these identities have been established, we consider a conformal manifold $(\mc M,g,\Omega)$ and fix\footnote{In order to compute the transverse expansion at $\scri$, it is necessary to work with a specific representative of the conformal class, and hence to fix a conformal gauge.} the conformal gauge by imposing $|\nabla\Omega|^2=0$, which we refer to as a \emph{conformal geodesic gauge}. As shown in \cite{Mio6}, this gauge is uniquely determined once the value of $\Omega$ is prescribed on a hypersurface transverse to $\scri$. Analyzing the conformal Einstein equations order by order in this gauge leads to the following conclusions:

\begin{enumerate}
	\item At each order, the scalars $P^{ab}\Y^{(k)}_{ab}$, $\bY^{(k+1)}(n,n)$, and $\sigma^{(k+1)}$ satisfy a system of equations on $\scri$ which, except for one value of $k$, can be solved to determine these quantities in terms of lower-order data and the value of $\sigma^{(k+1)}$ on a chosen cross-section $\Sigma\hookrightarrow \scri$. This initial condition codifies the residual conformal freedom within the conformal geodesic gauge. For a very specific value of $k=m_1$, however, the system fails to be invertible, and instead we use another conformal equation, the so-called higher order Raychaudhuri equation, that allows us to determine $P^{ab}\Y^{(m_1)}_{ab}$, $\bY^{(m_1+1)}(n,n)$, and $\sigma^{(m_1+1)}$ provided an additional free function $\mf m$ on $\Sigma$ is prescribed. As pointed out recently in \cite{ciambelli2025asymptotic}, the null Raychaudhuri constraint gives rise to the Bondi mass-loss formula in four dimensions. Our result suggests that a similar conclusion holds in higher dimensions. 
	\item Once $P^{ab}\Y^{(k)}_{ab}$, $\bY^{(k+1)}(n,n)$, and $\sigma^{(k+1)}$ have been fixed, the one-form $\bY^{(k)}(n,\cdot)$ is completely determined by the conformal equations, except for a specific value of $k=m_2$ that we discuss later, at which an additional one-form $\bm\beta$ on $\Sigma$ must be prescribed as free data in order to determine $\bY^{(m_2)}(n,\cdot)$.
	\item Finally, after determining $P^{ab}\Y^{(k)}_{ab}$, $\bY^{(k+1)}(n,n)$, $\sigma^{(k+1)}$, and $\bY^{(k)}(n,\cdot)$, the remaining components of $\bY^{(k)}$ generically satisfy a transport equation along $n$, which can be integrated starting from an initial symmetric trace-free tensor $\mc Y^{(k)}$ on $\Sigma$. For even-dimensional spacetimes and for a specific value of $k=m_3$, however, $\bY^{(m_3)}$ satisfies no transport equation at all, and the components of $\bY^{(m_3)}$ not encoded in $P^{ab}\Y^{(m_3)}_{ab}$ or $\bY^{(m_3)}(n,\cdot)$ must be prescribed as additional free data.
\end{enumerate}
The number of degrees of freedom in the geometric and detached approach that we present agrees with the analysis in Bondi coordinates in dimension four \cite{sachs,compere2019lambda} and also in higher even dimensions \cite{capone2023phase,riello2024renormalization}. Furthermore, the fact that the tensor $\bY^{(m)}$ is only freely specifiable in even dimensions enforces the idea already pointed out in \cite{hollands2004conformal} that there are not smooth radiating odd dimensional spacetimes. This is why most of the definitions of asymptotic flatness in higher dimensions restrict to even dimensional spacetimes \cite{hollands2005asymptotic,tanabe2011asymptotic}.\\

The exceptional cases appearing in the second and third items give rise to two potential obstructions, analogous in spirit to the Fefferman--Graham obstruction tensor. These obstructions have already appeared in the literature in specific situations with other names (see e.g. \cite{riello2024renormalization} where they are denoted as the ``Coulombian'' and ``radiative'' anomalies). Let us see in more detail why these obstructions appear. The equation determining the one-form $\br^{(k)}\d \bY^{(k)}(n,\cdot)$ at each order takes the schematic form
\begin{equation}
	\label{eqintro1}
	\lie_n \br^{(k)} + \text{lower-order terms} = 0 .
\end{equation}
The initial condition for this transport equation is obtained from another conformal equation, which reads
\begin{equation}
	\label{eqintro2}
	(\mf n - k)\,\br^{(k)}|_{\Sigma} + \text{lower-order terms} = 0 ,
\end{equation}
where $\mf n$ denotes the dimension of $\scri$. For $k\neq \mf n$, equation \eqref{eqintro2} uniquely determines $\br^{(k)}|_{\Sigma}$, which can then be used as initial data to integrate \eqref{eqintro1}. For $k=\mf n$, however, an initial condition for $\br^{(\mf n)}$ must be prescribed, which motivates the introduction of the free one-form $\bm\beta$. Moreover, if the lower-order terms in \eqref{eqintro2} do not vanish identically, the resulting spacetime cannot be smooth beyond this order. In this sense, equation \eqref{eqintro2} defines an obstruction tensor, which we refer to as the \emph{Coulombian obstruction tensor} and denote by $\mc O^{\Sigma}$. In Section~\ref{subsec_obs} we derive a necessary and sufficient condition for the vanishing of this obstruction in four spacetime dimensions and relate it to the vanishing of the Weyl tensor at $\scri$.\\

Concerning item~3, the recursive determination of $\bY^{(k)}$ is governed by another transport equation of the form
\begin{equation}
	\label{eqintro3}
	\left(\dfrac{\mf n-1}{2}-k\right)\lie_n \bY^{(k)} + \text{lower-order terms} = 0 .
\end{equation}
For $k\neq \frac{\mf n-1}{2}$, the tensor $\bY^{(k)}$ can be determined from an initial condition $\mc Y^{(k)}$ on $\Sigma$. When $k=\frac{\mf n-1}{2}$ (which occurs only when $\mf n$ is odd), $\bY^{(k)}$ satisfies no transport equation at all, and the components of $\bY^{(k)}$ not encoded in $P^{ab}\Y^{(k)}_{ab}$ or $\bY^{(k)}(n,\cdot)$ must be prescribed as free data. Furthermore, the remainder term in \eqref{eqintro3} defines another obstruction tensor, which we call the \emph{radiative obstruction tensor} and denote by $\mc O^{\scri}$. If this tensor is not identically zero, the conformal equations cannot be satisfied beyond this order. In Section~\ref{subsec_obs} we show that this obstruction vanishes identically in four spacetime dimensions and that, in six dimensions, it is closely related to the Fefferman--Graham obstruction tensor. We also conjecture that this behaviour also emerges in higher dimensions (Conjecture \ref{conjecture}).\\

Once the free data $\mc D$ have been identified, we prove a uniqueness theorem (Theorem~\ref{uniqueness}), showing that any two asymptotically flat spacetimes sharing the same free data at $\scri$ are necessarily isometric to infinite order at their respective $\scri$. Our notion of asymptotic flatness is less restrictive than others commonly adopted in the literature such as the ones described above, as it only requires the conformal field equations to be satisfied to infinite order at $\scri$. An informal version of the uniqueness result is as follows.
\begin{teo}[Informal version, see Theorem~\ref{uniqueness}]
	Let $(\mc M,g,\Omega)$ and $(\mc M',g',\Omega')$ be two asymptotically flat spacetimes both written in a conformal geodesic gauge. Suppose that their respective null infinities have the same free data $\mc D$. Then $(\mc M,g,\Omega)$ and $(\mc M',g',\Omega')$ are isometric to infinite order at $\scri$.
\end{teo}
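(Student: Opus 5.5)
The plan is an induction on the transverse order $k$. We show that all the quantities $\{\bY^{(k)},\sigma^{(k)}\}_{k\ge1}$ at $\scri$ are determined by the free data $\mc D$ together with the metric hypersurface data. We then invoke the reconstruction result of \cite{Mio4}: two ambient manifolds with the same metric hypersurface data and the same transverse expansion are isometric to infinite order.

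First we align the two spacetimes. Using the free data, we identify $\scri$ and $\scri'$ by a diffeomorphism $\phi$ that maps the chosen cross-section $\Sigma$ to $\Sigma'$ and pulls back $\{\bg',\bm\ell',\elltwo'\}$ to $\{\bg,\bm\ell,\elltwo\}$. Both spacetimes are in a conformal geodesic gauge. We choose as transverse vector the one canonically associated with the gauge, e.g.\ $\xi=\nabla\Omega$ suitably normalised, or the rigging fixed by $\bm\ell$, so that $\phi$ intertwines the riggings. Then $\Omega|_\scri=0$ and $\sigma^{(1)}$ coincide on both sides. The residual conformal freedom is encoded in $\sigma^{(k+1)}|_\Sigma$, which is part of $\mc D$, so these values coincide as well.

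The inductive step uses the order-by-order analysis described in items 1–3. Assume that $\bY^{(j)}=\phi^\star\bY'^{(j)}$ and $\sigma^{(j+1)}=\phi^\star\sigma'^{(j+1)}$ for all $j<k$. At order $k$ we proceed in three stages.
\begin{enumerate}
\item The scalars $P^{ab}\Y^{(k)}_{ab}$, $\bY^{(k+1)}(n,n)$ and $\sigma^{(k+1)}$ satisfy a linear system along $n$ whose coefficients and sources depend only on lower-order quantities. Its solution is unique once $\sigma^{(k+1)}|_\Sigma$ is given. At $k=m_1$ we use the higher-order Raychaudhuri equation instead, with the datum $\mf m$.
\item The one-form $\br^{(k)}$ obeys \eqref{eqintro1}. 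Its initial value is fixed by \eqref{eqintro2} when $k\neq\mf n$, and by $\bm\beta$ when $k=\mf n$.
\item The trace-free tangential remainder obeys \eqref{eqintro3}. Its initial value is $\mc Y^{(k)}$, except when $k=\frac{\mf n-1}{2}$, where the whole tensor is free data.
\end{enumerate}
In each stage both spacetimes satisfy the same linear ODE along the generators with the same initial data on $\Sigma$. Uniqueness of solutions of linear transport equations along $n$ (each generator meets $\Sigma$ once) then gives equality. Combining the three stages, $\bY^{(k)}$ agrees in all components, and so does $\sigma^{(k+1)}$. Here we use that $\bY^{(k)}$ is reconstructed from its $n$–$n$, $n$–tangent, trace and trace-free parts, which follows from the decomposition via $P$ and $n$.

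Having equal transverse expansions to all orders, we apply the isometry-to-infinite-order statement of \cite{Mio4}. Alternatively, we construct the map in Gaussian-type coordinates along $\xi$ and check that $\lie_\xi^k(g-\Phi^\star g')|_\scri=0$ and $\lie_\xi^k(\Omega-\Phi^\star\Omega')|_\scri=0$ for all $k$.

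The main obstacle is the bookkeeping showing that, at each order, every source term depends only on already-determined lower-order quantities and that the equations close in the stated order. This matters at the exceptional orders $m_1$, $m_2=\mf n$ and $m_3$, where the usual determining equation degenerates. There one must check that the free datum (respectively $\mf m$, $\bm\beta$, and the free part of $\bY^{(m_3)}$) exactly fills the gap. One must also check that the gauge-fixing of $\xi$ is compatible on both sides. I would handle this by citing the order-$k$ propositions separately for generic and exceptional $k$.
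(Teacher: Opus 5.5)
Your route is essentially the paper's. You determine $\{\bY^{(k)},\sigma^{(k)}\}$ order by order from the reduced equations of Remark~\ref{elremarkmaslargodelmundo}, observe that both spacetimes obey identical linear transport equations along $n$ with identical data on $\Sigma$ (with free data filling the exceptional orders), and conclude with Proposition~\ref{teo_iso0}. That proposition is the isometry result you need; it comes from \cite{Mio3}, while \cite{Mio4} supplies the existence result, Theorem~\ref{borel}.

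The one structural difference is how the residual conformal freedom is handled. The paper does not require the values $\sigma^{(k)}|_\Sigma$ to agree. It sets $\omega\d\Omega^{-1}\Psi^\star\Omega'$, so that $\omega\Omega=\Psi^\star\Omega'$ and all transverse derivatives of the two conformal factors coincide automatically. It then states the remaining hypotheses for the rescaled data $\{\varpi^2\bg,\varpi^2\bm\ell,\varpi^2\elltwo\}$ and $\ol\bY^{(k)}$, and concludes $\Psi^\star\lie_{\xi'}^{(k)}g'=\lie_{\xi}^{(k)}(\omega^2 g)$ at $\scri$, i.e.\ the physical metrics agree whatever residual gauge was chosen on each side. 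You instead put $\sigma^{(k)}|_\Sigma$ into $\mc D$ and compare $(g,\Omega)$ with $(g',\Omega')$ directly. This costs more hypotheses and gives a stronger conclusion (the unphysical triples themselves agree). In exchange, both triples stay in the conformal geodesic gauge with geodesic rigging, which is exactly the setting of the formulas of Section~\ref{section_transverse}. For $(\omega^2 g,\omega\Omega)$ with rigging $\xi$, neither $|\nabla(\omega\Omega)|^2_{\omega^2 g}=0$ nor the $\omega^2 g$-geodesic character of $\xi$ is automatic.

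Four steps of your write-up need repair.

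(i) $\nabla\Omega$ is null and tangent to $\scri$, so it cannot serve as a rigging. Your other option is correct: the embedding together with $\bm\ell$ and $\elltwo$ fixes $\xi$ at $\scri$ uniquely, and Proposition~\ref{prop_diffeo} then gives $\Psi$ with $\Psi_\star\xi=\xi'$.

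(ii) Agreement of $\sigma^{(1)}$ does not follow from intertwining the riggings. Take $(g,\Omega)$ and $(g,c\Omega)$ with a constant $c\neq\pm1$: both are asymptotically flat in conformal geodesic gauge, with the same metric hypersurface data, riggings and transverse expansion, yet $\sigma'=c\sigma$. So $\sigma$ must be imposed as data; it is part of the $\scri$-structure data.

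(iii) Your stages 1 and 2 are in the wrong order. $\dot\L^{(k)}$ contains $\div_P\br^{(k)}$ and $P(\cdot,\br^{(k)})$ terms (cf.\ \eqref{dLmn}, \eqref{dotLm0}), so the scalar system at order $k$ is sourced by $\br^{(k)}$, not only by lower orders. Treat the one-form first. This works because \eqref{Lma} and \eqref{dotQam} at order $k$ involve only $\br^{(k)}$, $\kappa^{(k)}$, $\sigma^{(k)}$ and lower orders; it is the order used in Remark~\ref{elremarkmaslargodelmundo} and in the proof of Theorem~\ref{teorema}. Accordingly, add $\kappa^{(k)}=-\bY^{(k)}(n,n)$, obtained at order $k-1$, to your inductive hypothesis.

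(iv) At $k=1$ the scalar datum is not $\sigma^{(2)}|_\Sigma$, since $\ddot\Q^{(1)}=0$ forces $\sigma^{(2)}\equiv0$. Instead, $f^{(3)}=0$ gives $\kappa^{(2)}$ algebraically, and $\dot\L^{(1)}=0$ transports $\tr_P\bY$ from $\chi=\tr_P\bY|_\Sigma$. This datum must therefore be among the matched data, as in hypothesis 2 of Theorem~\ref{uniqueness}.

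With these corrections the induction closes as you describe.
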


Having established that the free data fully characterize the geometry at null infinity, we prove the converse statement: given such free data on an abstract null hypersurface, together with the zeroth-order data that we call $\scri$-\emph{structure data} (see Definition~\ref{universal}), there exists an asymptotically flat conformal spacetime realizing them (Theorem~\ref{teorema}). An informal formulation of this existence theorem is the following.
\begin{teo}[Informal version, see Theorem~\ref{teorema}]
	Given $\scri$-structure data and free data $\mc D$ such that the radiative and Coulombian obstruction tensors vanish, there exists an asymptotically flat spacetime realizing these data.
\end{teo}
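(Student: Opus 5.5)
The plan is to reduce the existence statement to the abstract realisation theorem of \cite{Mio4}. That theorem produces, from null metric hypersurface data $\{\scri,\bg,\bm\ell,\elltwo\}$ and any prescribed collection $\{\bcY^{(k)}\}_{k\ge1}$, an ambient manifold $(\mc M,g)$ with $\bY^{(k)}=\bcY^{(k)}$. Its one-dimensional scalar analogue (a Borel-type construction) likewise produces a function $\Omega$ with prescribed transverse derivatives $\sigma^{(k)}$ along the transverse vector $\xi$. The task is therefore to construct, recursively in $k$, a full collection $\{\bcY^{(k)},\sigma^{(k+1)}\}$ on the abstract $\scri$ such that the conformal Einstein equations, together with the gauge condition $|\nabla\Omega|^2=0$, hold to every order.

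The $\scri$-structure data are used to fix the zeroth-order quantities: the metric hypersurface data, $\sigma^{(0)}=0$, $\sigma^{(1)}$, and the degenerate metric with the correct null generator $n$. I would begin by checking that the lowest-order conformal equations (those forcing $\scri$ to be null and shear-free) are already built into Definition~\ref{universal}. The induction then follows the three-step decomposition described in the introduction. At order $k$, the identities relating transverse derivatives of the conformal field equations to $\{\bY^{(j)},\sigma^{(j)}\}$ split into the following pieces.
\begin{enumerate}
\item The scalar system for $P^{ab}\Y^{(k)}_{ab}$, $\bY^{(k+1)}(n,n)$ and $\sigma^{(k+1)}$. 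I would solve it using the initial value of $\sigma^{(k+1)}$ on $\Sigma$ from $\mc D$. At $k=m_1$ I would instead use the higher order Raychaudhuri equation with the free function $\mf m$.
\item The transport equation \eqref{eqintro1} for $\br^{(k)}$, with initial data from \eqref{eqintro2}. At $k=\mf n$ the initial data is $\bm\beta$, which is consistent because $\mc O^\Sigma=0$.
\item The transport equation \eqref{eqintro3} for the trace-free transverse part, integrated from $\mc Y^{(k)}$. At $k=(\mf n-1)/2$ these components are prescribed directly, which is consistent because $\mc O^\scri=0$.
\end{enumerate}
Assembling the pieces defines $\bcY^{(k)}$ and $\sigma^{(k+1)}$. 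The realisation theorem then gives $(\mc M,g,\Omega)$.

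The main obstacle is showing that the equations used in this construction imply all the remaining ones. The components of the conformal equations that were not used to determine the unknowns (the remaining constraint-type components, the Raychaudhuri equation at orders $k\neq m_1$, and the equation \eqref{eqintro2} away from $\Sigma$) must hold automatically. My first approach would be a propagation argument. Using the contracted Bianchi identities for the conformal Einstein system, and the integrability conditions the conformal equations satisfy as a consequence of their own structure, I would show that the unused components at order $k$ satisfy homogeneous linear transport equations along $n$, with coefficients depending only on lower-order data. Since they vanish on $\Sigma$ by construction, they vanish on all of $\scri$. At the exceptional orders I would check that the corresponding propagated quantity is exactly the obstruction tensor, or the equation whose coefficient degenerates, so that its vanishing is precisely the hypothesis.

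Finally, I would verify that the gauge $|\nabla\Omega|^2=0$ holds to infinite order. This should follow because the recursion for $\sigma^{(k+1)}$ was taken from the gauge equation itself. Together with the previous step, this yields an asymptotically flat spacetime realising $\mc D$. By Theorem~\ref{uniqueness} it is moreover unique to infinite order, which serves as a consistency check on the count of free data.
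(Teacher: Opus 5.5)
Your skeleton matches the paper's: recursive construction of $\{\bcY^{(k)},\sigma^{(k)}\}$, realisation by Theorem~\ref{borel} and Borel's lemma, free data and vanishing obstructions at the exceptional orders, and redundancy from the Bianchi identities. As written, however, the recursion does not close.

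\emph{Ordering and double determination.} The dependencies run opposite to your numbering. The lower-order terms of the scalar system $\ddot\Q^{(k)}=\dot\L^{(k)}=f^{(k+2)}=0$ contain $\br^{(k)}$ and $\div_P\br^{(k)}$ (cf.\ \eqref{dLmn}). So at order $k$ you must first build $\br^{(k)}$ from $\iota^{\star}\dot\Q^{(k)}=0$ and $\L^{(k)}_a=0$, using the $\sigma^{(k)}$ and $\kappa^{(k)}$ produced by the scalar system at order $k-1$, and only then solve the scalar system. Hence $\kappa^{(k)}=-\bY^{(k)}(n,n)$ is fixed twice: by the scalar system at order $k-1$, and as minus the $n$-component of the solution of \eqref{eqintro1}. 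Nothing in your plan shows these agree. This is the first of the three consistency claims in the paper's proof. Comparing the $n$-component \eqref{propagationckn} of the transport equation with $\L^{(m+1)}_an^a=0$, which item~1 of Corollary~\ref{corolariopff} supplies, gives a homogeneous ODE for $\bm c^{(m+1)}(n)+\mf c^{(m+1)}$ with zero data on $\Sigma$. Alternatively, integrate only $\wh{\br}^{(k)}$ and set $\br^{(k)}(n)\d-\kappa^{(k)}$. Then $\L^{(k)}_an^a=0$ becomes a redundant equation: it follows algebraically from item~1 of Proposition~\ref{prop_Bianchi} for $k\neq\mf n$, and for $k=\mf n$ it is the Raychaudhuri equation you imposed at order $\mf n-1$.

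\emph{The gauge at order $\mf n+1$.} Your last step fails at $k=\mf n-1$. There the Raychaudhuri equation replaces $f^{(\mf n+1)}=0$, of which only $f^{(\mf n+1)}|_{\Sigma}=0$ is kept. So $|\nabla\Omega|^2=0$ at that order is not produced by the construction. It must be recovered from the homogeneous transport law for $f^{(\mf n+1)}$ in item~3 of Proposition~\ref{prop_Bianchi}.

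\emph{How the Bianchi identities are applied.} Lemmas~\ref{lema_bianchi1} and~\ref{lema_bianchi2} are identities for an actual pair $(g,\Omega)$. They involve components one order above the tensor being built (e.g.\ $\Q^{(k+1)}_{ab}n^b$ and $\kappa^{(k+1)}$ when building $\bY^{(k)}$), whereas mid-recursion you only have partially built abstract data. The paper resolves this by realising, at every order, an auxiliary spacetime via Theorem~\ref{borel} from $\bT^{(k)}=\bcY^{(k)}$ ($k\le m$), from $\bT^{(m+1)}$ assembled out of the provisional $t^{(m+1)},\bm c^{(m+1)},\mf c^{(m+1)},\wh{\bcY}^{(m+1)}$, and from $\bT^{(m+2)}=-\mf c^{(m+2)}\bm\ell\otimes\bm\ell$. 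Corollary~\ref{corolariopff} applied there yields ODEs with the same lower-order terms as the construction, and subtraction closes each step.

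Your projection-based assembly actually permits a simpler closure than the paper's. You take the trace and $n$-part directly, and $\wh{\bY}^{(k)}$ from the transverse projection of $\Q^{(k+1)}_{ab}=0$. That projection is a genuine transport equation, because $\bU=0$ makes $\lie_n$ preserve transverse tensors and the $\kappa^{(k+1)}\gamma_{ab}$ term drops out. With the ordering fixed and $\br^{(k)}(n)\d-\kappa^{(k)}$, no provisional quantities remain to reconcile. You can then verify every equation by induction on $k$ inside the single final spacetime, with no auxiliary ones. But you need to say this, and also that the redundancy is mostly algebraic rather than propagated. By Proposition~\ref{prop_Bianchi}, only $\dot\Q^{(k)}_a$ (item~2) and $f^{(\mf n+1)}$ (item~3) are transported from $\Sigma$; all other redundant components follow directly.
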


The construction of the ambient spacetime is technically involved, since the higher order conformal field equations are highly coupled order by order and, moreover, there are more equations than variables to be determined. A key part of the proof consists in showing that the redundant equations are automatically satisfied once the remaining ones are solved. At the core of this redundancy is the contracted Bianchi identity. While this redundancy is irrelevant when the ambient spacetime is already given, it becomes a central issue when the spacetime is to be constructed order by order, as it is the case in our existence theorem.\\

The structure of this manuscript is as follows. Sections~\ref{sec_hypersurfacedata} and \ref{sec_QE} provide an overview of the aspects of hypersurface data and conformal geometry needed in this work. In Section~\ref{section_transverse} we review the identities derived in \cite{Mio3} relating the transverse expansion to derivatives of the ambient Ricci tensor at a null hypersurface, and we derive new identities adapted to the conformal field equations. We also analyze the redundancy of the equations order by order and identify the free data. In Section~\ref{scri} we prove that these data completely characterize the geometry at null infinity and that any such data set can be realized by an asymptotically flat spacetime. Finally, in Section~\ref{subsec_obs} we study in detail the radiative and Coulombian obstruction tensors in the lowest spacetime dimensions in which they appear. The paper contains four appendices. Appendix~\ref{appendix} collects several identities used throughout the paper. Appendix~\ref{appendixB} presents auxiliary calculations that are not included in the main body for the sake of clarity. Appendix~\ref{appendixC} presents the conformal field equations in full generality. Finally, Appendix~\ref{appendixD} is devoted to the derivation of the higher-order Raychaudhuri equation.

\section*{Notation and conventions}

Throughout this paper $(\mc M,g)$ denotes an arbitrary smooth $d$-dimensional semi-Riemannian manifold of any signature $(p,q)$ with both $p$ and $q$ different from zero. We employ both index-free and abstract index notation at our convenience. Ambient indices are denoted with Greek letters, abstract indices on a hypersurface are written in lowercase Latin letters, and abstract indices at cross-sections of a hypersurface are expressed in uppercase Latin letters. As usual, square brackets enclosing indices denote antisymmetrization and parenthesis are for symmetrization. The symmetrized tensor product is denoted with $\otimes_s$. By $\mc F(\mc M)$, $\X(\mc M)$ and $\X^{\star}(\mc M)$ we denote respectively the set of smooth functions, vector fields and one-forms on $\mc M$. The subset $\mc F^{\star}(\mc M)\subset\mc F(\mc M)$ consists of the nowhere vanishing functions on $\mc M$. The pullback of a function $f$ via a diffeomorphism $\Phi$ will be denoted by $\Phi^{\star}f$ or simply by $f$ depending on the context. A $(p,q)$-tensor refers to a tensor field $p$ times contravariant and $q$ times covariant. Given any pair of $(2,0)$ and $(0,2)$ tensors $A^{ab}$ and $B_{cd}$ we denote $\tr_A \bm B \d A^{ab}B_{ab}$. We employ the symbol $\nabla$ for the Levi-Civita connection of $g$, and our convention for the Riemann tensor is $$R(X,Y)Z= [\nabla_X,\nabla_Y]Z - \nabla_{[X,Y]}Z.$$ 

We use the notation $\lie_X^{(m)}T$ to denote the $m$-th Lie derivative of the tensor $T$ along $X$, and $X^{(m)}(f)$ for the $m$-th directional derivative of the function $f$ along $X$. When $m=1$ we also write $\lie_X T$ and $X(f)$, respectively, and when $m=0$ they are just the identity operators. All manifolds are assumed to be connected and smooth. Embedded hypersurfaces are assumed to be two-sided unless otherwise indicated.
\section{Review of hypersurface data formalism}
\label{sec_hypersurfacedata}

In this section we review the notions of the \textit{hypersurface data formalism} needed in this paper. Details can be found in \cite{Marc1,Marc2} (see also \cite{Marc3,tesismiguel}). We restrict from the beginning to the null case. \textbf{Null metric hypersurface data} is a set $\{\mc H,\bg,\bm\ell,\elltwo\}$ consisting of an $\mf n$-dimensional manifold $\mc H$, a symmetric, degenerate $(0,2)$-tensor field $\bg$ with just one degenerate direction at each point, a one-form $\bm\ell$, and a scalar function $\elltwo$ on $\mc H$, provided that the 2-covariant, symmetric tensor $\bm{\mc A}|_p$ on $T_p\mc H\times\real$ defined by $$\mc A|_p\left((W,a),(V,b)\right) \d \bg|_p (W,V) + a\bm\ell|_p(V)+b\bm\ell|_p(W)+ab\ell^{(2)}|_p$$ 

is non-degenerate at every $p\in\mc H$. A five-tuple $\{\mc H,\bg,\bm\ell,\elltwo,\bY\}$, where $\bY$ is a $(0,2)$ symmetric tensor field on $\mc H$, is called \textbf{null hypersurface data}. The non-degeneracy of $\bm{\mc A}$ allows one to define a 2-contravariant, symmetric tensor field $P$ and a vector $n$ on $\mc H$ by means of
\begin{multicols}{2}
	\noindent
	\begin{align}
		\gamma_{ab}n^b &=0,\label{gamman}\\
		\ell_an^a&=1,\label{ell(n)}
	\end{align}
	\begin{align}
		P^{ab}\ell_b+\ell^{(2)} n^a&=0,\label{Pell}\\
		P^{ac}\gamma_{cb} + \ell_b n^a &=\delta^a_b.\label{Pgamma}
	\end{align}
\end{multicols}

Given null hypersurface data we define the tensor fields\\

\begin{minipage}{0.5\textwidth}
	\noindent
	\begin{equation}
		\label{defK}
		\bU\d \dfrac{1}{2}\lie_n\bg,
	\end{equation} 
\end{minipage}
\begin{minipage}{0.5\textwidth}
	\noindent
	\begin{equation}
		\label{defF}
		\bF\d \dfrac{1}{2}d\bm\ell,
	\end{equation} 
\end{minipage}
\,\\
as well as the contractions $\bs\d \bF(n,\cdot)$, $\br\d \bY(n,\cdot)$ and $\kappa_n\d -\bY(n,n)$. Note that $\bs(n)=0$ because $\bF$ is a two-form. It is also useful to introduce a $(1,1)$-tensor $V$ defined by 
\begin{equation}
	\label{V}
V^b{}_a \d P^{bc}(\Y_{ac}+\F_{ac}) + \dfrac{1}{2}n^b\nablacero_a\elltwo,
\end{equation}
and write out its contraction with $n^a$, namely
\begin{equation}
	\label{Vn}
	V^b{}_a n^a = P^{bc}(\r_c+\s_c)+\dfrac{1}{2}n(\elltwo) n^b.
\end{equation}
Note that these definitions are fully detached from any ambient space. To connect with the standard concept of hypersurface we say that null metric hypersurface data $\{\mc H,\bg,\bm\ell,\ell^{(2)}\}$ is $(\Phi,\xi)$-embedded in a semi-Riemannian manifold $(\mc M,g)$ if there exists an embedding $\Phi:\mc H\hookrightarrow\mc M$ and a vector field $\xi$ along $\Phi(\mc H)$ everywhere transversal to $\Phi(\mc H)$, called rigging, such that 
\begin{equation}
	\Phi^{\star}(g)=\bg, \hspace{0.5cm} \Phi^{\star}\left(g(\xi,\cdot)\right) = \bm\ell,\hspace{0.5cm} \Phi^{\star}\left(g(\xi,\xi)\right) = \ell^{(2)}.
\end{equation}
 Furthermore, null hypersurface data $\{\mc H,\bg,\bm\ell,\ell^{(2)},\bY\}$ is embedded provided that, in addition,
 \begin{equation}
 	\dfrac{1}{2}\Phi^{\star}\left(\lie_{\xi} g\right) = \bY.
 \end{equation}
  For embedded data, $\bU$ coincides with the second fundamental form of $\Phi(\mc H)$ w.r.t. the unique normal one-form $\bm{\nu}$ satisfying $\bm{\nu}(\xi)=1$. Moreover, introducing a (local) basis $\{e_a\}$ of $\mc H$, the inverse metric $g^{\alpha\beta}$ at $\Phi(\mc H)$ can be then written in the basis $\{\xi, \wh{e}_a\d \Phi_{\star}e_a\}$ as
\begin{equation}
	\label{inversemetric}
	g^{\alpha\beta}\st{\mc H}{=}	P^{ab}\wh e_a^{\alpha}\wh e_b^{\beta} + n^{a}\wh e_a^{\alpha}\xi^{\beta} + n^{b}\wh e_b^{\beta}\xi^{\alpha} .
\end{equation}


In the embedded picture the notion of rigging vector is non-unique, since given a rigging $\xi$ any other vector of the form $\xi' = z(\xi+\Phi_{\star}V)$ with $(z,V)\in\mc{F}^{\star}(\mc H)\times\X(\mc H)$ is also transverse to $\Phi(\mc H)$. Translating this into the abstract setting, one defines the gauge transformed data by 

\begin{minipage}{0.4\textwidth}
	\noindent
	\begin{align}
\mc{G}_{(z,V)}\left(\bg \right)&\d \bg,\label{transgamma}\\
\mc{G}_{(z,V)}\left( \bm{\ell}\right)&\d z\left(\bm{\ell}+\bg(V,\cdot)\right),\label{tranfell}
	\end{align}
\end{minipage}
\begin{minipage}{0.6\textwidth}
	\begin{align}
\mc{G}_{(z,V)}\big( \ell^{(2)} \big)&\d z^2\big(\ell^{(2)}+2\bm\ell(V)+\bg(V,V)\big),\label{transell2}\\
\mc{G}_{(z,V)}\left( \bY\right)&\d z \bY + \bm\ell\otimes_s d z +\dfrac{1}{2}\lie_{zV}\bg.\label{transY}
	\end{align}
\end{minipage}
\,

These transformations induce the following gauge behaviour for $P$ and $n$ \cite{Marc1},
\begin{multicols}{2}
	\noindent
	\begin{equation}
		\label{gaugeP}
		\mc{G}_{(z,V)}\left(P \right) = P -2V\otimes_s n,
	\end{equation}
	\begin{equation}
		\label{transn}
		\mc{G}_{(z,V)}\left( n \right)= z^{-1}n.
	\end{equation}
\end{multicols}
Given null metric hypersurface data $\{\mc H,\bg,\bm\ell,\elltwo\}$ it is possible to define a torsion-free connection $\nablacero$ on $\mc H$ by means of \cite{Marc2}
\begin{multicols}{2}
	\noindent
	\begin{equation}
		\label{nablagamma}
		\nablacero_a\gamma_{bc} = -\ell_c\U_{ab} - \ell_b\U_{ac},
	\end{equation}
	\begin{equation}
		\label{nablaell}
		\nablacero_a\ell_b  = \F_{ab} - \elltwo\U_{ab},
	\end{equation}
\end{multicols}
which in the embedded case can be related with the Levi-Civita connection $\nabla$ of $g$ by 
\begin{equation}
	\label{connections}
	\nabla_{\Phi_{\star}X}\Phi_{\star}Y \st{\mc H}{=} \Phi_{\star}\nablacero_X Y - \bY(X,Y)\nu - \bU(X,Y)\xi,\qquad X,Y\in\X(\mc H).
\end{equation}
The action of $\nablacero$ on the contravariant data $\{P,n\}$ turns out to be \cite{Marc2}
\begin{align}
	\nablacero_c n^b & =\s_c n^b + P^{ba}\U_{ca},\label{derivadannull}\\
	\nablacero_c P^{ab} & = -\big(n^aP^{bd}+n^bP^{ad}\big) \F_{cd} - n^an^b (d\elltwo)_c.\label{derivadaP}
\end{align}
In the embedded case, the $\nabla$-derivative of $\xi$ along tangent directions to $\mc H$ is \cite{Marc1}
\begin{equation}
	\label{nablaxi}
	\wh{e}_a^{\rho}\nabla_{\rho}\xi^{\beta}\st{\mc H}{=} (\r-\s)_a\xi^{\beta} + V^b{}_a \wh e_b^{\beta} ,
\end{equation}
and as a consequence (cf. \eqref{inversemetric})
\begin{equation}
	\label{nablaxiup}
g^{\mu\rho}\nabla_{\rho}\xi^{\beta} \st{\mc H}{=} \big(P^{cd} \wh e_d^{\mu}+n^c\xi^{\mu}\big)\big((\r-\s)_c\xi^{\beta}+V^b{}_c \wh e_b^{\beta}\big) + \nu^{\mu}\xi^{\rho}\nabla_{\rho}\xi^{\beta}.
\end{equation}

A direct consequence of \eqref{derivadannull} is that for any one-form $\bm\omega$ \cite{tesismiguel},
\begin{equation}
	\label{nnablaomega}
	2n^b \nablacero_{(a}\omega_{b)} = \lie_n\omega_a + \nablacero_a(\bm\omega(n)) - 2\big(\bm\omega(n)\s_a + P^{bc}\U_{ac}\omega_b\big),
\end{equation}
while \eqref{derivadaP} implies that for every $(0,2)$ tensor $\T_{ab}$ the $P$-trace of its Lie derivative along $n$ is
\begin{equation}
	\label{lietrPY}
\tr_P\lie_n\bT\d 	P^{ab}\lie_n \T_{ab} = \lie_n(\tr_P\bT) + 4P^{ab}\T_{ac}\s_b n^c +2P^{ac}P^{bd}\U_{cd}T_{ab} + n(\elltwo)\bT(n,n). 
\end{equation}
Finally, we will frequently use that every $(0,2)$ tensor $\T_{ab}$ can be uniquely decomposed by \cite{MarcAbstract}
\begin{equation}
	\label{decomp_EM}
\T_{ab}= \wh{\T}_{ab}+ \frac{\tr_P{\T}+\T(n,n)\elltwo}{\mf n-1}\gamma_{ab}+2\ell_{(a}\T_{b)c}n^c +\T(n,n)\ell_a\ell_b,
\end{equation}
where $\wh{\T}_{ab}$ is a tensor lying on the kernel of so-called the energy-momentum map, i.e. a tensor satisfying $P^{ab}\wh{\T}_{ab}=0$ and $\wh{\T}_{ab}n^a=0$. We will often call $\wh \T_{ab}$ the transverse part of $\T_{ab}$, or when $\T_{ab}=\wh \T_{ab}$ we will say that $\T_{ab}$ is a transverse tensor. Similarly, for a one-form $\bm\omega$ we define 
\begin{equation}
	\label{decom2}
	\wh{\bm\omega}\d \bm\omega -\bm\omega(n)\bm\ell
\end{equation} 
and call $\wh{\bm\omega}$ the transverse part of $\bm\omega$, that satisfies $\wh{\bm\omega}(n)=0$.
\section{Quasi-Einstein manifolds}
\label{sec_QE}

In this section we review the basic aspects of conformal completions of manifolds. The results are well-known. However, our presentation puts the emphasis on the notion of ``quasi-Einstein'' manifold. This follows e.g. \cite{curry2018introduction,mars2025classification}. We start by recalling the definition of the Schouten tensor for a $(d\ge 3)$-dimensional metric $g$ in terms of the Ricci,
\begin{equation}
	\label{defi_Sch}
	\sch_g \d \dfrac{1}{d-2}\left(\ric_g-\dfrac{\scal_g}{2(d-1)}g\right).
\end{equation}
We will employ the symbols $L_{\alpha\beta}$ and $L$ for the Schouten in abstract index notation and for its trace, respectively. The Schouten and Ricci scalars are related by $L=\frac{R}{2(d-1)}$. From the second Bianchi identity it follows that\\

%
\begin{minipage}{0.4\textwidth}
	\noindent
\begin{equation}
	\label{bianchiL}
	\nabla^{\mu}L_{\mu\alpha}-\nabla_{\alpha}L=0,
\end{equation} 
\end{minipage}
\begin{minipage}{0.6\textwidth}
	\noindent
\begin{equation}
	\label{bianchiweyl}
	\nabla_{\mu}C^{\mu}{}_{\nu\alpha\beta} = (d-3)\left(\nabla_{\alpha}L_{\beta\nu}-\nabla_{\beta}L_{\alpha\nu}\right),
\end{equation} 
\end{minipage}
\,

$C^{\mu}{}_{\nu\alpha\beta}$ being the Weyl tensor. The Weyl, Schouten and Ricci tensors transform as follows under a conformal rescaling $\wh g= \omega^2g$ \cite{Wald},
\begin{align}
	\wh{C}^{\alpha}{}_{\beta\mu\nu} & = C^{\alpha}{}_{\beta\mu\nu},\\
	\wh{L}_{\alpha\beta} &= L_{\alpha\beta} - \dfrac{\nabla_{\alpha}\nabla_{\beta}\omega}{\omega} + \dfrac{2\nabla_{\alpha}\omega\nabla_{\beta}\omega}{\omega^2} - \dfrac{|\nabla\omega|^2_g}{2\omega^2} g_{\alpha\beta} ,\label{trans_schouten}\\
	\wh\nabla & = \nabla + \omega^{-1}\left(2\Id\otimes_s d\omega - \nabla\omega\otimes g\right),\label{transLC}\\
	\wh{L} & = \omega^{-2}\left(L - \dfrac{\square_g \omega}{\omega} - \dfrac{(d-4)|\nabla\omega|^2_g}{2\omega^2}\right).\label{trans_L}
\end{align}

Let $(\mc M,[g])$ be a $d$-dimensional conformal structure of arbitrary semi-Riemannian signature. For each $g\in[g]$ we construct the differential operator
\begin{equation}
	\label{Aoperator}
	A_g(f) \d (\hess_g f +f \sch_g)^{tf}, \qquad f\in\mc F(\mc M),
\end{equation} 
where ``$tf$'' denotes the trace free part w.r.t. $g$. One can easily check that $A_{\omega^2g}(\omega f) = \omega A_g(f)$, and therefore 
$$\Omega^{-2}g\in [g] \text{ is Einstein } \quad \Longleftrightarrow \quad (\hess_g\Omega + \Omega \sch_g)^{tf} = 0.$$ 

This condition can be rewritten in an equivalent way in terms of the scalar
\begin{equation}
	\label{def_s}
q\d \dfrac{\square_g \Omega + \Omega L}{d}
\end{equation} 
as $\hess_g\Omega + \Omega \sch_g - qg = 0$. This discussion motivates the following definition.

\begin{defi}
	Let $(\mc M,g)$ be a semi-Riemannian manifold of dimension $d\ge 3$, $\Omega\in\mc F(\mc M)$ a non-identically zero function and $\mc T$ a trace-free, two-covariant tensor field. We say that the four-tuple $(\mc M,g,\Omega,\mc T)$ is a quasi-Einstein manifold provided that 
	\begin{equation}
		\label{quasiEinstein}
		\hess_g\Omega + \Omega\sch_g -qg = \mc T.
	\end{equation}
	A quasi-Einstein manifold is called \textit{vacuum} when $\mc T=0$. Note that if $(\mc M,g,\Omega,\mc T)$ is a quasi-Einstein manifold, then $(\mc M,\omega^2g,\omega\Omega,\omega\mc T)$ is also a quasi-Einstein manifold for every $\omega\in\mc F^{\star}(\mc M)$.
\end{defi}
From the transformation $\wh{\Omega}=\omega\Omega$ and using \eqref{transLC} and \eqref{trans_L}, the behaviour of $q$ under conformal rescalings is 
\begin{equation}
	\label{transs}
	\wh{q} = \omega^{-1}\left( q + \dfrac{\langle\nabla\Omega,\nabla\omega\rangle_g}{\omega} + \dfrac{\Omega|\nabla\omega|^2_g}{2\omega^2}\right).
\end{equation}
A direct consequence of \eqref{quasiEinstein} and \eqref{bianchiweyl} is that for any quasi-Einstein manifold $(\mc M,g,\Omega,\mc T)$,
\begin{align}
\nabla_{\mu} q &=   \dfrac{1}{d-1}\nabla_{\rho}\mc T^{\rho}{}_{\mu}  -L^{\rho}{}_{\mu}\nabla_{\rho}\Omega ,	\label{nablas}\\
C^{\alpha}{}_{\beta\mu\nu}\nabla_{\alpha}\Omega &= \dfrac{\Omega}{d-3}\nabla_{\alpha} C^{\alpha}{}_{\beta\mu\nu}-2\nabla_{[\mu}\mc T_{\nu]\beta}+\dfrac{2}{d-1}g_{\beta[\mu}\nabla^{\rho}\mc T_{\nu]\rho}.\label{CnablaOmega}
\end{align}
Moreover, for any quasi-Einstein manifold satisfying $\Omega\nabla_{\beta}\mc T^{\alpha\beta} - (d-1)\mc T^{\alpha\beta}\nabla_{\beta}\Omega=0$ (this holds in particular for any vacuum quasi-Einstein manifold) one can define a conformally invariant constant
\begin{equation}
	\label{lambda}
	\lambda\d 2q\Omega-|\nabla\Omega|^2_g.
\end{equation}

Its geometric interpretation is as follows. Consider a vacuum quasi-Einstein manifold $(\mc M,g,\Omega)$ and define the metric $\wh g\d \Omega^{-2}g$ on the subset $\{\Omega\neq 0\}$. Plugging $\omega=\Omega^{-1}$ into \eqref{trans_schouten} and using the quasi-Einstein equation \eqref{quasiEinstein}, $$\wh{L}_{\alpha\beta} = L_{\alpha\beta}+\dfrac{\nabla_{\alpha}\nabla_{\beta}\Omega}{\Omega} - \dfrac{|\nabla\Omega|^2}{2\Omega^2}g_{\alpha\beta} = \dfrac{2q\Omega  -|\nabla\Omega|_g^2}{2\Omega^2}g_{\alpha\beta} = \dfrac{\lambda}{2}\wh g_{\alpha\beta}.$$ 

So, up to a numerical factor, $\lambda$ is the cosmological constant associated to the Einstein representative of $(\mc M,[g])$. Thus, it makes sense to talk about $\lambda$-vacuum quasi-Einstein manifolds.\\

At the points where $\Omega\neq 0$ one can define the rescaled Weyl tensor $\mf{D}^{\alpha}{}_{\beta\mu\nu}\d \Omega^{3-d}C^{\alpha}{}_{\beta\mu\nu}$. From \eqref{CnablaOmega},
\begin{equation}
	\label{divd}
	\nabla_{\alpha} \mf{D}^{\alpha}{}_{\beta\mu\nu} = 2(d-3)\Omega^{2-d}\left(\nabla_{[\mu}\mc{T}_{\nu]\beta} - \dfrac{1}{d-1}\nabla_{\alpha}\mc{T}^{\alpha}{}_{[\nu} g_{\mu]\beta}\right).
\end{equation}

It then follows that for $\lambda$-vacuum quasi-Einstein manifolds, the rescaled Weyl tensor satisfies a regular PDE on the closure of $\{\Omega\neq 0\}$. As we show now, $\{\Omega\neq 0\}$ is dense on $\mc M$, so $\mf{D}^{\alpha}{}_{\beta\mu\nu}$ satisfies a regular PDE everywhere on $\mc M$. This does not mean, however, that $\mf{D}^{\alpha}{}_{\beta\mu\nu}$ extends regularly to $\{\Omega=0\}$. The first part of the lemma is well-known, see e.g. \cite{krtouvs2004asymptotic}. The second is probably known to experts but we could not find an explicit proof in the literature.

\begin{lema}
	Let $(\mc M,g,\Omega)$ a $\lambda$-vacuum quasi-Einstein manifold with constant $\lambda$ and assume $\{\Omega=0\}\neq\emptyset$. Then,
	\begin{enumerate}
		\item If $\lambda\neq 0$, $\{\Omega=0\}$ is an embedded hypersurface with non-zero normal given by $\nabla\Omega$. Furthermore, it is spacelike when $\lambda>0$, and timelike when $\lambda<0$. 
		\item If $\lambda=0$, then except for a (possibly empty) collection of isolated points $\{p_i\}$, $\{\Omega=0\}$ is an embedded null hypersurface with nowhere zero normal $\nabla\Omega$.
	\end{enumerate}
	\begin{proof}
A point $p\in\mc M$ is called singular provided that $\Omega|_p=0$ and $\nabla\Omega|_p=0$. Then, from equation \eqref{lambda} the set $\{\Omega=0\}$ cannot admit any singular point in the case $\lambda\neq 0$, and therefore it is a smooth embedded hypersurface. The vector field $\nabla\Omega$ is normal to $\{\Omega=0\}$ and again by equation \eqref{lambda} we see that $\{\Omega=0\}$ is spacelike when $\lambda>0$ and timelike when $\lambda<0$.\\
		
For item 2. we first prove that singular points are necessarily isolated. Let $p\in\{\Omega=0\}$ be a singular point and define $q_p\d q(p)$. By \eqref{transs} the transformation of $q_p$ is $\hat{q}_p = \omega(p)^{-1} q_p$, so being zero/non-zero is a conformally invariant statement. Let us prove that $q_p\neq 0$. Let $\{e_a\}$ be an orthonormal basis of $T_p\mc M$, $\mc I\subseteq\real$ an interval, and let $\gamma(v)$, $v\in\mc I$, be an affinely parametrized geodesic starting at $p$. Let $\{e_a(v)\}$ the basis at $T_{\gamma(v)}\mc M$ obtained by parallel transport of $\{e_a\}$ along $\gamma(v)$. Define the following quantities $$\tilde\Omega(v)\d \Omega(\gamma(v)),\qquad \tilde\Omega_a(v)\d \langle \nabla\Omega|_{\gamma(v)},e_a(v)\rangle_g,\qquad \tilde{q}(v)\d q(\gamma(v)).$$ If $\eta_{ab}\d g(e_a,e_b)$ then $g(e_a(v),e_b(v)) = \eta_{ab}$ for all $v\in\mc I$. Observe that 
		\begin{equation}
			\label{aux1}
			\nabla\Omega|_{\gamma(v)} = \eta^{ab}\tilde\Omega_a(v) e_b(v),
		\end{equation} 
		where $\eta^{ab}\d \eta_{ab}$. Let us establish the following equations,
		\begin{equation}
			\label{system}
			\begin{aligned}
				\dfrac{d\tilde\Omega}{dv} & = \langle\dot\gamma,\nabla\Omega|_{\gamma(v)}\rangle_g = \eta^{ab}\tilde\Omega_a(v) \langle\dot\gamma(v),e_a(v)\rangle_g,\\
				\dfrac{d\tilde\Omega_a}{dv} & = (\nabla_{\dot\gamma}\nabla_{e_a}\Omega)|_{\gamma} = (\hess_g\Omega)|_{\gamma}(\dot\gamma,e_a) = \tilde{q}\langle\dot\gamma,e_a\rangle_g -\tilde\Omega \sch_g(\dot\gamma,e_a),\\
				\dfrac{d\tilde q}{dv} & = \nabla_{\dot\gamma}q|_{\gamma} = -\sch_g(\dot\gamma,\nabla\Omega) = -\eta^{ab} \sch_g(\dot\gamma,e_a)\tilde\Omega_b,
			\end{aligned}
		\end{equation}
		where the first one follows from \eqref{aux1}, the second from \eqref{quasiEinstein} and the last one from \eqref{nablas}. We also compute for later use
		\begin{equation}
			\label{aux2}
			\dfrac{d^2\tilde\Omega}{dv^2} = \nabla_{\dot\gamma}\nabla_{\dot\gamma}\Omega|_{\gamma} = \hess_g\Omega(\dot\gamma,\dot\gamma) = \tilde{q}g(\dot\gamma,\dot\gamma) - \tilde\Omega \sch_g(\dot\gamma,\dot\gamma).	
		\end{equation}
		Equations \eqref{system} constitute a linear first order system of homogeneous ODE for $\{\tilde\Omega,\tilde\Omega_a,\tilde q\}$. Since $p$ is a singular point we have $\tilde\Omega(0) =\tilde\Omega_a(0)=0$. If, moreover, $q_p=0$, then also $\tilde q(0)=0$ and thus by uniqueness of the solution, all the points on $\gamma(v)$ would be singular with $q=0$. Since $\gamma$ is arbitrary it follows that the set of singular points with $q=0$ is open (and obviously also closed). Since $\mc M$ is connected, this set is the whole of $\mc M$, which contradicts the fact that $\Omega$ is not identically zero. Then, $q_p\neq 0$ at any singular point. Next we prove that singular points are isolated. Since $q_p\neq 0$ it follows from \eqref{quasiEinstein} that $\hess_g\Omega|_p\neq 0$ and thus $p$ is a non-degenerate critical point of $\Omega$. By Morse lemma \cite{bott1982lectures} there exists a neighbourhood $\mc U$ of $p$ and coordinates $\{x^{\alpha}\}_{\alpha=1}^d$ with $x^{\alpha}(p)=0$ such that $\Omega = -(x^1)^2 - \cdots - (x^{\sigma})^2 + (x^{\sigma+1})^2 + \cdots + (x^{d})^2$, the number $\sigma$ being the Morse index (i.e. the signature of the Hessian). Clearly the differential $d\Omega$ on $\mc U$ only vanishes at $p$, so $p$ is isolated and $\nabla\Omega$ is a non-zero normal to $(\{\Omega=0\}\setminus\{p\})\cup \mc U$, and thus by \eqref{lambda} this set is a smooth null hypersurface (it may be empty, e.g. when $g$ is Riemannian). 	
	\end{proof}
\end{lema}

We then define $\scri$ as the set of points where $\Omega=0$ and $\nabla\Omega\neq 0$. Equations \eqref{quasiEinstein}, \eqref{nablas} and \eqref{divd} are known as conformal field equations \cite{friedrich2002conformal,friedrich2015geometric}. Their key property is that they are regular at $\scri$. \\

Let $(\mc M,g,\Omega)$ be a $d=\mf n+1$ dimensional (vacuum) quasi-Einstein manifold with $\lambda=0$. Consider an embedding $\Phi:\scri\hookrightarrow\mc M$ and rigging $\xi$, and let $\{\scri,\bg,\bm\ell,\elltwo\}$ be the corresponding embedded null metric hypersurface data. Since $\nabla\Omega$ is non-vanishing, null and tangent to $\scri$, there must exist a non-vanishing function $\sigma$ such that $\nabla\Omega = \sigma \nu$ (and hence $\lie_\xi\Omega=\sigma$) on $\scri$. Moreover, by pull-backing \eqref{quasiEinstein} to $\{\Omega=0\}$ and noting that $e_a^{\alpha}e_b^{\beta} \nabla_{\alpha}\nabla_{\beta}\Omega = \frac{1}{2}e_a^{\alpha}e_b^{\beta}\big(\lie_{\nabla\Omega} g\big)_{\alpha\beta} = \sigma \U_{ab}$, it follows that $\bU = q\sigma^{-1}\bg$, so $\scri$ is totally umbilical. Furthermore, the conformal freedom $g' = \omega^2 g$ leaves a remnant conformal freedom on $\scri$ of the form $$\bg ' = \omega^2\bg,\qquad \bm\ell ' = \omega^2 \bm\ell,\qquad \elltwo{}' = \omega^2 \elltwo,\qquad P' = \omega^{-2}P,\qquad n' = \omega^{-2}n.$$ 

Since $\nabla'\Omega' = \omega^{-1} \nabla\Omega = \omega^{-1} \sigma \nu  = \omega \sigma \nu'$ on $\scri$, it follows that the function $\sigma$ scales by $\sigma' = \omega\sigma$ and $q|_{\scri}$ by (cf. \eqref{transs}) $q' \st{\scri}{=} \omega^{-1} q + \omega^{-2}\sigma \lie_n\omega$. The functions $q$ and $\omega$ are obviously independent of the rigging, so $q|_{\scri}$ and $\omega|_{\scri}$ are gauge invariant quantities. One the other hand $\sigma = \lie_{\xi} \Omega|_{\scri}$ has gauge behavior 
\begin{equation}
	\label{transsigma1}
\mc{G}_{(z,V)}	\sigma=z\sigma.
\end{equation}
Observe also that $q|_{\scri}$ and $\sigma|_{\scri}$ are not independent because from \eqref{lambda}, $$\nabla^{\alpha}\Omega\nabla_{\alpha}\nabla_{\beta}\Omega = \dfrac{1}{2}\nabla_{\beta}\big(|\nabla\Omega|_g^2\big) \st{\scri}{=} q\nabla_{\beta}\Omega,$$

 so $q|_{\scri}$ is the surface gravity of $\nabla\Omega = \sigma \nu$ at $\scri$. Given that $\nabla_{\nu}\nu = \kappa_n\nu$ (cf. \eqref{connections}) we also have $$\nabla_{\alpha\nu}(\sigma\nu) = (\lie_n\sigma+\sigma\kappa_n)\sigma\nu,$$ 

and hence $q\st{\scri}{=} \lie_n\sigma+\sigma \kappa_n$. With this relation at hand, it is easy to check using \eqref{box} in Appendix \ref{appendix} that the pullback of \eqref{def_s} to $\Omega=0$ is automatically fulfilled. This motivates the following definition of ``universal structure''. This notion is commonly used in the literature, see e.g. \cite{geroch1977asymptotic,ashtekar2015geometry,ciambelli2019carroll,herfray2022tractor}. Here we are just adapting it to the context of metric hypersurface data.


\begin{defi}
	\label{universal}
We say $\{\scri,\bg,\bm\ell,\elltwo,\sigma,\mf{q}\}$ is $\scri$-\textbf{structure data} provided $\{\scri,\bg,\bm\ell,\elltwo\}$ is null metric hypersurface data, $\sigma\in\mc{F}^{\star}(\scri)$, $\mf{q}\in\mc{F}(\scri)$ and $\bU = \mf q\sigma^{-1}\bg$. Moreover, the gauge transformations of $\sigma$ and $\mf q$ are $\mc{G}_{(z,V)}\sigma \d z\sigma$ and $\mc{G}_{(z,V)}\mf q\d\mf q$. Furthermore, we define the conformal transformation $\mc C_{\omega}$ of $\{\scri,\bg,\bm\ell,\elltwo,\sigma,\mf{q}\}$, where $\omega\in\mc{F}^{\star}(\scri)$ is a gauge-invariant function (called conformal factor), by $$\mc{C}_{\omega}\bg =\omega^2\bg, \qquad \mc{C}_{\omega}\bm\ell = \omega^2 \bm\ell,\qquad \mc{C}_{\omega}\elltwo = \omega^2 \elltwo,\qquad \mc{C}_{\omega}\sigma = \omega\sigma,\qquad \mc{C}_{\omega} \mf q = \omega^{-1}\mf q + \omega^{-2}\sigma \lie_n\omega.$$ 

It is straightforward to check that $\mc{C}_{\omega}\circ \mc{G}_{(z,V)} = \mc{G}_{(z,V)}\circ\mc{C}_{\omega}$.
\end{defi}

Note that given $\scri$-structure data and $X\in\X(\scri)$, the $\nabla$-derivative of $\nabla\Omega$ along $X$ at $\scri$ is, by \eqref{connections} and \eqref{derivadannull},
\begin{equation}
	\label{nablaXnablaOmega}
\nabla_X(\nabla\Omega) =  \sigma P^{ab}\U_{bc}X^c e_a + \big(\sigma (\bs-\br)+d\sigma\big)(X) n = \mf q X + \big(\sigma (\bs-\br)+d\sigma-\mf q\bm\ell\big)(X) n.
\end{equation}

Next we define the embedded version of the data to guarantee that $\sigma$ agrees with the proportionality function between $\nu$ and $\nabla\Omega$ at $\scri$, and also that $\mf q$ is the pullback of the scalar $q$ defined in \eqref{def_s}. 

\begin{defi}
	\label{emb_uni}
	We say that $\{\scri,\bg,\bm\ell,\elltwo,\sigma,\mf{q}\}$ is $(\Phi,\xi)$-embedded in $(\mc M,g,\Omega)$ provided that $\{\scri,\bg,\bm\ell,\elltwo\}$ is $(\Phi,\xi)$-embedded in $(\mc M,g)$, $\Omega=0$ in $\Phi(\scri)$, and in addition $\nabla\Omega\st{\scri}{=}\sigma\nu$ and $\mf q = \sigma \kappa_n + \lie_n\sigma$.
\end{defi}

From the conformal transformation of $\mf q$ in Def. \ref{universal} it follows that whenever $\scri$ admits cross-sections one can always find a conformal factor $\omega$ that satisfies $\omega^2\mf q + \sigma \lie_n\omega=0$ and, as a consequence, $\mc{C}_{\omega} \mf q=0$. The remaining conformal freedom is the function $\omega$ at any cross-section. In any of such conformal gauges $\scri$ is totally geodesic, $\bU=0$.
\section{Transverse expansion of the metric}
\label{section_transverse}

In previous works \cite{Mio3,Mio4} we computed the $m$-th transverse derivative of the ambient Ricci tensor in terms of transverse derivatives of the ambient metric $g$ at a general null hypersurface $\mc H$. In this section we quote the results that shall be needed in the rest of the paper. In order to simplify the notation one introduces the tensors $$\mc{K}^{(m)}\d\lie_{\xi}^{(m)}g,\qquad \bY^{(m)}\d \dfrac{1}{2}\Phi^{\star}\mc{K}^{(m)}, \qquad \br^{(m)}\d \bY^{(m)}(n,\cdot), \qquad \kappa^{(m)}\d -\bY^{(m)}(n,n),$$ 

as well as $$\mc{R}^{(m)}\d \Phi^{\star}\big(\lie_{\xi}^{(m-1)}\ric\big),\quad \dot{\mc R}^{(m)}\d \Phi^{\star}\big(\lie_{\xi}^{(m-1)}\ric(\xi,\cdot)\big),\quad \ddot{\mc R}^{(m)}\d \Phi^{\star}\big(\lie_{\xi}^{(m-1)}\ric(\xi,\xi)\big)$$

 for $m\ge 1$. We also denote $\mc{K}^{(0)}\d g$. Note that $\bY^{(1)},\br^{(1)},\kappa^{(1)}$ coincide with $\bY,\br,\kappa_n$, respectively. In what follows we refer to the tensors $\{\bY^{(1)},\bY^{(2)},...\}$ as the \textit{transverse} (or \textit{asymptotic) expansion}. The remaining derivatives of the metric, namely the tensors $\mc{K}^{(m)}(\xi,\cdot)$ at $\mc H$, are given by 
\begin{equation}
	\label{Kxi}
\mc{K}^{(1)}(\xi,e_a) \st{\mc H}{=} \dfrac{1}{2}\nablacero_a \elltwo,\qquad \mc{K}^{(1)}(\xi,\xi)=0,\qquad \mc{K}^{(m)}(\xi,e_a) \st{\mc H}{=} 0,\qquad \mc{K}^{(m)}(\xi,\xi) \st{\mc H}{=} 0 \qquad \forall m\ge 2
\end{equation}
under the assumption $\nabla_{\xi}\xi=0$ \cite{Mio3}. Note that \eqref{Kxi} together with \eqref{inversemetric} imply 
\begin{equation}
	\label{Kup}
	e_b^{\mu}\mc{K}_{\mu}{}^{\rho} \st{\mc H}{=} \left(2P^{cd}\Y_{bc}+\dfrac{1}{2}n^d \nablacero_b\elltwo\right)e_d^{\rho}+2\r_b\xi^{\rho},\qquad \xi^{\mu}\mc{K}_{\mu}{}^{\rho} \st{\mc H}{=} \dfrac{1}{2}P^{ab}\nablacero_a\elltwo e_b^{\rho}+\dfrac{1}{2}n(\elltwo)\xi^{\rho},
\end{equation}
and 
\begin{equation}
	\label{trazas}
\mc{K}^{\mu}{}_{\mu} \st{\mc H}{=} 2\tr_P\bY + n(\elltwo),\qquad \mc{K}^{(m)\mu}{}_{\mu} = 2\tr_P\bY^{(m)} \quad\forall m\ge 2.
\end{equation}
In particular,
\begin{equation}
	\label{Kupnu}
	\nu^{\mu}\mc{K}_{\mu}{}^{\rho} \st{\mc H}{=} \left(2P^{cd}\r_c+\dfrac{1}{2}n(\elltwo)n^d \right)e_d^{\rho}-2\kappa_n\xi^{\rho}.
\end{equation}
Recall also that for any two objects $S$ and $T$, any product of them $S\circledast T$ (including tensor contraction) and any derivative operator $\mc{D}$,
	\begin{equation}
		\label{derivada}
		\mc{D}^{(m)}\big(S\circledast T\big) = \sum_{i=0}^m\binom{m}{i}\big( \mc{D}^{(i)} S\big)\circledast\big( \mc{D}^{(m-i)}T\big).
	\end{equation}
In general, we employ the notation $A^{(m)}\d \lie_{\xi}^{(m-1)}A$, $m\ge 1$, for any tensor $A$. As shown in \cite{Mio3}, the commutator $[\lie_{\xi}^{(m)},\nabla]$ acting on $A$ is given by 
\begin{equation}
	\label{propMarc}
	\begin{aligned}
[\lie_{\xi}^{(m)},\nabla_{\gamma}]A^{\alpha_1\cdots\alpha_q}_{\beta_1\cdots\beta_p} =  \sum_{k=0}^{m-1}\binom{m}{k+1}&\left(\sum_{j=1}^{q}A^{(m-k)}{}^{\alpha_1\cdots\alpha_{j-1}\sigma\alpha_{j+1}\cdots\alpha_q}_{\beta_1\cdots\beta_p}\Sigma^{(k+1)}{}^{\alpha_j}{}_{\sigma\gamma}\right. \\
		&\left.- \sum_{i=1}^{p}A^{(m-k)}{}^{\alpha_1\cdots\alpha_q}_{\beta_1\cdots\beta_{i-1}\sigma\beta_{i+1}\cdots\beta_p}\Sigma^{(k+1)}{}^{\sigma}{}_{\beta_i\gamma}\right),
	\end{aligned}
\end{equation}
where $\Sigma^{\alpha}{}_{\mu\nu} =\dfrac{1}{2}g^{\alpha\beta}\left(\nabla_{\mu}\mc{K}_{\nu\beta} + \nabla_{\nu}\mc{K}_{\mu\beta} - \nabla_{\beta}\mc{K}_{\mu\nu}\right)$ and $\Sigma^{(m)}\d \lie_{\xi}^{(m-1)}\Sigma$. We also define the tensor $\uwh\Sigma_{\alpha\mu\nu}\d g_{\alpha\beta}\Sigma^{\beta}{}_{\mu\nu}$ and $\uwh\Sigma^{(m)}\d \lie_{\xi}^{(m-1)}\uwh\Sigma$, and note that $\uwh\Sigma^{(m)}_{\alpha\mu\nu} \neq g_{\alpha\beta}\Sigma^{(m)}{}^{\beta}{}_{\mu\nu}$. Applying \eqref{propMarc} to $A=df$ one arrives at the following.
	\begin{prop}
	\label{derivadashess}
	Let $\xi\in\X(\mc M)$ and $m\ge 1$ an integer. Then, given any function $f$ the following identity holds
	\begin{equation}
		\label{liehess}
		\lie_{\xi}^{(m)}\nabla_{\alpha}\nabla_{\beta}f = \nabla_{\alpha}\nabla_{\beta}\big(\lie_{\xi}^{(m)}(f)\big) - \sum_{k=0}^{m-1}\binom{m}{k+1} \nabla_{\sigma}\big(\lie_{\xi}^{(k)}(f)\big) \Sigma^{(m-k)}{}^{\sigma}{}_{\beta\alpha}.
	\end{equation}
	As a consequence,
	\begin{equation}
		\label{liebox}
		\begin{aligned}
			\lie_{\xi}^{(m)}\square_g f &= \sum_{i=0}^m \binom{m}{i}\big(\lie_{\xi}^{(i)}g^{\alpha\beta}\big)\left(\nabla_{\alpha}\nabla_{\beta} \big(\lie_{\xi}^{(m-i)}(f)\big) - \sum_{k=0}^{m-i-1}\binom{m-i}{k+1}\nabla_{\sigma}\big(\lie_{\xi}^{(k)}(f)\big) \Sigma^{(m-i-k)}{}^{\sigma}{}_{\beta\alpha}\right).
		\end{aligned}
	\end{equation}
	\begin{proof}
		Using \eqref{propMarc} with $A=df$ as well as $\lie_{\xi} d = d\lie_{\xi}$, so that $A^{(k)}=d(\lie_{\xi}^{(k-1)}(f))$, \eqref{liehess} follows. Identity \eqref{liebox} follows at once from \eqref{derivada} after inserting \eqref{liehess}, i.e. $$\lie_{\xi}^{(m)}\square_g f =\sum_{i=0}^{m}\binom{m}{i}\big(\lie_{\xi}^{(i)}g^{\alpha\beta}\big) \lie_{\xi}^{(m-i)}\nabla_{\alpha}\nabla_{\beta}f.$$ 
	\end{proof}
\end{prop}

One of the main results in \cite{Mio3,Mio4} is the computation of $\mc{R}_{ab}^{(m)}$, $\dot{\mc R}^{(m)}_a$ and $\ddot{\mc{R}}^{(m)}$ to the leading order, namely $\dot{\mc R}^{(m)}_a$ and $\ddot{\mc{R}}^{(m)}$ up to order $m+1$ and $\mc{R}_{ab}^{(m)}$ up to order $m$. In this paper we also need $\dot{\mc R}^{(m)}_a$ up to order $m$. In order not to overload the body of the paper we postpone the computation to Appendix \ref{appendixB}. The result is as follows.
\begin{prop}[\cite{Mio3} and Prop. \ref{propliericxitang}]
	\label{prop_derivadas}
	Let $\{\mc H,\bg,\bm\ell,\elltwo\}$ be null metric hypersurface data $(\Phi,\xi)$-embedded on $(\mc M,g)$ and extend $\xi$ off $\Phi(\mc H)$ by $\nabla_{\xi}\xi=0$. Let $m\ge 2$ be an integer. Then,
\begin{align}
\ddot{\mc R}^{(m)} &= - \tr_P\bY^{(m+1)} + \mc{O}^{(m)},\label{ddotR}\\
\dot{\mc R}^{(m)}_a &= \r^{(m+1)}_a + P^{bc} \nablacero_b\Y^{(m)}_{ac} -2m P^{bc}\r_b \Y^{(m)}_{ac} - \nablacero_a\big(\tr_P\bY^{(m)}\big)  + \big(\tr_P\bY^{(m)}\big)(\r_a-\s_a) \nonumber\\
&\quad\,  + \big(P^{bc}\Y_{ab}- 3V^c{}_a\big)\r^{(m)}_c + \left(\tr_P\bY-\dfrac{m}{2}n(\elltwo)\right)\r^{(m)}_a +\mc{O}_a^{(m-1)},\label{dotR}\\
\mc{R}^{(m)}_{ab} &= -2\lie_n\Y^{(m)}_{ab} - \left(2m\kappa_n + \tr_P\bU\right)\Y^{(m)}_{ab}- (\tr_P\bY^{(m)})\U_{ab} + 4P^{cd}\U_{c(a}\Y^{(m)}_{b)d} \nonumber\\
&\quad\,    +4(\s-\r)_{(a} \r^{(m)}_{b)}+ 2\nablacero_{(a}\r^{(m)}_{b)} -2\kappa^{(m)}\Y_{ab} + \mc{O}^{(m-1)}_{ab},\label{R}
\end{align}	
where $\mc{O}^{(m)}$, $\mc{O}^{(m)}_a$ and $\mc{O}^{(m)}_{ab}$ are, respectively, a scalar, a one-form and a (0,2) symmetric tensor depending only on metric data $\{\bg,\bm\ell,\elltwo\}$ and $\{\bY,...,\bY^{(m)}\}$. Moreover,
		\begin{align}
		\hskip-0.3cm	\mc{R}^{(m)}_{ab} n^b &= -\lie_n\r^{(m)}_a - \big(2(m-1)\kappa_n + \tr_P\bU\big)\r^{(m)}_a-\nablacero_a\kappa^{(m)} + \mc{O}^{(m-1)}_{ab}n^b,\label{elierictangn}\\
		\hskip-0.3cm		P^{ab}\mc{R}^{(m)}_{ab} &= -2\lie_n\big(\tr_P\bY^{(m)}\big) - 2\left(m\kappa_n + \tr_P\bU\right)\tr_P\bY^{(m)} + 2\kappa^{(m)}\big(n(\elltwo)-\tr_P\bY\big) \nonumber\\
				&\quad\, -4P\big(\br+\bs,\br^{(m)}\big) + 2\div_P\br^{(m)} + P^{ab}\mc{O}^{(m-1)}_{ab},	\label{ePcontractioneq}\\
	\hskip-0.3cm\dot{\mc R}_a^{(m)}n^a &= -\kappa^{(m+1)} + \div_P\br^{(m)} - P^{ab}P^{cd} \U_{bd}\Y^{(m)}_{ac}  - \lie_n\big(\tr_P\bY^{(m)}\big)- \kappa_n \tr_P\bY^{(m)}\nonumber\\
					& \quad\, - 2P\big((m+1)\br+2 \bs,\br^{(m)}\big) -\left(\tr_P\bY-\dfrac{m+3}{2}n(\elltwo)\right)\kappa^{(m)} + \mc{O}_a^{(m-1)} n^a.\label{dotRn0}
			\end{align}
\end{prop}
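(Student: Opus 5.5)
The proposition mostly collects formulas already established in \cite{Mio3}: \eqref{ddotR}, \eqref{R}, \eqref{elierictangn}, \eqref{ePcontractioneq}, and the leading-order part of \eqref{dotR} and \eqref{dotRn0}. For these I will only quote \cite{Mio3}. The new content is \eqref{dotR} with the order-$m$ terms written out explicitly, together with its $n$-contraction \eqref{dotRn0} (previously only the order-$(m+1)$ terms $\r^{(m+1)}$ and $\kappa^{(m+1)}$ were tracked). The plan is to compute $\dot{\mc R}^{(m)}_a=\Phi^\star\big(\lie_\xi^{(m-1)}\ric(\xi,\cdot)\big)$ directly. Since $\nabla_\xi\xi=0$, $\lie_\xi$ commutes with contraction against $\xi$ up to terms involving $\lie_\xi\xi=0$, so $\dot{\mc R}^{(m)}_a=\ric^{(m)}(\xi,e_a)$, where $\ric^{(m)}\d\lie^{(m-1)}_\xi\ric$.

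\emph{Expression for $\ric^{(m)}$.} Write $\ric^{(m)}=\lie_\xi^{(m-1)}\ric$ via the standard variation formula $\lie_\xi \ric_{\alpha\beta}=\nabla_\mu\Sigma^{\mu}{}_{\alpha\beta}-\nabla_\alpha\Sigma^\mu{}_{\mu\beta}$. Iterate with \eqref{propMarc} to commute the remaining $\lie_\xi^{(m-2)}$ past $\nabla$. This gives $\nabla_\mu\Sigma^{(m-1)\mu}{}_{\alpha\beta}-\nabla_\alpha\Sigma^{(m-1)\mu}{}_{\mu\beta}$ plus sums of quadratic terms $\Sigma^{(i)}\circledast\Sigma^{(j)}$ with $i+j\le m$. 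Only the pieces with $i=m-1,\,j=1$ (or vice versa) contribute at order $m$; everything else goes into $\mc O^{(m-1)}_a$. Next, expand $\Sigma^{(k)}$ in terms of $\mc K^{(j)}$, $j\le k$, using \eqref{derivada} applied to $\Sigma=\tfrac12 g^{-1}\nabla\mc K$. The order-$m$ contributions come from $g^{-1}\nabla\mc K^{(m-1)}$ differentiated once more, from $\lie_\xi g^{-1}\circledast\nabla\mc K^{(m-1)}$, and from $g^{-1}\circledast\mc K^{(m-1)}\circledast\Sigma$.

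\emph{Reduction to $\mc H$.} Contract with $\xi^\beta e_a^\alpha$ and evaluate on $\mc H$. Use the decomposition \eqref{inversemetric} of $g^{-1}$, the connection relations \eqref{connections} and \eqref{nablaxi}–\eqref{nablaxiup}, and the vanishing \eqref{Kxi} of $\mc K^{(k)}(\xi,\cdot)$ for $k\ge2$. The fully transverse derivative $\nabla_\xi\mc K^{(m)}$ produces $\mc K^{(m+1)}$, hence the leading term $\r^{(m+1)}_a$. The tangential derivatives $P^{bc}\nabla_b\mc K^{(m)}$ become $P^{bc}\nablacero_b\Y^{(m)}_{ac}$ and $-\nablacero_a\tr_P\bY^{(m)}$ (via \eqref{trazas}). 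The connection corrections from \eqref{connections} and \eqref{nablaxi} generate the terms $(\tr_P\bY^{(m)})(\r-\s)_a$, $-3V^c{}_a\r^{(m)}_c$, $P^{bc}\Y_{ab}\r^{(m)}_c$ and $-2mP^{bc}\r_b\Y^{(m)}_{ac}$. The $m$-dependent coefficients arise from the binomial factor $\binom{m}{1}$ in \eqref{propMarc} and \eqref{derivada}, multiplying $\lie_\xi g^{-1}$, whose components are given by \eqref{Kup}. Similarly, $(\tr_P\bY-\tfrac m2 n(\elltwo))\r^{(m)}_a$ comes from $\mc K^{\mu}{}_\mu$ in \eqref{trazas} and the $\xi\xi$ component of \eqref{Kup}.

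\emph{Contraction with $n$.} Contract \eqref{dotR} with $n^a$, using $\r^{(m+1)}(n)=-\kappa^{(m+1)}$ and $\r^{(m)}(n)=-\kappa^{(m)}$. Convert $n^aP^{bc}\nablacero_b\Y^{(m)}_{ac}$ into $\div_P\br^{(m)}$ minus terms from $\nablacero n$ given by \eqref{derivadannull}; this yields $-P^{ab}P^{cd}\U_{bd}\Y^{(m)}_{ac}$ and $\s$-terms. Then rewrite $n(\tr_P\bY^{(m)})$ as $\lie_n$, and reduce $V^c{}_an^a$ with \eqref{Vn}. Collecting terms gives \eqref{dotRn0}, where the $n(\elltwo)$ coefficient changes from $m$ to $m+3$ through the contributions of \eqref{Vn} and \eqref{derivadaP}.

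The main obstacle is the bookkeeping in the reduction to $\mc H$: correctly isolating every order-$m$ term, with its exact $m$-dependent coefficient, among the nested binomial sums, while discarding all lower-order terms into $\mc O^{(m-1)}_a$. Each coefficient should be checked independently against the case $m=1$ (the known formula for $\ric(\xi,\cdot)$ on null data from \cite{Mio3}) and for consistency with the gauge behaviour \eqref{transY}.
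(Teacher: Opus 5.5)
Your overall route is the one the paper uses for \eqref{dotR} (Prop.~\ref{propliericxitang}): start from the linearized identity \eqref{liericci}, contract with $e_a^\alpha\xi^\beta$, reduce to $\mc H$ with the hypersurface-data identities, then contract with $n$. But the only new content of the statement is the exact set of order-$m$ coefficients. Your account of where they come from is wrong at precisely the step you single out as the crux.

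Commuting $\lie_\xi^{(m-2)}$ past $\nabla$ with \eqref{propMarc} produces only the products $\binom{m-2}{k+1}\Sigma^{(m-2-k)}\circledast\Sigma^{(k+1)}$ with $0\le k\le m-3$. So $i+j=m-1$ with $i,j\le m-2$. Since $\Sigma^{(j)}$ carries transverse order at most $j+1$, each of these is of order $\le m-1$ and goes into $\mc O^{(m-1)}_a$; this is exactly why \eqref{liericci} holds modulo $[m]$. There is no $\Sigma^{(m-1)}\circledast\Sigma$ piece. If you kept one, it would generically couple $\bY^{(m)}$ to $\bY^{(2)}$ (through components such as \eqref{3Sigma1ca}), and \eqref{dotR} contains no such coupling. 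Likewise $\mc K^{(m-1)}\circledast\Sigma$ is lower order for $m\ge 3$. As \eqref{dotR} shows, at order $m$ a top-order quantity is only ever multiplied by first-order data.

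The $m$-dependence actually arises as follows. Write $\xi^\beta\nabla_\mu\Sigma^{(m-1)\mu}{}_{\alpha\beta}=\nabla_\mu\big(\xi^\beta\Sigma^{(m-1)\mu}{}_{\alpha\beta}\big)-\Sigma^{(m-1)\mu}{}_{\alpha\beta}\nabla_\mu\xi^\beta$. Evaluate the first term as the divergence of the $(0,2)$-tensor $T={}^{(3)}\Sigma^{(m-1)}$ using Prop.~\ref{propdivergencia}.
\begin{itemize}
\item The term $n^b(\lie_\xi T)_{ba}$ there requires $\lie_\xi\Sigma^{(m-1)}_{\mu\alpha\beta}=g_{\mu\nu}\Sigma^{(m)\nu}{}_{\alpha\beta}+\mc K_{\mu\nu}\Sigma^{(m-1)\nu}{}_{\alpha\beta}$, together with \eqref{apC}.
\item The resulting $-(m-2)\mc K_\mu{}^\rho\uwh\Sigma^{(m-1)}_{\rho\alpha\beta}$, contracted via \eqref{Kupnu}, gives $-(m-2)\big(2P^{cd}\r_c\Y^{(m)}_{da}+\tfrac12 n(\elltwo)\r^{(m)}_a\big)$.
\item The rest of $-2mP^{bc}\r_b\Y^{(m)}_{ac}$, $-\tfrac m2 n(\elltwo)\r^{(m)}_a$, $-3V^c{}_a\r^{(m)}_c$, $P^{bc}\Y_{ab}\r^{(m)}_c$ and $\tr_P\bY\,\r^{(m)}_a$ comes from three places: \eqref{apC8}, Prop.~\ref{propdivergencia} itself, and $-\Sigma^{(m-1)\mu}{}_{\alpha\beta}\nabla_\mu\xi^\beta$ via \eqref{nablaxiup}.
\end{itemize}
This needs the lower-order facts ${}^{(3)}\Sigma^{(m-1)}\st{(m)}{=}\bY^{(m)}$, ${}^{(1,3)}\Sigma^{(m-1)}\st{(m)}{=}{}^{(2,3)}\Sigma^{(m-1)}\st{(m)}{=}0$, \eqref{apC9} and \eqref{apC5bis}, none of which appear in your plan. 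In particular, $\tr_P\bY\,\r^{(m)}_a$ comes from the $-\bY(X,Y)\nu$ term of \eqref{connections} inside Prop.~\ref{propdivergencia}, not from $\mc K^\mu{}_\mu$.

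Your $n$-contraction is fine, except that \eqref{derivadaP} plays no role. By \eqref{derivadannull} alone, $n^aP^{bc}\nablacero_b\Y^{(m)}_{ac}=\div_P\br^{(m)}-P(\bs,\br^{(m)})-P^{bc}P^{ad}\U_{bd}\Y^{(m)}_{ac}$. The shift $\tfrac m2\to\tfrac{m+3}{2}$ comes entirely from $-3V^c{}_an^a\r^{(m)}_c$ via \eqref{Vn}.

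Finally, comparing with $m=1$ cannot validate the coefficients. \eqref{ricxiX} is not the $m=1$ case of \eqref{dotR}, because at $m=1$ the split into top-order and lower-order terms collapses; for example, $(\tr_P\bY^{(m)})\r_a$ and $(\tr_P\bY)\r^{(m)}_a$ become the same term. With these corrections your plan becomes the paper's computation.
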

The corresponding expressions in the case $m=1$, i.e. for $\mc{R}^{(1)}_{ab}$, $\dot{\mc R}^{(1)}_a$ and $\ddot{\mc R}^{(1)}$, are as follows \cite{miguel3,Mio3}
\begin{align}
\mc R^{(1)}_{ab}& =\mc R_{ab}= \accentset{\circ}{R}_{(ab)} -2\lie_n \Y_{ab} - (2\kappa_n+\tr_P\bU)\Y_{ab} + \nablacero_{(a}\left(\s_{b)}+2\r_{b)}\right)\nonumber\\
&\quad -2\r_a\r_b + 4\r_{(a}\s_{b)} - \s_a\s_b - (\tr_P\bY)\U_{ab} + 2P^{cd}\U_{d(a}\left(2\Y_{b)c}+\F_{b)c}\right),\label{constraint}\\
\mc{\ddot R}^{(1)} & =\mc{\ddot R}= -P^{ab}\Y^{(2)}_{ab} + P^{ab}P^{cd}(\Y+\F)_{ac}(\Y+\F)_{bd} + P^{ab}(\r-\s)_a \nablacero_b\elltwo,\label{ricxixi}\\
\mc{\dot R}^{(1)}_c  & = \mc{\dot R}_c=\r^{(2)}_c - P^{ab}A_{abc}  - P^{ab}(\r+\s)_a(\Y+\F)_{cb} +\dfrac{1}{2}\kappa_n \nablacero_c\elltwo -\dfrac{1}{2} n(\elltwo)(\r-\s)_{c} ,\label{ricxiX}
\end{align}
where $\Rcero_{ab}$ is the Ricci tensor of $\nablacero$, while $A$ is given by $$A_{bcd}\d 2\nablacero_{[d}\F_{c]b} + 2\nablacero_{[d}\Y_{c]b} + \U_{b[d}\nablacero_{c]}\elltwo + 2\Y_{b[d}(\r-\s)_{c]}$$ 

and satisfies (the third equality is a result established in \cite{miguel3})
\begin{align*}
	P^{ab}A_{abc}n^c & = \dfrac{1}{2}P^{ab} (A_{abc}+A_{bac})n^c=-\dfrac{1}{2}P^{ab} (A_{acb}+A_{bca})n^c\\
	&= P^{ab}\left(\lie_n\Y_{ab}-\nablacero_a(\r+\s)_b+\kappa_n\Y_{ab}+(\r-\s)_a(\r-\s)_b-\dfrac{1}{2}n(\elltwo)\U_{ab}-P^{cd}\Y_{ac}\U_{bd}\right)\\
	&\st{\eqref{lietrPY}}{=} \lie_n\big(\tr_P\bY\big) +P^{ab}P^{cd}\U_{ac}\Y_{bd}  -\div_P(\br+\bs) +\big(\tr_P\bY-n(\elltwo)\big)\kappa_n \\
	&\quad\, +P(\br+\bs,\br+\bs) - \dfrac{1}{2}n(\elltwo)\tr_P\bU.
\end{align*}
Therefore, the contraction of \eqref{ricxiX} with $n^c$ is
\begin{align}
\dot{\mc R}^{(1)}_a n^a & = -\kappa^{(2)} - P^{ab}A_{abc}n^c - P(\br+\bs,\br+\bs)+ n(\elltwo)\kappa_n\nonumber\\
&=- \kappa^{(2)} - \lie_n\big(\tr_P\bY\big)+\div(\br+\bs) +\big(2n(\elltwo)-\tr_P\bY\big)\kappa_n + \dfrac{1}{2}\big(\tr_P\bU\big)  n(\elltwo)\nonumber\\
&\quad\, - P^{ad}P^{bc}\U_{ab}\Y_{cd} -4P(\br,\bs) - 2P(\br,\br) - 2P(\bs,\bs). \label{dotRn}
\end{align}
The contractions of $\mc{R}^{(1)}_{ab}$ with $P$ and $n$ are \cite{miguel3}
\begin{align}
	\hspace{-0.5cm}	\mc R_{ab}^{(1)}n^a & = -\lie_n(\r_b-\s_b) - \nablacero_b \kappa_n - (\tr_P\bU) (\r_b-\s_b)  - \nablacero_b (\tr_P\bU) + P^{cd}\nablacero_c\U_{bd} - 2P^{cd}\U_{bd}s_c,\label{constraintn}\\
	\hspace{-0.5cm}	\mc R^{(1)}_{ab}n^an^b & = -n(\tr_P\bU) + (\tr_P\bU)\kappa_n - P^{ab}P^{cd}\U_{ac}\U_{bd},\label{constraintnn}\\
	\hspace{-0.5cm}	P^{ab}{\mc R}^{(1)}_{ab} &= \tr_P \Rcero -2\lie_n(\tr_P\bY)  - 2\big(\kappa_n+\tr_P\bU\big)\tr_P\bY+ \div_P(\bs + 2\br) -2P(\br,\br)\nonumber\\
	\hspace{-0.5cm}	&\quad\, -4P(\br,\bs)-P(\bs,\bs)+2\kappa_n n(\elltwo).\label{trPR}
\end{align}
Expressions \eqref{trPR} and \eqref{dotRn} are all one needs to compute the ambient scalar curvature at $\mc H$. Indeed, using \eqref{inversemetric}, $R^{(1)} =R\st{\mc H}{=} \tr_P\mc{R}  +2\dot{\mc R}_a n^a$, and inserting \eqref{trPR} and \eqref{dotRn} one concludes
\begin{equation}
	\label{scal}
	\begin{aligned}
		\hskip -0.2cm R^{(1)}=R& = -2\kappa^{(2)} - 4\lie_n(\tr_P\bY) -2\big(2\kappa_n+\tr_P\bU\big)\tr_P\bY +3\div\bs + 4\div\br- 5P(\bs,\bs)\\
		&\quad\,  -6P(\br,\br)-12P(\br,\bs) -2P^{ab}P^{cd}\U_{ac}\Y_{bd}  +\big(\tr_P\bU+6\kappa_n\big) n(\elltwo) + \tr_P\Rcero .
	\end{aligned}
\end{equation}
Before computing the higher order derivatives of the scalar curvature $R^{(m)}$, $m\ge 2$, we recall the following notation from \cite{Mio3,Mio4}.
\begin{nota}
	\label{nota1}
Let $(\mc M,g)$ be a semi-Riemannian manifold and $\mc H$ a hypersurface. Given two ambient tensors $T$ and $S$, the notation $T \st{[m]}{=} S$ means that $T-S$ does not depend on derivatives of $g$ of order $m$ or higher. When $T$ and $S$ are tensors on $\mc H$, we use $T \st{(m)}{=} S$ to denote that $T-S$ does not depend on \textbf{transverse} derivatives of $g$ at $\mc H$ of order $m$ or higher. Clearly, for two ambient tensors $T$ and $S$ satisfying $T \st{[m]}{=} S$, their pullbacks to $\mc H$ satisfy $\Phi^{\star}T \st{(m)}{=} \Phi^{\star} S$.
\end{nota}
Applying $\lie_{\xi}^{(m-1)}$ to $g^{\alpha\beta}R_{\alpha\beta}$ for $m\ge 2$ and using identity \eqref{derivada} and $\lie_{\xi}g^{\alpha\beta} = -g^{\alpha\mu}g^{\beta\nu}\mc{K}_{\mu\nu}$ one obtains $$R^{(m)} \st{[m]}{=} g^{\alpha\beta} R^{(m)}_{\alpha\beta} - (m-1) g^{\alpha\mu}g^{\beta\nu}\mc{K}_{\mu\nu} R^{(m-1)}_{\alpha\beta} \st{[m]}{=} P^{ab}\mc{R}^{(m)}_{ab} + 2\dot{\mc R}^{(m)}_a n^a - (m-1) g^{\alpha\mu}g^{\beta\nu}\mc{K}_{\mu\nu} R^{(m-1)}_{\alpha\beta},$$ 

where in the second equality we inserted \eqref{inversemetric}. The first two terms are \eqref{ePcontractioneq} and \eqref{dotRn0}. Concerning the third one, we note that \eqref{R} implies that terms of the form $\bY^{(m)}$ and $\bY^{(m+1)}$ can only appear when the tensor $R^{(m-1)}_{\alpha\beta}$ is contracted with $\xi$ at least once, i.e. (cf. \eqref{inversemetric}) 
\begin{align*}
g^{\alpha\mu}g^{\beta\nu}\mc{K}_{\mu\nu} R^{(m-1)}_{\alpha\beta} &\st{(m)}{=} \big(2P^{ac}e_c^{\mu}\nu^{\nu}e_a^{\alpha}\xi^{\beta} + 2\xi^{\mu}\nu^{\nu}\nu^{\alpha}\xi^{\beta} + \nu^{\mu}\nu^{\nu}\xi^{\alpha}\xi^{\beta}\big)\mc{K}_{\mu\nu} R^{(m-1)}_{\alpha\beta}\\
& \st{(m)}{=} 4P^{ab}\r_a\dot{\mc R}_b^{(m-1)}+n(\elltwo) \dot{\mc R}^{(m-1)}_a n^a - 2\kappa_n \ddot{\mc R}^{(m-1)}\\
& \st{(m)}{=} 4P^{ab}\r_a\r^{(m)}_b - n(\elltwo) \kappa^{(m)} +2\kappa_n\tr_P\bY^{(m)},
\end{align*}
where in the second line we used $\mc{K}_{ab}=2\Y_{ab}$ and $\mc{K}_{\mu\nu}\xi^{\mu}e^{\nu}_a = \frac{1}{2}\nablacero_a\elltwo$ (see \eqref{Kxi}), and in the third line \eqref{dotR}, \eqref{dotRn0} and \eqref{ddotR}. Finally, inserting this, \eqref{ePcontractioneq} and \eqref{dotRn0} into the expression of $R^{(m)}$, we arrive at
\begin{equation}
	\label{scalar}
	\begin{aligned}
		R^{(m)}& \st{(m)}{=}  -2\kappa^{(m+1)} -4\lie_n\big(\tr_P\bY^{(m)}\big) - 2\big(2m\kappa_n + \tr_P\bU\big)\tr_P\bY^{(m)} - 2P^{ab}P^{cd}\U_{ac}\Y^{(m)}_{bd}\\
		&\quad\, +4\div_P\br^{(m)}-4P\big((2m+1)\br+3\bs,\r_b^{(m)}\big)+2\big((m+2)n(\elltwo)-2\tr_P\bY\big)\kappa^{(m)}.
	\end{aligned}
\end{equation}
An immediate consequence is 
\begin{equation}
	\label{scalarm+1}
	R^{(m)} \st{(m+1)}{=}  -2\kappa^{(m+1)}.
\end{equation}
\begin{rmk}
	\label{remarkricci}
	With the notation introduced in \ref{nota1} it is clear from \eqref{ddotR}, \eqref{dotR}, \eqref{R} and \eqref{scalar} that $\mc{R}_{ab}^{(m)}\st{(m+1)}{=}0$, $\dot{\mc R}_a^{(m)}\st{(m+2)}{=}0$, $\ddot{\mc R}^{(m)}\st{(m+2)}{=}0$ and $R^{(m)}\st{(m+2)}{=}0$.
\end{rmk}

Next we recall the general existence and uniqueness results of \cite{Mio3,Mio4} that will be needed below. Proposition \ref{prop_diffeo} quotes the result in \cite{Mio3} where we constructed a diffeomorphism between neighbourhoods of two diffeomorphic hypersurfaces. Proposition \ref{teo_iso0} quotes a result in \cite{Mio3} where this diffeomorphism was used to derive a set of sufficient conditions on two diffeomorphic hypersurfaces such that there exists an isometry between two neighbourhood of them. Finally, in Theorem \ref{borel} we recall a result from \cite{Mio4} where an ambient spacetime was constructed from the would-be expansion at a null hypersurface.

\begin{prop}
	\label{prop_diffeo}
	Let $\Phi:\mc H\hookrightarrow\mc M$ and $\Phi':\mc H'\hookrightarrow\mc M'$ be two embedded hypersurfaces in ambient manifolds $(\mc M,g)$ and $(\mc M',g')$ and let $\xi$, $\xi'$ be respectively riggings of $\Phi(\mc H)$, $\Phi'(\mc H')$ extended geodesically. Assume that there exists a diffeomorphism $\chi:\mc H\to\mc H'$. Then, there exist open neighbourhoods $\mc U\subset \mc M$ and $\mc U'\subset\mc M'$ of $\Phi(\mc H)$ and $\Phi'(\mc H')$ and a unique diffeomorphism $\Psi:\mc U\to\mc U'$ satisfying $\Psi_{\star}\xi = \xi'$ and $\Phi'\circ \chi = \Psi\circ \Phi$. 
\end{prop}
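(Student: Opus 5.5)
We follow the standard construction via the flows of the geodesic rigging fields, which gives both existence and uniqueness at once. Since $\xi$ is extended by $\nabla_\xi\xi=0$, it is a smooth vector field on a neighbourhood of $\Phi(\mc H)$, and similarly for $\xi'$. Let $\varphi_t$ and $\varphi'_t$ be their local flows. We set
$$F:\mc D\subset \real\times\mc H\to\mc M,\qquad F(t,p)\d \varphi_t(\Phi(p)),$$
and analogously $F'(t,p')\d\varphi'_t(\Phi'(p'))$. Here $\mc D$ is the open domain of definition, which contains $\{0\}\times\mc H$. At $t=0$ the differential of $F$ maps $\partial_t$ to $\xi$ and $T_p\mc H$ to $\Phi_\star T_p\mc H$. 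Since $\xi$ is transverse, $dF$ is an isomorphism along $\{0\}\times\mc H$. By the inverse function theorem, $F$ is a local diffeomorphism on an open neighbourhood $\mc D_0$ of $\{0\}\times\mc H$. The same holds for $F'$.

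Next we upgrade this to a diffeomorphism onto its image. The map $F$ is injective on $\{0\}\times\mc H$ because $\Phi$ is an embedding. The standard tubular-neighbourhood argument then produces a smaller neighbourhood of the form $\{(t,p):|t|<\epsilon(p)\}$, with $\epsilon>0$ continuous, on which $F$ is injective. We argue by contradiction using local compactness: suppose $(t_k,p_k)\neq(t_k',p_k')$ have equal images, with $t_k,t_k'\to0$ and $p_k$ in a compact set. Passing to subsequences and using that $\Phi$ is an embedding (a homeomorphism onto its image), the points converge to the same point of $\{0\}\times\mc H$. This contradicts local injectivity there. An exhaustion of $\mc H$ by compacts then yields the function $\epsilon$. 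This is the step requiring the most care, since $\mc H$ is not assumed compact and $\Phi(\mc H)$ need not be closed. If $\Phi(\mc H)$ is not closed, we first shrink to a neighbourhood in which it is closed, which is possible because embedded submanifolds are locally closed. We perform the same construction for $F'$. Then $\mc U\d F(\mc W)$ and $\mc U'\d F'(\mc W')$ are open neighbourhoods.

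We then define $\Psi\d F'\circ(\Id\times\chi)\circ F^{-1}$. To make this well defined, we choose $\mc W\subset\real\times\mc H$ such that $(\Id\times\chi)(\mc W)$ lies in the injectivity domain of $F'$. This is achievable by replacing $\epsilon$ with $\min(\epsilon,\epsilon'\circ\chi)$. As a composition of diffeomorphisms, $\Psi$ is a diffeomorphism onto its open image. At $t=0$ we get $\Psi\circ\Phi=\Phi'\circ\chi$. Moreover, $F_\star\partial_t=\xi$ and $F'_\star\partial_t=\xi'$, while $\Id\times\chi$ preserves $\partial_t$. Hence $\Psi_\star\xi=\xi'$.

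For uniqueness, let $\tilde\Psi$ be another such diffeomorphism defined on $\mc U$. Since $\tilde\Psi_\star\xi=\xi'$, it intertwines the flows: $\tilde\Psi\circ\varphi_t=\varphi'_t\circ\tilde\Psi$ wherever both sides are defined. Applying this at $\Phi(p)$ gives
$$\tilde\Psi(F(t,p))=\varphi'_t(\Phi'(\chi(p)))=\Psi(F(t,p)).$$
This holds on the connected segments of integral curves through $\Phi(\mc H)$. By construction these segments sweep out all of $\mc U$, so $\tilde\Psi=\Psi$.
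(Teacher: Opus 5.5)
This paper does not reprove the statement but quotes it from \cite{Mio3}. Your construction $\Psi=F'\circ(\Id\times\chi)\circ F^{-1}$ through the flows of $\xi$ and $\xi'$ is exactly what the two conditions force, since they imply $\Psi(\varphi_t(\Phi(p)))=\varphi'_t(\Phi'(\chi(p)))$, and both your existence and uniqueness arguments are correct.

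The one step worth writing out is why $p_k'\to p$ in the injectivity argument, since $p_k'$ is not confined to a compact set. Flow back: $\Phi(p_k')=\varphi_{-t_k'}\big(F(t_k',p_k')\big)\to\Phi(p)$ by continuity of the flow, and then use that $\Phi$ is a homeomorphism onto its image.
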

\begin{prop}
	\label{teo_iso0}
	Let $\{\H,\bg,\bm\ell,\elltwo\}$ (respectively $\{\H',\bg',\bm\ell',\elltwo{}'\}$) be null metric hypersurface data $(\Phi,\xi)$-embedded in $(\mc M,g)$ (resp. $(\Phi',\xi')$-embedded in $(\mc M',g')$) with $\xi$ and $\xi'$ extended geodesically. Assume that there exists a diffeomorphism $\chi:\H\to\H'$ such that $$\chi^{\star}\{\H',\bg',\bm\ell',\elltwo{}'\}\d \{\chi^{\star}\mc H',\chi^{\star}\bg',\chi^{\star}\bm\ell',\chi^{\star}\elltwo{}'\} = \{\H,\bg,\bm\ell,\elltwo\}$$
	
	 and $\chi^{\star}\bY^{(k)}{}' = \bY^{(k)}$ for every $k\ge 1$. Then, there exist neighbourhoods $\mc U\subset \mc M$ and $\mc U'\subset\mc M'$ of $\Phi(\mc H)$ and $\Phi'(\mc H')$ and a diffeomorphism $\Psi:\mc U\to\mc U'$ satisfying $\Psi_{\star}\xi = \xi'$ and $\Phi'\circ \chi = \Psi\circ \Phi$ (as in Proposition \ref{prop_diffeo}) such that 
	\begin{equation}
		\label{isometry0}
		\Psi^{\star} \lie_{\xi'}^{(i)}g' \st{\mc H}{=} \lie_{\xi}^{(i)}g
	\end{equation} 
	for every $i\in\mathbb{N}\cup\{0\}$.
\end{prop}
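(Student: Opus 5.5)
We take $\Psi:\mc U\to\mc U'$ to be the diffeomorphism of Proposition \ref{prop_diffeo} attached to $\chi$, and prove \eqref{isometry0} by induction on $i$. Then $\Psi_\star\xi=\xi'$ and $\Phi'\circ\chi=\Psi\circ\Phi$. Both riggings are geodesically extended, $\nabla_\xi\xi=0$ and $\nabla'_{\xi'}\xi'=0$, so the relation $\Psi_\star\xi=\xi'$ holds on the whole of $\mc U$ and not only at the hypersurface. Consequently Lie derivatives commute with pullback: $\Psi^\star\lie_{\xi'}^{(i)}g'=\lie_{\xi}^{(i)}\Psi^\star g'$. It therefore suffices to show that the tensor $\Psi^\star\mc K'^{(i)}$ agrees with $\mc K^{(i)}$ at $\Phi(\mc H)$ for every $i\ge 0$.

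At each point of $\Phi(\mc H)$ we split a symmetric $(0,2)$ ambient tensor into three kinds of components in the frame $\{\xi,\wh e_a\}$: the fully tangential part (its pullback by $\Phi$), the mixed part (contraction with $\xi$ then pullback), and the $(\xi,\xi)$ component. Since $\Psi_\star\wh e_a=\wh e_a'$ (from $\Phi'\circ\chi=\Psi\circ\Phi$) and $\Psi_\star\xi=\xi'$, each component of $\Psi^\star\mc K'^{(i)}$ equals $\chi^\star$ of the corresponding component of $\mc K'^{(i)}$.
\begin{itemize}
\item For $i=0$ the three components are $\bg,\bm\ell,\elltwo$. They match because $\chi^\star$ maps the primed metric hypersurface data to the unprimed data.
\item For $i\ge1$ the tangential part is $2\bY^{(i)}$, which matches by hypothesis.
\item The mixed and transverse components are given by \eqref{Kxi}. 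They are either $\frac12\nablacero\elltwo$ (for $i=1$) or zero. In the first case they match because $\chi$ intertwines $\elltwo$ and $\elltwo{}'$, and hence also their differentials.
\end{itemize}
So all components coincide, and \eqref{isometry0} holds for every $i$.

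The only delicate point is the commutation $\Psi^\star\lie_{\xi'}=\lie_\xi\Psi^\star$ off the hypersurface. This requires $\Psi_\star\xi=\xi'$ on an open neighbourhood, which is part of the statement of Proposition \ref{prop_diffeo}; there $\Psi$ is built by mapping integral curves of $\xi$ to those of $\xi'$ with matching affine parameter. Apart from this, one should check that \eqref{Kxi} indeed uses only the geodesic condition. That condition holds for both $\xi$ and $\xi'$, so no other nonroutine step arises.
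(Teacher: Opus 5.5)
Your argument is correct and complete. At each point of $\Phi(\mc H)$ a symmetric tensor is fixed by its tangential, mixed and $(\xi,\xi)$ components. The relations $\Psi\circ\Phi=\Phi'\circ\chi$ and $\Psi_\star\xi=\xi'$ turn each component of $\Psi^\star\lie_{\xi'}^{(i)}g'$ into $\chi^\star$ of the corresponding primed one. Finally, \eqref{Kxi}, which indeed needs only $\nabla_\xi\xi=0$, shows that the mixed and transverse components are determined by $\elltwo$. This is the natural argument for this result, which the paper quotes from \cite{Mio3} without reproducing a proof.

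As a minor point, you never actually use the induction or the commutation $\Psi^\star\lie_{\xi'}=\lie_\xi\Psi^\star$. Statement \eqref{isometry0} only involves $\Psi^\star\lie_{\xi'}^{(i)}g'$ evaluated on $\Phi(\mc H)$, and the component comparison requires $\Psi_\star\xi=\xi'$ only along $\Phi(\mc H)$. The geodesic extension enters solely through \eqref{Kxi}.
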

\begin{teo}
	\label{borel}
	Let $\{\mc H,\bg,\bm\ell,\elltwo\}$ be null metric hypersurface data and $\{\bcY^{(k)}\}_{k\ge 1}$ a sequence of $(0,2)$ symmetric tensor fields on $\mc H$. Then there exists a semi-Riemannian manifold $(\mc M,g)$, an embedding $\Phi:\mc H\hookrightarrow\mc M$ and a rigging vector $\xi$ satisfying $\nabla_{\xi}\xi = 0$ on $\mc M$ such that (i) $\{\mc H,\bg,\bm\ell,\elltwo\}$ is null metric hypersurface data $(\Phi,\xi)$-embedded in $(\mc M,g)$ and (ii) $\{\bcY^{(k)}\}_{k\ge 1}$ is the transverse expansion of $g$ at $\Phi(\mc H)$, i.e. $\bcY^{(k)} = \bY^{(k)}\d \frac{1}{2}\Phi^{\star}\big(\lie^{(k)}_{\xi}g\big)$ for every $k\ge 1$.
\end{teo}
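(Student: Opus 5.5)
This is a Borel-type construction. I would first reduce to a local product model. Set $\mc M\d \mc H\times(-\epsilon,\epsilon)$, or more precisely a neighbourhood of $\mc H\times\{0\}$, and let $t$ be the coordinate on the second factor. The embedding is $\Phi(p)=(p,0)$ and the rigging is $\xi=\partial_t$. The aim is to build a metric $g$ on $\mc M$ for which $\partial_t$ is geodesic and
\[
\frac{1}{2}\Phi^{\star}\big(\lie^{(k)}_{\partial_t}g\big)=\bcY^{(k)}\qquad\text{for all }k\ge 1 .
\]
Since $\lie_{\partial_t}$ acts componentwise in coordinates adapted to the product, this condition prescribes $\partial_t^k g_{ab}|_{t=0}=2\bcY^{(k)}_{ab}$ on the tangential components.

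The remaining components are fixed by the geodesic requirement together with \eqref{Kxi}. I would impose
\[
g(\partial_t,\partial_t)=\elltwo,\qquad g(\partial_t,\cdot)|_{T\mc H}=\bm\ell ,
\]
both taken $t$-independent. With this choice $\nabla_{\partial_t}\partial_t$ has covariant components
\[
g(\nabla_{\partial_t}\partial_t,\partial_a)=\partial_t g_{ta}-\tfrac12\partial_a g_{tt}=-\tfrac12\partial_a\elltwo,
\qquad
g(\nabla_{\partial_t}\partial_t,\partial_t)=\tfrac12\partial_t g_{tt}=0 .
\]
These do not vanish in general, so $\partial_t$ would fail to be geodesic. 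I would therefore choose instead $g_{tt}=\elltwo$ and $g_{ta}=\ell_a+\tfrac{t}{2}\partial_a\elltwo$. Then $g(\nabla_{\partial_t}\partial_t,\cdot)=0$ identically, so $\partial_t$ is geodesic everywhere, and this agrees with \eqref{Kxi}, which gives $\mc K^{(1)}(\xi,e_a)=\tfrac12\nablacero_a\elltwo$ and all higher $\xi$-components zero.

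For the tangential block I would apply Borel's lemma to obtain a smooth family $h_t$ of symmetric tensors on $\mc H$ with $h_0=\bg$ and $\partial_t^k h|_{t=0}=2\bcY^{(k)}$. On a noncompact $\mc H$ this is done locally with a partition of unity, and the convergence factors $\chi(t/\epsilon_k)$ are chosen so that the $C^k$ norms are controlled on compact sets. Then set $g_{ab}=h_{ab}$.

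The main remaining obstacle is nondegeneracy together with the global patching. At $t=0$ the matrix of $g$ is exactly $\bm{\mc A}$, which is nondegenerate by the definition of null metric hypersurface data. Nondegeneracy is an open condition, so it persists on some neighbourhood $\{|t|<\epsilon(p)\}$ with $\epsilon$ a positive continuous function on $\mc H$. I would restrict $\mc M$ to that region, which still contains $\mc H\times\{0\}$. The tangential components along $\mc H$ and the mixed components are tensorial on $\mc H$ once the product splitting is fixed, so no coordinate patching of $g$ is needed beyond the partition of unity used in Borel's lemma. This gives (i) and (ii).
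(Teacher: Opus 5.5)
Your construction is correct. On $\mc H\times(-\epsilon,\epsilon)$ the geodesic condition in fact forces $g(\partial_t,\partial_t)=\elltwo$ and $g(\partial_t,\cdot)=\bm\ell+\frac{t}{2}d\elltwo$; the tangential block with the prescribed $t$-derivatives comes from Borel's lemma, and nondegeneracy near $t=0$ follows from that of $\bm{\mc A}$. This is essentially the argument behind the result, which the paper quotes from \cite{Mio4} without reproducing the proof and whose key ingredient it identifies as Borel's lemma.
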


We emphasize that the ambient spacetime $(\mc M,g)$ in this theorem need not to be analytic (i.e. the series need not to converge). An important ingredient in the proof of this theorem is Borel's Lemma \cite{golubitsky2012stable}, which we shall also need later in the following specific form.

\begin{lema}[Borel]
\label{Borel_lemma_for_functions}
Let $\mc M$ be a smooth manifold, $\mc H\hookrightarrow\mc M$ a embedded smooth hypersurface with rigging $\xi$ and $\{\sigma^{(k)}\}_{k\ge 0}$ a collection of functions on $\mc H$. Then, there exists a function $\Omega$ in a neighbourhood of $\mc H$ in $\mc M$ such that $\lie_{\xi}^{(k)}\Omega |_{\mc H} = \sigma^{(k)}$ for every $k\ge 0$.
\end{lema}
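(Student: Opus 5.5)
We reduce to the classical Borel lemma on $\real$ (or $\real^N$) by working in a tubular neighbourhood of $\mc H$ built from the flow of the rigging. First we extend $\xi$ to a smooth vector field on a neighbourhood of $\mc H$ in $\mc M$, still denoted $\xi$; only its values on $\mc H$ and the choice of extension matter, and the statement is understood for this extension. Since $\xi$ is transverse to $\mc H$, the map $F:\mc H\times(-\epsilon,\epsilon)\to\mc M$, $F(p,t)=\varphi_t(p)$, with $\varphi_t$ the flow of $\xi$, is a local diffeomorphism along $\mc H\times\{0\}$. By the standard tubular neighbourhood argument (shrinking $\epsilon$ to a positive function $\epsilon(p)$ if $\mc H$ is not compact, and using that $\mc H$ is embedded) it is a diffeomorphism onto an open neighbourhood $\mc U$ of $\mc H$. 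In these coordinates $F^{\star}\xi=\partial_t$, so $\lie_\xi^{(k)}\Omega|_{\mc H}=\partial_t^k(\Omega\circ F)|_{t=0}$.

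It therefore suffices to construct a smooth function $f$ on an open neighbourhood of $\mc H\times\{0\}$ in $\mc H\times\real$ with $\partial_t^kf(\cdot,0)=\sigma^{(k)}$ for all $k\ge0$, and then set $\Omega\d f\circ F^{-1}$. For this we use the usual Borel construction. Fix a cutoff $\psi\in C^\infty_c(\real)$ with $\psi\equiv1$ near $0$ and support in $[-1,1]$, and define
\[
f(p,t)\d\sum_{k\ge0}\frac{\sigma^{(k)}(p)}{k!}\,t^k\,\psi\big(\lambda_k(p)\,t\big),
\]
where the $\lambda_k>0$ are smooth functions on $\mc H$ growing fast enough. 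Formally, $\partial_t^j$ of the $k$-th term vanishes at $t=0$ for $j\neq k$ and equals $\sigma^{(k)}$ for $j=k$, because $\psi\equiv1$ near $0$.

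The main technical point, and the only real obstacle, is choosing $\lambda_k$ so that the series converges in $C^\infty$ on $\mc H\times\real$, which legitimises termwise differentiation. We handle this locally first. Take a locally finite countable cover of $\mc H$ by relatively compact coordinate charts $\{W_i\}$ with a subordinate partition of unity $\{\rho_i\}$. On each $\overline{W_i}$, a constant $\lambda_{k,i}$ chosen large (in terms of the $C^k$ norms of $\sigma^{(k)}$ on $\overline{W_i}$ and of $\psi$) makes every derivative of order $\le k-1$ in $(x,t)$ of the $k$-th term bounded by $2^{-k}$. This holds because the support in $t$ has size $\lambda^{-1}$ and each $t$-derivative costs at most a factor $\lambda$, leaving one power of $\lambda^{-1}$ to spare. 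The resulting local sums $f_i$ converge in $C^\infty(\overline{W_i}\times\real)$ and satisfy $\partial_t^kf_i(\cdot,0)=\sigma^{(k)}$. We then glue them as $f\d\sum_i\rho_i(p)f_i(p,t)$; this is smooth by local finiteness and still has the required $t$-derivatives at $t=0$, since $\rho_i$ does not depend on $t$.

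Finally, $f$ is restricted to $F^{-1}(\mc U)$. This is harmless because only the jet at $t=0$ matters, so any neighbourhood of $\mc H\times\{0\}$ suffices.
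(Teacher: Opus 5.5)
Your proof is correct. The paper gives no argument of its own and simply invokes the classical Borel lemma from \cite{golubitsky2012stable}; your construction is the standard way of obtaining the stated manifold version from that classical result: a tubular neighbourhood built by flowing along the extended rigging, in which $\lie_{\xi}$ becomes $\partial_t$, followed by the parametric Borel series on relatively compact charts and gluing with a $t$-independent partition of unity.
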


The main idea of the present paper is to analyze how the quasi-Einstein equations fix the geometry at null infinity, i.e. how the transverse expansion of the metric is constrained when the conformal Einstein equations are imposed at $\scri$ to infinite order. To analyse this problem it is convenient to introduce the following tensors
\begin{align}
	\mc{Q}_{\alpha\beta}&\d (\mf n-1)\big(\nabla_{\alpha}\nabla_{\beta}\Omega + \Omega L_{\alpha\beta}\big) = (\mf n-1)\nabla_{\alpha}\nabla_{\beta}\Omega + \Omega\left(R_{\alpha\beta}-\dfrac{R}{2\mf n}g_{\alpha\beta}\right),	\label{defC}\\
	\mc{L}_{\alpha} &\d  (\mf n-1) L_{\alpha\beta}\nabla^{\beta}\Omega = R_{\alpha\beta}\nabla^{\beta}\Omega - \dfrac{R}{2\mf n} \nabla_{\alpha}\Omega,	\label{mcL}
\end{align}
and the function $f\d|\nabla\Omega|^2$. These tensors are obviously related by 
\begin{equation}
	\label{QyL}
{\mc{Q}}_{\alpha\beta}\nabla^{\beta}\Omega = \dfrac{\mf n-1}{2}\nabla_{\alpha}f +\Omega{\mc{L}}_{\alpha}.
\end{equation}
Less immediate is the identity
\begin{equation}
	\label{Bianchi}
\nabla_{\mu}\big(\tr_g\Q\big) + \mf{n}{\mc{L}}_{\mu} =  \nabla_{\rho}\Q^{\rho}{}_{\mu},
\end{equation}
which follows from the Bianchi identity \eqref{nablas} after replacing $q=\frac{\tr_g\Q}{\mf n^2-1}$ and $(\mf n-1)\mc T = \Q-\frac{\tr\Q}{\mf n+1}g$. 
In accordance of the general notation of this paper, we introduce the tensors $$\mc{Q}^{(m)}_{\alpha\beta}\d \lie_{\xi}^{(m-1)}\mc{Q}_{\alpha\beta}, \qquad \mc{L}^{(m)}_{\alpha} \d \lie_{\xi}^{(m-1)}\mc{L}_{\alpha}, \qquad f^{(m)}\d \lie_{\xi}^{(m-1)}f $$ 

and
\begin{align}
\mc{Q}_{ab}^{(m)} &\d \big(\Phi^{\star}\mc{Q}^{(m)}\big)_{ab},\qquad \dot{\mc Q}^{(m)}_a = \big(\Phi^{\star}\mc{Q}^{(m)}(\xi,\cdot)\big)_a,\qquad \ddot{\mc Q}^{(m)}\d \Phi^{\star}\big(\mc{Q}^{(m)}(\xi,\xi)\big),\label{Cm}\\
\mc{L}^{(m)}_a &\d \big(\Phi^{\star}\mc{L}^{(m)}\big)_a,\qquad \dot{\mc L}^{(m)} \d \Phi^{\star}\big(\mc{L}^{(m)}(\xi)\big).	\label{Lm}
\end{align}
We also denote the transverse derivatives of $\Omega$ at $\scri$ by $\sigma^{(k)}\d \lie_{\xi}^{(k)}\Omega|_{\scri}$. Note that $\sigma^{(1)}$ agrees with the function $\sigma$ introduced before, so we shall use both symbols indistinctly. As already mentioned this function cannot vanish anywhere on $\scri$.\\

The idea now is to impose the conformal equations to infinite order at $\scri$ to see how $\{\bY^{(k)}\}_{k\ge 1}$ and $\{\sigma^{(k)}\}_{k\ge 1}$ are constrained. Clearly, this depends on how the conformal factor $\Omega$ has been fixed. One sensible choice commonly used in the literature is to require the transformed Ricci scalar to vanish, which amounts solving a wave equation of the form $\square_g\Omega = \Omega F$ for some function $F$, see \cite{friedrich2002conformal}. Another interesting possibility is to fix $\Omega$ by solving $|\nabla\Omega|^2=0$. We next quote a result from \cite{Mio6} where we showed that this conformal gauge always exists locally near $\scri$ and depends on a free function on a hypersurface transverse to $\scri$.
\begin{lema}
	\label{lema_conformalgauge}
Let $(\mc M,g,\Omega)$ be a conformal manifold with $\lambda=0$ and $\mc H$ a hypersurface transverse to $\scri$. Let $\omega_0$ be a non-vanishing function on $\mc H$. Then, there exists a unique $(\wt g=\omega^2g,\wt \Omega=\omega\Omega)$ in a neighbourhood of $\scri\cap\mc H$ such that $|\wt{\nabla}\wt\Omega|^2_{\wt g}=0$ and $\wt\Omega\st{\mc H}{=}\omega_0\Omega$. 
\end{lema}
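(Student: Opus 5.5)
The existence of the gauge reduces to a first-order nonlinear PDE for the conformal factor $\omega$, which we will solve as a characteristic (eikonal-type) initial value problem with data on $\mc H$. Write $\wt g=\omega^2 g$ and $\wt\Omega=\omega\Omega$. Then $\wt g^{-1}=\omega^{-2}g^{-1}$ and $\wt\nabla\wt\Omega = \omega\nabla\Omega+\Omega\nabla\omega$ (as differentials), so
\begin{equation*}
|\wt\nabla\wt\Omega|^2_{\wt g} = |\nabla\Omega|^2 + 2\Omega\omega^{-1}\langle\nabla\Omega,\nabla\omega\rangle + \Omega^2\omega^{-2}|\nabla\omega|^2 .
\end{equation*}
Since $\lambda=0$, \eqref{lambda} gives $|\nabla\Omega|^2=2q\Omega$. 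The whole expression is therefore divisible by $\Omega$. Setting $\theta\d\omega^{-1}$, so that $\nabla\omega/\omega=-\nabla\theta/\theta$, the condition $|\wt\nabla\wt\Omega|^2_{\wt g}=0$ on $\{\Omega\neq0\}$ becomes equivalent to
\begin{equation*}
F(\theta,d\theta)\d 2q\theta^2 - 2\theta\langle\nabla\Omega,\nabla\theta\rangle + \Omega|\nabla\theta|^2 = 0 .
\end{equation*}
Here $F$ is smooth up to and across $\scri$, and we take this regular equation as the one to solve. Its solutions give smooth $\wt g$ with $|\wt\nabla\wt\Omega|^2=\Omega\,\omega^2 F(\theta,d\theta)$, which vanishes identically, including on $\scri$ by continuity. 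The data are $\theta|_{\mc H}=\omega_0^{-1}$, which encodes the condition $\wt\Omega\st{\mc H}{=}\omega_0\Omega$.

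The equation $F=0$ is a first-order fully nonlinear PDE $F(x,\theta,p)=0$ with Hamiltonian
\begin{equation*}
F = 2q\theta^2 - 2\theta\, p(\nabla\Omega) + \Omega\, g^{-1}(p,p).
\end{equation*}
We solve it by the method of characteristics in a neighbourhood of $\scri\cap\mc H$. The noncharacteristic condition on $\mc H$ requires the characteristic direction $\partial F/\partial p = -2\theta\nabla\Omega + 2\Omega\nabla\theta$ to be transverse to $\mc H$. At $\scri$ this direction equals $-2\theta\nabla\Omega$. This vector is null, nonzero and tangent to the generators of $\scri$, so it is transverse to $\mc H$ because $\mc H$ is transverse to $\scri$. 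By continuity it remains transverse in a neighbourhood.

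We then need the initial covector $p$ on $\mc H$. Its tangential part is fixed by $d(\omega_0^{-1})$, and its remaining component $p(\nu_{\mc H})$ must be chosen so that $F=0$ holds there. Because $F$ is linear in that component near $\scri$, with coefficient $-2\theta\,\nabla\Omega(\cdot)$ evaluated on a vector transverse to $\mc H$, which is nonzero, the implicit function theorem yields a unique smooth solution near $\scri\cap\mc H$. The standard characteristic theorem then gives a unique smooth local solution $\theta$, which is positive near $\mc H$. Uniqueness of $(\wt g,\wt\Omega)$ follows from uniqueness of the characteristic solution. On $\{\Omega\neq0\}$ any $\omega$ in the gauge satisfies $F=0$, and by density and continuity the same holds everywhere.

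The main obstacle is regularity at $\scri$. The original equation degenerates there: the quadratic term carries a factor $\Omega$, and the characteristics run along $\scri$ itself. We must therefore check that the characteristic flow, whose projected vector field is $-2\theta\nabla\Omega+2\Omega\nabla\theta$, is tangent to $\scri$ on $\scri$. This makes $\scri$ invariant. It holds because along $\Omega=0$ the derivative of $\Omega$ along the flow is $-2\theta|\nabla\Omega|^2=0$. The flow is smooth, so the characteristic solution is smooth on a full neighbourhood, and in particular on both sides of $\scri$ near $\mc H\cap\scri$, which completes the argument.
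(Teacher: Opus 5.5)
The paper gives no proof of this lemma; it is quoted from \cite{Mio6}. So your argument has to stand on its own, and in substance it does. The reduction is correct: $|\wt\nabla\wt\Omega|^2_{\wt g}=\Omega\,\omega^2F(\theta,d\theta)$. In fact $F=2\theta\,\wt q$ by \eqref{transs}, so the gauge condition is simply $\wt q=0$. Since $F$ is smooth in $(x,\theta,p)$, once the initial strip is noncharacteristic the standard theory for fully nonlinear first-order equations gives a unique smooth solution on a full two-sided neighbourhood. Your closing paragraph is therefore not needed: there is no degeneracy to overcome, and the invariance of $\scri$ under the characteristic flow, though true, plays no role.

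The step that does not follow as written is the noncharacteristic condition itself. Transversality of the hypersurfaces, $T\mc H+T\scri=T\mc M$, does \emph{not} imply that the generator $\nabla\Omega$ is transverse to $\mc H$. For example, with $\scri=\{t=0\}$ and $\nabla\Omega\propto\partial_u$ on $\scri$, the hypersurface $\{x^1=0\}$ is transverse to $\scri$ but contains the generators through $\mc H\cap\scri$. In that situation the statement fails for generic $\omega_0$. At $\Omega=0$ your equation reduces to $\nabla\Omega(\theta)=q\theta$, a transport equation along the generators, which the data $\omega_0^{-1}$ on $\mc H\cap\scri$ need not satisfy. You must therefore read the hypothesis as saying that $\mc H$ is transverse to the null generators of $\scri$ (i.e.\ $\mc H\cap\scri$ is locally a cross-section), which is clearly the intended meaning. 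With that reading, your noncharacteristic check and implicit-function argument are correct; the relevant coefficient is $-2\theta$ times a conormal of $\mc H$ evaluated on $\nabla\Omega$.

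There are two smaller points. First, uniqueness of the initial strip needs one more sentence. Off $\scri$ the equation for the normal component of $p$ is quadratic, and any root other than the implicit-function one is unbounded as $\Omega\to 0$. It is continuity of $d\theta$ up to $\scri$ that singles out the strip. Second, $\theta$ is only nonvanishing, not positive, since $\omega_0$ is merely assumed nonvanishing.

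A more geometric version of the same idea uses $|\wt\nabla\wt\Omega|^2_{\wt g}=\omega^{-2}g^{-1}(d\wt\Omega,d\wt\Omega)$. The condition then says that $\wt\Omega$ solves the $g$-eikonal equation with data $\omega_0\Omega$ on $\mc H$. Its characteristics are null geodesics, and those issuing from $\mc H\cap\scri$ with covector $\omega_0\,d\Omega$ are the generators of $\scri$. Since $\wt\Omega$ stays zero along them, $\omega=\wt\Omega/\Omega$ is smooth. Dividing by $\Omega$ from the start, as you do, avoids that last step at the price of a less transparent Hamiltonian.

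Finally, your argument only uses that $|\nabla\Omega|^2/\Omega$ extends smoothly, which holds whenever $\scri$ is null with $d\Omega\neq 0$. This is worth recording. The paper invokes the gauge (e.g.\ in Theorem \ref{uniqueness}) for spacetimes satisfying the field equations only to infinite order at $\scri$, where $\lambda$ is only known to vanish to infinite order.
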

In either of the two choices mentioned above, one can verify that $ \scri $ is totally geodesic, which, in terms of hypersurface data means $ \bU = 0 $. However, in a generic conformal gauge, $ \scri $ is only totally umbilical. In this paper we shall be mostly concerned with the choice $|\nabla\Omega|^2=0$, so in Proposition \ref{propQLf} below we write down the tensors $ \Q^{(m)}_{ab} $, $ \dot{\Q}{}^{(m)}_a $, $\ddot{\Q}{}^{(m)} $, $ \mc{L}^{(m)}_a $, $ \mc{\dot L}^{(m)} $, and $ f^{(m)} $ under the assumption $ \bU = 0 $. Each computation is performed up to the order that it will be needed. Given that other conformal choices are possible, in Appendix \ref{appendixC} we compute the expressions in full generality, i.e. without the assumption $\bU=0$. The formulae in Prop. \ref{propQLf} are simply their particularization to $\bU=0$. We also extend the meaning of $\st{(m)}{=}$ in Notation \ref{nota1} as follows.
\begin{nota}
Let $(\mc M,g,\Omega)$ be a conformal manifold with null infinity $\scri$ and $T$, $S$ two tensors on $\scri$. We use $T \st{(m)}{=} S$ to denote that $T-S$ does not depend on transverse derivatives of $g$ and $\Omega$ at $\scri$ of order $m$ or higher.
\end{nota}
\begin{prop}
	\label{propQLf}
Let $\scri=\{\Omega=0\}$ be $(\Phi,\xi)$-embedded in $(\mc M,g)$ and extend $\xi$ off $\Phi(\scri)$ geodesically. Assume $\bU=0$. Then, for every $m\ge 2$,
\begin{align}
\Q_{ab}^{(m+1)} &\st{(m)}{=} (\mf n-1-2m)\sigma^{(1)}\lie_n\Y^{(m)}_{ab} + m\left((\mf n-1)(\lie_n\sigma+\sigma\kappa_n) + (\mf n-1-2m)\sigma^{(1)}\kappa_n  \right)\Y_{ab}^{(m)} \nonumber\\
&\qquad  +\dfrac{m\sigma^{(1)}}{\mf n}\Big(\kappa^{(m+1)}+2\lie_n\big(\tr_P\bY^{(m)}\big) + 2m\kappa_n\tr_P\bY^{(m)}  \Big) \gamma_{ab}+\wt{\mathscr R}^{(m)}_{ab},\label{Qabm}\\
\dot\Q_{a}^{(m+1)} &\st{(m+1)}{=} (\mf n-1)\big(\nablacero_a \sigma^{(m+1)} - \sigma^{(m+1)}(\r-\s)_a  \big)- (\mf n-1-m)\sigma^{(1)}\r_a^{(m+1)} + \dfrac{m\sigma^{(1)}}{\mf n}\kappa^{(m+1)}\ell_a ,\label{dotQam}\\
\ddot\Q^{(m+1)} & \st{(m+1)}{=} (\mf n-1)\sigma^{(m+2)} - m\sigma^{(1)}\tr_P\bY^{(m+1)} + \dfrac{m\sigma^{(1)}\elltwo}{\mf n}\kappa^{(m+1)},\label{ddotQm}\\
\mc{L}^{(m)}_a  &\st{(m)}{=} - \sigma^{(1)} \lie_n \r_a^{(m)} + (m-1) (\lie_n\sigma^{(1)}) \r_a^{(m)} - \sigma^{(1)}\nablacero_a \kappa^{(m)}  + \dfrac{(m-1)\kappa^{(m)}}{\mf n}\nablacero_a\sigma^{(1)} ,\label{Lma}\\
\mf n \dot{\mc{L}}^{(m)} &\st{(m)}{=} -(\mf n-1)\sigma^{(1)}\kappa^{(m+1)} -(\mf n-2)\sigma^{(1)} \lie_n\big(\tr_P\bY^{(m)}\big) \nonumber \\
&\qquad  +\left(\big(2m(1-\mf n)+\mf n\big)\sigma^{(1)}\kappa_n -(m-1)\mf{n} \lie_n\sigma^{(1)} \right)\tr_P\bY^{(m)} + \dot{\mathscr R}^{(m)}_{\mc L},\label{dotLm0}\\
f^{(m+1)}&\st{(m)}{=} 2\sigma^{(1)} \big(\lie_n\sigma^{(m)} + \sigma^{(1)} \kappa^{(m)}\big)+2m\big(\lie_n\sigma^{(1)}+2\sigma^{(1)}\kappa_n\big) \sigma^{(m)},\label{fm0}
\end{align}
where $\wt{\mathscr R}^{(m)}_{ab}$, $\dot{\mathscr R}^{(m)}_{\mc L}$ are tensors that depend on $\br^{(m)}$, $\sigma^{(m)}$ and lower order terms and we do not write for simplicity (they can be easily read out by performing all the calculations in their respective proofs).
\begin{proof}
The first three expressions are the particularization to the case $\bU=0$ of Proposition \ref{propQ}, the next two of Proposition \ref{propL} and the last one of Proposition \ref{propf}.
\end{proof}
\end{prop}
The tensors $\L^{(1)}_{\mu}$, $\Q^{(1)}_{\alpha\beta}$, $\Q^{(2)}_{\alpha\beta}$ and $f^{(2)}$ are not covered by this proposition. Again under the assumption $\bU=0$ they are given by (see \eqref{Q1general}-\eqref{ddQ2general}, \eqref{dotLgeneral}-\eqref{La1general} and \eqref{f2general} in Appendix \ref{appendixC})
\begin{equation}
	\label{Q1}
\Q_{ab}^{(1)} =0,\qquad	\dot{\Q}_a^{(1)} = \nablacero_a\sigma^{(1)} -\sigma^{(1)}(\r-\s)_a,\qquad  \ddot{\Q}^{(1)} = \sigma^{(2)} ,
\end{equation}
\begin{align}
	\Q^{(2)}_{ab} & = (\mf n-3)\sigma^{(1)} \lie_n\Y_{ab} +\big((\mf n-1)(\lie_n\sigma+\sigma\kappa_n)+(\mf n-3)\sigma^{(1)}\kappa_n \big)\Y_{ab} + (\mf n-1)\nablacero_a\nablacero_b\sigma^{(1)}\nonumber\\
	&\quad\, -2\sigma^{(1)}(\mf n-2)\nablacero_{(a}\r_{b)} +\sigma^{(1)}\big(\nablacero_{(a}\s_{b)}-2\r_a\r_b+4\r_{(a}\s_{b)}-\s_a\s_b+ \accentset{\circ}{R}_{(ab)} \big)  - \dfrac{\sigma R}{2\mf n} \gamma_{ab},\label{Q2ab}\\
	\dot{\Q}^{(2)}_a &=  -(\mf n-2)\sigma^{(1)}\r_a^{(2)} - (\mf n-1) V_a{}^b\nablacero_b\sigma^{(1)} - \dfrac{1}{2}\sigma^{(1)}(\mf n-2) n(\elltwo) (\r-\s)_a+2\sigma^{(1)}(\mf n-1) V^c{}_a \r_c\nonumber\\
	&\quad\,-\sigma^{(1)}P^{bc}A_{bca} -\sigma^{(1)}P^{cb}(\r+\s)_c(\Y_{ab}+\F_{ab}) + \dfrac{1}{2}\sigma^{(1)}\kappa_n \nablacero_a\elltwo-\dfrac{\sigma^{(1)}R}{2\mf n}\ell_a,\label{dotQ2a}\\
	\ddot{\Q}^{(2)} &= (\mf n-1)\sigma^{(2)} - \sigma^{(1)}\tr_P\bY^{(2)} + \sigma^{(1)}P^{ab}P^{cd}(\Y+\F)_{ac}(\Y+\F)_{bd} + \sigma^{(1)}P(\br-\bs,d\elltwo) \nonumber\\
	&\quad\, -\dfrac{R}{2\mf n}\sigma^{(1)}\elltwo,\label{ddotQ2}
\end{align}
\begin{align}
	\dfrac{2\mf n}{\sigma^{(1)}} \dot{\mc{L}}^{(1)} &= -2(\mf n-1)\kappa^{(2)} -2(\mf n-2)\lie_n\big(\tr_P\bY\big)- 2(\mf n-2) \kappa_n \tr_P\bY  + 2(\mf n-2)\div\br  \nonumber\\
	&\quad\,+(2\mf n-3)\div\bs +2(2\mf n-3)\kappa_nn(\elltwo)  - 2(2\mf n-3)P(\br,\br) -(4\mf n-5)P(\bs,\bs) \nonumber\\
	&\quad\,- 4(2\mf n-3)P(\br,\bs)- \tr_P\Rcero,\label{dotL}\\
	\dfrac{1}{\sigma^{(1)}} \mc{L}_a^{(1)} &= -\lie_n(\r_b-\s_b) - \nablacero_b \kappa_n,\label{La1}
\end{align}
where $R$ is explicitly given by \eqref{scal}. In addition (cf. \eqref{f2general})
\begin{equation}
	\label{f2}
	f^{(2)} =2\sigma(\lie_n\sigma+\sigma\kappa_n).
\end{equation}
Although $f^{(3)}$ is covered in Proposition \ref{propQLf}, we will need shortly its explicit expression (cf. \eqref{f3general})
\begin{equation}
	\label{f3}
	\begin{aligned}
f^{(3)} &=2\sigma^{(1)}\big(\sigma^{(1)}\kappa^{(2)} + \lie_n\sigma^{(2)}\big) + 2\big(2\sigma^{(2)}-\sigma^{(1)}n(\elltwo)\big)\big(\lie_n\sigma^{(1)}+2\sigma^{(1)}\kappa_n\big)\\
&\quad\,  +8\sigma^{(1)}P\big(\br,\sigma^{(1)}\br-d\sigma^{(1)}\big) + 2P^{ab}\nablacero_a\sigma^{(1)}\nablacero_b\sigma^{(1)}.
	\end{aligned}
\end{equation}

One immediate consequence of \eqref{Q1general} is that, for every embedded $\scri$-structure data satisfying $\dot\Q_a = \mf q \ell_a$, the $\nabla_X$ derivative of $\nabla\Omega$ at $\scri$ is given by (see \eqref{nablaXnablaOmega}) $$\nabla_X(\nabla\Omega) \st{\scri}{=} \mf q X, $$ 

and therefore every $\scri$-structure data written in a conformal gauge in which $\mf q=0$ is in particular a weakly isolated horizon (see \cite{ashtekar2024null}).\\ 

In the conformal gauge in which $|\nabla\Omega|^2=0$, the ``higher order conformal equations'' that we must solve are $\Q^{(m)}_{ab}=0$, $\dot{\Q}{}^{(m)}_a=0$, $\ddot{\Q}{}^{(m)}=0$, $\mc{L}^{(m)}_a=0$, $\mc{\dot L}^{(m)}=0$ and $f^{(m)}=0$ for every $m\ge 1$. Observe that, for each $m$, there are $2(\mf n+1)$ equations more than components of the metric to be fixed ($\bY^{(m)}$ and $\sigma^{(m)}$). This overdeterminacy is related to the identities \eqref{QyL} and \eqref{Bianchi}. Taking $m$ transverse derivatives in \eqref{QyL} and applying \eqref{propMarc} we arrive at
\begin{equation}
	\label{relationQLfm}
	\sum_{k=0}^m \binom{m}{k} \Q^{(m+1-k)}_{\alpha\beta}\sum_{j=0}^{k} \binom{k}{j}( \lie_{\xi}^{(j)}g^{\beta\mu}) \nabla_{\mu} \lie_{\xi}^{(k-j)}\Omega = \dfrac{\mf n-1}{2}\nabla_{\alpha} \lie_{\xi}^{(m)}f + \sum_{k=0}^m \binom{m}{k} (\lie_{\xi}^{(k)}\Omega) \mc{L}^{(m-k+1)}_{\alpha},
\end{equation}
and applying $\lie_{\xi}^{(m-1)}$ to \eqref{Bianchi} gives
\begin{equation}
	\label{bianchim}
	\begin{aligned}
 \left(\nabla_{\rho}\Q^{(m)}{}^{\rho}{}_{\mu} + \sum_{k=0}^{m-2}\binom{m-1}{k+1}\left(\Q^{(m-1-k)}{}^{\sigma}{}_{\mu}\Sigma^{(k+1)}{}^{\rho}{}_{\rho\sigma} - \Q^{(m-1-k)}{}^{\rho}{}_{\sigma}\Sigma^{(k+1)}{}^{\sigma}{}_{\rho\mu}\right)\right)=\\
		=\nabla_{\mu}\left(\sum_{k=0}^{m-1} \binom{m-1}{k} (\lie^{(k)}_{\xi} g^{\alpha\beta}) \Q^{(m-k)}_{\alpha\beta}\right) + \mf n \L_{\mu}^{(m)}.
	\end{aligned}
\end{equation}

In the next two lemmas we prove some direct consequences of these identities. They will be essential in Section \ref{scri} to show, order by order, that the set of $2(\mf n+1)$ redundant equations is automatically satisfied provided the remaining equations hold. This is analogous (though considerably more involved) to Proposition 4.5 in \cite{Mio4}. Each lemma comes with its own general hypothesis, namely the validity of the quasi-Einstein equations up to a certain order, and is divided into several items with additional conditions. Although some items follow directly from others, each will be used separately in Section \ref{scri}. For the sake of clarity, we therefore present them individually.

\begin{lema}
	\label{lema_bianchi1}
Fix $m\ge 1$ and assume $\Q_{\alpha\beta}^{(k)}=0$ and $\L_{\alpha\beta}^{(k)}=0$ for every $k=1,...,m-1$ whenever $m\geq 2$.
	\begin{enumerate}
		\item If $\Q^{(m)}_{ab}n^b=0$, then
		\begin{equation}
			\label{lema_6}
			\sigma^{(1)}\Q^{(m+1)}_{ab}n^an^b + m\big(2\sigma^{(1)}\kappa_n+\lie_n\sigma^{(1)}\big)\dot\Q^{(m)}_an^a= \dfrac{\mf n-1}{2}\lie_n f^{(m+1)} + m\sigma^{(1)}\L_a^{(m)}n^a.
		\end{equation}
		\item If $\Q^{(m)}_{ab}=0$, then
		\begin{equation}
			\label{lema_1.5}
			\sigma^{(1)}\Q^{(m+1)}_{ab}n^b+m\big(2\sigma^{(1)}\kappa_n+\lie_n\sigma^{(1)}\big)\dot\Q_a^{(m)}=\dfrac{\mf n-1}{2}\nablacero_a f^{(m+1)} + m\sigma^{(1)}\L_{a}^{(m)}
		\end{equation}
		\item If $\Q^{(m)}_{ab}=0$ and $\dot{\Q}_a^{(m)}=0$, then
		\begin{equation}
			\label{lema_1}
			\sigma^{(1)}\dot\Q^{(m+1)}_a n^a + m\big(2 \sigma^{(1)}\kappa_n+\lie_n\sigma^{(1)}\big)\ddot{Q}^{(m)}=\dfrac{\mf n-1}{2}f^{(m+2)}+m\sigma^{(1)}\dot\L^{(m)}
		\end{equation}
		and 
		\begin{equation}
			\label{equationQm+1}
			\sigma^{(1)}\Q^{(m+1)}_{ab}n^b=\dfrac{\mf n-1}{2}\nablacero_a f^{(m+1)}+m\sigma^{(1)} \L_{a}^{(m)}.
		\end{equation}
	\end{enumerate}	
	\begin{proof}
Directly from \eqref{relationQLfm}, $$\mc{Q}_{\alpha\beta}^{(m+1)}\nabla^{\beta}\Omega+m\mc{Q}_{\alpha\beta}^{(m)}\left((\lie_{\xi}g^{\beta\mu})\nabla_{\mu}\Omega + g^{\beta\mu}\nabla_{\mu}\lie_{\xi}\Omega\right) \st{\scri}{=} \dfrac{\mf n-1}{2}\nabla_{\alpha}\lie_{\xi}^{(m)}f + m \sigma^{(1)}\L_{\alpha}^{(m)},$$ 

which after inserting $\nabla^{\beta}\Omega\st{\scri}{=}\sigma^{(1)}\nu^{\beta}$, \eqref{Kupnu} and \eqref{inversemetric} becomes 
\begin{equation}
	\label{identitybianchi1}
	\begin{aligned}
\sigma^{(1)} \mc{Q}_{\alpha\beta}^{(m+1)}\nu^{\beta}+ m\Q_{\alpha\beta}^{(m)}\left(P^{bc}\big(\nablacero_c\sigma^{(1)}-2\sigma^{(1)}\r_c\big)e_b^{\beta}  + \left(\sigma^{(2)}-\dfrac{1}{2}\sigma^{(1)} n(\elltwo)\right)\nu^{\beta} \right) \\
+m\big(2\sigma^{(1)}\kappa_n+\lie_n\sigma^{(1)}\big)\xi^{\beta} \Q_{\alpha\beta}^{(m)}\st{\scri}{=}\dfrac{\mf n-1}{2}\nabla_{\alpha}\lie_{\xi}^{(m)}f + m \sigma^{(1)}\L_{\alpha}^{(m)}.
\end{aligned}
\end{equation}
The contraction of this identity with $\nu^{\alpha}$ gives \eqref{lema_6} at once after using that $\Q^{(m)}_{ab}n^b=0$. To prove \eqref{lema_1.5} one contracts \eqref{identitybianchi1} with $e_a^{\alpha}$ and uses the hypothesis $\Q^{(m)}_{ab}=0$. Identity \eqref{equationQm+1} is an immediate consequence of \eqref{lema_1.5}. Finally, to show \eqref{lema_1} we contract \eqref{identitybianchi1} with $\xi^{\alpha}$ and use $\Q^{(m)}_{ab}=0$ and $\dot{\Q}_a^{(m)}=0$.
	\end{proof}
\end{lema}

\begin{lema}
	\label{lema_bianchi2}
Fix $m\ge 1$ and assume $\Q^{(k)}_{\alpha\beta}=0$ for every $k=1,...,m-1$ whenever $m\ge 2$.
	\begin{enumerate}
\item If $\Q^{(m)}_{ab}n^b=0$, then
\begin{equation}
	\label{lema_5}
	\begin{aligned}
		\Q^{(m+1)}_{ab}n^an^b&=  \lie_n\big(\tr_P\Q^{(m)}\big)+\dfrac{\tr_P\bU}{\mf n-1} \tr_P\Q^{(m)}+P^{bc}P^{da}\U_{bd}\wh{\Q}^{(m)}_{ac}+\lie_n\big(\dot{\Q}^{(m)}_an^a\big)\\
		&\quad\,-\big(2\kappa_n+\tr_P\bU\big)\dot{\Q}^{(m)}_an^a  +	\mf{n}\lie^{(m)}_an^a.
	\end{aligned}
\end{equation}
\item If $\Q^{(m)}_{ab}=0$, then 
\begin{equation}
	\label{lema_4}
	\Q^{(m+1)}_{ab}n^b + \lie_n\dot\Q^{(m)}_a + \big(2\kappa_n+\tr_P\bU\big)\dot\Q_a^{(m)} = 2 \nablacero_a\big(\dot{\Q}^{(m)}_bn^b\big) + \mf n \L^{(m)}_a.
\end{equation}
\item If $\Q^{(m)}_{ab}=0$ and $\dot\Q^{(m)}_a=0$, then 
\begin{equation}
	\label{lema_3}
	\lie_n\ddot\Q^{(m)} + \big((3-2m)\kappa_n+\tr_P\bU\big)\ddot\Q^{(m)} = \dot\Q^{(m+1)}_an^a+	\tr_P\Q^{(m+1)} + 	\mf{n}\dot\L^{(m)}.
\end{equation}		
\item Finally, if $\Q^{(m)}_{\alpha\beta}=0$, then 
		\begin{equation}
			\label{lema_2}
	\dot\Q^{(m+1)}_an^a+\tr_P\Q^{(m+1)}+\mf n \dot\L^{(m)}=0
		\end{equation}
		and
		\begin{equation}
			\label{lema_2_2}
			\Q^{(m+1)}_{ab}n^an^b = \mf n \L^{(m)}_an^a.
		\end{equation}
	\end{enumerate}
	\begin{proof}
Relations \eqref{lema_2} and \eqref{lema_2_2} are particularizations respectively of \eqref{lema_3} and \eqref{lema_4}, so it suffices to prove \eqref{lema_5}-\eqref{lema_3}. Our strategy is to write down \eqref{bianchim} at $\scri$ and then compute its contractions first with $e_c^{\mu}$ and then with $\xi^{\mu}$. The former will show \eqref{lema_5}-\eqref{lema_4}, and the later will establish \eqref{lema_3}. With the assumption $\Q^{(k)}_{\alpha\beta}=0$ for $k=1,...,m-1$, identity \eqref{bianchim} reads
		\begin{align}
			0 &\st{\scri}{=}- \nabla_{\rho}\Q^{(m)}{}^{\rho}{}_{\mu}+ \nabla_{\mu}\left(g^{\alpha\beta}\Q^{(m)}_{\alpha\beta} + (m-1) \big(\lie_{\xi}g^{\alpha\beta}\big)\Q^{(m-1)}_{\alpha\beta}\right) + \mf{n}\L_{\mu}^{(m)}\nonumber\\
			&\st{\scri}{=}- \nabla_{\rho}\Q^{(m)}{}^{\rho}{}_{\mu}+ g^{\alpha\beta}\nabla_{\mu}\Q^{(m)}_{\alpha\beta} - (m-1) \mc{K}^{\alpha\beta}\nabla_{\mu}\Q^{(m-1)}_{\alpha\beta}+ \mf{n}\L_{\mu}^{(m)}\label{aux_Bianchi}
		\end{align} 
where in the second line we used $\Q^{(m-1)}_{\alpha\beta}\st{\scri}{=}0$ and $\lie_{\xi}g^{\alpha\beta} = - \mc{K}^{\alpha\beta}$.\\

We now compute the contraction of \eqref{aux_Bianchi} with $e_c^{\mu}$ under the assumption $\Q^{(m)}_{ab}n^b=0$, i.e. (cf. \eqref{decomp_EM}) $\Q^{(m)}_{ab} = \frac{\tr_P\Q^{(m)}}{\mf n-1}\gamma_{ab} +\wh\Q^{(m)}_{ab}$, from which \eqref{lema_5} and \eqref{lema_4} will then follow at once. This contraction is
\begin{equation}
	\label{aux_Bianchi3}
	0 \st{\scri}{=}-\big(\div\Q^{(m)}\big)_a+ \nablacero_a\left(\tr_P\Q^{(m)}+2\dot{\Q}^{(m)}_b n^b\right)  + \mf{n}\L^{(m)}_a
\end{equation}
because the tangential derivatives of $\Q^{(m-1)}_{\alpha\beta}$ vanish at $\scri$. Only the first term requires further analysis. In Appendix \ref{appendix} we recall several general identities for pullbacks onto a null hypersurface $\mc H$. Applying Proposition \ref{propdivergencia} to $\mc H=\scri$ and using $\Q^{(m)}_{ab}n^b=0$ we get
\begin{align*}
	\big(\div\Q^{(m)}\big)_a & = \Q^{(m+1)}_{ab}n^b + P^{bc}\nablacero_b\Q^{(m)}_{ac} + n^b\nablacero_b \dot{\Q}^{(m)}_a + \big(2\kappa_n +\tr_P\bU\big)\dot{\Q}^{(m)}_a\\
	&\quad\,  -2P^{bc}(\r+\s)_b \Q^{(m)}_{ac} + P^{bc}\U_{ba}\dot{\Q}^{(m)}_c + \s_a \dot{\Q}^{(m)}_bn^b\\
	&= \Q^{(m+1)}_{ab}n^b + P^{bc}\nablacero_b\Q^{(m)}_{ac} + \lie_n\dot{\Q}^{(m)}_a + \big(2\kappa_n +\tr_P\bU\big)\dot{\Q}^{(m)}_a  -2P^{bc}(\r+\s)_b \Q^{(m)}_{ac} ,
\end{align*}
where we used (cf. \eqref{derivadannull}) $\lie_n\dot\Q^{(m)}_a = n^b\nablacero_b \dot\Q^{(m)}_a + P^{bc}\U_{ba}\dot\Q^{(m)}_c + \s_a\dot\Q^{(m)}_bn^b$. Since $\Q^{(m)}_{ab} = \frac{\tr_P\Q^{(m)}}{\mf n-1}\gamma_{ab}+\wh\Q^{(m)}_{ab}$, the second and fifth terms can be elaborated further by means of
\begin{align*}
	P^{bc} \nablacero_b \Q^{(m)}_{ac} &\stb[\eqref{nablagamma}]{}{=} \dfrac{1}{\mf n-1}\left(P^{bc}\gamma_{ac} \nablacero_b(\tr_P\Q^{(m)}) - 2\tr_P\Q^{(m)} P^{bc}\U_{b(a}\ell_{c)} \right)+P^{bc}\nablacero_b\wh\Q^{(m)}_{ac}\\ 
	&\stb[\eqref{Pgamma},\eqref{Pell}]{}{=} \dfrac{1}{\mf n-1}\left( \nablacero_a(\tr_P\Q^{(m)}) - \big(\lie_n\big(\tr_P\Q^{(m)}\big) +\tr_P\bU \tr_P\Q^{(m)}\big) \ell_{a} \right)+P^{bc}\nablacero_b\wh\Q^{(m)}_{ac}
\end{align*}
and
$$P^{bc} \Q^{(m)}_{ac} \st{\eqref{Pgamma}}{=} \dfrac{\tr_P\Q^{(m)}}{\mf n-1}(\delta^b_a-n^b\ell_a)+P^{bc}\wh\Q^{(m)}_{ac}.$$

Consequently,
\begin{align*}
	\big(\div\Q^{(m)}\big)_a &= \Q^{(m+1)}_{ab}n^b + \lie_n \dot{\Q}^{(m)}_a+ \big(2\kappa_n +\tr_P\bU\big)\dot{\Q}^{(m)}_a +P^{bc}\nablacero_b\wh\Q^{(m)}_{ac} -2P^{bc}(\r+\s)_b\wh\Q^{(m)}_{ac}\\
	&\quad\,+	\dfrac{1}{\mf n-1} \left(\nablacero_a\tr_P\Q^{(m)} - \lie_n\big(\tr_P\Q^{(m)}\big)\ell_a  -	\big( (2\kappa_n+\tr_P\bU)\ell_a + 2(\r+\s)_a \big)\tr_P\Q^{(m)}\right).
\end{align*}
 Inserting this into \eqref{aux_Bianchi3} and simplifying gives 
\begin{align*}
	0 &= -\Q^{(m+1)}_{ab}n^b - \lie_n \dot{\Q}^{(m)}_a- \big(2\kappa_n +\tr_P\bU\big)\dot{\Q}^{(m)}_a+\dfrac{1}{\mf n-1}\left((\mf n-2)\nablacero_a\tr_P\Q^{(m)} + \lie_n \big(\tr_P\Q^{(m)}\big) \ell_a\right)\\
	&\quad\, +\dfrac{1}{\mf n-1}\Big( \big(2\kappa_n+\tr_P\bU\big)\ell_a + 2(\r+\s)_a \Big)\tr_P\Q^{(m)}-P^{bc}\nablacero_b\wh\Q^{(m)}_{ac} +2P^{bc}(\r+\s)_b\wh\Q^{(m)}_{ac}\\
	&\quad\,+ 2 \nablacero_a\big(\dot{\Q}^{(m)}_bn^b\big) + \mf n \L^{(m)}_a.
\end{align*}
A contraction with $n^a$ gives \eqref{lema_5} after noting $\wh{Q}_{ab}^{(m)}n^a=0$ and $n^a\nablacero_b\wh\Q^{(m)}_{ac} = -P^{ad}\U_{db}\wh\Q^{(m)}_{ac}$ (cf. \eqref{derivadannull}), while \eqref{lema_4} is just this expression after setting $\tr_P\Q^{(m)}=0$ and $\wh{\Q}^{(m)}_{ab}=0$.\\

To show \eqref{lema_3} we now contract \eqref{aux_Bianchi} with $\xi^{\mu}$ and use $$\xi^{\mu}\nabla_{\rho}\Q^{(m)}{}^{\rho}{}_{\mu} = \nabla_{\rho}\dot\Q^{(m)}{}^{\rho}-\Q^{(m)}{}^{\rho}{}_{\mu}\nabla_{\rho}\xi^{\mu},\qquad \xi^{\mu}\nabla_{\mu}\Q^{(m)}_{\alpha\beta} = \Q^{(m+1)}_{\alpha\beta} - 2\Q^{(m)}_{\mu(\alpha}\nabla_{\beta)}\xi^{\mu},$$

 and $\xi^{\mu}\nabla_{\mu}\Q^{(m-1)}_{\alpha\beta} \st{\scri}{=} \Q^{(m)}_{\alpha\beta}$ (because $\Q^{(m-1)}_{\alpha\beta}=0$) to get
\begin{align*}
0\st{\scri}{=} -\nabla_{\rho}\dot\Q^{(m)}{}^{\rho} - \Q^{(m)}{}^{\rho}{}_{\mu}\nabla_{\rho}\xi^{\mu} +g^{\alpha\beta}\Q_{\alpha\beta}^{(m+1)}-(m-1)\mc{K}^{\alpha\beta}\mc{Q}^{(m)}_{\alpha\beta} + \mf{n}\dot\L^{(m)}.
\end{align*}
The first term is given by Prop. \ref{propdivergencia}, which under the assumption $\dot{\Q}_a^{(m)}=0$ is $\nabla_{\rho}\dot\Q^{(m)}{}^{\rho}=\dot{\Q}^{(m+1)}_a n^a + \lie_n \ddot{\Q}^{(m)} +\big(2\kappa_n +\tr_P\bU\big)\ddot{\Q}^{(m)}$. The second term is computed from \eqref{nablaxiup} after noting that, after the assumptions of item 3., namely $\Q^{(m)}_{ab}=0$ and $\dot\Q^{(m)}_a=0$, only the term with two riggings survives, $\Q^{(m)}{}^{\rho}{}_{\mu}\nabla_{\rho}\xi^{\mu} = -\kappa_n \ddot\Q^{(m)}$. The third term is simply $g^{\alpha\beta}\Q_{\alpha\beta}^{(m+1)}\stb[\eqref{inversemetric}]{}{=} \tr_P\Q^{(m+1)}+2\dot\Q^{(m+1)}_an^a$, and for the fourth term we use again that the only piece that does not vanish is the contraction with $\xi$ twice, obtaining $$\mc{K}^{\alpha\beta}\mc{Q}^{(m)}_{\alpha\beta} = g^{\alpha\mu}\mc{K}_{\mu}{}^{\beta}\mc{Q}^{(m)}_{\alpha\beta} \stb[\eqref{inversemetric}]{}{=} \nu^{\mu} \xi^{\alpha} \mc{K}_{\mu}{}^{\beta}\mc{Q}^{(m)}_{\alpha\beta} \stb[\eqref{Kupnu}]{}{=} -2\kappa_n\xi^{\alpha}\xi^{\beta}\mc{Q}^{(m)}_{\alpha\beta}=-2\kappa_n\ddot\Q^{(m)}. $$ 

Adding up the four terms and $\mf{n}\dot\L^{(m)}$, \eqref{lema_3} is established.
		\end{proof}
	\end{lema}
	
Combining \eqref{lema_1.5} and \eqref{lema_4} one has the following corollary. 
	\begin{cor}
		\label{cor_1}
Let $m\ge 1$ and assume $\Q^{(k)}_{\alpha\beta}=0$ and $\L^{(k)}_{\mu}=0$ for every $k=1,...,m-1$, as well as $f^{(m+1)}=0$, $\dot\Q^{(m)}_an^a=0$, $\Q^{(m)}_{ab}n^a=0$ and $\tr_P\Q^{(m)}=0$. If $m=\mf n$, then $\Q^{(\mf n+1)}(n,n) = \mf n \L^{(\mf n)}_an^a$; and if $m\neq\mf n  $, then $\Q^{(m+1)}(n,n) = \L^{(m)}_an^a=0$.
		\begin{proof}
From \eqref{lema_6} one has $\Q^{(m+1)}(n,n)=m\L^{(m)}_an^a$, and from $\Q_{ab}=0$ we have $\U_{ab}=0$ (see \eqref{Q1general}), so \eqref{lema_5} gives $\Q^{(m+1)}(n,n) = \mf n \L^{(m)}_an^a$. When $\mf n=m$, then $\Q^{(m+1)}(n,n) = \mf n \L^{(m)}_an^a$; and when $\mf n\neq m$, $\Q^{(m+1)}(n,n) = \L^{(m)}_an^a=0$. 
		\end{proof}
	\end{cor}
	
As we will see in detail below, these results show that, for generic values of $m$, determining the transverse expansion at $\scri$ reduces to analyzing, order by order, the equations $\ddot\Q^{(m)}=0$, $f^{(m)}=0$, $\dot\L^{(m)}=0$ as well as $\wh\L^{(m)}_a \d \L^{(m)}_a - (\L^{(m)}_b n^b) \ell_a = 0$ (cf. \eqref{decom2}) and (cf. \eqref{decomp_EM}) $$\wh{\Q}_{ab}^{(m)}\d {\Q}_{ab}^{(m)}-\frac{\tr_P{\Q}^{(m)}+\Q^{(m)}(n,n)\elltwo}{\mf n-1}\gamma_{ab}-2\ell_{(a}\Q^{(m)}_{b)c}n^c -\Q^{(m)}(n,n)\ell_a\ell_b=0.$$

 The remaining equations will turn out to be automatically satisfied as a consequence of Lemmas \ref{lema_bianchi1} and \ref{lema_bianchi2}, together with Corollary \ref{cor_1}. 
	
\begin{prop}
	\label{prop_Bianchi}
Let $(\mc M,g,\Omega)$ be an $(\mf n+1)$-dimensional conformal manifold with null infinity $\scri$ and assume $\scri$ admits a cross-section $\Sigma$. Fix an integer $\ell\ge 1$ and assume that equations $\L_{\mu}^{(k)}= 0$, $\Q^{(k)}_{\mu\nu}=0$ hold for every $k\le \ell$, $f^{(k)}=0$ for every $k\le \ell+2$, and also $\L^{(\mf n)}_an^a=0$ when $\ell=\mf n-1$ and $f^{(\mf n+1)}|_{\Sigma}=0$ when $\ell=\mf n-2$. Then,
\begin{enumerate}
	\item $\dot\Q^{(\ell+1)}_an^a=0$, $\Q^{(\ell+1)}_{ab}n^b=0$, $P^{ab}\Q^{(\ell+1)}_{ab}=0$ and $\Q^{(\ell+2)}_{ab}n^an^b=\L^{(\ell+1)}_an^a=0$.
\end{enumerate}
Assume in addition that $\wh\L_a^{(\ell+1)}=0$, $\dot{\wh\Q}_{a}^{(\ell+1)} |_{\Sigma}= 0$ and $\wh\Q_{ab}^{(\ell+1)}=0$. Then,
\begin{enumerate}
	\item[2.] $\dot{\Q}^{(\ell+1)}_a=0$ and $\Q^{(\ell+2)}_{ab}n^b=0$.
\end{enumerate} 
If, moreover, we suppose $\ddot\Q^{(\ell+1)}=\dot\L^{(\ell+1)}=0$ and, provided $\ell\neq \mf n-2$, that $f^{(\ell+3)}=0$. Then
\begin{enumerate}
	\item[3.] $P^{ab}\Q^{(\ell+2)}_{ab}=0$, $\dot{\Q}^{(\ell+2)}_an^a=0$, and if $\ell=\mf n-2$ also $f^{(\mf n+1)}=0$ and $\Q^{(\mf n+1)}_{ab}n^an^b=0$.
\end{enumerate}
\begin{proof}
In order to prove items 1., 2. and 3. we make use of the identities in Lemmas \ref{lema_bianchi1} and \ref{lema_bianchi2} particularized to $\bU=0$ (because $\Q_{ab}=0$, see \eqref{Q1general}). First we prove that the equations $\dot\Q^{(\ell+1)}_an^a=0$, $\Q^{(\ell+1)}_{ab}n^b=0$ and $P^{ab}\Q^{(\ell+1)}_{ab}=0$ hold under the assumptions $\L_{\mu}^{(k)}= 0$, $\Q^{(k)}_{\mu\nu}=0$ for $k\le \ell$, and $f^{(k)}=0$ for $k\le \ell+2$. Under the present hypothesis all the conditions of item 3. in Lemma \ref{lema_bianchi1} with $m=\ell$ are verified and equation \eqref{equationQm+1} simplifies to $\Q^{(\ell+1)}_{ab}n^b=0$. Similarly, $P^{ab}\Q^{(\ell+1)}_{ab}=0$ is a consequence of \eqref{lema_2} also for $m=\ell$. The remaining two claims in this item are now immediate by Corollary \ref{cor_1} for $m=\ell+1$ (note that when $\ell+1=\mf n$, $\L^{(\ell+1)}_an^a = \L^{(\mf n)}_an^a=0$ holds by hypothesis).\\

Now we prove item 2. We first note that the result $\L^{(\ell+1)}_an^a=0$ combined with the hypothesis $\wh\L_a^{(\ell+1)}=0$ gives $\L_a^{(\ell+1)}=0$, and the results in item 1. combined with the hypothesis $\wh\Q_{ab}^{(\ell+1)}=0$ gives $\Q_{ab}^{(\ell+1)}=0$. Since by item 1. we also have $\dot{\Q}_a^{(\ell+1)} n^a =0$, identities \eqref{lema_1.5} and \eqref{lema_4} for $m=\ell+1$ simplify to
\begin{align*}
	\sigma^{(1)}\Q^{(\ell+2)}_{ab}n^b +(\ell+1)\big(2\sigma^{(1)}\kappa_n+\lie_n\sigma^{(1)}\big)\dot\Q^{(\ell+1)}_a &=0,\\
	\Q^{(\ell+2)}_{ab}n^b  + \lie_n\dot\Q^{(\ell+1)}_a + 2\kappa_n \dot\Q^{(\ell+1)}_a &=0.
\end{align*}
Solving for $\Q^{(\ell+2)}_{ab}n^b$ in the second and inserting it into the first gives $$\sigma^{(1)}\lie_n\dot\Q_a^{(\ell+1)} - \big(2\ell\sigma^{(1)}\kappa_n + (\ell+1)\lie_n\sigma^{(1)}\big)\dot\Q_a^{(\ell+1)}=0.$$ 

This is a homogeneous transport equation for $\dot\Q^{(\ell+1)}_a$, and since $\dot\Q^{(\ell+1)}_a\st{\Sigma}{=}0$ (because $\dot{\wh\Q}_{a}^{(\ell+1)}\st{\Sigma}{=}0$ and $\dot\Q^{(\ell+1)}_an^a=0$ everywhere, and in particular at $\Sigma$), one concludes that $\dot\Q^{(\ell+1)}_a=0$, and consequently $\Q^{(\ell+2)}_{ab}n^b=0$. \\

It only remains to prove item 3. which has the additional hypotheses $\ddot\Q^{(\ell+1)}=\dot\L^{(\ell+1)}=0$ (which imply $\Q_{\alpha\beta}^{(\ell+1)}=0$ and $\L_{\mu}^{(\ell+1)}=0$) and, provided $\ell\neq \mf n-2$, that $f^{(\ell+3)}=0$. Then, by \eqref{lema_1} for $m=\ell+1$ one has $\dot\Q^{(\ell+2)}_an^a = 0$, and since $\Q^{(\ell+1)}_{\alpha\beta}=0$, from item 4. in Lemma \ref{lema_bianchi2} for $m=\ell+1$ we get $\tr_P\Q^{(\ell+2)}=0$, which proves the item for $\ell\neq \mf n-2$. For the case $\ell=\mf n-2$ we cannot use \eqref{lema_1} to prove $\dot\Q^{(\mf n)}_an^a=0$ because we do not know yet that $f^{(\mf n+1)}=0$. Instead, the strategy is to consider the identities \eqref{lema_1} and \eqref{lema_2} for $m=\mf n-1$ and the identities \eqref{lema_6} and \eqref{lema_5} for $m=\mf n$ (recall that $\Q^{(\mf n-1)}_{\alpha\beta}=0$, $\L^{(\mf n-1)}_{\mu}=0$, $\L^{(\mf n)}_an^a=0$ and, by item 2., $\Q^{(\mf n)}_{ab}n^a=0$),
\begin{align*}
\sigma^{(1)}\dot{\Q}^{(\mf n)}_an^a -\dfrac{\mf n-1}{2}f^{(\mf n+1)}&=0,\\
\dot{\Q}^{(\mf n)}_an^a + \tr_P\Q^{(\mf n)} &=0,\\
\sigma^{(1)}\Q^{(\mf n+1)}_{ab}n^an^b + \mf n \big(2\sigma^{(1)}\kappa_n +\lie_n\sigma^{(1)}\big)\dot{\Q}^{(\mf n)}_an^a -\dfrac{\mf n-1}{2}\lie_nf^{(\mf n+1)}&= 0,\\
\Q^{(\mf n+1)}_{ab}n^an^b-\lie_n\big(\tr_P\Q^{(\mf n)}\big) - \lie_n\big(\dot{\Q}^{(\mf n)}_an^a\big)+2\kappa_n \dot{\Q}^{(\mf n)}_an^a &= 0.
\end{align*}
Combining the second and the fourth yields $\Q^{(\mf n+1)}_{ab}n^an^b+2\kappa_n \dot{\Q}^{(\mf n)}_an^a =0$, which after inserted in the third and taking into account the first gives $$\sigma^{(1)}\lie_n f^{(\mf n+1)}-\big(2(\mf n-1)\sigma^{(1)}\kappa_n+\mf n\lie_n\sigma^{(1)}\big)f^{(\mf n+1)} = 0.$$ 

Since $f^{(\mf n+1)}|_{\Sigma}=0$ we conclude $f^{(\mf n+1)}=0$, and as a consequence, $\dot\Q^{(\mf n)}_a n^a=0$, $\tr_P\Q^{(\mf n)}=0$ and $\Q^{(\mf n+1)}_{ab}n^an^b=0$ also hold.
\end{proof}
\end{prop}

This proposition shows that it suffices to analyse equations $\wh\L_a^{(k)}=0$, $\ddot\Q^{(k)}=0$, $\dot\L^{(k)}=0$, $\dot{\wh\Q}^{(k)}_a|_{\Sigma}=0$ and $\wh\Q^{(k)}_{ab}=0$ at all orders, as well as $f^{(k)}=0$ (except for $k=\mf n+1$), $f^{(\mf n+1)}|_{\Sigma}=0$ and $\L_a^{(\mf n)}n^a=0$, since the rest of the conformal equations follow from them. Note that the number of equations agrees with the degrees of freedom to be fixed at each order (i.e. a symmetric tensor $\bY^{(k)}$ and a scalar function $\sigma^{(k)}$). Indeed, the reduced set of equations $\wh\L_a^{(k)}=0$, $\ddot\Q^{(k)}=0$, $\dot\L^{(k)}=0$, $f^{(k)}=0$ and $\wh\Q^{(k)}_{ab}=0$ constitute exactly $(\mf n-1)+1+1+1+\frac{(\mf n+1)(\mf n-2)}{2} = \frac{\mf n(\mf n+1)}{2}+1$ equations, the same as the number of degrees of freedom of $\bY^{(k)}$ and $\sigma^{(k)}$. An immediate corollary of this proposition that will be useful in Theorem \ref{teorema} is the following.
\begin{cor}
\label{corolariopff}
Let $(\mc M,g,\Omega)$ be an $(\mf n+1)$-dimensional conformal manifold with null infinity $\scri$ and assume $\scri$ admits a cross-section $\Sigma$. Fix an integer $\ell\ge 1$ and assume that equations $\L_{\mu}^{(k)}= 0$, $\Q^{(k)}_{\mu\nu}=0$ hold for every $k\le \ell$, $f^{(k)}=0$ for every $k\le \ell+2$, and also $\L^{(\mf n)}_an^a=0$ when $\ell=\mf n-1$ and $f^{(\mf n+1)}|_{\Sigma}=0$ when $\ell=\mf n-2$. Then,
\begin{enumerate}
	\item $\L^{(\ell+1)}_an^a=0$.
\end{enumerate}
Assume in addition that $\L_a^{(\ell+1)}=0$, $\dot{\wh\Q}_{a}^{(\ell+1)} |_{\Sigma}= 0$ and $\Q_{ab}^{(\ell+1)}=0$. Then,
\begin{enumerate}
	\item[2.] $\Q^{(\ell+2)}_{ab}n^b=0$ and $\dot{\Q}^{(\ell+1)}_a=0$.
\end{enumerate} 
If, moreover, we suppose $\ddot\Q^{(\ell+1)}=\dot\L^{(\ell+1)}=0$ and, provided $\ell\neq \mf n-2$, that $f^{(\ell+3)}=0$. Then
\begin{enumerate}
\item[3.] $P^{ab}\Q^{(\ell+2)}_{ab}=0$.
\end{enumerate}
\end{cor}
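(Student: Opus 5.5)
The corollary follows by reading off Proposition \ref{prop_Bianchi} with hypotheses that are formally stronger but carry the same content. The overall plan is to check, item by item, that the assumptions of Corollary \ref{corolariopff} imply those of the corresponding item of Proposition \ref{prop_Bianchi}, and then extract only the conclusions listed here.

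\textbf{Item 1.} The standing hypotheses on $\L^{(k)}_\mu$, $\Q^{(k)}_{\mu\nu}$ and $f^{(k)}$, including the special conditions for $\ell=\mf n-1$ and $\ell=\mf n-2$, are literally those of Proposition \ref{prop_Bianchi}. Its item 1 then gives $\L^{(\ell+1)}_an^a=0$, which is exactly what is claimed.

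\textbf{Item 2.} Here I would observe that the full equation $\L^{(\ell+1)}_a=0$ implies the transverse projection $\wh\L^{(\ell+1)}_a=\L^{(\ell+1)}_a-(\L^{(\ell+1)}_bn^b)\ell_a=0$ (cf.\ \eqref{decom2}). Likewise, $\Q^{(\ell+1)}_{ab}=0$ implies $\wh\Q^{(\ell+1)}_{ab}=0$, because by \eqref{decomp_EM} the hatted part is a linear combination of $\Q^{(\ell+1)}_{ab}$ and its contractions. Together with $\dot{\wh\Q}^{(\ell+1)}_a|_\Sigma=0$, these are precisely the extra assumptions of item 2 of Proposition \ref{prop_Bianchi}. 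That item yields $\dot\Q^{(\ell+1)}_a=0$ and $\Q^{(\ell+2)}_{ab}n^b=0$.

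\textbf{Item 3.} Adding $\ddot\Q^{(\ell+1)}=\dot\L^{(\ell+1)}=0$, and $f^{(\ell+3)}=0$ when $\ell\neq\mf n-2$, reproduces the hypotheses of item 3 of the proposition. That item gives $P^{ab}\Q^{(\ell+2)}_{ab}=0$. In the case $\ell=\mf n-2$ the proposition establishes the same conclusion by first recovering $f^{(\mf n+1)}=0$ from the transport equation and the initial condition on $\Sigma$, and this case is already covered there.

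The only point needing care is that the hatted equations are genuinely implied by the unhatted ones. This is immediate from the linearity of the decompositions \eqref{decomp_EM} and \eqref{decom2}, so I expect no real obstacle.
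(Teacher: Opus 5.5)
Your proposal is correct and is exactly how the paper obtains this result: it presents the corollary as an immediate consequence of Proposition~\ref{prop_Bianchi}. The only observation needed is the one you make, namely that $\L^{(\ell+1)}_a=0$ and $\Q^{(\ell+1)}_{ab}=0$ imply their hatted counterparts by linearity of \eqref{decom2} and \eqref{decomp_EM}, so the hypotheses of each item of the proposition are met.
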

Actually, one could have proven Corollary \ref{corolariopff} directly with much less effort. However, presented in this way it would not have allowed us to count the number of degrees of freedom to be fixed, and ultimately which equations need to be imposed and which ones are automatically fulfilled form the others.\\

In the following remark we motivate how the reduced set of equations mentioned below Proposition \ref{prop_Bianchi} constrain the asymptotic expansion at null infinity. Later, in Section \ref{scri} we will use this motivated free data to characterize the full asymptotic expansion at null infinity.

\begin{rmk}
	\label{elremarkmaslargodelmundo}
Consider a conformal manifold $(\mc M,g,\Omega)$ with null infinity $\scri$ (assumed to admit a cross-section $\iota:\Sigma\hookrightarrow\scri$) and denote as usual the embedded metric hypersurface data by $\{\bg,\bm\ell,\elltwo\}$, the asymptotic expansion by $\{\bY^{(k)}\}$, and the transverse derivatives of $\Omega$ at $\scri$ by $\{\sigma^{(k)}\}$. The idea is to see how the reduced set of conformal equations described above Corollary \ref{corolariopff} constrain $\{\bg,\bm\ell,\elltwo\}$, $\{\bY^{(k)}\}_{k\ge 1}$ and $\{\sigma^{(k)}\}_{k\ge 2}$ order by order (the quantity $\sigma^{(1)}$ is part of the data, see Definitions \ref{universal} and \ref{emb_uni}). Of course, the first restriction is $\bU=\frac{1}{2}\lie_n\bg=0$. Equations $\Q^{(1)}_{\alpha\beta}=0$ (see \eqref{Q1}) impose the following two extra restrictions on the data: (i) $\br = \bs + d(\log|\sigma^{(1)}|)$ and (ii) $\sigma^{(2)}=0$, so the full one-form $\br$ is determined in terms of metric data and $\sigma$. In particular, this means that the surface gravity of $\nabla\Omega$ vanishes, $\kappa\d \sigma^{(1)}\kappa_n+\lie_n\sigma^{(1)}=0$, so $f^{(2)}=0$ and $\L_a^{(1)}=0$ follow automatically (by \eqref{f2} and \eqref{La1}). Moreover, equations $\dot\L^{(1)}=0$ \eqref{dotL} and $f^{(3)}=0$ \eqref{f3} constrain the pair $\{\kappa^{(2)},\lie_n(\tr_P\bY)\}$ such that, given a function $\chi$ at $\Sigma$, there is a unique pair $\{\kappa^{(2)},\tr_P\bY\}$ satisfying $\dot\L^{(1)}=f^{(3)}=0$ and $\tr_P\bY|_{\Sigma}=\chi$. For $\mf n\neq 3$, the remaining part of the tensor $\bY$ that is not yet fixed, namely $\wh\bY$, becomes completely constrained by equation $\wh\Q_{ab}^{(2)}=0$ (which is equivalent to \eqref{Q2ab}, because by item 1. in Prop. \ref{prop_Bianchi} for $\ell=1$, $\Q^{(2)}_{ab}n^b=0$ and $P^{ab}\Q^{(2)}_{ab}=0$) in terms of $\wh\bY|_{\Sigma}$. Showing this statement rigorously requires analyzing the compatibility between the equations, something we will accomplish in Theorem \ref{teorema}.\\

When $\mf n=3$, the equation $\wh\Q_{ab}^{(2)}=0$ does not constrain $\wh\bY$ because the coefficients in front of $\lie_n\bY$ and $\bY$ in \eqref{Q2ab} both vanish. In principle, this could mean that $\wh\Q_{ab}^{(2)}=0$ is a new restriction on the data $\{\bg,\bm\ell,\elltwo,\chi\}$. However, the tensor $\mc{Q}_{ab}^{(2)}$ in dimension $\mf n=3$ identically vanishes, so it imposes no new restrictions on the data. To see this it suffices to prove it in a gauge in which $\sigma^{(1)}=1$ and $\elltwo=0$ (by the transformations laws \eqref{transsigma1} and \eqref{transell2} it is easy to see that this gauge can always be achieved), because $\mc{Q}^{(2)}_{ab}$ being (or not) zero is a gauge invariant property provided $\mc{Q}^{(1)}_{\alpha\beta}=0$ (this is a particular case of Lemma \ref{lemaobvio} below). So, in this gauge we also have $\br=\bs$ and $\kappa_n=0$. Under these assumptions, the values of $\kappa^{(2)}$ and $\lie_n(\tr_P\bY)$ that are obtained from $\dot\L^{(1)}=f^{(3)}=0$ are $\kappa^{(2)}=-4P(\bs,\bs)$ and $\lie_n(\tr_P\bY)=-\frac{1}{2}\big(\tr_P\Rcero-5\div\bs +9P(\bs,\bs)\big)$, and therefore the scalar curvature at $\scri$ is given by $R=3(\tr_P\Rcero+P(\bs,\bs)-\div\bs)$ (see \eqref{scal}). Inserting this into \eqref{Q2ab} for $\mf n=3$ gives $$\Q^{(2)}_{ab} = \Rcero_{(ab)}+\s_a\s_b-\nablacero_{(a}\s_{b)}-\dfrac{1}{2}\left(\tr_P\Rcero+P(\bs,\bs)-\div\bs\right)\gamma_{ab}.$$ 

This tensor is manifestly traceless and its contraction with $n^a$ vanishes because $\gamma_{ab}n^a=0$ and $\big(\Rcero_{(ab)}+\s_a\s_b-\nablacero_{(a}\s_{b)}\big)n^a=0$ (see \cite{miguel3}). Moreover, its pullback to a cross section is \cite{miguel3} $\Q^{(2)}_{AB}=R^h_{AB} -\frac{R^h}{2}h_{AB}$, which is identically zero in dimension two. So we conclude that the equation $\Q^{(2)}_{ab}=0$ is fulfilled when $\scri$ is three-dimensional.\\

 For the higher order derivatives the idea is similar but with some important differences. Let us assume that we already know how $\{\bY^{(k)},\sigma^{(k+1)},\kappa^{(k+1)}\}$ are fixed for every $k=1,...,m-1$. Then, equations $\L_a^{(m)}=0$ \eqref{Lma} and $\dot\Q_a^{(m)}|_{\Sigma}=0$ \eqref{dotQam} constrain the remaining part of the one-form $\br^{(m)}$ provided $m\neq \mf n$, while equations $\ddot\Q^{(m)}=0$, $\dot\L^{(m)}=0$ and $f^{(m+2)}=0$ (cf. \eqref{ddotQm}, \eqref{dotLm0} and \eqref{fm0}) read
\begin{equation}
	\label{system2}
	\begin{aligned}
		(\mf n-1) \sigma^{(m+1)}-(m-1)\sigma^{(1)}\tr_P\bY^{(m)} & = L_{\ddot\Q}^{m},\\
		(\mf n-1)\sigma^{(1)}\kappa^{(m+1)}+(\mf n-2)\sigma^{(1)} \lie_n\big(\tr_P\bY^{(m)}\big) + F_m\tr_P\bY^{(m)} & = L_{\dot\L}^{m},\\
		\sigma^{(1)}\lie_n\sigma^{(m+1)} + (\sigma^{(1)})^2\kappa^{(m+1)} + (m+1)\big(\lie_n\sigma^{(1)}+2\sigma^{(1)}\kappa_n\big) \sigma^{(m+1)}& = L_f^{m+2},
	\end{aligned}
\end{equation}
where $L_{\ddot\Q}^{m}$, $L_{\dot\L}^{m}$ and $L_f^{m+2}$ gather the lower order terms that we already know how are constrained, and $F_m\d \big((2m(\mf n-1)-\mf n )\sigma^{(1)}\kappa_n + (m-1)\mf n \lie_n\sigma^{(1)}\big)$. Given $\sigma^{(m+1)}|_{\Sigma}$ (this is a completely free function associated to the remaining conformal freedom present in Lemma \ref{lema_conformalgauge}), this system admits a unique solution for $\{\sigma^{(m+1)},\kappa^{(m+1)},\tr_P\bY^{(m)}\}$ provided $m\neq \mf n-1$. Indeed, taking the Lie derivative of the first one along $n$ and combining it with the three equations in \eqref{system2} gives (the precise combination can be read off from the RHS)
\begin{equation}
	\label{combinacionmagica}
	\begin{aligned}
	(\mf n-m-1)(\sigma^{(1)})^2\lie_n\big(\tr_P\bY^{(m)}\big)+\sigma^{(1)}G_m\tr_P\bY^{(m)}\\
=	\sigma^{(1)}L_{\dot\L}^{m}-(\mf n-1)L_f^{m+2}+\sigma^{(1)}\lie_n\big(L_{\ddot\Q}^{m}\big)+(m+1)\big(\lie_n\sigma^{(1)}+2\sigma^{(1)}\kappa_n\big)L_{\ddot\Q}^{m},
\end{aligned}
\end{equation}
where $G_m\d m(\mf n-m-1)\sigma\kappa_n+(m-1)(\mf n-m-2)\kappa= m(\mf n-m-1)\sigma\kappa_n$ (recall that the surface gravity $\kappa = \sigma\kappa_n+\lie_n\sigma$ has already been shown to be zero). This admits a unique solution for $\tr_P\bY^{(m)}$ with ``initial'' data determined by the first equation in \eqref{system2}, namely $\tr_P\bY^{(m)}|_{\Sigma} = \frac{1}{(m-1)\sigma^{(1)}}\big((\mf n-1)\sigma^{(m+1)}|_{\Sigma}-L_{\ddot\Q}^{m}|_{\Sigma}\big)$ except when $m= \mf n-1$, because then the coefficient in front of $\lie_n\big(\tr_P\bY^{(m)}\big)$ vanishes and $G_{\mf n-1}=0$. Finally, equation ${\wh\Q}_{ab}^{(m+1)}=0$, which is equivalent to \eqref{Qabm} because $\Q_{ab}^{(m+1)}n^b=0$ and $P^{ab}\Q_{ab}^{(m+1)}=0$ by item 1. in Prop. \ref{prop_Bianchi} for $\ell=m$, constrains the remaining part of $\bY^{(m)}$, i.e. $\wh{\bY}^{(m)}$, in terms of free data $\wh\bY^{(m)}|_{\Sigma}$ provided $2m\neq \mf n-1$. As before, this statement requires checking the compatibility with the rest of the equations, see Theorem \ref{teorema} below. Let us now analyse the ``special'' cases $m = \frac{\mf n-1}{2}$, $m=\mf n-1$ and $m = \mf n$. \\\\\\

\underline{Case $m= \frac{\mf n-1}{2}$ ($\mf n$ even)}\\

In this case, the coefficients multiplying $\lie_n\bY^{(m)}$ and $\bY^{(m)}$ in $\Q^{(m+1)}_{ab}$ (cf. \eqref{Qabm}) vanish (recall that $\kappa=0$), and therefore the equation $\wh\Q^{(m+1)}_{ab}=0$ does not constrain $\wh\bY^{(m)}$ in terms of $\wh\bY^{(m)}|_{\Sigma}$. Instead, it becomes a potential constraint on the remaining free data, i.e. the metric hypersurface data $\{\bg,\bm\ell,\elltwo\}$, the functions $\tr_P\bY|_{\Sigma}$, $\{\sigma^{(k)}|_{\Sigma}\}_{k\le m}$ and the tensors $\{\wh\bY^{(k)}|_{\Sigma}\}_{k< m}$. We denote the resulting tensor by $\mc{O}^{\scri}_{ab}$ and refer to it as the ``radiative obstruction tensor''. It is symmetric (since $\Q$ is), and by item 1. in Prop. \ref{prop_Bianchi} it is also transverse, i.e. $P^{ab}\mc{O}^{\scri}_{ab}=0$ and $\mc{O}^{\scri}_{ab}n^a=0$. A more detailed analysis of $\mc{O}^{\scri}_{ab}$ is presented in Section \ref{subsec_obs}.\\

\underline{Case $m= \mf n-1$}\\

As already discussed, when $m=\mf n-1$ the system \eqref{system2} no longer determines the set of tensors $\{\sigma^{(\mf n)},\kappa^{(\mf n)},\tr_P\bY^{(\mf n-1)}\}$ in terms of $\sigma^{(\mf n)}|_{\Sigma}$, as it does for the remaining values of $m$. Instead, these quantities become fully determined in terms of $\sigma^{(\mf n)}|_{\Sigma}$ and $\kappa^{(\mf n)}|_{\Sigma}$ through the equations $\ddot{Q}^{(\mf n-1)}=0$, $\dot{\L}^{(\mf n-1)}=0$, $\L^{(\mf n)}_a n^a=0$ and $f^{(\mf n+1)}|_{\Sigma}=0$ (that is, we replace the third equation $f^{(\mf n+1)} =0$ in the system \eqref{system2} by $f^{(\mf n+1)}|_{\Sigma}=0$ and $\L^{(\mf n)}_a n^a =0$), as we discuss next. First, the equations $\ddot{Q}^{(\mf n-1)}=0$ and $\dot{\L}^{(\mf n-1)}=0$ (see \eqref{ddotQm} and \eqref{dotLm0}) take the form
\begin{align}
	(\mf n-1) \sigma^{(\mf n)}-(\mf n-2)\sigma^{(1)}\tr_P\bY^{(\mf n-1)} & = L^{\mf n-1}_{\ddot\Q},\label{1st}\\
	(\mf n-1)\sigma^{(1)}\kappa^{(\mf n)}+(\mf n-2)\sigma^{(1)} \lie_n\big(\tr_P\bY^{(\mf n-1)}\big) + F\tr_P\bY^{(\mf n-1)} & = L_{\dot\L}^{\mf n-1}.\label{2nd}
\end{align}
Substituting the expression for $\kappa^{(\mf n)}$ obtained from \eqref{2nd} into the equation $\L^{(\mf n)}_a n^a=0$ (see \eqref{Lnunordenmenos}) yields\footnote{Equation \eqref{Lnunordenmenos} is written in a particular gauge, and therefore it will change under another choice of rigging $\xi$. Nevertheless, its second-order character is gauge invariant.} a second-order transport equation for $\tr_P\bY^{(\mf n-1)}$ of the form 
\begin{equation}
	\label{2ndorder}
\sigma \lie_n^{(2)}\big(\tr_P\bY^{(\mf n-1)}\big) + \wt{H}_1 \lie_n\big(\tr_P\bY^{(\mf n-1)}\big) + \wt{H}_2 \tr_P\bY^{(\mf n-1)} + \mbox{known terms}=0
\end{equation}
for some functions $\wt{H}_i$, $i=1,2$. Thus, given $\sigma^{(\mf n)}|_{\Sigma}$ and $\kappa^{(\mf n)}|_{\Sigma}$, one can then uniquely determine $\tr_P\bY^{(\mf n-1)}|_{\Sigma}$ and $\lie_n\sigma^{(\mf n)}|_{\Sigma}$ from $\ddot{Q}^{(\mf n-1)}|_{\Sigma}=0$ and $f^{(\mf n+1)}|_{\Sigma}=0$ (see the third equation in \eqref{combinacionmagica} which recall is just a rewriting of \eqref{fm0}), namely
\begin{align*}
(\mf n-1) \sigma^{(\mf n)}|_{\Sigma}-(\mf n-2)\sigma^{(1)}\tr_P\bY^{(\mf n-1)}|_{\Sigma} & = L^{\mf n-1}_{\ddot\Q}|_{\Sigma},\\
\sigma^{(1)}\lie_n\sigma^{(\mf n)}|_{\Sigma} + (\sigma^{(1)})^2\kappa^{(\mf n)}|_{\Sigma} + \mf n\big(\lie_n\sigma^{(1)}+2\sigma^{(1)}\kappa_n\big) \sigma^{(\mf n)}|_{\Sigma}& = L_f^{\mf n+1}|_{\Sigma},
\end{align*}
and from $\lie_n\sigma^{(\mf n)}|_{\Sigma}$ one gets $\lie_n\big(\bY^{(\mf n-1)}\big)|_{\Sigma}$ using \eqref{1st}. Integrating \eqref{2ndorder} with this initial data $\tr_P\bY^{(\mf n-1)}|_{\Sigma}$ and $\lie_n\big(\bY^{(\mf n-1)}\big)|_{\Sigma}$ gives $\tr_P\bY^{(\mf n-1)}$ everywhere. Once $\tr_P\bY^{(\mf n-1)}$ is known, equations \eqref{1st} and \eqref{2nd} determine $\sigma^{(\mf n)}$ and $\kappa^{(\mf n)}$ uniquely. Moreover, since the initial condition $\text{\pounds}_n\big(\bY^{(\mf n-1)}\big)|_{\Sigma}$ has been fitted so that $f^{(\mf n+1)}|_{\Sigma}=0$, Proposition \ref{prop_Bianchi} ensures $f^{(\mf n+1)}=0$ everywhere. In summary, we have shown that the equations $\ddot{\Q}^{(\mf n-1)}=0$, $\dot{\L}^{(\mf n-1)}=0$, $\L^{(\mf n)}_a n^a =0$ and $f^{(\mf n+1)}=0$ form a compatible system which determines uniquely $\{\sigma^{(\mf n)},\kappa^{(\mf n)},\tr_P\bY^{(\mf n-1)}\}$ in terms of $\{\sigma^{(\mf n)}|_{\Sigma},\kappa^{(\mf n)}|_{\Sigma}\}$. In particular the system \eqref{system2} for $m = \mf n-2$ is compatible and hence equation \eqref{combinacionmagica} is satisfied. Since the LHS of this equations is identically zero when $m = \mf n-1$, it follows that the RHS is also identically zero, which means that there is no obstruction in this case. \\


\underline{Case $m= \mf n$}\\

For this value of $m$, the coefficient multiplying $\br^{(m+1)}$ in \eqref{dotQam} vanishes, and as a consequence the equation $\dot{\Q}^{(\mf n)}|_{\Sigma}=0$ does not impose any new constraint on the one-form $\br^{(\mf n)}$ on $\Sigma$. Instead, the tensor $\dot{\Q}^{(\mf n)}|_{\Sigma}$ only depends on $\sigma^{(\mf n)}$, $\kappa^{(\mf n)}$ and lower order terms, that have already been determined, so $\dot{\Q}^{(\mf n)}|_{\Sigma}=0$ becomes a potential constraint on the free data. We denote the resulting tensor by $\mc{O}^{\Sigma}_a$ and refer to it as the ``Coulombic obstruction''. Note that by item 1. of Proposition \ref{prop_Bianchi} it satisfies $\mc{O}_a^{\scri}n^a=0$. A more detailed analysis of this obstruction tensor is undertaken in Section \ref{subsec_obs}.
\end{rmk}

In summary, when $\mf n$ is even, the data that can be freely prescribed at $\scri$ consist of:
\begin{enumerate}
	\item Metric hypersurface data $\{\bg,\bm\ell,\elltwo\}$ satisfying $\bU=0$,
	\item The collection of functions $\{\sigma,\sigma^{(2)}=0,\sigma^{(3)},...\}$ on $\Sigma$ ($\sigma$ is pure metric hypersurface gauge, cf. \eqref{transsigma1}),
	\item The scalar $\tr_P\bY$ on $\Sigma$,
	\item The family of tensors $\{\wh{\Y}^{(1)}_{AB},\wh{\Y}^{(2)}_{AB},...\}$ on $\Sigma$, and
	\item The full one-form $\br^{(\mf n)}$ on $\Sigma$.
\end{enumerate}
When $\mf n$ is odd, one must additionally prescribe the tensor $\wh{\bY}{}^{(\frac{\mf n -1}{2})}$ on $\scri$. This extra data must of course be compatible with the $\wh{\bY}{}^{(\frac{\mf n -1}{2})}|_{\Sigma}$ already prescribed in point 4. One way of ensuring this is to just prescribe $\lie_n \wh{\bY}{}^{(\frac{\mf n -1}{2})}$ and solve for $\wh{\bY}{}^{(\frac{\mf n -1}{2})}$ with the already given data at $\Sigma$.


\section{Existence and uniqueness results}
\label{scri}

In this section we prove that the free data described in Remark \ref{elremarkmaslargodelmundo} fully characterizes the geometric structure at null infinity (Theorem \ref{uniqueness}), and reciprocally, that given such free data there exists a conformal spacetime realizing it (Theorem \ref{teorema}). It is convenient first to set up the following notions of asymptotic flatness of finite and infinite order. 

\begin{defi}
	\label{defi_AF}
Let $(\mc M,g,\Omega)$ be a conformal manifold with null infinity $\scri$. We say $(\mc M,g,\Omega)$ is $k$-asymptotically flat provided it satisfies the quasi-Einstein equation \eqref{quasiEinstein} with $\mc{T}^{(\ell)}_{\alpha\beta}\st{\scri}{=}0$ for every $\ell\le k$. When $k=\infty$ we simply say that $(\mc M,g,\Omega)$ is asymptotically flat.
\end{defi}

This definition is less restrictive that others usually made in the literature (see e.g. the reviews \cite{frauendiener2004conformal,ashtekar2015geometry} or the classic references \cite{penrose1965zero,geroch1977asymptotic}), but it is well-adapted to the asymptotic expansion analysis we perform in this paper. We are ready to establish our uniqueness theorem.

\begin{teo}
	\label{uniqueness}
Let $(\mc M,g,\Omega)$, $({\mc M}', g', \Omega')$ be two asymptotically flat manifolds with respective null infinities $\Phi:\scri\hookrightarrow\mc M$, $\Phi':\scri'\hookrightarrow\mc M'$. Choose any conformal gauge on $(\mc M,g,\Omega)$ and $({\mc M}', g', \Omega')$ such that $|\nabla\Omega|^2_g=0$ and $|\nabla'\Omega'|^2_{g'}=0$, and let $\xi$, $\xi'$ be respective riggings of $\Phi(\scri)$, $\Phi'(\scri')$ extended geodesically. Suppose that $\scri$ admits a cross section $\iota:\Sigma\hookrightarrow\scri$ and that there exists a diffeomorphism $\chi:\scri\to \scri'$. Let $\iota'\d \chi\circ \iota$ so that $\Sigma'\d \chi(\Sigma)\st{\iota'}{\hookrightarrow}\scri'$ and let $\Psi$ be the diffeomorphism of Prop. \ref{prop_diffeo} constructed between neighbourhoods $\mc U\subset\mc M$, $\mc U\subset \mc M'$ of $\Phi(\Sigma)$, $\Phi'(\Sigma')$. Define the function $\omega\d \Omega^{-1}\Psi^{\star}\Omega'$, which is smooth and nowhere zero on $\mc U$, and let $\ol{\bY}^{(k)}\d \Phi^{\star}\lie_{\xi}^{(k)}(\omega^2g)$ and $\varpi\d\Phi^{\star}\omega$. Finally, assume
\begin{enumerate}
	\item $\chi^{\star}\{\bg',\bm\ell',\elltwo{}'\} = \{\varpi^2\bg,\varpi^2\bm\ell,\varpi^2\elltwo\}$,
	\item $\chi^{\star}\tr_{P'}\bY'|_{\Sigma} = \varpi^{-2}\tr_P\ol\bY|_{\Sigma}$ and $\chi^{\star}\wh{\bY}'{}^{(k)} |_{\Sigma}= \wh{\ol\bY}{}^{(k)}|_{\Sigma}$ for every $k\ge 1$,
	\item $\chi^{\star}\br'{}^{(\mf n)}|_{\Sigma} = \ol\br^{(\mf n)}|_{\Sigma}$,
	\item and $\chi^{\star} \wh{\bY}'{}^{(\frac{\mf n -1}{2})}=\wh{\ol\bY}{}^{(\frac{\mf n -1}{2})}$ for $\mf n$ odd.
\end{enumerate}
Then, for all $k\ge 0$, $$\Psi^{\star}\lie_{\xi'}^{(k)}g' \st{\scri}{=} \lie_{\xi}^{(k)}(\omega^2g).$$
\begin{proof}
Let $\ol g\d \omega^2g$ and $\ol\Omega\d \omega\Omega$. Since $\Phi_{\star}\xi = \xi'$ we have $\Psi^{\star}\sigma^{(k)}{}' = \ol{\sigma}^{(k)}$ for every $k\ge 0$, and given that $\chi^{\star}\{\bg',\bm\ell',\elltwo{}'\} = \{\varpi^2\bg,\varpi^2\bm\ell,\varpi^2\elltwo\}$, by Proposition \ref{teo_iso0} it suffices to show that $\chi^{\star}\bY'{}^{(k)} = \ol{\bY}{}^{(k)}$ for all $k\ge 1$. This follows after noting that, except at the exceptional orders, the equations that constrain $\bY'{}^{(k)}$ and $\ol{\bY}{}^{(k)}$ at each order (see Remark \ref{elremarkmaslargodelmundo}) are identical and share the same initial conditions (by item 2.), which necessarily forces $\bY'{}^{(k)}=\ol{\bY}{}^{(k)}$ order by order. The first exception is related to the one-form $\br^{(\mf n)}$, which is constrained by the conformal equations only up to its value at one cross-section, but since by item 3. one has that $\chi^{\star}\br'{}^{(\mf n)} = \ol\br^{(\mf n)}$ at $\Sigma$, then $\chi^{\star}\br'{}^{(\mf n)} = \ol\br^{(\mf n)}$ everywhere. Additionally, when $\mf n$ is even there is another exceptional case regarding $\wh{\bY}{}^{(\frac{\mf n -1}{2})}$ because the conformal equations do not constrain it at all. This is taken care by item 4., which imposes $\chi^{\star} \wh{\bY}'{}^{(\frac{\mf n -1}{2})}=\wh{\ol\bY}{}^{(\frac{\mf n -1}{2})}$. Hence the conformal equations along with the agreement on the ``free data'' in items 1.-4. completely constrain the transverse expansion, and then a direct application of Proposition \ref{teo_iso0} yields the result.
\end{proof}
\end{teo}
\begin{rmk}
Note that the statement of the theorem already rules out all possible obstructions discussed in Remark \ref{elremarkmaslargodelmundo}, because $(\mc M,g,\Omega)$ and $({\mc M}', g', \Omega')$ are assumed to be asymptotically flat (as in Def. \ref{defi_AF}). With comparable effort one can prove that $\Psi$ is an asymptotic isometry up to order $\frac{\mf n-3}{2}$ for asymptotically flat metrics with differentiability $\mc{C}^{\frac{\mf n-3}{2}}$ and non-vanishing obstruction tensors; or an isometry to order $\frac{\mf n-1}{2}$ for smooth metrics that are $\frac{\mf n-1}{2}$-asymptotically flat.
\end{rmk}

\begin{rmk}
It is worth connecting the uniqueness result in Theorem \ref{uniqueness} with our previous work \cite{Mio6}, where we considered conformal manifolds $(\mc M,g,\Omega)$ arising from Fefferman-Graham ambient metrics, which as we proved there are in one-to-one correspondence with manifolds $(\mc M,g,\Omega)$ admitting an integrable conformal Killing vector $\lie_{\eta}g =2\psi g$ with $\scri$ as one of the horizons and satisfying $\lie_{\eta}(\Omega) = (\psi-1)\Omega$. As proven in \cite{Mio6}, for such manifolds it is always possible to choose the conformal gauge such that $\psi=0$, and thus $\eta$ is Killing. We now connect this with the fact that, under the assumption that $\eta$ is a Killing vector and $\eta \st{\scri}{=} \alpha\nu$ for some function $\alpha$, each element of the transverse expansion $\{\bY^{(k)}\}$ satisfies a constraint given by \cite{Mio3,Mio4}
\begin{equation}
	\label{alfalienYmexact}
	\alpha \lie_{n}\bY^{(m)} = m \lie_n(\alpha) \bY^{(m)} - 2 d\alpha\otimes_s \br^{(m)} +\mc{P}^{(m)} \qquad \forall m\ge 1,
\end{equation}
where $\mc{P}^{(1)} \d \bm\ell \otimes_s d \big(\lie_n\alpha\big) + \dfrac{1}{2} \lie_{X_{\eta}}\bg$,
\begin{equation}
	\label{Pm}
	\mc{P}^{(m)}\d  m\lie_{X_{\eta}}\bY^{(m-1)}-\dfrac{1}{2}\sum_{i=2}^m \binom{m}{i}\Phi^{\star}\Big(\lie_{\lie^{(i)}_{\xi}\eta} \lie_{\xi}^{(m-i)}g\Big)\qquad m\ge 1,
\end{equation}
and $X_{\eta}^a = \frac{1}{2}\alpha n(\elltwo)n^a + P^{ab}\big(2\alpha\s_b+\nablacero_b\alpha\big)$. As proven in \cite{Mio3}, when $\eta$ is non-degenerate, i.e. its surface gravity is nowhere zero, one can always choose $\xi|_{\scri}$ and extend it geodesically so that $\mc{P}^{(1)}=\bm\ell \otimes_s d \big(\lie_n\alpha\big)$ and $\mc{P}^{(m)}=0$ for every $m\ge 2$. Contracting \eqref{alfalienYmexact} for $m\ge 2$ with $n$ once and twice in this gauge gives $$\alpha \lie_n\br^{(m)} = (m-1) \lie_n(\alpha)\br^{(m)} + \kappa^{(m)} d\alpha ,\qquad \alpha \lie_n\kappa^{(m)} = (m-2) \lie_n(\alpha) \kappa^{(m)}.$$ 

Taking $\Sigma$ to be the bifurcation surface, where $\alpha=0$ and $\lie_n\alpha\neq 0$, one concludes that the free data of item 2. in Theorem \ref{uniqueness} must be trivial. Note that this statement is gauge invariant because the leading order of $\bY'{}^{(k)}$ under a gauge transformation is $z^k\bY^{(k)}$. Moreover, condition $\eta(\Omega)=-\Omega$ at all orders implies that the free functions $\{\sigma^{(k)}\}_{k\ge 1}$ at $\Sigma$ are also zero. In conclusion, the only remaining freedom is a traceless tensor $\Psi_{AB}=\wh{\Y}^{(\frac{\mf n-1}{2})}_{AB}|_{\Sigma}$ for $\mf n$ odd, from where the full ${\bY}^{(\frac{\mf n-1}{2})}$ is constrained by equation \eqref{alfalienYmexact}. This freedom agrees with the one we obtained by a different argument in \cite{Mio6}, which related this tensor to the free data that appears in the Fefferman-Graham construction of the ambient metric.
\end{rmk}

We now prove one of the main results of this paper, namely that given $\scri$-structure data (as in Definition \ref{universal}) together with the free data described in Theorem \ref{uniqueness}, it is possible to construct a smooth asymptotically flat\footnote{In the sense of Def. \ref{defi_AF}, which as already noted is weaker than other notions in the literature. Stronger conclusions such as e.g. that $\Omega^{-2}g$ is vacuum in a full neighbourhood of $\scri$ clearly cannot be proven with data given only at $\scri$.} spacetime $(\mc M,g,\Omega)$ provided the obstruction tensors vanish.

\begin{teo}
	\label{teorema}
Let $\{\scri,\bg,\bm\ell,\elltwo,\sigma,\mf q=0\}$ be an $\mf n$-dimensional $\scri$-structure data admitting a cross-section $\iota:\Sigma \hookrightarrow\scri$ with induced metric $h\d\iota^{\star}\bg$. Let
\begin{enumerate}
	\item[a.] $\chi$ be a scalar function on $\Sigma$,
	\item[b.] $\{\sigma_{\Sigma}^{(k)}\}_{k\ge 3}$ a collection of scalar functions on $\Sigma$,
	\item[c.] $\{\mc{Y}^{(k)}_{AB}\}_{k\ge 1}$ a set of symmetric tensors on $\Sigma$ traceless w.r.t. $h$,
	\item[d.] $\bm{\beta}$ and $\mf{m}$ a one-form and a function on $\Sigma$, and
	\item[e.] for $\mf n$ odd, let $\wh\bcY^{(\frac{\mf n-1}{2})}_{ab}$ be a symmetric transverse tensor on $\scri$ satisfying $\iota^{\star}\wh\bcY^{(\frac{\mf n-1}{2})}=\mc{Y}^{(\frac{\mf n-1}{2})}$.
\end{enumerate}
Assume the obstruction tensors $\mc{O}_a^{\Sigma}$ and $\mc{O}_{ab}^{\scri}$ constructed from this data as described in Remark \ref{elremarkmaslargodelmundo} both vanish. Then, there exists a smooth conformal manifold $(\mc M,g,\Omega)$, an embedding $\Phi:\scri\hookrightarrow \mc M$ and a geodesic vector $\xi$ such that (i) $(\mc M,g,\Omega)$ satisfies $|\nabla\Omega|^2_g=0$ and the quasi-Einstein equations to infinite order at $\scri=\{\Omega=0\}$, (ii) $\{\scri,\bg,\bm\ell,\elltwo,\sigma,\mf{q}\}$ is $(\Phi,\xi)$-embedded in $(\mc M,g)$ in the sense of Def. \ref{emb_uni}, and (iii) $\tr_P\bY|_{\Sigma} = \chi$, $\lie_{\xi}^{(2)}\Omega|_{\Sigma}=0$ and $\lie_{\xi}^{(k)}\Omega|_{\Sigma} = \sigma_{\Sigma}^{(k)}$ for every $k\ge 3$, $({\iota^{\star}{\bY}}{}^{(k)})^{tf} = \mc{Y}^{(k)}$ for every $k\ge 1$, $\iota^{\star}\br^{(\mf n)} = \bm{\beta}$ and $\kappa^{(\mf n)}|_{\Sigma} = \mf{m}$. For $\mf n$ odd one has, in addition, that $\wh{\Y}^{(\frac{\mf n-1}{2})}_{ab} = \wh\bcY^{(\frac{\mf n-1}{2})}_{ab}$.\\

\textbf{Remark:} Note that there is not loss of generality in assuming $\mf q=0$ from the beginning because by the comment right after Def. \ref{emb_uni} a conformal gauge in which $\mf{q}=0$ can always be achieved.
\begin{proof}
The strategy of the proof is similar to the one we followed in Theorem 4.7 of \cite{Mio4} with some important differences. The idea is to construct a spacetime $(\mc M,g)$ and a function $\Omega$ using Theorem \ref{borel} and Lemma \ref{Borel_lemma_for_functions} from a collection of ``abstract'' tensors $\{\bcY^{(k)}\}_{k\ge 1}$ and functions $\{\sigma^{(k)}\}_{k\ge 1}$ on $\scri$ so that $(\mc M,g,\Omega)$ satisfies the quasi-Einstein equations to infinite order at $\scri$. To do that, we will construct each $\bcY^{(k)}$ and $\sigma^{(k)}$ from the equations derived in Section \ref{section_transverse} obtained after setting $\Q_{\alpha\beta}^{(k)}=0$, $\L_{\mu}^{(k)}=0$ and $f^{(k)}=0$ for all $k$ and replacing each $\bY^{(k)}$ by $\bcY^{(k)}$. To follow a consistent notation we introduce $\bcr^{(k)}\d \bcY^{(k)}(n,\cdot)$ and $\bck^{(k)}\d - \bcY^{(k)}(n,n)$ to denote the would-be tensors $\br^{(k)}$ and $\kappa^{(k)}$, respectively. \\

\underline{1. The strategy}\\

To construct the abstract expansion $\{\bcY^{(k)}\}$ there arises the following difficulty. Let's say that we want to use the equation $\Q^{(k+1)}_{ab}=0$ to construct $\bcY^{(k)}$ from some initial condition on $\Sigma$. By looking at the right hand side of \eqref{Qabm} (recall that the tensor $\wt{O}_{ab}^{(k)}$ collects additional terms depending on $\bcr^{(k)}$ and $\sigma^{(k)}$) one immediately realizes that this is not a transport equation, but a partial differential system due to the presence of $\nablacero$-derivatives of $\bcr^{(k)}$ in $\wt{O}_{ab}^{(k)}$ and of $\tr_P\bcY^{(k)}$. This is a complicated system and to the best of our knowledge there is no existence theorem available. In addition, there appears the scalar $\kappa^{(k+1)}$, which is one order higher. Thus, our strategy is to first construct a suitable one-form $\bm{c}^{(k)}$ and three scalar functions $t^{(k)}$, $\mf{c}^{(k)}$ and $\mf{c}^{(k+1)}$, and then rewrite the equation $\Q^{(k+1)}_{ab}=0$ but replacing $\bck^{(k)}$ by $\mf{c}^{(k)}$, $\bcr^{(k)}$ by $\bm{c}^{(k)}$, $\tr_P\bcY^{(k)}$ by $t^{(k)}$, and $\kappa^{(k+1)}$ by $\mf{c}^{(k+1)}$. The resulting equation is now an actual transport equation for $\bcY^{(k)}$, from which we can build $\bcY^{(k)}$ given an initial condition at $\Sigma$. Later, we will need to close the argument by showing that $\bm{c}^{(k)}(n)=-\mf{c}^{(k)}$, $\bcY^{(k)}(n,\cdot)\d \bcr^{(k)}=\bm{c}^{(k)}$, $\tr_P\bcY^{(k)}= t^{(k)}$ and $\bcY^{(k+1)}(n,n)=-\mf{c}^{(k+1)}$ for every $k\ge 1$. \\

Now we provide the summary of the argument step by step: Suppose that for a given value\footnote{Assume for the sake of the argument that $m+1$ is not one of the exceptional values $\mf n$, $\mf n-1$ or $\frac{\mf n-1}{2}$. These will be dealt with separately.} of $m\ge 1$ we have already constructed the collection $\{\bcY^{(k)},\sigma^{(k)}\}_{k\le m}$ and also $\sigma^{(m+1)}$ and $\mf{c}^{(m+1)}$ such that the equations $\Q_{\alpha\beta}^{(k)}=0$, $\L_{\mu}^{(k)}=0$ and $f^{(k+2)}=0$ for all $k\le m$, and $\Q_{ab}^{(m+1)}=0$ are all satisfied with the replacements $\bY^{(k)}\sto \bcY^{(k)}$ for every $k\le m$, and $\kappa^{(m+1)}\sto \mf{c}^{(m+1)}$. Then we construct $\bcY^{(m+1)}$, $\sigma^{(m+2)}$ and $\mf{c}^{(m+2)}$ by following these steps:
\begin{enumerate}
\item We define $\bm{c}^{(m+1)}|_{\Sigma}$ as the unique one-form that satisfies (i) $\bm{c}^{(m+1)}(n)|_{\Sigma}=-\mf{c}^{(m+1)}|_{\Sigma}$ and (ii) the equation $\iota^{\star}\dot{\Q}_a^{(m+1)}=0$ (see \eqref{dotQam}) with the replacements $\bY^{(k)}\sto \bcY^{(k)}$ for every $k\le m$, $\br^{(m+1)}\sto \bm{c}^{(m+1)}$ and $\kappa^{(m+1)}\sto \mf{c}^{(m+1)}$.
\item Given $\bm{c}^{(m+1)}|_{\Sigma}$, we integrate the one-form $\bm{c}^{(m+1)}$ by solving the equation $\L_a^{(m+1)}=0$ \eqref{Lma} with the replacements $\bY^{(k)}\sto \bcY^{(k)}$ for every $k\le m$, $\br^{(m+1)}\sto \bm{c}^{(m+1)}$ and $\kappa^{(m+1)}\sto \mf{c}^{(m+1)}$. We emphasize that our working hypothesis is the function $\mf{c}^{(m+1)}$ has already been fixed at this stage.
\item Having determined $\bm{c}^{(m+1)}$, we build three functions $\sigma^{(m+2)}$, $t^{(m+1)}$ and $\mf{c}^{(m+2)}$ by solving the system of equations $f^{(m+3)}=\dot\L^{(m+1)}=\ddot\Q^{(m+1)}=0$ with the replacements $\bY^{(k)}\sto \bcY^{(k)}$ for every $k\le m$, $\br^{(m+1)}\sto \bm{c}^{(m+1)}$, $\kappa^{(m+1)}\sto \mf{c}^{(m+1)}$, $\tr_P\bY^{(m+1)}\sto t^{(m+1)}$ and $\kappa^{(m+2)}\sto\mf{c}^{(m+2)}$, from an initial condition $\sigma^{(m+2)}|_{\Sigma}=\sigma^{(m+2)}_{\Sigma}$ (see \eqref{system2} and the subsequent discussion on existence of solutions). The free function $\sigma^{(m+2)}_{\Sigma}$ simply encodes the residual conformal freedom, as described in Lemma \ref{lema_conformalgauge}.
\item Finally, with the tensors $\bm{c}^{(m+1)}$, $\sigma^{(m+2)}$, $t^{(m+1)}$ and $\mf{c}^{(m+2)}$ just determined, we construct $\bcY^{(m+1)}$ by integrating the equation $\Q_{ab}^{(m+2)}=0$ with the replacements $\bY^{(k)}\sto \bcY^{(k)}$ for every $k\le m$, $\br^{(m+1)}\sto \bm{c}^{(m+1)}$, $\kappa^{(m+1)}\sto \mf{c}^{(m+1)}$, $\tr_P\bY^{(m+1)}\sto t^{(m+1)}$, $\kappa^{(m+2)}\sto\mf{c}^{(m+2)}$ and $\bY^{(m+1)}\sto \bcY^{(m+1)}$, with the initial condition $\tr_P\bcY^{(m+1)}|_{\Sigma}=t^{(m+1)}|_{\Sigma}$, $\bcY^{(m+1)}(n,\cdot)|_{\Sigma}=\bm{c}^{(m+1)}|_{\Sigma}$ and $(\bcY^{(m+1)}_{AB})^{tf}|_{\Sigma}=\mc{Y}_{AB}^{(m+1)}$.
\end{enumerate}
To close the argument we need to show that the collection $\{\bcY^{(k)},\sigma^{(k)}\}_{k\le m+1}$ and also the functions $\sigma^{(m+2)}$ and $\mf{c}^{(m+2)}$ satisfy the equations $\Q_{\alpha\beta}^{(k)}=0$, $\L_{\mu}^{(k)}=0$ and $f^{(k+2)}=0$ for all $k\le m+1$, and also $\Q_{ab}^{(m+2)}=0$ (each one with the replacements $\bY^{(k)}\sto \bcY^{(k)}$ for every $k\le m+1$, and $\kappa^{(m+2)}\sto \mf{c}^{(m+2)}$). This is equivalent to proving three things: (1) That the scalars $\bm{c}^{(m+1)}(n)$ and $-\mf{c}^{(m+1)}$ are the same; (2) That $\bcr^{(m+1)}\d\bcY^{(m+1)}(n,\cdot)$ agrees with $\bm{c}^{(m+1)}$; And (3) that $\tr_P\bcY^{(m+1)}$ is the same as $t^{(m+1)}$. Finally, we will construct the conformal manifold $(\mc M,g,\Omega)$ from the collection $\{\bcY^{(k)},\sigma^{(k)}\}_{k\ge 1}$ using Theorem \ref{borel} and Lemma \ref{Borel_lemma_for_functions}, and given that these tensors satisfy $\Q_{\alpha\beta}^{(k)}=0$, $\L_{\mu}^{(k)}=0$ and $f^{(k)}=0$ (with the replacements $\bY^{(k)}\sto \bcY^{(k)}$) for every $k\ge 1$, it is clear that $(\mc M,g,\Omega)$ will also satisfy $\Q_{\alpha\beta}^{(k)}=0$, $\L_{\mu}^{(k)}=0$ and $f^{(k)}=0$ to all orders because $\bY^{(k)}=\bcY^{(k)}$ and $\lie_{\xi}^{(k)}\Omega|_{\scri} = \sigma^{(k)}$ for every $k\ge 1$.\\

\underline{2. The first order}\\

The first order is special because, as already noted several times, the conformal equations have a different structural form at their lowest orders. This fact reflects itself in the way how the one-form $\bm{c}$ and the scalars $\mf{c}$, $t$ and $\sigma^{(2)}$ will be built. Let us start by defining $\bm{c} \d \bs + d\log|\sigma|$ (cf. \eqref{Q1}), and the scalars\footnote{Observe that by construction $\mf c= -\bm c(n)$ so we do not need to worry about proving this at this order, in contrast with the rest of orders.} $\mf c\d- \lie_n(\log|\sigma|)$, $\sigma^{(2)}=0$ (see \eqref{Q1}) and $\mf{c}^{(2)}\d \mf{c} n(\elltwo) -4P(\bm c,\bs)-2P(\nablacero\log|\sigma|,\nablacero\log|\sigma|)$ (which is obtained after equating \eqref{f3} to zero with the replacements $\kappa^{(2)}\sto \mf{c}^{(2)}$, $\sigma^{(2)}\sto 0$, $\br\sto\bm c$ and $\kappa_n\sto\mf c$). We can now construct the function $t$ by integrating the ODE $\dot{\L}^{(1)}=0$ (see \eqref{dotL}) with the replacements $\br$ by $\bm c$, $\kappa_n$ by $\mf c$, $\kappa^{(2)}$ by $\mf{c}^{(2)}$ and $\tr_P\bY$ by $t$, and with the initial condition $t|_{\Sigma} = \chi$. Next, if $\mf n> 3$ we construct the tensor $\bcY_{ab}$ by integrating the transport equation $\mc{Q}^{(2)}_{ab}=0$ (cf. \eqref{Q2ab}) obtained after replacing $\br$ by $\bm c$, $\kappa_n$ by $\mf{c}$, $\kappa^{(2)}$ by $\mf{c}^{(2)}$ and $\tr_P\bY$ by $t$, with the initial conditions $\bcY_{AB}^{tf}|_{\Sigma}=\mc{Y}_{AB}$, $\tr_P\bcY|_{\Sigma}=t|_{\Sigma}$ and $\bcr|_{\Sigma}=\bm c|_{\Sigma}$, namely
\begin{equation}
\label{ecuacionparaY}
\begin{aligned}
0&=(\mf n-3)\sigma \lie_n\bcY_{ab} +(\mf n-3)\sigma\mf{c} \bcY_{ab} + (\mf n-1)\nablacero_a\nablacero_b\sigma-2\sigma(\mf n-2)\nablacero_{(a}c_{b)}\\
&\quad\,  +\sigma\big(\nablacero_{(a}\s_{b)}-2c_ac_b+4c_{(a}\s_{b)}-\s_a\s_b+ \accentset{\circ}{R}_{(ab)} \big)  - \dfrac{\sigma\mf{R}}{2\mf n} \gamma_{ab},
\end{aligned}
\end{equation}
where $\mf{R}$ is given by \eqref{scal} with the corresponding replacements ($\tr_P\bY\sto t$, $\kappa^{(2)}\sto \mf{c}^{(2)}$, $\br\sto\bm c$, $\kappa_n\sto\mf c$). If $\mf n=3$ we simply define the symmetric tensor $\bcY_{ab}$ by means of the decomposition \eqref{decomp_EM} with the values $\bcY_{ab}n^b = c_a$, $\tr_P\bcY = t$ and its transverse part given by the data $\wh\bcY_{ab}$ prescribed in item (e) of the theorem. Recall that for $\mf n=3$ the equation $\Q^{(2)}_{ab}=0$ holds automatically.\\

We now prove that in the case $\mf n>3$, $\bcr=\bm c$ and $\tr_P\bcY=t$ everywhere (for $\mf n=3$ it holds by construction). Firstly, we contract equation \eqref{ecuacionparaY} with $n$ and use identity \eqref{nnablaomega} twice (with $\omega_a = c_a$ and $\omega_a=\s_a$) and $\Rcero_{(ab)}n^b = \frac{1}{2}\lie_n\s_a$ \cite{miguel3} to get
\begin{align}
0&=(\mf n-3)\sigma\lie_n\bcr_a  +(\mf n-3)\sigma\mf c \bcr_{a} + (\mf n-1)n^b \nablacero_a\nablacero_b\sigma -\sigma(\mf n-2) (\lie_n c_a - \nablacero_a\mf c +2\mf c \s_a)\nonumber\\
&\quad\,  +\sigma\lie_n\s_a+2\sigma\mf c (c_a- \s_{a}).\label{eqbmc}
\end{align}
Contracting it again with $n$ and recalling the notation $\bck_n\d-\bcr(n)=-\bcY(n,n)$,
\begin{equation}
	\label{eqmfc}
	0=-(\mf n-3)\sigma\lie_n\bck_n  -(\mf n-3)\sigma\mf c \, \bck_n + (\mf n-1)n^an^b \nablacero_a\nablacero_b\sigma +2\sigma(\mf n-2) \lie_n\mf c -2\sigma\mf{c}^2 .
\end{equation}
Additionally, the contraction of \eqref{ecuacionparaY} with $P^{ab}$ gives, after using \eqref{lietrPY},
\begin{align}
0&=(\mf n-3)\sigma\lie_n(\tr_P\bcY) + 4\sigma(\mf n-3) P(\bcr,\bs) - \sigma(\mf n-3)n(\elltwo)\bck_n +(\mf n-3)\sigma\mf{c} \tr_P\bcY+ (\mf n-1)\square_P\sigma\nonumber\\
&\quad\,-2\sigma(\mf n-2)\div_P\bm c + \sigma\big(\div_P\bs - 2P(\bm c,\bm c) + 4P(\bm c,\bm s) - P(\bm s,\bm s) + \tr_P\Rcero\big) - \dfrac{\mf n-1}{2\mf n} \sigma\mf{R}.\label{ecuaciontrazadeY}
\end{align}
 Next, we construct an auxiliary spacetime $(\mc M_1,g_1)$ using Theorem \ref{borel} from the sequence $\{\T^{(k)}_{ab}\}_{k\ge 1}$ defined by (cf. \eqref{decomp_EM}) $$\T^{(1)}_{ab} \d \dfrac{t-\mf{c} \elltwo}{\mf n-1}\gamma_{ab} + 2\wh{c}_{(a}\ell_{b)} - \mf{c}\ell_a\ell_b + \wh{\bcY}_{ab},\qquad \T^{(2)}_{ab} \d -\mf{c}^{(2)}\ell_a\ell_b,$$ 
 
 and $\T^{(k)}_{ab}=0$ for every $k\ge 3$. It is immediate to check that $P^{ab}\T^{(1)}_{ab} = t$, $\T^{(1)}_{ab}n^a =\wh{c}_b-\mf{c}\ell_b =c_b$ and $\T^{(2)}_{ab}n^a n^b = -\mf{c}^{(2)}$. We also construct a function $\Omega_1$ on $\mc M_1$ using Borel's Lemma \ref{Borel_lemma_for_functions} from the sequence $\{0,\sigma,0,0,...\}$. Since $(\mc M_1,g_1,\Omega_1)$ satisfies $\L^{(1)}_{\mu}=0$, $\Q^{(1)}_{\alpha\beta}=0$ and $f^{(1)}=f^{(2)}=f^{(3)}=0$ ($\dot\L=0$, $\dot{\Q}_a=0$, $\ddot{\Q}=0$ and $f^{(3)}=0$ hold by construction, and $\Q_{ab}=0$, $\L_a=0$ and $f^{(1)}=f^{(2)}=0$ are automatically true, see \eqref{Q1}, \eqref{La1}, \eqref{f2}), the first item in Proposition \ref{prop_Bianchi} (with $\ell=1$) ensures that it also satisfies $\Q^{(2)}_{ab}n^a=0$ and $P^{ab}\Q^{(2)}_{ab}=0$, which imply the following:
\begin{enumerate}
	\item The scalar $\mf c = -\bT^{(1)}(n,n)$ satisfies the same equation as $\bck_n$ (cf. \eqref{eqmfc}) with the replacement $\bck_n\sto \mf c$, namely
	\begin{align*}
0&=-(\mf n-3)\sigma\lie_n\mf c  -(\mf n-1)\sigma\mf c^2 + (\mf n-1)n^an^b \nablacero_a\nablacero_b\sigma +2\sigma(\mf n-2)\lie_n\mf c  .
	\end{align*}
Subtracting both equations one then arrives at $$0=(\mf n-3)\sigma\lie_n(\bck_n-\mf c)  +(\mf n-1)\sigma\mf c (\bck_n-\mf c).$$ This is a linear homogeneous ODE for $\bck_n-\mf c$, and since $\bck_n$ and $\mf c$ agree on $\Sigma$ (because of the initial data we have imposed when solving \eqref{ecuacionparaY}) we conclude $\bck_n=\mf c$ everywhere.
\item Once $\bck_n=\mf c$ is known to be true, the one-form $\bm c=\bT^{(1)}(n,\cdot)$ satisfies the same equation than $\bcr$ \eqref{eqbmc} with the replacement $\bcr\sto\bm c$, namely
\begin{align*}
	0&=(\mf n-3)\sigma\lie_n c_a  +(\mf n-3)\sigma\mf c c_a + (\mf n-1)n^b \nablacero_a\nablacero_b\sigma -\sigma(\mf n-2) (\lie_n c_a - \nablacero_a\mf c +2\mf c \s_a)\\
	&\quad\,  +\sigma\lie_n\s_a+2\sigma\mf c (c_a- \s_{a}),
\end{align*}	
so subtracting it from \eqref{eqbmc} and using that $\bcr|_{\Sigma}=\bm c|_{\Sigma}$, it follows that $\bm c = \bcr$ everywhere.
\item Finally, once we have shown that $\bm c = \bcr$, the scalar $t=P^{ab}\T^{(1)}_{ab}$ satisfies the same equation than $\tr_P\bcY$ \eqref{ecuaciontrazadeY} with the replacement $\tr_P\bcY\sto t$, namely
	\begin{align*}
		0&=(\mf n-3)\sigma\lie_nt + 4\sigma(\mf n-3) P(\bcr,\bs) - \sigma(\mf n-3)n(\elltwo)\bck_n +(\mf n-3)\sigma\mf{c} t + (\mf n-1)\square_P\sigma\\
		&\quad\, -2\sigma(\mf n-2)\div_P\bm c + \sigma\big(\div_P\bs - 2P(\bm c,\bm c) + 4P(\bm c,\bm s) - P(\bm s,\bm s) + \tr_P\Rcero\big) - \dfrac{\mf n-1}{2\mf n} \sigma\mf R.
	\end{align*}
Subtracting them and using that they coincide at $\Sigma$ we also conclude that $t=\tr_P\bcY$ everywhere.
\end{enumerate}
Note that at this point we have already constructed $\sigma^{(2)}$, $\mf{c}^{(2)}$ and the full tensor $\bcY_{ab}$. The construction guarantees that the equations $\ddot{Q}^{(1)}=0$, $f^{(3)}=0$, $\dot\Q_a^{(1)}=0$, $\dot{\L}^{(1)}=0$ and $\Q^{(2)}_{ab}=0$ are fulfilled. Recall also that $\Q^{(1)}_{ab}=0$ and $\L^{(1)}_a=0$ hold automatically. \\

\underline{3. Higher order terms}\\

For the higher order terms we apply a similar strategy. Fix $m\ge 1$ (suppose that $m+1$ is not one of the exceptional values, i.e. that $m+1\neq \frac{\mf n-1}{2}$, $m+1\neq \mf n$ and $m+1\neq \mf n-1$) and that we have already constructed $\{\bcY^{(k)},\sigma^{(k)}\}$ for all $k\le m$ and also $\sigma^{(m+1)}$ and $\mf{c}^{(m+1)}$, and that the equations $\Q_{\alpha\beta}^{(k)}=0$, $\L_{\mu}^{(k)}=0$, $f^{(k+2)}=0$ (with the replacements $\{\bY^{(k)}\sto\bcY^{(k)}\}_{k\leq m}$ and $\kappa^{(m+1)}\sto\mf{c}^{(m+1)}$) hold for every $k\le m$, and also $\Q_{ab}^{(m+1)}=0$. Note that this is what we achieved in the first order.
We now follow the steps 1.-4. that we explained in the strategy of the proof.
\begin{enumerate}
	\item We construct uniquely $\bm{c}^{(m+1)}|_{\Sigma}$ from conditions (i) $\bm{c}^{(m+1)}(n)|_{\Sigma}=-\mf{c}^{(m+1)}|_{\Sigma}$ and such that (ii) the equation $\iota^{\star}\dot{\Q}_a^{(m+1)}=0$ with the replacements $\bY^{(k)}\sto \bcY^{(k)}$ for every $k\le m$ and $\kappa^{(m+1)}\sto \mf{c}^{(m+1)}$ is fulfilled.
	\item Given $\bm{c}^{(m+1)}|_{\Sigma}$, we integrate the one-form $\bm{c}^{(m+1)}$ by solving the equation $\L_a^{(m+1)}=0$ \eqref{Lma} with the replacements $\bY^{(k)}\sto \bcY^{(k)}$ for every $k\le m$, $\br^{(m+1)}\sto \bm{c}^{(m+1)}$ and $\kappa^{(m+1)}\sto \mf{c}^{(m+1)}$, namely
\begin{equation}
	\label{propagationck}
	-\sigma \lie_nc_a^{(m+1)} + m\lie_n\sigma c_a^{(m+1)} -\sigma \nablacero_a\mf{c}^{(m+1)}-\dfrac{m}{\mf n}\mf{c}^{(m+1)} \nablacero_a\sigma = \mbox{lower order terms}.
\end{equation}
To prove that $\bm{c}^{(m+1)}(n)=-\mf{c}^{(m+1)}$ later we will need its contraction with $n$, which is 
\begin{equation}
	\label{propagationckn}
	-\sigma \lie_n\big(c_a^{(m+1)}n^a\big) + m(\lie_n\sigma)\, c_a^{(m+1)}n^a -\sigma \lie_n\mf{c}^{(m+1)}-\dfrac{m}{\mf n}\mf{c}^{(m+1)} \lie_n\sigma = \mbox{lower order terms}.
\end{equation}
\item With the $\bm{c}^{(m+1)}$ constructed in 2., we build the three functions $\sigma^{(m+2)}$, $t^{(m+1)}$ and $\mf{c}^{(m+2)}$ by solving the system of equations $f^{(m+3)}=\dot\L^{(m+1)}=\ddot\Q^{(m+1)}=0$ (see \eqref{system2}) with the replacements $\bY^{(k)}\sto \bcY^{(k)}$ for every $k\le m$, $\kappa^{(m+1)}\sto\mf{c}^{(m+1)}$, $\br^{(m+1)}\sto \bm{c}^{(m+1)}$, $\tr_P\bY^{(m+1)}\sto t^{(m+1)}$ and $\kappa^{(m+2)}\sto\mf{c}^{(m+2)}$) from the initial condition $\sigma^{(m+2)}|_{\Sigma}=\sigma^{(m+2)}_{\Sigma}$.
\item Finally, we construct $\bcY^{(m+1)}_{ab}$ by integrating the equation $\Q_{ab}^{(m+2)}=0$ with the replacements $\bY^{(k)}\sto \bcY^{(k)}$ for every $k\le m$, $\br^{(m+1)}\sto \bm{c}^{(m+1)}$, $\kappa^{(m+1)}\sto \mf{c}^{(m+1)}$, $\tr_P\bY^{(m+1)}\sto t^{(m+1)}$, $\kappa^{(m+2)}\sto\mf{c}^{(m+2)}$ and $\bY^{(m+1)}\sto \bcY^{(m+1)}$ with the initial conditions $\tr_P\bcY^{(m+1)}|_{\Sigma}=t^{(m+1)}|_{\Sigma}$, $\bcY^{(m+1)}(n,\cdot)|_{\Sigma}=\bm{c}^{(m+1)}|_{\Sigma}$ and $(\bcY^{(m+1)}_{AB})^{tf}|_{\Sigma}=\mc{Y}_{AB}^{(m+1)}$.
\end{enumerate}
Once the tensors $\{\bcY^{(k)},\sigma^{(k)}\}_{k\le m+1}$ and the functions $\sigma^{(m+2)}$ and $\mf{c}^{(m+2)}$ are built, we need to check that the equations $\Q_{\alpha\beta}^{(k)}=0$, $\L_{\mu}^{(k)}=0$ and $f^{(k+2)}=0$ hold for all $k\le m+1$, and also $\Q_{ab}^{(m+2)}=0$ (each one with the replacements $\bY^{(k)}\sto \bcY^{(k)}$ for every $k\le m+1$, and $\kappa^{(m+2)}\sto \mf{c}^{(m+2)}$). To do that it suffices to prove that (1) $\bm{c}^{(m+1)}(n)=-\mf{c}^{(m+1)}$, (2) $\bcY^{(m+1)}(n,\cdot)=\bm{c}^{(m+1)}$, and (3) $\tr_P\bcY^{(m+1)}=t^{(m+1)}$.\\

To prove the three claims we shall construct an auxiliary spacetime $(\mc M_m,g_m)$ using Theorem \ref{borel} from the sequence $\{\bT^{(k)}\}_{k\ge 1}$, with $\bT^{(k)} = \bcY^{(k)}$ for all $k\le m$, $$\T^{(m+1)}_{ab} \d \dfrac{t^{(m+1)}-\mf{c}^{(m+1)} \elltwo}{\mf n-1}\gamma_{ab} + 2\wh{c}^{(m+1)}_{(a}\ell_{b)} - \mf{c}^{(m+1)}\ell_a\ell_b + \wh{\bcY}^{(m+1)}_{ab},\qquad
\T^{(m+2)}_{ab} \d -\mf{c}^{(m+2)}\ell_a\ell_b,$$

and $\T^{(k)}_{ab}=0$ for every $k\ge m+3$. We also construct a function $\Omega_m$ on $\mc M_m$ from the sequence $\{0,\sigma,0,...,\sigma^{(m+2)},0,...\}$ using Borel's Lemma \ref{Borel_lemma_for_functions}. Note that $P^{ab}\T^{(m+1)}_{ab} = t^{(m+1)}$, $\T_{ab}^{(m+1)}n^b = \wh{c}^{(m+1)}-\mf{c}^{(m+1)}\ell_a$, $\T_{ab}^{(m+1)}n^an^b = -\mf{c}^{(m+1)}$ and $\T_{ab}^{(m+2)}n^an^b = -\mf{c}^{(m+2)}$, but we do not yet know that $\bm{c}^{(m+1)}(n)=-\mf{c}^{(m+1)}$. To establish this we note that by construction this spacetime satisfies the equations $\Q_{\alpha\beta}^{(k)}=0$, $\L_{\mu}^{(k)}=0$ for $k\leq m$ and $f^{(k)}=0$ for $k\le m+2$. So, item 1. of Corollary \ref{corolariopff} for $\ell=m$ yields firstly $\L^{(m+1)}_an^a=0$, which by \eqref{Lma} takes the form $$\sigma\lie_n \mf c^{(m+1)} -m(\lie_n\sigma)\, \mf c^{(m+1)}  -\sigma \lie_n \mf c^{(m+1)} + \dfrac{m}{\mf n}\mf c^{(m+1)}\lie_n\sigma = \mbox{lower order terms}.$$ 

The key point is that the lower order terms in this equation are, by construction, the same ones as in \eqref{propagationckn}. Thus, subtracting both equations we get a homogeneous first order ODE for $c_a^{(m+1)}n^a+\mf{c}^{(m+1)}$, and since $\bm{c}^{(m+1)}(n)|_{\Sigma} = -\mf{c}^{(m+1)}|_{\Sigma}$ we conclude $\bm{c}^{(m+1)}(n) = -\mf{c}^{(m+1)}$ everywhere.  \\

Once we have shown $\bm{c}^{(m+1)}(n) = -\mf{c}^{(m+1)}$ we also have $\T_{ab}^{(m+1)}n^b = \wh{c}^{(m+1)}-\mf{c}^{(m+1)}\ell_a=c_a^{(m+1)}$ and we are ready to prove the other two claims, namely that $\bcY^{(m+1)}(n,\cdot)=\bm{c}^{(m+1)}$ and $\tr_P\bcY^{(m+1)}=t^{(m+1)}$. First of all note that $(\mc M_m,g_m,\Omega_m)$ satisfies $\L_a^{(m+1)}=0$, $\iota^{\star}\dot\Q^{(m+1)}_a=0$, $\ddot\Q^{(m+1)}=\dot\L^{(m+1)}=f^{(m+3)}=0$ (because these were the equations we used to obtain $\bm{c}^{(m+1)}$, $\sigma^{(m+2)}$, $t^{(m+1)}$ and $\mf{c}^{(m+2)}$) and $\Q^{(m+1)}_{ab}=0$ (by assumption). Then, items 2. and 3. in Corollary \ref{corolariopff} (with $\ell=m$) imply that $(\mc M_m,g_m,\Omega_m)$ satisfies also $\Q^{(m+2)}_{ab}n^an^b=0$, $\Q^{(m+2)}_{ab}n^a=0$ and $P^{ab}\Q^{(m+2)}_{ab}=0$. From \eqref{Qabm} with $\bY^{(m+1)}\sto \bT^{(m+1)}$ we get (for the third one we use \eqref{lietrPY}, and define for shortness $N^{\mf n}_m\d \mf n-3-2m\neq 0$)
\begin{align*}
0&=-N^{\mf n}_m\sigma\lie_n\mf{c}^{(m+1)} - (m+1)N^{\mf n}_m\sigma\bck_n \mf{c}^{(m+1)} + \wt{O}^{(m+1)}_{ab}[\bm{c}^{(m+1)},\mf{c}^{(m+1)},\sigma^{(m+1)}]n^an^b + \mbox{l.o.t},\\
0&=N^{\mf n}_m\sigma\lie_n\bm{c}^{(m+1)}_b + (m+1)N^{\mf n}_m\sigma\bck_n \bm{c}^{(m+1)}_b + \wt{O}^{(m+1)}_{ab}[\bm{c}^{(m+1)},\mf{c}^{(m+1)},\sigma^{(m+1)}]n^a + \mbox{l.o.t},\\
0&=N^{\mf n}_m\sigma\big(\lie_nt^{(m+1)}+4P(\bm{c}^{(m+1)},\bs)-n(\elltwo)\mf{c}^{(m+1)}\big) + (m+1)N^{\mf n}_m\bck_n t^{(m+1)} \\
&\quad\, + \dfrac{(m+1)(\mf n-1)\sigma}{\mf n}\left(\mf{c}^{(m+2)}+2\lie_nt^{(m+1)}+2(m+1)\bck_n  t^{(m+1)}\right)\\
&\quad\, + P^{ab}\wt{O}^{(m+1)}_{ab}[\bm{c}^{(m+1)},\mf{c}^{(m+1)},\sigma^{(m+1)}]+ \mbox{l.o.t}.
\end{align*}
But the tensor $\bcY^{(m+1)}$ is the solution of the equation $\Q_{ab}^{(m+2)}=0$ with the replacements $\bY^{(k)}\sto \bcY^{(k)}$ for every $k\le m$, $\br^{(m+1)}\sto \bm{c}^{(m+1)}$, $\kappa^{(m+1)}\sto \mf{c}^{(m+1)}$, $\tr_P\bY^{(m+1)}\sto t^{(m+1)}$, $\kappa^{(m+2)}\sto\mf{c}^{(m+2)}$ and $\bY^{(m+1)}\sto \bcY^{(m+1)}$. This means that the equations $\Q^{(m+2)}_{ab}n^an^b=0$, $\Q^{(m+2)}_{ab}n^a=0$ and $P^{ab}\Q^{(m+2)}_{ab}=0$ (with the corresponding replacements) are also satisfied. Explicitly,
\begin{align*}
0&=-N^{\mf n}_m\sigma\lie_n\bck^{(m+1)} - (m+1)N^{\mf n}_m\sigma\bck_n \bck^{(m+1)} + \wt{O}^{(m+1)}_{ab}[\bm{c}^{(m+1)},\mf{c}^{(m+1)},\sigma^{(m+1)}]n^an^b + \mbox{l.o.t},\\
0&=N^{\mf n}_m\sigma\lie_n\bcr^{(m+1)}_b + (m+1)N^{\mf n}_m\sigma\bck_n \bcr^{(m+1)}_b + \wt{O}^{(m+1)}_{ab}[\bm{c}^{(m+1)},\mf{c}^{(m+1)},\sigma^{(m+1)}]n^a + \mbox{l.o.t},\\
0&=N^{\mf n}_m\sigma\big(\lie_n\big(\tr_P\bcY^{(m+1)}\big)+4P(\bcr^{(m+1)},\bs)-n(\elltwo)\bck^{(m+1)}\big) + (m+1)N^{\mf n}_m\sigma\bck_n \tr_P\bcY^{(m+1)} \\
&\quad\, + \dfrac{(m+1)(\mf n-1)\sigma}{\mf n}\left(\mf{c}^{(m+2)}+2\lie_nt^{(m+1)}+2(m+1)\bck_n  t^{(m+1)}\right)\\
&\quad\, + P^{ab}\wt{O}^{(m+1)}_{ab}[\bm{c}^{(m+1)},\mf{c}^{(m+1)},\sigma^{(m+1)}] + \mbox{l.o.t}.
\end{align*}
By subtracting both systems and recalling that $N^{\mf n}_m\neq 0$ (because we have assumed $m+1\neq\frac{\mf n-1}{2}$) and that the lower order terms agree (by construction), one arrives at a homogeneous hierarchical system of ODEs for $\mf{c}^{(m+1)}-\bck^{(m+1)}$, $\bm{c}^{(m+1)}-\bcr^{(m+1)}$ and $t^{(m+1)}-\tr_P\bcY^{(m+1)}$. Since these quantities vanish at $\Sigma$ (because of the initial conditions employed to build $\bcY^{(m+1)}_{ab}$) one concludes that they vanish everywhere.\\

Summarizing, the tensors $\bcY^{(m+1)}$, $\sigma^{(m+2)}$ and $\mf{c}^{(m+2)}$ we have just constructed satisfy the equations $\Q_{ab}^{(m+2)}=0$, $\L_a^{(m+1)}=0$, $\dot\L^{(m+1)}=\ddot\Q^{(m+1)}=f^{(m+3)}=0$ and $\dot{\wh\Q}_a^{(m+1)}|_{\Sigma}=0$. An application of item 2. in Corollary \ref{corolariopff} (with $\ell=m$) for $(\mc M_m,g_m,\Omega_m)$ shows that $\bcY^{(m+1)}$ also satisfies $\dot{\Q}_a^{(m+1)}=0$ because $\bcY^{(m+1)}_{ab}=\T^{(m+1)}_{ab}$. Then, the equations $\Q_{\alpha\beta}^{(k)}=0$, $\L_{\mu}^{(k)}=0$ and $f^{(k+2)}=0$ hold for all $k\le m+1$, and also $\Q_{ab}^{(m+2)}=0$. This closes the induction argument. To conclude the proof we only need to analyze the three exceptional cases.
 
\begin{itemize}
\item In the case $m+1=\frac{\mf n-1}{2}$ (when $\mf n$ is odd) the only thing that changes is that the equation $\mc Q_{ab}^{(\frac{\mf n+1}{2})}=0$ cannot be employed to build $\bcY^{(\frac{\mf n-1}{2})}$. Instead, we construct it using decomposition \eqref{decomp_EM} with $\bcr^{(\frac{\mf n-1}{2})}\d\bm{c}^{(\frac{\mf n-1}{2})}$, $\tr_P\bcY^{(\frac{\mf n-1}{2})}\d t^{(\frac{\mf n-1}{2})}$ and the transverse part given by the free data $\wh\bcY^{(\frac{\mf n-1}{2})}$. Note that equation $\mc Q_{ab}^{(\frac{\mf n+1}{2})}=0$ still holds by hypothesis because we have assumed $\mc{O}^{\scri}_{ab}=0$. Here obviously we do not need to prove that $\bcY^{(m+1)}(n,\cdot)=\bm{c}^{(m+1)}$ and $\tr_P\bcY^{(m+1)}=t^{(m+1)}$. The rest of the argument remains unchanged.
\item When $m+1=\mf n-1$, the problem is that the system \eqref{system2} does not determine the quantities $\{\sigma^{(\mf n)},\mf{c}^{(\mf n)},t^{(\mf n-1)}\}$, so instead we integrate the second-order transport equation \eqref{2ndorder} with the replacement $\tr_P\bY^{(\mf n-1)}\sto t^{(\mf n-1)}$ with the initial conditions $t^{(\mf n-1)}|_{\Sigma}$ and $\lie_n t^{(\mf n-1)}|_{\Sigma}$ determined from $\sigma_{\Sigma}^{(\mf n)}$ and $\mf{c}^{(\mf n)}|_{\Sigma}\d \mf{m}$ using the equations $\ddot\Q^{(\mf n-1)}|_{\Sigma}=0$ and $f^{(\mf n+1)}|_{\Sigma}=0$, exactly as explained in Remark \ref{elremarkmaslargodelmundo}. The remainder of the argument proceeds identically.
\item Finally, for $m+1=\mf n$ the issue is that the equation $\dot{\wh\Q}^{(\mf n)}_a|_{\Sigma}=0$ cannot be imposed to obtain the value of the one-form $\bm c^{(\mf n)}$ at $\Sigma$, and instead we establish $\iota^{\star}\bm c^{(\mf n)}=\bm\beta^{(\mf n)}$ and $\bm{c}^{(\mf n)}(n)|_{\Sigma}=-\mf c^{(\mf n)}|_{\Sigma}$ as initial condition. Note again that the equation $\dot{\wh{\Q}}^{(\mf n)}_a|_{\Sigma}=0$ holds by hypothesis because we have assumed $\mc{O}_a^{\Sigma}=0$. The argument continues in the same manner.
\end{itemize}

Once the full collections $\{\sigma^{(k)}\}_{k\ge 0}$ and $\{\bcY^{(k)}\}_{k\ge 1}$ have been constructed, we use Borel's Lemma \ref{Borel_lemma_for_functions} and Theorem \ref{borel} to build a function $\Omega$ and a spacetime $(\mc M,g)$ that satisfies $f^{(k)}=0$, $\Q^{(k)}_{\alpha\beta}=0$ and $\L^{(k)}_{\alpha}=0$ for all $k\ge 1$. Therefore, $(\mc M,g,\Omega)$ solves the conformal Einstein equations to infinite order at $\scri$ and realizes the initial data.
\end{proof}
\end{teo}

\section{Obstruction tensors at $\scri$. Four and six dimensional cases}
\label{subsec_obs}

The purpose of this section is to study the obstruction tensors at their lowest non-trivial orders, namely the Coulombian obstruction tensor $\mc{O}_a^{\Sigma}$ in spacetime dimension four ($\mf n=3$) and the radiative obstruction tensor $\mc{O}^{\scri}_{ab}$ in spacetime dimension six ($\mf n=5$), because recall that $\Q^{(2)}_{ab}$ is identically zero in spacetime dimension four. First of all we put forward the precise definition of the obstruction tensors.
\begin{defi}
Let $(\mc M,g,\Omega)$ be an $(\mf n+1)$-dimensional conformal manifold with null infinity $\Phi:\scri\hookrightarrow\mc M$ written in a conformal gauge satisfying $|\nabla\Omega|^2=0$. Let $\xi$ be a rigging extended off $\Phi(\scri)$ geodesically, $\iota:\Sigma\hookrightarrow\scri$ a cross-section and $\mc{Q}\d (\mf n-1)\big(\hess\Omega+\Omega\sch_g\big)$. We define the radiative obstruction tensor $\mc{O}^{\scri}$ (for $\mf n$ odd) and the Coulombian obstruction tensor $\mc{O}^{\Sigma}$ (for any $\mf n$) by means of $$\mc{O}^{\scri} \d \Phi^{\star}\big(\lie_{\xi}^{(\frac{\mf n-1}{2})}\mc Q\big),\qquad \mc{O}^{\Sigma} \d \iota^{\star}\Phi^{\star}\big(\lie_{\xi}^{(\mf n-1)}\mc{Q}(\xi,\cdot)\big).$$
\end{defi}

A definition of the obstruction tensors can also be given in an arbitrary conformal gauge, but this is beyond the scope of this paper.

\subsection{Coulombian obstruction in four dimensions}

As already indicated in the previous section, the factor $(\mf n-m)$ multiplying $\br^{(m)}$ in equation $\dot\Q_a^{(m)}|_{\Sigma}=0$ leads to two important consequences when $m=\mf n$. The first one is that the equation does not constrain the value of the one-form $\wh{\br}{}^{(\mf n)}|_{\Sigma}$ (which therefore becomes free data), and the second one is that if the reminder of the equation does not vanish, either the conformal spacetime is not smooth or does not satisfy the Einstein equations beyond order $m-1$. This reminder defines the Coulombian obstruction tensor $\mc{O}_a^{\Sigma}$. In spacetime dimension four, $\mc{O}_a^{\scri}= \dot\Q^{(3)}_a$, and hence it depends on the $\scri$-structure data, $\bY$, $\bY^{(2)}$, $\sigma^{(3)}$ and $\kappa^{(3)}$. These tensors are uniquely given in terms of the free data $\{\chi,\sigma_{\Sigma}^{(3)},\mf{m},\mc{Y}_{AB}^{(1)}\}$ on $\Sigma$ and the radiation field $\wh{\bcY}_{ab}$ on $\scri$ after solving the equations $\Q^{(1)}_{\alpha\beta}=\Q^{(2)}_{\alpha\beta}=0$, $\L_{\mu}=\L^{(2)}_{\mu}=0$, $f^{(2)}=f^{(3)}=0$, $\L^{(3)}_an^a=0$ and $f^{(4)}|_{\Sigma}=0$. Note also that by Prop. \ref{prop_Bianchi} the equations $\dot\Q^{(3)}_an^a=0$, $\Q^{(3)}_{ab}n^a=0$, $P^{ab}\Q^{(3)}_{ab}=0$ and $f^{(4)}=0$ follow automatically, so in particular the obstruction tensor satisfies $\mc{O}_a^{\Sigma}n^a=0$.\\

In order to find a necessary and sufficient condition for $\mc{O}_a^{\Sigma}$ to vanish, let us note that since $\Q_{\alpha\beta}$ vanishes up to an including order 2, it suffices to study the tensor $\dot\Q_a^{(3)}$ in any gauge, since its vanishing is a gauge-invariant statement. This is a consequence of the following simple observation.
\begin{lema}
	\label{lemaobvio}
	Assume $\Q^{(k)}_{\alpha\beta}=0$ for all $k=1,...,m$. Let $\xi' = z(\xi+V)$, with $z$ and $V$ extended arbitrarily off $\scri$. Then, $\Q_{ab}^{(m+1)}{}' \d(\lie_{\xi'}^{(m)}\Q)_{ab}= z^m \Q_{ab}^{(m+1)}$.
	\begin{proof}
The result is obtained at once by inserting $\xi' = z(\xi+V)$ into $\lie_{\xi'}^{(m)}\Q_{\alpha\beta}$ and using $\Q^{(k)}_{\alpha\beta}=0$ for all $k=1,...,m$.
	\end{proof}
\end{lema}
Hence, it is sufficient to analyze $\mc{O}_a^{\Sigma}$ in a gauge in which $\sigma^{(1)}=1$, $\elltwo=0$ and the pullback of $\bm\ell$ to the cross-sections of $\scri$ vanishes, $\bm\ell_{\para}=0$. This immediately implies $\bs=0$ \cite{Mio1,Mio2} and hence $\br=\bs+d\sigma^{(1)}=0$. Moreover, the tensor $P$ at $\Sigma$ decomposes as $P^{ab} = h^{AB}e_A^ae_B^b$, where $h^{AB}$ is the inverse metric of $h_{AB}$ and $\{e_A\}$ is a basis in $\Sigma$ with dual $\{\bm\theta^A\}$, and therefore $\delta^{\alpha}_{\rho} = e_B^{\alpha}\theta_{\rho}^B + \xi^{\alpha}\nu_{\rho}+\nu^{\alpha}\xi_{\rho}$. Since the tensor $\Q$ involves up to third derivatives of the metric and the quasi-Einstein equations to second order are imposed, it is to be expected that $\Q$ may have some relation to derivatives of the Weyl tensor. We pursue this idea by applying a transverse derivative to the identity \eqref{CnablaOmega} and evaluating the result at $\scri$. Since $\Q$ vanishes up to order two, we may perform the substitution $\mc{T}_{\alpha\beta}=\frac{1}{\mf n-1}\Q_{\alpha\beta}=\frac{\Omega^2}{2(\mf n-1)}\mc{P}_{\alpha\beta}$ for some tensor $\mc{P}_{\alpha\beta}$ satisfying $\Q^{(3)}=\mc{P}^{(1)}$ at $\scri$. Rewriting identity \eqref{CnablaOmega} (recall $d=\mf n+1$) in terms of $\mc{P}_{\alpha\beta}$ gives 
\begin{align*}
	C^{\alpha}{}_{\beta\mu\nu}\nabla_{\alpha}\Omega - \dfrac{\Omega}{\mf n-2}\nabla_{\alpha} C^{\alpha}{}_{\beta\mu\nu} &= -\dfrac{\Omega}{\mf n-1}\left(\Omega\nabla_{[\mu}\mc P_{\nu]\beta} + 2 \mc P_{\beta[\nu}\nabla_{\mu]}\Omega\right)\\
	&\quad\, +\dfrac{\Omega}{\mf n(\mf n-1)}g_{\beta[\mu}\left(\Omega \nabla^{\rho}\mc P_{\nu]\rho} + 2 \mc{P}_{\rho[\nu}\nabla^{\rho}\Omega\right).
\end{align*}

Applying $\nabla_{\xi}$ (here it turns out to be more useful to take a covariant derivative along $\xi$ rather than a Lie derivative) and evaluating the result at $\Omega=0$ gives (we use that $\xi(\Omega)\st{\scri}{=}\sigma^{(1)}\st{\scri}{=}1$ and hence $\nu_{\mu}\st{\scri}{=}\nabla_{\mu}\Omega$) $$\nu_{\alpha}\xi^{\rho}\nabla_{\rho} C^{\alpha}{}_{\beta\mu\nu} + C^{\alpha}{}_{\beta\mu\nu} \xi^{\rho}\nabla_{\rho}\nabla_{\alpha}\Omega - \dfrac{1}{\mf n-2}\nabla_{\alpha} C^{\alpha}{}_{\beta\mu\nu} \st{\scri}{=} -\dfrac{2}{\mf n-1} \mc P_{\beta[\nu}\nu_{\mu]}+\dfrac{2}{\mf n(\mf n-1)} g_{\beta[\mu}\mc{P}_{\nu]\rho}\nu^{\rho}.$$ 

The second term in the left-hand side vanishes because $\xi^{\rho}\nabla_{\rho}\nabla_{\alpha}\Omega \st{\scri}{=} \frac{1}{\mf n-1}\Q^{(1)}_{\rho\alpha}\xi^{\rho}- \Omega\xi^{\rho}L_{\rho\alpha} \st{\scri}{=} 0$, so we arrive at 
\begin{equation}
	\label{bianchiatscri1}
	\nu_{\alpha}\xi^{\rho}\nabla_{\rho} C^{\alpha}{}_{\beta\mu\nu}  - \dfrac{1}{\mf n-2}\nabla_{\alpha} C^{\alpha}{}_{\beta\mu\nu} \st{\scri}{=} -\dfrac{2}{\mf n-1} \mc P_{\beta[\nu}\nu_{\mu]}+\dfrac{2}{\mf n(\mf n-1)} g_{\beta[\mu}\mc{P}_{\nu]\rho}\nu^{\rho}.
\end{equation}
We now contract this equation with $\xi^{\beta} e_A^{\mu} \xi^{\nu}$ to make the tensor $\mf{E}_{\alpha\beta} \d  \xi^{\mu} \xi^{\nu} C_{\alpha\mu\beta\nu}$ appear. The contraction of \eqref{bianchiatscri1} with $\xi^{\beta} e_A^{\mu} \xi^{\nu}$ then gives, after using $e^{\mu}_A \nu_{\mu} = e^{\mu}_A \xi_{\mu} = \xi^{\mu}\xi_{\mu}=0$, $\nu_{\alpha}\xi^{\alpha}=1$ and $\nabla_{\xi}\xi=0$, $$\nu_{\alpha} e_A^{\mu} \xi^{\rho}\nabla_{\rho} \mf{E}^{\alpha}{}_{\mu}  - \dfrac{1}{\mf n-2}\xi^{\beta} e_A^{\mu} \xi^{\nu}\nabla_{\alpha} C^{\alpha}{}_{\beta\mu\nu} \st{\scri}{=} \dfrac{1}{\mf n-1} \xi^{\beta} e_A^{\mu} \mc P_{\beta\mu}\st{\scri}{=} \dfrac{1}{\mf n-1}\dot{\Q}^{(3)}_A.$$ 

We still need to elaborate the second term in the left-hand side. Note that due to $\br=\bs=0$ and $\elltwo=0$ we have from \eqref{nablaxiup}, \eqref{V} and \eqref{Vn} that $\nabla^{\alpha}\xi^{\beta} = P^{ac}V^b{}_c e_a^{\alpha} e_b^{\beta}=P^{ac}P^{bd}(\Y_{cd}+\F_{cd})e_a^{\alpha} e_b^{\beta} = h^{AC}h^{BD}(\Y_{CD}+\F_{CD})e_A^{\alpha}e_B^{\beta}$, and hence 
\begin{align*}
\xi^{\beta} e_A^{\mu} \xi^{\nu}\nabla_{\alpha} C^{\alpha}{}_{\beta\mu\nu}& \st{\scri}{=} e_A^{\mu}\nabla_{\alpha}\mf{E}^{\alpha}{}_{\mu} - e_A^{\mu}C_{\alpha\beta\mu\nu} \big(\xi^{\beta}\nabla^{\alpha}\xi^{\nu}+\xi^{\nu}\nabla^{\alpha}\xi^{\beta}\big) \\
&\st{\scri}{=} e_A^{\mu}\nabla_{\alpha}\mf{E}^{\alpha}{}_{\mu} +\big({}^{(2)}C^B{}_A{}^C+{}^{(2)}C_A{}^{BC}\big)(\Y_{BC}+\F_{BC}),
\end{align*}
where we recall the notation in Appendix \ref{appendix} for ${}^{(2)}C_{\alpha\beta\mu}\d \xi^{\nu}C_{\alpha\nu\beta\mu}$. We finally use $\delta^{\alpha}_{\rho} = e_B^{\alpha}\theta_{\rho}^B + \xi^{\alpha}\nu_{\rho}+\nu^{\alpha}\xi_{\rho}$ in the first term of the right-hand side to get 
\begin{align*}
	e_A^{\mu}\nabla_{\alpha}\mf{E}^{\alpha}{}_{\mu} & \st{\scri}{=}  e_A^{\mu}e_B^{\alpha}\theta_{\rho}^B \nabla_{\alpha}\mf{E}^{\rho}{}_{\mu}+\nu_{\rho}e_A^{\mu}\xi^{\alpha}\nabla_{\alpha}\mf{E}^{\rho}{}_{\mu} + e_A^{\mu}\nu^{\alpha}\xi_{\rho}\nabla_{\alpha}\mf{E}^{\rho}{}_{\mu} \\
&\st{\scri}{=} \nabla^h_B\mf{E}^B{}_A + \nu_{\alpha}e_A^{\mu}\xi^{\rho}\nabla_{\rho}\mf{E}^{\alpha}{}_{\mu},
\end{align*}
where in the first term we used $e_A^{\mu}e_B^{\alpha}\theta_{\rho}^B \nabla_{\alpha}\mf{E}^{\rho}{}_{\mu} = \nabla^h_B\mf{E}^B{}_A$ (because $\mf{E}$ is orthogonal both to $\xi$ and $\nu$), and the third term vanishes because $\nu^{\alpha}\nabla_{\alpha}\xi^{\rho} \st{\scri}{=} 0 $ (by \eqref{nablaxi} and \eqref{Vn}) and $\mf{E}(\xi,\cdot)=0$. Therefore, we conclude 
\begin{equation}
	\label{eqobscoul}
	(\mf n-3) \nu_{\alpha}e_A^{\mu}\xi^{\rho}\nabla_{\rho}\mf{E}^{\alpha}{}_{\mu} - \nabla^h_B\mf{E}^B{}_A - \big({}^{(2)}C^B{}_A{}^C+{}^{(2)}C_A{}^{BC}\big)(\Y_{BC}+\F_{BC})\st{\scri}{=} \dfrac{\mf n-2}{\mf n-1}\dot{\Q}^{(3)}_A.
\end{equation}
In four spacetime dimensions ($\mf n=3$) the first term vanishes and the tensor ${}^{(2)}C_{ABC}$ is zero, because the identity \eqref{CnablaOmega} entails $C_{\alpha\beta\mu\nu}\nu^{\alpha}\st{\scri}{=}0$, so $C_{\alpha\beta\mu\nu}$ has Petrov type $N$ at $\scri$, and all its non-vanishing components at $\scri$ are encoded in $\mf{E}_{\alpha\beta}$ (see \cite{ortaggio2009bel,fernandez2022asymptotic}). Since $\dot{\Q}^{(3)}_A=\mc{O}^{\Sigma}_A$, then a necessary and sufficient condition for the Coulombian obstruction tensor to vanish is $\mf{E}_{AB}$ being divergence-free. When the cuts of $\scri$ are 2-spheres (e.g. for asymptotically simple spacetimes \cite{newman1989global}), the fact that there are no TT tensors on $\sph^2$ \cite{kroon2017conformal} imply that $\mc{O}_a^{\Sigma}=0$ if and only if $\mf{E}_{AB}=0$, and hence the full Weyl tensor vanishes at $\scri$. For other topologies of $\scri$, such as $\real\times T^2$ (see \cite{schmidt1996vacuum}), $\mf{E}_{AB}$ being divergence-free does not imply $\mf{E}_{AB}=0$. One can then construct four dimensional, smooth, asymptotically flat spacetimes (in the sense of Def. \ref{defi_AF}) as a particular case of Theorem \ref{teorema} with $\scri\not\simeq\real\times\sph^2$ whose Weyl tensor does not vanish at $\scri$. Establishing existence in the stronger sense that the spacetime $\Omega^{-2} g$ is Ricci flat in a neighborhood of $\scri$ is an interesting and open problem. To the best of our knowledge, the asymptotic characteristic problem has only been solved under the assumption that the Weyl tensor vanishes at $\scri$ \cite{kannar1996existence,hilditch2020improved} (and in dimension 4). \\

Equation \eqref{eqobscoul} is interesting also in higher dimensions, because if one is interested in using the Weyl as a variable to be determined iteratively from an expansion, \eqref{eqobscoul} can be used to compute the second order term $\nu_{\alpha}e_A^{\mu}\xi^{\rho}\nabla_{\rho}\mf{E}^{\alpha}{}_{\mu}$ in terms of $\mf{E}$, ${}^{(2)}C_{ABC}$ and $\dot{\Q}^{(3)}$ provided $\mf n\neq 3$ (recall that in higher dimension the tensor ${}^{(2)}C_{ABC}$ need not to vanish, see \cite{ortaggio2009bel}). When $\mf n=3$ the Coulombian obstruction would manifest itself also in this approach.

\subsection{Radiative obstruction in six dimensions}

As already mentioned, for $\mf n$ odd the equation $\Q_{ab}^{(\frac{\mf n+1}{2})}=0$ does not fix the full tensor $\bY^{(\frac{\mf n-1}{2})}$. Furthermore, after having assumed that the previous orders are satisfied, the tensor $\Q_{ab}^{(\frac{\mf n+1}{2})}$ turns out to only depend on null metric hypersurface data, $\chi$, $\{\sigma_{\Sigma}^{(k)}\}_{k\le \frac{\mf n+1}{2}}$ and $\{\mc{Y}^{(k)}_{AB}\}_{k\le \frac{\mf n-3}{2}}$. This defines the radiative obstruction tensor $\mc{O}^{\scri}_{ab}$ whose vanishing determines whether $\Q_{ab}^{(\frac{\mf n+1}{2})}=0$ can be satisfied. This behaviour is reminiscent of the Fefferman-Graham obstruction tensor $\mc{O}^{FG}$ in the context of ambient metrics \cite{fefferman1985conformal,fefferman2012ambient}. Recall that for $\mf n=3$ the tensor $\mc{O}^{\scri}_{ab}$ automatically vanishes, just like $\mc{O}^{FG}$, and that they appear at the same order. This suggests a strong connection between $\mc{O}^{\scri}_{ab}$ and the FG obstruction tensor at the cross-sections of $\scri$. Establishing this connection would require understanding in detail all the lower order terms arising in \eqref{Qabm}. This task is challenging and well beyond the scope of this paper, so in this section we just analyze the first non-trivial case, namely $\mf n=5$ (i.e. spacetime dimension six), where $\mc{O}^{\scri}_{ab}=\Q^{(3)}_{ab}$.\\

Assume $(\mc M,g,\Omega)$ is a six-dimensional conformal manifold with $\lambda=0$ satisfying $\Q^{(1)}_{\alpha\beta}=\Q^{(2)}_{\alpha\beta}=0$, $\L^{(1)}_{\mu}=\L_{\mu}^{(2)}=0$ and $f^{(1)}=f^{(2)}=f^{(3)}=f^{(4)}=0$. By item 1. in Prop. \ref{prop_Bianchi} we know that the tensor $\Q^{(3)}_{ab}$ satisfies $P^{ab}\Q^{(3)}_{ab}=0$ and $\Q^{(3)}_{ab}n^b=0$. In Lemma \ref{lemaobvio} we have established that under any change of rigging $\xi' = z(\xi+V)$ (with $z$ and $V$ extended arbitrarily off $\scri$) one has $\Q_{ab}^{(3)}{}' = z^2 \Q_{ab}^{(3)}$. So, in order to analyze the obstruction tensor at $\scri$ it suffices to compute $\Q_{ab}^{(3)}$ in a simple gauge. We choose, as in the previous subsection, the gauge in which $\sigma^{(1)}=1$, $\elltwo=0$ and the pullback of $\bm\ell$ to the cross-sections of $\scri$ vanishes, $\bm\ell_{\para}=0$. More specifically, we shall work in Gaussian null coordinates $\{t,u,x^A\}$ in which the metric in a neighbourhood of $\scri = \{t=0\}$ takes the form $$g = 2du dt + \phi du^2 + 2\beta_A dx^A du + \mu_{AB} dx^A dx^B,$$ 

where $\phi$ and $\bm\beta$ vanish at $t=0$ and the rigging is $\xi=\partial_t$. Following the same notation as in Appendix \ref{appendixD}, we have $\gamma_{ab} = \delta_a^A\delta_b^B h_{AB}$, where $h_{AB}\d \mu_{AB}|_{t=0}$, $\kappa^{(m)}=-\frac{1}{2}\dot\phi^{(m)}$, $\r^{(m)}_A=\frac{1}{2}\dot{\bm\beta}^{(m)}$, $\Y_{AB}^{(m)}=\frac{1}{2}\dot\mu_{AB}^{(m)}$, $P^{ab}=\mu^{AB}\delta_A^a\delta_B^b$ and $n^a=\delta^a_u$. In particular, we use a prime to denote derivative w.r.t. $u$ and a dot for derivative w.r.t. $t$.\\

Computing the quasi-Einstein equation (including all terms) by hand becomes intractable very quickly. Therefore, and since we need the full expression of the quasi-Einstein equation up to order $\Q^{(3)}_{ab}$, we have performed the computation with the aid of the \texttt{xAct} package \cite{xAct2026} in \texttt{Mathematica}. The outcome of the computation has the following consequences. Firstly, equations $\Q^{(1)}_{AB}=0$, $\dot{\Q}^{(1)}_A=0$, $\dot{\Q}^{(1)}_u=0$ and $\ddot{\Q}^{(1)}=0$ at $\scri$ imply $h_{AB}'=0$, $\r_A=0$, $\kappa_n=0$ and $\sigma^{(2)}=0$, while $\Q^{(1)}_{Au}=\Q^{(1)}_{uu}=\L^{(1)}_A=\L^{(1)}_u=0$ hold automatically. Next, equation $f^{(3)}=0$ fixes $\kappa^{(2)}=0$, which inserted into $\dot{\L}^{(1)}=0$ gives $\lie_n\big(P^{AB}\Y_{AB}\big) = -\frac{R^h}{4}$. One then checks that $\mu^{AB}\Q^{(2)}_{AB}=\Q^{(2)}_{uA}=\Q^{(2)}_{uu}=0$ hold automatically, and from $\Q^{(2)}_{AB}=0$ one obtains 
 \begin{equation}
 	\label{derY1}
\lie_n \Y_{AB} = -L^h_{AB},
 \end{equation}
  where $L^h_{AB}$ is the Schouten tensor of $h_{AB}$. One can also check that $\dot{\Q}^{(2)}_u={\L}^{(2)}_u=0$ hold automatically. Equation $\dot{\Q}_A=0$ then gives $\r_A^{(2)} = \frac{1}{3}\left(D_B\Y{}^B{}_A - D_A \Y{}^B{}_B\right)$, where $D$ is the Levi-Civita derivative of $h$, and $\L^{(2)}_A=\Q^{(3)}_{Au}=\Q^{(3)}_{uu}=0$ are automatically satisfied. Next, equations $\ddot{\Q}^{(2)}=0$, $f^{(4)}=0$ and $\dot{\L}^{(2)}=0$ read, respectively, 
 \begin{align*}
 4\sigma^{(3)} + \Y^{AB}\Y_{AB} - \tr_P\bY^{(2)}&=0,\\
 \lie_n\sigma^{(3)} +\kappa^{(3)} &=0,\\
 6 \lie_n\big(\tr_P\bY^{(2)}\big) + 8\kappa^{(3)} + 6 \Y^{AB} R^h_{AB} - R^h \tr_P\bY &= 0.
 \end{align*}
Taking a derivative of the first equation w.r.t. $u$ and solving the system one obtains $\lie_n\sigma^{(3)}=\phi^{(3)}=0$ and $\lie_n(\tr_P\bY^{(2)}) = -2h^{AB}h^{CD}L^h_{AC}\Y_{BD}$. One can now check that $P^{ab}\Q^{(3)}_{ab}=0$ is automatically verified. Substituting all these expressions into the tensor $\Q^{(3)}_{AB}$ yields
\begin{align}
\Q^{(3)}_{AB}& =-\dfrac{1}{3}\Big( 6\square_h\Y_{AB} -6D^CD_{(A}\Y_{B)C} +6W^h_{ACBD}\Y^{CD}-2D_{(A}D^C\Y_{B)C}+2D_AD_B\Y^C{}_C\nonumber\\
&\qquad\quad -2 \left(\square_h \Y^C{}_C - D^CD^D\Y_{CD}\right)h_{AB}\Big),\label{Q3}
\end{align}
where $W^h_{ACBD}$ denotes the Weyl tensor of $h_{AB}$. Observe that the right-hand side is manifestly traceless. In accordance with the general results in the previous sections, equation $\Q^{(3)}_{AB} = 0$ does not determine the tensor $\Y^{(2)}_{AB}$ via a transport equation. Instead, the right-hand side of \eqref{Q3} defines a symmetric, traceless tensor 
\begin{align}
\mc{O}_{AB}^{\scri}& \d-\dfrac{1}{3}\Big( 6\square_h\Y_{AB} -6D^CD_{(A}\Y_{B)C} +6W^h_{ACBD}\Y^{CD}-2D_{(A}D^C\Y_{B)C}+2D_AD_B\Y^C{}_C\nonumber\\
		&\qquad\quad -2 \left(\square_h \Y^C{}_C - D^CD^D\Y_{CD}\right)h_{AB}\Big),\label{obstruction3}
\end{align}
constructed solely from $h_{AB}$ and $\Y_{AB}$. Taking the derivative of $\mc{O}_{AB}^{\scri}$ along $\partial_u$ and using $\partial_u h_{AB}=0$, \eqref{derY1}, and the identity $D_{(A}D^CL^h_{B)C}=h^{CD}D_AD_B L^h_{CD}$ (which follows at once form the contracted Bianchi identity, see \eqref{bianchiL}), one finds $$\partial_u \Q^{(3)}_{AB} = 2\left(\square_h L^h_{AB} - D^CD_{(A} L^h_{B)C} + W^h_{ACBD}\Y^{CD}\right) .$$ 

The term between round brackets in the right-hand side is precisely the Bach tensor of $h_{AB}$ (see e.g. \cite{fefferman2012ambient}), which we denote by $B^h_{AB}$. Summarizing, we have obtained $\partial_u \Q^{(3)}_{AB} = 2 B^h_{AB}$. The Bach tensor is precisely the Fefferman and Graham obstruction tensor in the case of conformal metrics of dimension four. This provides strong support for our expectation that $\mc{O}^{\scri}$ is closely related to the FG obstruction tensor $\mc{O}^{FG}$ of the corresponding dimension.\\ 

We now give a different argument to show that $\partial_u \Q^{(3)}_{AB}$ must be proportional to the FG obstruction tensor. In the recent work \cite{Mio6} we proved that the Fefferman–Graham ambient metric associated to the conformal class $[h]$ admits a null infinity whose $\scri$-structure is given precisely by $h$. Furthermore, we showed that the derivative of $\Q^{(3)}_{AB}$ along $\partial_u$ at $\scri$ is proportional to the Fefferman–Graham obstruction tensor of $[h]$. Since the derivative $\partial_u \Q^{(3)}_{AB}$ of \eqref{Q3} only depends on $h_{AB}$, it must be the same for all spacetimes sharing the same $h_{AB}$ at $\scri$. Thus, the only possibility is $\partial_u \Q^{(3)}_{AB}$ being proportional to the FG obstruction tensor in dimension four, i.e. the Bach tensor.\\

In summary, the vanishing of the Bach tensor of $h_{AB}$, together with the condition $\mc{O}_{AB}^{\scri}=0$ on a cross-section $\Sigma$ (which by \eqref{Q3} may be interpreted as a restriction on the free data $\Y_{AB}|_{\Sigma}$), guarantees that the full obstruction tensor $\mc{O}^{\scri}$ vanishes everywhere on $\scri$. Thus, the hypothesis $\mc{O}_{ab}^{\scri}=0$ in Theorem \ref{teorema} for $\mf n=5$ can be relaxed to $B^h_{AB}=0$ and $\mc{O}_{AB}^{\scri}|_{\Sigma}=0$. Note that the FG obstruction tensor is conformally covariant, and hence the condition $B^h_{AB}=0$ does not depend on the conformal representative of the $\scri$-structure.\\

Our recent results in \cite{Mio6} concerning a geometric characterization of conformal infinity for the Fefferman--Graham ambient metric, together with preliminary analysis of the general case, suggest that a similar picture emerges in higher dimensions, namely that the FG obstruction tensor arises after taking a sufficient number of derivatives along $n$ on the radiative obstruction tensor. We therefore expect the following conjecture to be true.

\begin{con}
	\label{conjecture}
Let $\{\scri,\bg,\bm\ell,\elltwo,\sigma,\mf q\}$ be $\scri$-structure data of odd dimension $\mf n\ge 7$ admitting a cross-section $\iota:\Sigma \hookrightarrow\scri$ with induced metric $h\d\iota^{\star}\bg$ and let $\mc{O}_{ab}^{\scri}$ be the radiative obstruction tensor. Denote by $\mc{O}^{FG}$ the Fefferman-Graham obstruction tensor of $[h]$. Then, $$\iota^{\star}\big(\text{\pounds}_n^{(\frac{\mf n-3}{2})} \mc{O}^{\scri} \big) = c_{\mf n} \, \mc{O}^{FG},$$ where $c_{\mf n}$ is a constant depending only on $\mf n$.
\end{con}

Note that the Fefferman-Graham obstruction tensor vanishes e.g. when $h$ is Einstein or conformally flat, among others \cite{fefferman2012ambient}. Establishing this conjecture would require a detailed analysis of the lower order terms appearing in \eqref{Qabm}. We intend to analyze this problem in future work.


\section{Conclusions and future work}

In this paper we have analyzed how the conformal Einstein equations constrain the geometry at null infinity without imposing any restriction on the spacetime dimension, fall-off conditions for the Weyl tensor or the topology of $\scri$ beyond admitting a cross-section. Our analysis leads to the identification of a collection of free tensors on $\scri$ that completely characterize asymptotically flat spacetimes within this framework. Moreover, we proved that, provided the obstruction tensors vanish, any such choice of free data gives rise to a smooth asymptotically flat spacetime.\\

There are several natural directions for future research. First, it would be important to clarify the physical interpretation of the free tensors $\mf{m}$ and $\bm\beta$ appearing in Theorem~\ref{teorema}, and to relate them to the notions of Bondi mass and angular momentum in higher dimensions as developed in \cite{hollands2005asymptotic, ishibashi2008higher, tanabe2011asymptotic, godazgar2012peeling, hollands2015bondi}. Addressing this question within our framework will require an appropriate conformally covariant definition of these quantities, as well as suitable higher-dimensional generalizations of the news tensor and Geroch’s $\rho$-tensor. Second, we plan to investigate further the radiative obstruction tensor in arbitrary dimension and conformal gauge and its relation to the Fefferman--Graham obstruction tensor, as conjectured in Conjecture~\ref{conjecture}. In this context, it would be particularly interesting to connect our results with those of \cite{riello2024renormalization} and similar references, where logarithmic terms in the asymptotic expansions at infinity are allowed.\\

Further open problems include extending the notion of double null data introduced in \cite{Mio1, Mio2} to the asymptotic characteristic problem, incorporating the new free data identified in this work. It would be interesting to determine whether this detached object suffices to construct a conformal spacetime satisfying Einstein’s equations in a neighbourhood of $\scri$, once suitable existence theorems for the conformal equations are available, either in higher dimensions or in four dimensions with non-spherical topology. Finally, another promising direction is the inclusion of Killing and homothetic initial data, making use of the general identities developed in \cite{Mio5}, and the study of the corresponding asymptotic KID problem in this setting (see \cite{paetz2014kids}).
%

\section*{Acknowledgements}
This work has been supported by Grant PID2024-158938NB-I00 funded by MICIU/AEI/10.13039/ 501100011033 and by ``ERDF A way of making Europe''. M. Mars acknowledges financial support under projects SA097P24 (JCyL) and RED2022-134301-T funded by MCIN/AEI/10.13039/ 501100011033. G. Sánchez-Pérez also acknowledges support of the PhD. grant FPU20/03751 from Spanish Ministerio de Universidades.
\begin{appendices}
	\section{Some pullbacks into a null hypersurface}
	\label{appendix}
	
In this appendix we particularize several results from \cite{Mio3,Mio4,Mio5} to the null case and we prove additional pullback identities concerning the Hessian of a function (Prop. \ref{proppullback}). Given a $(0,p)$ tensor field $T_{\alpha_1\cdots\alpha_p}$ on $\mc M$, we use the notation $T_{a_1\cdots a_p}$ to denote its pullback to $\mc H$, and ${}^{(i,j)}T_{\alpha_1\cdots \alpha_{p-1}}$ for its contraction first in the $j$-th slot and then in the $i$-th slot, i.e. ${}^{(i,j)}T={}^{(i)}\big({}^{(j)}T\big)$ (note that the order $(i,j)$ is relevant). Similarly, ${}^{(i)}T_{a_1\cdots a_{p-1}}$ denotes the pullback of ${}^{(i)}T_{\alpha_1\cdots \alpha_{p-1}}$ to $\mc H$. The rest of the quantities are defined in Section \ref{sec_hypersurfacedata}.
	
\begin{prop}
	\label{proppullback0}
	Let $(\mc M,g)$ be a semi-Riemannian manifold and $\Phi:\mc H\hookrightarrow\mc M$ a smooth embedded null hypersurface with rigging $\xi$. Let $T$ be a $(0,p)$-tensor on $\mc M$ and $f\in\mc{F}(\mc M)$. Then, for any $j\ge 1$
	\begin{align}
		\hspace{-0.4cm}	\left(\nabla T\right)_{b a_1\cdots a_p} &= \nablacero_b T_{a_1\cdots a_p} + \sum_{i=1}^p \Y_{ba_i}T_{a_1\cdots a_{i-1} c a_{i+1}\cdots a_p} n^c  + \sum_{i=1}^p \U_{ba_i}\big({}^{(i)} T\big)_{a_1\cdots a_{i-1} a_{i+1}\cdots  a_p},\hfill\label{identity10}\\
		\hspace{-0.4cm}			\left({}^{(1)}\nabla T\right)_{a_1\cdots a_p} &= (\lie_{\xi}T)_{a_1\cdots a_p} - \sum_{i=1}^p \left(\r-\s\right)_{a_i}\big({}^{(i)} T\big)_{a_1\cdots a_{i-1} a_{i+1}\cdots a_p}- \sum_{i=1}^p V^c{}_{a_i} T_{a_1\cdots a_{i-1} c a_{i+1}\cdots a_p},\hfill\label{identity20}\\
		\left({}^{(j+1)}\nabla T\right)_{ba_1\cdots a_{p-1}} &= \nablacero_{b} \big({}^{(j)} T\big)_{a_1\cdots a_{p-1}} + \sum_{i=1}^{p-1} \Y_{b a_i} \big({}^{(j)} T\big)_{a_1\cdots a_{i-1} c a_{i+1}\cdots a_{p-1}}n^c\hfill \nonumber\\
		&  \quad\, + \sum_{i=1}^{p-1} \U_{b a_i} \big({}^{(i,j)}T\big)_{a_1\cdots a_{i-1} a_{i+1}\cdots a_{p-1}}- \left(\r-\s\right)_{b} \big({}^{(j)} T\big)_{a_1\cdots a_{p-1}} \nonumber\\
		& \quad\, - V^c{}_{b}T_{a_1\cdots a_{j-1}c a_{j}\cdots a_{p-1}}.\hfill\label{identity30}
	\end{align}
\end{prop}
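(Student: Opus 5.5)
The plan is to work with the ambient frame $\{\xi,\wh e_a\}$ along $\Phi(\mc H)$ and to reduce all three identities to the connection relations \eqref{connections} and \eqref{nablaxi}. Throughout I use $g(\xi,\cdot)$, $\nu$ and $\bU$ as the second fundamental form. By linearity and the Leibniz rule it suffices to evaluate the relevant slots of $\nabla T$ on frame vectors. Each identity is then obtained by expanding a covariant derivative of $T$ with all arguments filled, and subtracting the terms in which the derivative falls on the arguments.

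\textbf{Identity \eqref{identity10}.} For $X=e_b$ and tangent arguments $Y_i=e_{a_i}$, I will write
\[
(\nabla_X T)(Y_1,\dots,Y_p)=X\big(T(Y_1,\dots,Y_p)\big)-\sum_i T(\dots,\nabla_XY_i,\dots).
\]
Since $X$ is tangent, the first term equals $e_b$ applied to the pullback. Inserting \eqref{connections}, namely $\nabla_XY_i=\Phi_\star\nablacero_XY_i-\bY(X,Y_i)\nu-\bU(X,Y_i)\xi$, produces $-T(\dots,\nablacero_XY_i,\dots)$, which combines with the first term to give $\nablacero_bT_{a_1\cdots a_p}$. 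The remaining terms are $+\Y_{ba_i}T(\dots,\nu,\dots)$ and $+\U_{ba_i}\,{}^{(i)}T$. Finally $\nu=\Phi_\star n$ at $\mc H$, because $\nu$ is null and normal with $\bm\nu(\xi)=1$, so $T(\dots,\nu,\dots)=T_{\dots c\dots}n^c$.

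\textbf{Identity \eqref{identity20}.} Here the derivative is taken along $\xi$, which is not tangent. I will use $\nabla_\xi Y-\nabla_Y\xi=[\xi,Y]$ together with the Lie derivative formula
\[
(\lie_\xi T)(Y_1,\dots)=\xi(T(\dots))-\sum_i T(\dots,[\xi,Y_i],\dots),
\]
with $Y_i$ extended so that their pushforwards are tangent. This gives
\[
(\nabla_\xi T)(Y_1,\dots)=(\lie_\xi T)(Y_1,\dots)-\sum_i T(\dots,\nabla_{Y_i}\xi,\dots).
\]
Substituting \eqref{nablaxi}, $\nabla_{e_a}\xi=(\r-\s)_a\xi+V^c{}_a\wh e_c$, yields exactly the two sums in \eqref{identity20}. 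The only subtlety is that the frame extension does not matter: $\nabla_\xi T$ is tensorial, and the combination $(\lie_\xi T)$ pulled back is independent of how $e_a$ is extended once we fix $[\xi,\wh e_a]=0$. I will fix exactly that extension, by Lie dragging along $\xi$.

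\textbf{Identity \eqref{identity30}.} Here the rigging sits in slot $j+1$ of $\nabla T$, i.e. in slot $j$ of $T$. I will write
\[
(\nabla_{e_b}T)(\dots,\xi,\dots)=e_b\big({}^{(j)}T(\dots)\big)-T(\dots,\nabla_{e_b}\xi,\dots)-\sum_{i} T(\dots,\nabla_{e_b}e_{a_i},\dots,\xi,\dots).
\]
The first term and the $\nablacero$ part of the last sum combine to give $\nablacero_b({}^{(j)}T)$. The $\nu$ and $\xi$ components from \eqref{connections} give the $\Y n$ term and the ${}^{(i,j)}T$ term. The middle term, via \eqref{nablaxi}, gives $-(\r-\s)_b\,{}^{(j)}T-V^c{}_bT_{\dots c\dots}$.

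The main obstacle is index bookkeeping rather than any conceptual step. In particular, in \eqref{identity30} the ordering of the double contraction ${}^{(i,j)}T$ must match the convention stated, and the slot shift must place $c$ in position $j$. I expect to check these by working case $p=2$ explicitly.
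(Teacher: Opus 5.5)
Your proposal is correct. Expanding $\nabla T$ on the frame $\{\xi,\wh e_a\}$, using \eqref{connections} for $\nabla_{\wh e_b}\wh e_a$, the tensorial identity $\xi^\mu\nabla_\mu T_{\alpha_1\cdots\alpha_p}=(\lie_\xi T)_{\alpha_1\cdots\alpha_p}-\sum_i T_{\alpha_1\cdots\mu\cdots\alpha_p}\nabla_{\alpha_i}\xi^\mu$, and \eqref{nablaxi} for $\nabla_{\wh e_a}\xi$ is the standard argument for this result (which the paper quotes from its earlier works rather than reproving) and the same manipulation the paper uses to prove \eqref{identity2}; your Lie-dragged extension of the $e_a$ is harmless but unnecessary, since that identity relating $\nabla_\xi T$ and $\lie_\xi T$ holds for any extension.
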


\begin{prop}
	\label{propdivergencia}
	Let $(\mc M,g)$ be a semi-Riemannian manifold and $\Phi:\mc H\hookrightarrow\mc M$ a smooth null hypersurface with rigging $\xi$. Let $T$ be a $(0,p+1)$-tensor on $\mc M$ and denote by $\div T$ the $p$-covariant tensor defined by $\left(\div T\right)_{\alpha_1\cdots\alpha_p}\d g^{\mu\nu}\nabla_{\mu}T_{\nu\alpha_1\cdots\alpha_p}$. Then,
	\begin{equation*}
		\begin{aligned}
			\hspace{-0.3cm}\Phi^{\star}\left(\div T\right)_{a_1\cdots a_p} &=  n^b(\lie_{\xi} T)_{ba_1\cdots a_p}+P^{bc}\nablacero_b T_{c a_1\cdots a_p}  + n^b\nablacero_b \big({}^{(1)} T\big)_{a_1\cdots a_p} + \big(2\kappa_n + \tr_P\bU\big) \big({}^{(1)} T\big)_{a_1\cdots a_p}\\
			&\quad\, + (\tr_P\bY - n(\elltwo))n^cT_{c a_1\cdots a_p} - 2P^{ac}(\r+\s)_a  T_{c a_1\cdots a_p} \\
			&\quad\,  + \sum_{i=1}^p P^{bc}\Y_{ba_i}T_{c a_1\cdots a_{i-1} d a_{i+1}\cdots a_p}n^d + \sum_{i=1}^p P^{bc}\U_{ba_i} \big({}^{(i+1)} T\big)_{c a_1\cdots a_{i-1} a_{i+1}\cdots a_p} \\
			&\quad\,  - \sum_{i=1}^p (\r-\s)_{a_i}n^b\, \big({}^{(i+1)} T\big)_{b a_1\cdots a_{i-1} a_{i+1}\cdots a_p}- \sum_{i=1}^p V^c{}_{a_i}n^b T_{b a_1\cdots a_{i-1}c a_{i+1}\cdots a_p}\\
			&\quad\,  + \sum_{i=1}^p \r_{a_i} \big({}^{(1)} T\big)_{a_1\cdots a_{i-1}b a_{i+1}\cdots a_p}n^b .
		\end{aligned}
	\end{equation*}
\end{prop}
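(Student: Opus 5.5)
Since only the pullback of $\div T$ enters, the approach is to decompose the inverse metric at $\mc H$ using \eqref{inversemetric}, namely $g^{\mu\nu}\st{\mc H}{=}P^{bc}\wh e_b^{\mu}\wh e_c^{\nu}+n^b\wh e_b^{\mu}\xi^{\nu}+n^b\wh e_b^{\nu}\xi^{\mu}$. This gives three pieces:
\begin{equation*}
\Phi^{\star}(\div T)_{a_1\cdots a_p}=P^{bc}(\nabla T)_{bc a_1\cdots a_p}+n^b\,\xi^{\nu}\wh e_b^{\mu}\nabla_{\mu}T_{\nu a_1\cdots a_p}+n^b\,\xi^{\mu}\nabla_{\mu}T_{\nu a_1\cdots a_p}\wh e_b^{\nu},
\end{equation*}
where every free index $a_i$ is contracted with $\wh e_{a_i}$. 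I would treat each piece with the pullback identities of Proposition \ref{proppullback0}, which I take as given.

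The first piece is exactly \eqref{identity10} applied to the $(0,p+1)$ tensor $T$, contracted with $P^{bc}$. It yields $P^{bc}\nablacero_bT_{c a_1\cdots}$, a term $P^{bc}\Y_{bc}\,n^dT_{d a_1\cdots}=\tr_P\bY\,n^dT_{d\cdots}$, the terms $P^{bc}\Y_{ba_i}T_{c\cdots d\cdots}n^d$, the term $P^{bc}\U_{bc}\,{}^{(1)}T=\tr_P\bU\,{}^{(1)}T$, and the terms $P^{bc}\U_{ba_i}\,{}^{(i+1)}T_{c\cdots}$. The second piece is a tangential derivative of $T$ with the first slot contracted with $\xi$. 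It equals the $\nabla$-derivative of ${}^{(1)}T$ minus $T(\nabla_{\wh e_b}\xi,\dots)$, so this is precisely \eqref{identity30} with $j=1$, contracted with $n^b$. Using \eqref{derivadannull} (i.e.\ $n^b\nablacero_b$ versus $\lie_n$) and \eqref{Vn} for $V^c{}_bn^b$, it produces $n^b\nablacero_b{}^{(1)}T$, a term $\kappa_n{}^{(1)}T$ coming from $\Y(n,n)$, the $\r$-terms $\r_{a_i}{}^{(1)}T(\dots n\dots)$, the term $-(\r-\s)(n)\,{}^{(1)}T=\kappa_n{}^{(1)}T$ (because $\s(n)=0$), and $-V^c{}_bn^bT_{c\cdots}=-P^{cd}(\r+\s)_dT_{c\cdots}-\tfrac12 n(\elltwo)\,n^cT_{c\cdots}$. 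Together these account for the coefficient $2\kappa_n$ and for part of the coefficients of $\r+\s$ and $n(\elltwo)$.

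The third piece is the transverse derivative $\xi^{\mu}\nabla_{\mu}T$ with one slot contracted with $n$. Here I would use \eqref{identity20}, rewriting $\xi^\mu\nabla_\mu T$ as $\lie_\xi T$ minus connection terms. The slot carrying $n^b$ contributes $-(\r-\s)(n)\,{}^{(1)}T$, which adds a further $\kappa_n$ (or cancels one, depending on signs) and must be checked against the stated $2\kappa_n$, together with $-V^c{}_bn^bT_{c\cdots}$, which supplies the second $P(\r+\s,\cdot)$ and the second half of $n(\elltwo)$, giving the total $-n(\elltwo)$. The remaining slots $a_i$ contribute $-(\r-\s)_{a_i}n^b\,{}^{(i+1)}T_{b\cdots}$ and $-V^c{}_{a_i}n^bT_{b\cdots c\cdots}$, which are exactly the listed sums.

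The main obstacle is bookkeeping. One must keep the correct slot ordering in the contractions ${}^{(i)}T$ versus ${}^{(i+1)}T$, since the derivative index shifts the slots. One must also verify that the $\kappa_n$ and $(\r+\s)$ coefficients combine to exactly $2\kappa_n+\tr_P\bU$ and $-2P(\r+\s,\cdot)$. I would check this carefully on the $p=0$ case, the divergence of a one-form, and then argue slotwise for general $p$.
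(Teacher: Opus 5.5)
Your route is the right one. The paper states this result without proof, citing earlier work, but splitting $g^{\mu\nu}$ via \eqref{inversemetric} and then applying Proposition \ref{proppullback0} is exactly what it does to derive \eqref{box}. Your first and third pieces are handled correctly. The one real slip is in the second piece, and it sits exactly at the $2\kappa_n$ check you left open. Contract \eqref{identity30} (with $j=1$) with $n^b$, using $n^b\U_{ba_i}=0$, $n^b\Y_{ba_i}=\r_{a_i}$, $(\r-\s)_bn^b=-\kappa_n$ and \eqref{Vn}. The second piece is then exactly
\begin{equation*}
\begin{aligned}
&n^b\nablacero_b\big({}^{(1)}T\big)_{a_1\cdots a_p}+\sum_{i=1}^p\r_{a_i}\big({}^{(1)}T\big)_{a_1\cdots a_{i-1}d\,a_{i+1}\cdots a_p}n^d\\
&\quad+\kappa_n\big({}^{(1)}T\big)_{a_1\cdots a_p}-P^{cd}(\r+\s)_d\,T_{c a_1\cdots a_p}-\tfrac12 n(\elltwo)\,n^cT_{c a_1\cdots a_p}.
\end{aligned}
\end{equation*}
There is no separate ``$\kappa_n$ coming from $\bY(n,n)$''. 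The $\Y$-sum in \eqref{identity30} runs only over the free slots $a_i$, never over the slot contracted with $\xi$. Contracting its derivative index with $n$ gives the $\r_{a_i}$ terms, which you already listed. So this piece contributes exactly one $\kappa_n$.

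In the third piece the sign is not ambiguous: $-(\r-\s)_bn^b=-\br(n)+\bs(n)=\kappa_n$. It therefore adds the second $\kappa_n$, and together with $\tr_P\bU$ from the first piece you get $2\kappa_n+\tr_P\bU$. Your accounting as written gives $3\kappa_n$, or $\kappa_n$ if you flip the sign in the third piece, so the check cannot close until the spurious term is removed. You also do not need \eqref{derivadannull}, since the statement keeps $n^b\nablacero_b({}^{(1)}T)$ rather than converting to $\lie_n$.

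A quick test of all coefficients at once is $p=0$, $T=df$: the formula must reduce to \eqref{box}. There the two copies of $\lie_n(\lie_\xi f)$ come from $n^b(\lie_\xi df)_b$ and $n^b\nablacero_b(df(\xi))$.
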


\begin{prop}
	\label{proppullback3}
	Let $(\mc M,g)$ be a semi-Riemannian manifold, $\Phi:\mc H\hookrightarrow\mc M$ a smooth embedded null hypersurface with geodesic rigging $\xi$ and let $T$ be a $(0,p)$-tensor on $\mc M$. Then,
	\begin{align}
	e_b^{\mu}\xi^{\alpha_1}\cdots \xi^{\alpha_p}\nabla_{\mu}T_{\alpha_1\cdots \alpha_p} &\st{\mc H}{=}\nablacero_b\big(T(\xi,\cdots,\xi)\big) - p T(\xi,\cdots,\xi) \left(\r-\s\right)_b \nonumber\\
	&\quad\, -\sum_{i=1}^p V^{c}{}_b \big({}^{(1,...,i-1,i+1,...p)}T\big)_{c}, \label{a2id1}\\
	\xi^{\mu}\xi^{\alpha_1}\cdots \xi^{\alpha_p}\nabla_{\mu}T_{\alpha_1\cdots \alpha_p} &\st{\mc H}{=} \big(\lie_{\xi}T\big)(\xi,\cdots,\xi) ,\label{a2id2}\\
	\xi^{\mu}\xi^{\nu} e^{\alpha_1}_{a_1}\cdots e^{\alpha_{p-1}}_{a_{p-1}}\nabla_{\mu}T_{\nu\alpha_1\cdots \alpha_{p-1}} &\st{\mc H}{=} \big(\lie_{\xi}\big({}^{(1)}T\big)\big)_{a_1\cdots a_{p-1}}- \sum_{i=1}^{p-1}V^b{}_{a_i} \big({}^{(1)}T\big)_{a_1\cdots a_{i-1} b a_{i+1}\cdots a_{p-1}}\nonumber\\
	&\quad\,  -\sum_{i=1}^{p-1}\left(\r-\s\right)_{a_i}\big({}^{(i,1)}T\big)_{a_1\cdots a_{i-1} a_{i+1}\cdots  a_{p-1}}.\label{a2id3}
	\end{align}
\end{prop}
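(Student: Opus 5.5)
The statement to be proved is Proposition \ref{proppullback3}, i.e. the three pullback identities for $\nabla T$ contracted with rigging vectors. The plan is to deduce them from Proposition \ref{proppullback0} together with the Leibniz rule for $\nabla$. We use the completeness relation $\delta^{\alpha}_{\rho}=P^{ab}\wh e_a^{\alpha}\wh e_{b\rho}+\dots$ only implicitly, through \eqref{nablaxi} and the geodesic condition $\nabla_\xi\xi=0$.

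\textbf{Identity \eqref{a2id2}.} This is the simplest. Since $\nabla_\xi\xi=0$, the Lie and covariant derivatives along $\xi$ agree on the contraction. Concretely,
$$\xi^\mu\nabla_\mu\big(T(\xi,\dots,\xi)\big)=(\nabla_\xi T)(\xi,\dots,\xi).$$
Also $\lie_\xi T=\nabla_\xi T+\sum_i T(\dots,\nabla_{\cdot}\xi,\dots)$, and contracting every slot with $\xi$ kills the correction terms, again because $\nabla_\xi\xi=0$. This gives \eqref{a2id2} at once.

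\textbf{Identity \eqref{a2id1}.} Write
$$e_b^\mu\xi^{\alpha_1}\cdots\xi^{\alpha_p}\nabla_\mu T_{\alpha_1\cdots\alpha_p}=\wh e_b\big(T(\xi,\dots,\xi)\big)-\sum_{i=1}^p T(\xi,\dots,\nabla_{\wh e_b}\xi,\dots,\xi).$$
The first term is $\nablacero_b$ of the scalar $T(\xi,\dots,\xi)$. For the second, insert \eqref{nablaxi}, namely $\nabla_{\wh e_b}\xi=(\r-\s)_b\xi+V^c{}_b\wh e_c$. The $\xi$-component of each of the $p$ terms produces $-pT(\xi,\dots,\xi)(\r-\s)_b$. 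The tangential component produces $-V^c{}_b$ contracted with $T$ evaluated with $\xi$ in every slot except the $i$-th, which carries $e_c$. This is exactly ${}^{(1,\dots,i-1,i+1,\dots,p)}T$, giving \eqref{a2id1}.

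\textbf{Identity \eqref{a2id3}.} Here
$$\xi^\mu\xi^\nu e^{\alpha}\cdots\nabla_\mu T_{\nu\alpha\cdots}=\xi^\mu e^{\alpha}\cdots\nabla_\mu\big({}^{(1)}T\big)_{\alpha\cdots}$$
because $\nabla_\xi\xi=0$. Next convert $\nabla_\xi$ acting on the $(0,p-1)$ tensor ${}^{(1)}T$ into $\lie_\xi$: for any $(0,q)$ tensor $S$,
$$(\nabla_\xi S)(X_1,\dots)=(\lie_\xi S)(X_1,\dots)-\sum_i S(\dots,\nabla_{X_i}\xi,\dots).$$
Take $X_i=\wh e_{a_i}$ and insert \eqref{nablaxi} once more. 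The $V$-part gives $-V^b{}_{a_i}({}^{(1)}T)_{\dots b\dots}$. The $\xi$-part gives $-(\r-\s)_{a_i}$ times ${}^{(1)}T$ with $\xi$ in its $i$-th slot, which is ${}^{(i,1)}T$ in the paper's ordering convention. Pulling back to $\mc H$ yields \eqref{a2id3}.

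\textbf{Main obstacle.} There is no substantive obstacle; the only delicate point is bookkeeping of slot labels. The notation ${}^{(i,j)}T$ contracts first slot $j$ and then slot $i$ of the resulting tensor. After the first contraction the slots are relabelled, so slot $i$ of ${}^{(1)}T$ corresponds to slot $i+1$ of $T$. This relabelling must be tracked when matching the output to the stated formula.
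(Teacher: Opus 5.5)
Your proof is correct: the Leibniz rule, $\nabla_\xi\xi=0$, the pointwise relation between $\lie_\xi$ and $\nabla_\xi$ on covariant tensors, and \eqref{nablaxi} for $\nabla_{\wh e_a}\xi$ give all three identities, and your slot bookkeeping for ${}^{(i,1)}T$ matches the paper's convention. The paper does not prove this proposition but quotes it from earlier work; your argument is the one the paper uses for the special case $T=df$, $p=1$ (identities \eqref{identity2} and \eqref{identity3} in Proposition \ref{proppullback}), although, contrary to your stated plan, you never actually use Proposition \ref{proppullback0}.
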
	

	\begin{prop}
		\label{proppullback}
		Let $(\mc M,g)$ be a semi-Riemannian manifold and $\Phi:\mc H\hookrightarrow\mc M$ a smooth null hypersurface with geodesic rigging $\xi$. Let $f$ be function on $\mc M$. Then,
		\begin{align}
			\Phi^{\star}\left(\hess f\right)_{ab} &= \nablacero_a\nablacero_b f + \lie_n(f) \Y_{ab} + \lie_{\xi}(f) \U_{ab},\hfill\label{identity1}\\
			\Phi^{\star}\big((\hess f)(\xi,\cdot)\big)_a &= \nablacero_a \big(\lie_{\xi}f\big) - \lie_{\xi}f (\r-\s)_a - V^b{}_a\nablacero_b f,\hfill\label{identity2}\\
			\Phi^{\star}\big(\big(\hess f\big)(\xi,\xi)\big) &= \lie^{(2)}_{\xi}f .\hfill\label{identity3}
		\end{align}
 As a consequence,
		\begin{equation}
			\label{box}
\square_g f \st{\H}{=} \square_P f + \big(\tr_P\bY-n(\elltwo)\big) \lie_n f +\left(\tr_P\bU+2\kappa_n\right)\lie_{\xi}f +2\lie_n\big(\lie_{\xi}f\big) - 2P^{ab}(\r+\s)_a \nablacero_b f,
		\end{equation}
		where $\square_P \d P^{ab}\nablacero_a\nablacero_b$.
		\begin{proof}
Identity \eqref{identity1} was already proven in \cite{Mio5}. To prove the second one we contract $\nabla_{\alpha}\nabla_{\beta}f $ with $e_a^{\alpha} \xi^{\beta}$ and use \eqref{nablaxi}, $$e_a^{\alpha} \xi^{\beta}\nabla_{\alpha}\nabla_{\beta}f = \nabla_{e_a} (\lie_{\xi}f) - (\nabla_{e_a}\xi)^{\beta}\nabla_{\beta} f = \nablacero_{e_a}(\lie_{\xi}f) -(\r_a-\s_a) \lie_{\xi}f - e_a^c V^b{}_c\nablacero_b f.$$

 Expression \eqref{identity3} is immediate because $\xi^{\alpha}\xi^{\beta}\nabla_{\alpha}\nabla_{\beta} f = \nabla_{\xi}\nabla_{\xi} f - \nabla_{\nabla_{\xi}\xi} f = \lie^{(2)}_{\xi}f - \lie_{a_{\xi}}f$. Finally, identity \eqref{box} follows from (see \eqref{inversemetric}) $g^{\alpha\beta}\nabla_{\alpha}\nabla_{\beta}f = P^{ab}e_a^{\alpha}e_b^{\beta} \nabla_{\alpha}\nabla_{\beta}f + 2 \xi^{\alpha}\nu^{\beta} \nabla_{\alpha}\nabla_{\beta} f $ after inserting \eqref{identity1}-\eqref{identity2} and using \eqref{Vn}.
		\end{proof}
	\end{prop}

\section{Auxiliary computations and proof of \eqref{dotR}}
\label{appendixB}

In this appendix we compute several contractions of the tensors $\uwh\Sigma^{(m)}$ and $\Sigma^{(m)}$ that will be used both to prove \eqref{dotR} and in Appendix \ref{appendixC} below. Let us begin by recalling the following expressions computed in \cite[Prop. 4.21]{Mio3},
\begin{equation}
	\label{Sigmamcab}
\uwh\Sigma^{(m)}_{cab} \st{(m)}{=} \nablacero_a\Y^{(m)}_{bc} +\nablacero_b\Y^{(m)}_{ac} -\nablacero_c\Y^{(m)}_{ab} + 2\r^{(m)}_c\Y_{ab} + 2\r_{c} \Y^{(m)}_{ab},
\end{equation}
	\begin{minipage}{0.6\textwidth}
	\noindent
	\begin{equation}
		\label{apC5}
\big({}^{(3)}\uwh\Sigma^{(m)}\big)_{ab} \st{(m+1)}{=} -({}^{(1)}\uwh\Sigma^{(m)})_{ab} \st{(m+1)}{=} \Y^{(m+1)}_{ab},
	\end{equation}
\end{minipage}
\begin{minipage}{0.4\textwidth}
	\noindent
\begin{equation}
	\label{apC5bis}
	\xi^{\beta}\Sigma^{(m)}{}^{\mu}{}_{\mu \beta}\st{(m+1)}{=} \tr_P\bY^{(m+1)}.
\end{equation}
\end{minipage}
\,

The next expressions also proved in \cite{Mio3} (formulas (65), (67) and (68)) and valid for $m\ge 1$ will be also needed
\begin{multicols}{2}
	\noindent
\begin{equation}
	\label{apC}
	\Sigma^{(m)}_{\mu\alpha\beta} \st{[m]}{=} \uwh\Sigma^{(m)}_{\mu\alpha\beta} - (m-1) \mc{K}^{\rho}{}_{\mu}\uwh\Sigma^{(m-1)}_{\rho\alpha\beta},
\end{equation} 
\begin{equation}
	\label{apCextra}
	\Sigma_{\rho\alpha\beta}^{(m)} \st{[m+1]}{=} \uwh\Sigma_{\rho\alpha\beta}^{(m)},
\end{equation}
\end{multicols}
\vspace{-0.5cm}
\begin{equation}
	\label{apcextra2}
\uwh{\Sigma}^{(m)}_{\mu\alpha\beta}\st{[m]}{=} \nabla_{(\alpha}\mc{K}^{(m)}_{\beta)\mu}-\dfrac{1}{2}\nabla_{\mu}\mc{K}^{(m)}_{\alpha\beta} - \mc{K}^{\eps}{}_{\mu}\nabla_{(\alpha}\mc{K}^{(m-1)}_{\beta)\eps}+\dfrac{1}{2}\mc{K}^{\eps}{}_{\mu}\nabla_{\eps}\mc{K}^{(m-1)}_{\alpha\beta}.
\end{equation}
Note that these three identities become exact for $m=1$. Later in this appendix we will also need several contractions of $\xi^{\beta}\uwh{\Sigma}^{(m)}_{\mu\alpha\beta}$ that we compute next. First observe that the contraction of \eqref{apcextra2} with $\xi^{\beta}$ can be written as $$\xi^{\beta}\uwh{\Sigma}^{(m)}_{\mu\alpha\beta} \st{[m]}{=} \dfrac{1}{2}\xi^{\beta}\nabla_{\beta}\mc{K}^{(m)}_{\alpha\mu}+\xi^{\beta}\nabla_{[\alpha}\mc{K}_{\mu]\beta}^{(m)}-\dfrac{1}{2}\mc{K}^{\eps}{}_{\mu}\xi^{\beta}\nabla_{\beta}\mc{K}^{(m-1)}_{\alpha\eps}-\mc{K}^{\eps}{}_{\mu}\xi^{\beta}\nabla_{[\alpha}\mc{K}_{\eps]\beta}^{(m-1)}.$$ 

We now use the two identities \eqref{Kup} and get (the notation introduced in Appendix \ref{appendix} also applies here)
\begin{align}
e^{\mu}_c e^{\alpha}_a \xi^{\beta}\uwh{\Sigma}^{(m)}_{\mu\alpha\beta} &\st{(m)}{=} \dfrac{1}{2}\big({}^{(1)}\nabla\mc{K}^{(m)}\big)_{ac} + \big({}^{(2)}\nabla\mc{K}^{(m)}\big)_{[ac]}- \r_c \big({}^{(2,3)}\nabla\mc{K}^{(m-1)}\big)_{a} \nonumber\\
&\quad\,  + \left(2P^{bd}\Y_{bc}+\dfrac{1}{2}n^d\nablacero_c\elltwo\right)\left(\big({}^{(2)}\nabla\mc{K}^{(m-1)}\big)_{[da]}-\dfrac{1}{2}\big({}^{(1)}\nabla\mc{K}^{(m-1)}\big)_{ad}\right),\label{3Sigma} \\
e^{\mu}_c \xi^{\alpha} \xi^{\beta}\uwh{\Sigma}^{(m)}_{\mu\alpha\beta}&\st{(m)}{=} \big({}^{(1,2)}\nabla\mc K^{(m)}\big)_c - \dfrac{1}{2}\big({}^{(2,3)}\nabla\mc K^{(m)}\big)_c -\r_c \big({}^{(1,2,3)}\nabla\mc K^{(m-1)}\big) \nonumber\\
&\quad\, -\left(2P^{bd}\Y_{bc}+\dfrac{1}{2}n^d\nablacero_c\elltwo\right)\left(\big({}^{(1,2)}\nabla\mc K^{(m-1)}\big)_d-\dfrac{1}{2}\big({}^{(2,3)}\nabla\mc K^{(m-1)}\big)_d\right)\label{23Sigma}\\
\xi^{\mu} e^{\alpha}_a \xi^{\beta}\uwh{\Sigma}^{(m)}_{\mu\alpha\beta}&\st{(m)}{=} \dfrac{1}{2}\big({}^{(2,3)}\nabla\mc K^{(m)}\big)_a +\dfrac{1}{2}P^{bc}\nablacero_b\elltwo \big({}^{(2)}\nabla\mc{K}^{(m-1)}\big)_{[ca]}-\dfrac{1}{4}P^{bc}\nablacero_b\elltwo \big({}^{(1)}\nabla\mc{K}^{(m-1)}\big)_{ac}\nonumber\\
&\quad\,  -\dfrac{1}{4} n(\elltwo)\big({}^{(2,3)}\nabla\mc{K}^{(m-1)}\big)_{a}.\label{13Sigma}
\end{align}

Taking trace in \eqref{apC} in the indices $\mu,\alpha$ and using \eqref{apcextra2} one gets
\begin{equation}
	\label{Sigmamup}
\Sigma^{(m)\mu}{}_{\mu\beta} \st{[m]}{=} \dfrac{1}{2}\left(\nabla_{\beta}\mc{K}^{(m)\mu}{}_{\mu}-m\mc{K}^{\mu\rho}\nabla_{\beta}\mc{K}^{(m-1)}_{\mu\rho}\right).
\end{equation} 
Finally, the $m-1$ Lie derivative of the Ricci tensor is \cite[Eq. (63)]{Mio3}, for $m\ge 2$,
\begin{equation}
	\label{liericci}
	\lie_{\xi}^{(m-1)}R_{\alpha\beta} \st{[m]}{=} \nabla_{\mu}\Sigma^{(m-1)\mu}{}_{\alpha\beta} - \nabla_{\beta}\Sigma^{(m-1)\mu}{}_{\mu\alpha}.
\end{equation}

\begin{prop}
	\label{propB3}
Let $\{\H,\bg,\bm\ell,\elltwo\}$ be null metric hypersurface data $(\Phi,\xi)$-embedded in $(\mc M,g)$ and extend $\xi$ off $\Phi(\H)$ by $\nabla_{\xi}\xi=0$. Then, for any $m\ge 2$,
\begin{equation}
	\label{apc0}
\Sigma_{cab}^{(m)} \st{(m)}{=} \nablacero_a\Y^{(m)}_{bc} +\nablacero_b\Y^{(m)}_{ac} -\nablacero_c\Y^{(m)}_{ab} + 2\r^{(m)}_c\Y_{ab} +2m\r_{c} \Y^{(m)}_{ab},
\end{equation}
\begin{multicols}{2}
	\noindent
	\begin{equation}
		\label{apC9}
		e_a^{\beta}\Sigma^{(m)}{}^{\mu}{}_{\mu\beta} \st{(m)}{=}  \nablacero_a \big(\tr_P\bY^{(m)}\big),
	\end{equation}
	\begin{equation}
	\label{apC6}
	\big({}^{(2,3)}\uwh\Sigma^{(m)}\big)_a \st{(m+1)}{=}\big({}^{(1,3)}\uwh\Sigma^{(m)}\big)_a \st{(m+1)}{=} 0,
\end{equation}
\end{multicols}
\vspace{-0.8cm}
\begin{align}
\big({}^{(3)}\uwh\Sigma^{(m)}\big)_{ca}& \st{(m)}{=} \Y_{ac}^{(m+1)} -2V^b{}_a \Y_{cb}^{(m)} - 2P^{bd}\Y_{bc}\Y^{(m)}_{ad}-\dfrac{1}{2}\r^{(m)}_a\nablacero_c\elltwo,\label{3uwhSigmaca}\\
\big({}^{(3)}\Sigma^{(m)}\big)_{ca}& \st{(m)}{=} \Y_{ac}^{(m+1)} -2V^b{}_a \Y_{cb}^{(m)} - 2mP^{bd}\Y_{bc}\Y^{(m)}_{ad}-\left(m-\dfrac{1}{2}\right)\r^{(m)}_a\nablacero_c\elltwo.\label{3Sigmaca}
\end{align}
As a consequence,
\begin{align}
	n^c \Sigma_{cab}^{(m)} &\st{(m)}{=} 2\nablacero_{(a}\r^{(m)}_{b)} - \lie_n\Y^{(m)}_{ab} - 2\kappa^{(m)}\Y_{ab} -2m \kappa_n \Y^{(m)}_{ab},\label{apC1}\\
 ({}^{(3)}\uwh{\Sigma}^{(m)})_{ca}n^c &\st{(m)}{=} \r_a^{(m+1)} -2V^b{}_a \r_b^{(m)} - 2P^{bc}\r_b \Y^{(m)}_{ac} - \dfrac{1}{2}n(\elltwo)\r_a^{(m)} ,\label{apC8}\\
\big({}^{(3)}\Sigma^{(m)}\big)_{ca}n^c &\st{(m)}{=} \r_a^{(m+1)} - 2V^b{}_a \r_b^{(m)} - 2mP^{bc}\r_b \Y^{(m)}_{ac}   - \dfrac{m}{2}n(\elltwo)\r_a^{(m)}.\label{apC10}
\end{align}
\begin{proof}
Identity \eqref{apc0} will be a consequence of \eqref{Sigmamcab} and \eqref{apC}. The pullback of the second is, after taking into account \eqref{Kup}, \eqref{apC5} and $\uwh\Sigma^{(m-1)}_{cab}\st{(m)}{=} 0$ (by \eqref{Sigmamcab}), $$\Sigma^{(m)}_{cab} \st{(m)}{=} \uwh\Sigma^{(m)}_{cab}+ 2(m-1)\r_c \Y^{(m)}_{ab}.$$ 

Replacing here $\uwh\Sigma^{(m)}_{cab}$ from \eqref{Sigmamcab} yields \eqref{apc0}. The contraction of \eqref{apc0} with $n^c$ gives \eqref{apC1} after using $$n^c\big(\nablacero_a\Y^{(m)}_{bc}+\nablacero_b\Y^{(m)}_{ac}-\nablacero_c\Y^{(m)}_{ab}\big) = 2\nablacero_{(a}\r^{(m)}_{b)} -\lie_n\Y^{(m)}_{ab}.$$


To show \eqref{apC9} we simply contract \eqref{Sigmamup} with $e_a^{\beta}$ and use \eqref{trazas} (recall that a tangential derivative of $\mc{K}^{(m-1)}$ is at most of transverse order $m-1$).\\

 The remaining identities will rely on \eqref{3Sigma}-\eqref{13Sigma}. Before applying them we need to determine the terms of the form $({}^{(i)}\nabla\mc K^{(m)})_{ab}$ and $({}^{(i,j)}\nabla\mc K^{(m)})_{a}$ for various values of $i,j$. For the former we use \eqref{identity20}-\eqref{identity30}, and for the latter \eqref{a2id1}-\eqref{a2id3}. Taking also into account \eqref{Kxi} the result is
 \begin{gather}
 	\big({}^{(1)}\nabla\mc{K}^{(m)}\big)_{ac} \st{(m)}{=}\mc{K}^{(m+1)}_{ac} - 2V^b{}_{(a}\mc{K}^{(m)}_{c)b},\qquad \big({}^{(2)}\nabla\mc{K}^{(m)}\big)_{[ac]} \st{(m)}{=}V^b{}_{[c}\mc{K}^{(m)}_{a]b},\label{aux1301}\\
 	\big({}^{(1,2)}\nabla\mc{K}^{(m)}\big)_{c} \st{(m)}{=}0,\qquad \big({}^{(2,3)}\nabla\mc{K}^{(m)}\big)_{c} \st{(m)}{=}0,\qquad \big({}^{(1,2,3)}\nabla\mc{K}^{(m-1)}\big) \st{(m)}{=}0.\nonumber
 \end{gather} 
Note that these expressions immediately imply 
\begin{equation}
	\label{aux1302}
\begin{gathered}
	\big({}^{(1)}\nabla\mc{K}^{(m-1)}\big)_{ad} \st{(m)}{=} \mc{K}^{(m)}_{ad},\qquad \big({}^{(2)}\nabla\mc{K}^{(m-1)}\big)_{[da]} \st{(m)}{=}0 ,\\
	 \big({}^{(2,3)}\nabla\mc{K}^{(m-1)}\big)_{a} \st{(m)}{=} \big({}^{(1,2)}\nabla\mc{K}^{(m-1)}\big)_{c} \st{(m)}{=}0.
	 \end{gathered}
\end{equation} 
With all these expressions at hand, identities \eqref{23Sigma} and \eqref{13Sigma} become, respectively,
\begin{align*}
\big({}^{(2,3)}\uwh\Sigma^{(m)}\big)_c \st{(m+1)}{=} 0,\qquad \big({}^{(1,3)}\uwh\Sigma^{(m)}\big)_c \st{(m+1)}{=} 0,
\end{align*}
which proves \eqref{apC6}. Replacing \eqref{aux1301}-\eqref{aux1302} into \eqref{3Sigma} and substituting $\mc{K}^{(m)}_{ab}=2\Y^{(m)}_{ab}$ gives
\begin{align*}
\big({}^{(3)}\uwh\Sigma^{(m)}\big)_{ca} & \st{(m)}{=} \dfrac{1}{2}\mc{K}^{(m+1)}_{ac} - V^b{}_{(a}\mc{K}^{(m)}_{c)b} +V^b{}_{[c}\mc{K}^{(m)}_{a]b}-\dfrac{1}{2}\left(2P^{bd}\Y_{bc}+\dfrac{1}{2}n^d\nablacero_c\elltwo\right)\mc{K}^{(m)}_{ad}\\
& \st{(m)}{=} \Y_{ac}^{(m+1)} -2V^b{}_a \Y_{cb}^{(m)} - 2P^{bd}\Y_{bc}\Y^{(m)}_{ad}-\dfrac{1}{2}\r^{(m)}_a\nablacero_c\elltwo,
\end{align*}
which is \eqref{3uwhSigmaca}. A contraction with $n^c$ gives \eqref{apC8}. Now, identity \eqref{3Sigmaca} is a consequence of replacing \eqref{apC6}, \eqref{3uwhSigmaca} and \eqref{Kup} into \eqref{apC}, namely
\begin{align*}
	e^{\mu}_c e^{\alpha}_a \xi^{\beta} \Sigma_{\mu\alpha\beta}^{(m)}&\st{(m)}{=} e^{\mu}_c e^{\alpha}_a \xi^{\beta} \uwh\Sigma_{\mu\alpha\beta}^{(m)} - (m-1) \mc{K}_{\mu}{}^{\rho} e^{\mu}_c e^{\alpha}_a \xi^{\beta} \uwh\Sigma_{\rho\alpha\beta}^{(m-1)}\\
	&\st{(m)}{=} \big({}^{(3)}\uwh\Sigma^{(m)}\big)_{ca} - (m-1) \left(2P^{bd}\Y_{cb}+\dfrac{1}{2}n^d\nablacero_c\elltwo\right) \big({}^{(3)}\uwh\Sigma^{(m-1)}\big)_{da}-2\r_c\big({}^{(1,3)}\Sigma^{(m-1)}\big)_a\\
	& \st{(m)}{=} \Y_{ac}^{(m+1)} -2V^b{}_a \Y_{cb}^{(m)} - 2mP^{bd}\Y_{bc}\Y^{(m)}_{ad}-\dfrac{m}{2}\r^{(m)}_a\nablacero_c\elltwo.
\end{align*}
Finally, \eqref{apC10} is its contraction with $n^c$. 
\end{proof}
\end{prop}

We already have all the necessary ingredients to compute $\dot{\mc R}^{(m)}_a$ up to order $m$.

\begin{prop}
	\label{propliericxitang}
	Let $\mc H$ be a null hypersurface $(\Phi,\xi)$-embedded in $(\mc M,g)$ and extend $\xi$ off $\Phi(\mc H)$ by $\nabla_{\xi}\xi=0$. Then for any $m\ge 2$, 
	\begin{equation}
		\label{liericxiX}
		\begin{aligned}
			\dot{\mc R}_a^{(m)} &\st{(m)}{=}  \r^{(m+1)}_a + P^{bc} \nablacero_b\Y^{(m)}_{ac} -2m P^{bc}\r_b \Y^{(m)}_{ac} - \nablacero_a\big(\tr_P\bY^{(m)}\big)  + \big(\tr_P\bY^{(m)}\big)(\r_a-\s_a) \\
			&\quad\,  + \big(P^{bc}\Y_{ab}- 3V^c{}_a\big)\r^{(m)}_c + \left(\tr_P\bY-\dfrac{m}{2}n(\elltwo)\right)\r^{(m)}_a ,
		\end{aligned}
	\end{equation}
	and consequently
	\begin{equation}
		\label{liericxin}
		\begin{aligned}
			\dot{\mc R}_a^{(m)}n^a &\st{(m)}{=} -\kappa^{(m+1)} + P^{bc}\nablacero_b\r^{(m)}_c - P^{bc}P^{ad} \U_{bd}\Y^{(m)}_{ac}  - \lie_n\big(\tr_P\bY^{(m)}\big)- \kappa_n \tr_P\bY^{(m)}\\
			& \qquad\, - 2P^{bc}\big((m+1)\r_b+2 \s_b\big)\r^{(m)}_c +\left(\dfrac{m+3}{2}n(\elltwo)-\tr_P\bY\right)\kappa^{(m)}.
		\end{aligned}
	\end{equation}	
	\begin{proof}
From \eqref{liericci},
		\begin{align}
			\wh e^{\alpha}_a\xi^{\beta}\lie_{\xi}^{(m-1)}R_{\alpha\beta} &\st{[m]}{=} \wh e^{\alpha}_a\xi^{\beta}\nabla_{\mu}\Sigma^{(m-1)}{}^{\mu}{}_{\alpha\beta} - \wh e^{\alpha}_a\xi^{\beta} \nabla_{\beta} \Sigma^{(m-1)}{}^{\mu}{}_{\mu\alpha}\nonumber\\
			&\st{[m]}{=} \wh e^{\alpha}_a \nabla_{\mu} \big(\xi^{\beta}\Sigma^{(m-1)}{}^{\mu}{}_{\alpha\beta}\big) - \wh e^{\alpha}_a\Sigma^{(m-1)}{}^{\mu}{}_{\alpha\beta} \nabla_{\mu}\xi^{\beta} - \wh e^{\alpha}_a\xi^{\beta}\nabla_{\beta}\big(\Sigma^{(m-1)}{}^{\mu}{}_{\mu\alpha}\big).\label{Rmordenmas}
		\end{align}
We evaluate each term separately. For the first we apply Prop. \ref{propdivergencia} to $T={}^{(3)}\Sigma^{(m-1)}$ and take into account $\big({}^{(3)}\Sigma^{(m-1)}\big)_{ab} \st{(m)}{=} \Y^{(m)}_{ab}$ (by \eqref{apC5} and \eqref{apCextra}) and ${}^{(2,3)}\Sigma^{(m-1)}\st{(m)}{=}{}^{(1,3)}\Sigma^{(m-1)}\st{(m)}{=} 0$ (by \eqref{apC6} and \eqref{apCextra}). Thus,
\begin{equation}
	\label{auxiliar0}
	\begin{aligned}
\wh e^{\alpha}_a \nabla_{\mu} \big(\xi^{\beta}\Sigma^{(m-1)}{}^{\mu}{}_{\alpha\beta}\big) &\st{(m)}{=}  n^b \wh e^{\alpha}_a \wh e^{\mu}_b \xi^{\beta}\lie_{\xi}\Sigma^{(m-1)}_{\mu\alpha\beta} +P^{bc}\nablacero_b \Y^{(m)}_{ca} + \big(\tr_P\bY-n(\elltwo)\big)\r^{(m)}_a\\
&\qquad\, -2P^{bc}(\r+\s)_b\Y^{(m)}_{ca} + P^{bc}\Y_{ba}\r^{(m)}_c -V^c{}_a \r_c^{(m)}.		
	\end{aligned}
\end{equation}
Only the first term needs further analysis. Note that $$\lie_{\xi}\Sigma^{(m-1)}_{\mu\alpha\beta} = \lie_{\xi}\big(g_{\mu\nu}\Sigma^{(m-1)\nu}{}_{\alpha\beta}\big) = \mc{K}_{\mu\nu}\Sigma^{(m-1)\nu}{}_{\alpha\beta} + g_{\mu\nu} \Sigma^{(m)\nu}{}_{\alpha\beta} \stackbin[\eqref{apC}]{[m]}{=}\uwh\Sigma_{\mu\alpha\beta}^{(m)}-(m-2)\mc{K}_{\mu}{}^{\rho}\uwh\Sigma^{(m-1)}_{\rho\alpha\beta},$$ so 
\begin{equation}
	\label{auxiliar}
	n^b e_a^{\alpha} e_{b}^{\mu} \xi^{\beta} \lie_{\xi}\Sigma^{(m-1)}_{\mu\alpha\beta} \st{[m]}{=} n^b e_a^{\alpha} e_{b}^{\mu} \xi^{\beta}\uwh{\Sigma}^{(m)}_{\mu\alpha\beta} - (m-2)n^b e_a^{\alpha} e_{b}^{\mu} \xi^{\beta}  \mc{K}_{\mu}{}^{\rho}\uwh{\Sigma}^{(m-1)}_{\rho\alpha\beta}.
\end{equation} 
The first term in the right hand side is directly given by \eqref{apC8}, and for the second one we use
		\begin{align*}
			n^b e_a^{\alpha} e_{b}^{\mu} \xi^{\beta} \mc{K}_{\mu}{}^{\rho}\uwh{\Sigma}^{(m-1)}_{\rho\alpha\beta} & \stb[\eqref{Kup}]{}{=}\left(2 P^{cd} \r_c  + \dfrac{1}{2}n(\elltwo) n^d\right) \big({}^{(3)}\uwh\Sigma^{(m-1)}\big)_{da}  +2\r_b \big({}^{(1,3)}\uwh\Sigma^{(m-1)}\big)_a\\
			&\stb[\eqref{apC5}]{(m)}{=} 2P^{cd}\r_c \Y^{(m)}_{da} + \dfrac{1}{2}n(\elltwo) \r^{(m)}_a ,
		\end{align*}
		where in the second line we also used \eqref{apC6}. Replacing this and \eqref{apC8} into \eqref{auxiliar} gives
		\begin{align*}
			n^b e_a^{\alpha} e_{b}^{\mu} \xi^{\beta} \lie_{\xi}\Sigma^{(m-1)}_{\mu\alpha\beta} \st{(m)}{=} \r_a^{(m+1)}- 2V^c{}_a\r_c^{(m)} - 2(m-1)P^{bc}\r_b\Y^{(m)}_{ca} -\dfrac{1}{2}(m-1) n(\elltwo)\r_a^{(m)}  , 
		\end{align*}
and hence \eqref{auxiliar0} takes the form
		\begin{align}
			\wh e^{\alpha}_a \nabla_{\mu} \big(\xi^{\beta}\Sigma^{(m-1)}{}^{\mu}{}_{\alpha\beta}\big) &\st{(m)}{=}  \r_a^{(m+1)}- 3V^c{}_a\r_c^{(m)} - 2P^{bc} \big(m\r_b+\s_b \big)\Y^{(m)}_{ca}   \nonumber\\
			&\qquad\, + \left(\tr_P\bY-\dfrac{m+1}{2}n(\elltwo)\right)\r^{(m)}_a+ P^{bc}\nablacero_b \Y^{(m)}_{ca} + P^{bc}\Y_{ab}\r^{(m)}_c .\label{term1}
		\end{align}	
For the second term of \eqref{Rmordenmas} we insert \eqref{nablaxiup} and use that $e_a^{\alpha}\xi^{\mu}\xi^{\beta}\Sigma_{\mu\alpha\beta}^{(m-1)}\st{(m)}{=}0$ (by \eqref{apC6}) and $e_c^{\mu}e_a^{\alpha}e_b^{\beta}\Sigma_{\mu\alpha\beta}^{(m-1)}\st{(m)}{=}0$ (by \eqref{Sigmamcab}), obtaining
\begin{align}
- g^{\mu\rho}\wh e^{\alpha}_a\Sigma^{(m-1)}_{\mu\alpha\beta} \nabla_{\rho}\xi^{\beta}&\st{(m)}{=} -P^{cd}(\r-\s)_c \big({}^{(3)}\Sigma^{(m-1)}\big)_{da} - V^b{}_c n^c \big({}^{(1)}\Sigma^{(m-1)}\big)_{ab} \nonumber\\
&\stb[\eqref{apC5}]{(m)}{=} 2P^{bc}\s_b \Y^{(m)}_{ca} + \dfrac{1}{2}n(\elltwo) \r_a^{(m)},\label{term2}
\end{align}
where in the second equality we also inserted \eqref{Vn}. Finally, the last term in \eqref{Rmordenmas} is 
		\begin{align}
			- \wh e^{\alpha}_a\xi^{\beta}\nabla_{\beta}\big(\Sigma^{(m-1)}{}^{\mu}{}_{\mu\alpha}\big) & =-  e^{\alpha}_a \lie_{\xi}\Sigma^{(m-1)\mu}{}_{\mu\alpha} + e_a^{\alpha}\Sigma^{(m-1)}{}^{\mu}{}_{\mu\beta}\nabla_{\alpha}\xi^{\beta}\nonumber\\
			&= -e_a^{\alpha}\Sigma^{(m)}{}^{\mu}{}_{\mu\alpha} + e_a^{\alpha}\Sigma^{(m-1)}{}^{\mu}{}_{\mu\beta}\nabla_{\alpha}\xi^{\beta}\nonumber\\
			&\st{(m)}{=} -\nablacero_a\big(\tr_P\bY^{(m)}\big)+ \tr_P\bY^{(m)}(\r_a-\s_a),\label{term3}
		\end{align}
where in the last step we used \eqref{nablaxi} for $e_a^{\alpha}\nabla_{\alpha}\xi^{\beta}$ and recalled \eqref{apC9} and \eqref{apC5bis}. Equation \eqref{liericxiX} is now obtained by simply inserting \eqref{term1}-\eqref{term3} into \eqref{Rmordenmas}. To prove \eqref{liericxin} it suffices to contract \eqref{liericxiX} with $n^a$ and use \eqref{derivadannull} and \eqref{Vn}.
	\end{proof}
\end{prop}

We conclude the appendix with the explicit expressions for $\Sigma_{cab}$, $({}^{(3)}\Sigma)_{ca}$ and $({}^{(2,3)}\Sigma)_c$. Although they are a particular case of more general identities derived in \cite{Mio5}, we re-derive them for completeness and to avoid the need of introducing additional notation to connect with the results in \cite{Mio3}. We emphasize that this result is not contained in Proposition \ref{propB3} because, as usual, the lowest values of $m$ require a different treatment.

\begin{prop}
	\label{propSigma}
	Let $\{\H,\bg,\bm\ell,\elltwo\}$ be null metric hypersurface data $(\Phi,\xi)$-embedded in $(\mc M,g)$ and extend $\xi$ off $\Phi(\H)$ by $\nabla_{\xi}\xi=0$. Then,
	\begin{align}
	\Sigma_{cab}&= 2\nablacero_{(a}\Y_{b)c} - \nablacero_c\Y_{ab}+2\r_c\Y_{ab}  +\dfrac{1}{2}\U_{ab}\nablacero_c\elltwo  ,\label{Sigmaabc}\\	
	({}^{(3)}\Sigma)_{ca}&=\Y^{(2)}_{ac}-\dfrac{1}{2}(\r-\s)_a\nablacero_c\elltwo -2V^d{}_a\Y_{cd},\label{3Sigma1ca}\\
	({}^{(2,3)}\Sigma)_c&=0.\label{23Sigma1c}
\end{align}
As a consequence,
\begin{align}
\Sigma_{cab}n^c &= -\lie_n\Y_{ab}-2\kappa_n \Y_{ab} + 2\nablacero_{(a}\r_{b)}+\dfrac{1}{2}n(\elltwo)\U_{ab},\label{sigmaprop1}\\
({}^{(3)}\Sigma)_{ca}n^c &= \r^{(2)}_a  -\dfrac{1}{2}n(\elltwo)\big(\r-\s\big)_a-2V^c{}_a \r_c,\label{sigmaprop2}\\
({}^{(2,3)}\Sigma)_c n^c & = 0.\label{sigmaprop3}
\end{align}
	\begin{proof}
From \eqref{apcextra2} for $m=1$ (recall that the equality is exact)
		\begin{equation*}
	\uwh\Sigma_{\mu\alpha\beta}=		\Sigma_{\mu\alpha\beta} = \nabla_{(\alpha}\mc{K}_{\beta)\mu} -\dfrac{1}{2} \nabla_{\mu}\mc{K}_{\alpha\beta}.
		\end{equation*} 
Pulling this back onto $\mc H$ and using \eqref{identity10} with $T=\mc K$,
\begin{equation*}
\Sigma_{cab} = \nablacero_{(a}\mc{K}_{b)c} + \Y_{ab}\mc{K}_{cd} n^d  +  \U_{ab}({}^{(1)}\mc{K})_{c} -  \dfrac{1}{2}\nablacero_c \mc{K}_{ab}  .
\end{equation*}	
Identity \eqref{Sigmaabc} follows after inserting $\mc{K}_{ab}=2\Y_{ab}$ and $({}^{(1)}\mc{K})_c = \dfrac{1}{2}\nablacero_c \elltwo$ (see \eqref{Kxi}). The contraction with $n^c$ given in \eqref{sigmaprop1} is obtained from this after using $n^c\nablacero_c\Y_{ab} =  \lie_n\Y_{ab}- 2\Y_{c(a}\nablacero_{b)}n^c$ and \eqref{derivadannull}. To prove the second and third identities we repeat the same strategy as in Prop. \ref{propB3}. First we compute the terms of the form ${}^{(i)}\nabla\mc K$ and ${}^{(i,j)}\nabla\mc K$ that appear in \eqref{3Sigma}-\eqref{23Sigma} for $m=1$ (which recall are exact in this case). To do this we use \eqref{identity20}-\eqref{a2id1}, \eqref{a2id3} and recall $\mc{K}^{(m)}_{ab}=2\Y^{(m)}_{ab}$ as well as \eqref{Kxi},
\begin{gather*}
\big({}^{(1)}\nabla\mc{K}\big)_{ac}=2\Y^{(2)}_{ac}-(\r-\s)_{(a}\nablacero_{c)}\elltwo -4V^d{}_{(a}\Y_{c)d},\hspace{0.3cm} \big({}^{(2)}\nabla\mc{K}\big)_{[ac]}=-\dfrac{1}{2}(\r-\s)_{[a}\nablacero_{c]}\elltwo - 2V^b{}_{[a}\Y_{c]b},\\
\big({}^{(1,2)}\nabla\mc{K}\big)_{c} = -\dfrac{1}{2}V^b{}_c \nablacero_b\elltwo,\qquad \big({}^{(2,3)}\nabla\mc{K}\big)_{c}=-V^b{}_c \nablacero_b\elltwo.
\end{gather*}
Expressions \eqref{3Sigma1ca} and \eqref{23Sigma1c} are obtained by replacing this into \eqref{3Sigma}-\eqref{23Sigma} respectively, and noting that $\nabla\mc{K}^{(0)}=\nabla g=0$. Finally, \eqref{sigmaprop2} and \eqref{sigmaprop3} are the contraction of \eqref{3Sigma1ca}-\eqref{23Sigma1c} with $n^c$.
	\end{proof}
\end{prop}

\section{Quasi-Einstein equations at null infinity}
\label{appendixC}

In this appendix we write down explicitly the tensors \eqref{Cm}-\eqref{Lm} for every $m\ge 1$ making the leading order terms explicit. The definition \eqref{defi_Sch} of the Schouten tensor in terms of the Ricci yields, after recalling \eqref{derivada},
\begin{equation}
	\label{lieSch}
	L^{(m)}_{\alpha\beta} = \dfrac{1}{\mf n-1}\left(R_{\alpha\beta}^{(m)}-\dfrac{1}{2\mf n}\sum_{i=0}^{m-1}\binom{m-1}{i} R^{(i+1)} \lie_{\xi}^{(m-1-i)}g_{\alpha\beta}\right).
\end{equation}
From Remark \ref{remarkricci} one has 
\begin{align}
	L_{\alpha\beta}^{(m)} &\st{[m-1]}{=} \dfrac{1}{\mf n-1} \left(R_{\alpha\beta}^{(m)}-\dfrac{1}{2\mf n}\big(R^{(m)}g_{\alpha\beta}+(m-1)R^{(m-1)}\mc{K}_{\alpha\beta} + \dfrac{(m-1)(m-2)}{2}R^{(m-2)} \mc{K}^{(2)}_{\alpha\beta}\big)\right),\label{Lm+1ordm}\\
	L_{\alpha\beta}^{(m)} &\st{[m]}{=} \dfrac{1}{\mf n-1} \left(R_{\alpha\beta}^{(m)}-\dfrac{1}{2\mf n}\big(R^{(m)}g_{\alpha\beta}+(m-1)R^{(m-1)}\mc{K}_{\alpha\beta}\big)\right),\label{Lm+1}
\end{align}
and

\begin{minipage}{0.6\textwidth}
\noindent
\begin{equation}
\label{Lmapp}
L_{\alpha\beta}^{(m)}  \st{[m+1]}{=}\dfrac{1}{\mf n-1} \left(R_{\alpha\beta}^{(m)}-\dfrac{1}{2\mf n}R^{(m)}g_{\alpha\beta}\right),
\end{equation}
\end{minipage}
\begin{minipage}{0.4\textwidth}
	\noindent
	\begin{equation}
		\label{Lm-1m+1}
L_{\alpha\beta}^{(m-1)} \st{[m+1]}{=} 0.
	\end{equation}
\end{minipage}
In the following proposition we compute the contractions of $L^{(m)}_{\alpha\beta}$ that will be needed below. Notice that some contractions are computed up to order $m$, and some others just to order $m+1$, depending on our needs for the rest of the appendix.
\begin{prop}
	\label{prop_schouten}
Let $\{\H,\bg,\bm\ell,\elltwo\}$ be null metric hypersurface data $(\Phi,\xi)$-embedded in $(\mc M,g)$ and extend $\xi$ off $\Phi(\H)$ by $\nabla_{\xi}\xi=0$. Let $L_{\alpha\beta}$ be the Schouten tensor, $L_{\alpha\beta}^{(m)}=\lie_{\xi}^{(m-1)}L_{\alpha\beta}$ and $m\ge 2$. Then,
\begin{align}
L_{ab}^{(m)}&\st{(m)}{=} \dfrac{1}{\mf n-1}\Big(-2\lie_n\Y^{(m)}_{ab}-\big(2m\kappa_n+\tr_P\bU\big)\Y^{(m)}_{ab}-\big(\tr_P\bY^{(m)}\big)\U_{ab} + 4P^{cd}\U_{c(a}\Y^{(m)}_{b)d}\nonumber\\
&\qquad\qquad\,  +4(\s-\r)_{(a}\r^{(m)}_{b)} + 2\nablacero_{(a}\r^{(m)}_{b)} +\dfrac{2(m-\mf n-1)}{\mf n}\kappa^{(m)}\Y_{ab}\Big)\nonumber\\
&\quad\, +\dfrac{1}{\mf n(\mf n-1)}\Big(\kappa^{(m+1)}+2\lie_n\big(\tr_P\bY^{(m)}\big)+\big(2m\kappa_n+\tr_P\bU\big)\tr_P\bY^{(m)}+P^{ef}P^{cd}\U_{ec}\Y^{(m)}_{fd}\nonumber\\
&\quad\, -2\div_P\br^{(m)} + 2P\big((2m+1)\br+3\bs,\br^{(m)}\big)+\big(2\tr_P\bY-(m+2) n(\elltwo)\big)\kappa^{(m)}\Big)\gamma_{ab},\label{Lmm}\\
\dot{L}_a^{(m)}&\st{(m)}{=}\dfrac{1}{\mf n-1}\Big(\r^{(m+1)}_a + P^{bc} \nablacero_b\Y^{(m)}_{ac}  -2m P^{bc}\r_b \Y^{(m)}_{ac} - \nablacero_a\big(\tr_P\bY^{(m)}\big) + \big(\tr_P\bY^{(m)}\big)(\r_a-\s_a)\nonumber\\
&\quad\, + (P^{bc}\Y_{ab}-3V^c{}_a)\r_c^{(m)} + \left(\tr_P\bY-\dfrac{m}{2}n(\elltwo)\right)\r^{(m)}_a +\dfrac{m-1}{2\mf n}\kappa^{(m)}\nablacero_a\elltwo \Big)\nonumber\\
&\quad\, +\dfrac{1}{\mf n(\mf n-1)}\Big(\kappa^{(m+1)}+2\lie_n\big(\tr_P\bY^{(m)}\big)+\big(2m\kappa_n+\tr_P\bU\big)\tr_P\bY^{(m)}+P^{ef}P^{cd}\U_{ec}\Y^{(m)}_{fd}\nonumber\\
&\quad\, -2\div_P\br^{(m)} + 2P\big((2m+1)\br+3\bs,\br^{(m)}\big)+\big(2\tr_P\bY-(m+2) n(\elltwo)\big)\kappa^{(m)}\Big)\ell_{a}, \label{dLmm}\\
\ddot{L}^{(m)}&\st{(m+1)}{=}\dfrac{1}{\mf n-1}\left(-\tr_P\bY^{(m+1)} + \dfrac{\kappa^{(m+1)}}{\mf n}\elltwo\right).\label{ddLmm+1}
\end{align}
In particular,
\begin{align}
L^{(m)}_{ab}n^b &\st{(m)}{=} \dfrac{1}{\mf n-1}\left(-\lie_n\r_a^{(m)}-(2(m-1)\kappa_n+\tr_P\bU)\r_a^{(m)} - \nablacero_a\kappa^{(m)}+\dfrac{2(m-1)}{\mf n}\kappa^{(m)}\r_a\right),\label{Lmmn}\\
\dot{L}_a^{(m)}n^a &\st{(m)}{=} \dfrac{1}{\mf n(\mf n-1)}\Bigg((1-\mf n)\kappa^{(m+1)} + (2-\mf n)\lie_n\big(\tr_P\bY^{(m)}\big)+\big((2m-\mf n)\kappa_n+\tr_P\bU\big)\tr_P\bY^{(m)}\nonumber\\
&\quad\, +(1-\mf n)P^{bc}P^{ad} \U_{bd}\Y^{(m)}_{ac}  + 2P\big(((2-\mf n)m+1-\mf n)\br + (3-2\mf n)\bs,\br^{(m)}\big)\nonumber\\
&\quad\,+ (\mf n-2) \div_P\br^{(m)} +\left((2-\mf n)\tr_P\bY+\dfrac{3\mf n+m(\mf n-1)-5}{2}n(\elltwo)\right)\kappa^{(m)}\Bigg),\label{dLmn}
\end{align}
and

\begin{minipage}{0.4\textwidth}
\noindent
\begin{equation}
\label{Lmm+1}
L_{ab}^{(m)}\st{(m+1)}{=} \dfrac{\kappa^{(m+1)}}{\mf n(\mf n-1)}\gamma_{ab},
\end{equation}
\end{minipage}
\begin{minipage}{0.6\textwidth}
	\noindent
	\begin{align}
\label{dLmm+1}
\dot{L}_a^{(m)}&\st{(m+1)}{=}\dfrac{1}{\mf n-1}\left(\r_a^{(m+1)}+\dfrac{\kappa^{(m+1)}}{\mf n}\ell_a\right),
	\end{align}
\end{minipage}

\begin{minipage}{0.4\textwidth}
	\noindent
	\begin{equation}
		\label{dLmnm+1}
\dot{L}_a^{(m)}n^a \st{(m+1)}{=} -\dfrac{\kappa^{(m+1)}}{\mf n},
	\end{equation}
\end{minipage}
\begin{minipage}{0.6\textwidth}
	\noindent
\begin{equation}
\label{PLmm+1}
P^{bc}L_{ab}^{(m)} \st{(m+1)}{=} \dfrac{\kappa^{(m+1)}}{\mf n(\mf n-1)}(\delta^c_a-n^c\ell_a).
\end{equation}
\end{minipage}

\begin{proof}
The contraction of \eqref{Lm+1} with $e_a^{\alpha}e_b^{\beta}$ gives \eqref{Lmm} after replacing \eqref{R}, \eqref{scalar} and \eqref{scalarm+1} and recalling that $\mc{K}_{ab}=2\Y_{ab}$. Similarly, \eqref{dLmm} is obtained by contracting \eqref{Lm+1} with $e_a^{\alpha}\xi^{\beta}$ and using \eqref{dotR}, \eqref{scalar}-\eqref{scalarm+1} and $({}^{(1)}\mc K)_a=\frac{1}{2}\nablacero_a\elltwo$ (cf. \eqref{Kxi}). To prove \eqref{ddLmm+1} we simply contract \eqref{Lmapp} with $\xi$ twice and insert \eqref{ddotR} and \eqref{scalar}. Relations \eqref{Lmmn} and \eqref{dLmn} are respectively the contractions of \eqref{Lmm} and \eqref{dLmm} with $n$. They are obtained after using $\gamma_{ab}n^b = \U_{ab}n^b=0$, $\s_b n^b = 0$ and $\ell_an^a=1$ together with the definitions of $\kappa_n$, $\r_a$, $\kappa^{(m)}$ and $\r^{(m)}_a$. In addition, \eqref{Lmmn} uses identity \eqref{nnablaomega} applied to $\omega_a=\r_a$, and \eqref{dLmn} employs \eqref{derivadannull} and \eqref{Vn}. Finally, relations \eqref{Lmm+1}-\eqref{PLmm+1} are immediate from the previous ones by simply keeping the quantities of order $m+1$.
\end{proof}
\end{prop}

The tensors $L^{(1)}_{ab}$, $\dot{L}^{(1)}_a$ and $\ddot{L}^{(1)}$ require a separate analysis. The starting point is \eqref{lieSch} with $m=1$ (this is an exact relation), which we contract with $e_a^{\alpha}e_b^{\beta}$, $e_a^{\alpha}\xi^{\beta}$ and $\xi^{\alpha}\xi^{\beta}$. Taking into account \eqref{constraint}, \eqref{ricxixi} and \eqref{ricxiX} (we do not replace $R$ from \eqref{scal} since this will not be needed) and recalling $g_{\alpha\beta}e_a^{\alpha}e_b^{\beta}=\gamma_{ab}$, $g_{\alpha\beta}e_a^{\alpha}\xi^{\beta}=\ell_{a}$ and $g_{\alpha\beta}\xi^{\alpha}\xi^{\beta}=\elltwo$, the result is
\begin{align}
	L_{ab}^{(1)} & = \dfrac{1}{\mf n-1}\Bigg(\Rcero_{(ab)} -2\lie_n \Y_{ab} - (2\kappa_n+\tr_P\bU)\Y_{ab} + \nablacero_{(a}\left(\s_{b)}+2\r_{b)}\right) -2\r_a\r_b + 4\r_{(a}\s_{b)} - \s_a\s_b \nonumber\\
	& \qquad\qquad - (\tr_P\bY)\U_{ab}+ 2P^{cd}\U_{d(a}\left(2\Y_{b)c}+\F_{b)c}\right) - \dfrac{R}{2\mf n}\gamma_{ab}\Bigg).\label{L1}\\
	\dot L_{a}^{(1)} & = \dfrac{1}{\mf n-1}\Bigg( \r^{(2)}_a - P^{bc}A_{bca}  - P^{bc}(\r+\s)_b(\Y+\F)_{ac} +\dfrac{1}{2}\kappa_n \nablacero_a\elltwo -\dfrac{1}{2} n(\elltwo)(\r-\s)_{a} - \dfrac{R}{2\mf n}\ell_{a}\Bigg),\label{dL1}\\
	\ddot L^{(1)} & = \dfrac{1}{\mf n-1}\Bigg( -P^{ab}\Y^{(2)}_{ab} + P^{ab}P^{cd}(\Y+\F)_{ac}(\Y+\F)_{bd} + P^{ab}(\r-\s)_a \nablacero_b\elltwo - \dfrac{R}{2\mf n}\elltwo\Bigg).\label{ddL1}
\end{align}

Using these identities we can now compute $\mc{L}^{(m+1)}_a$ and $\dot{\mc{L}}^{(m+1)}$ up to order $m+1$. Taking $m$ transverse derivatives of $\L_{\alpha}\d (\mf n-1)L_{\alpha\beta}\nabla^{\beta}\Omega$ and using \eqref{derivada},
\begin{align}
	\mc{L}_{\alpha}^{(m+1)} &= (\mf n-1)\sum_{i=0}^m\binom{m}{i} L^{(i+1)}_{\alpha\beta} \lie_{\xi}^{(m-i)}g^{\beta\mu}\nabla_{\mu}\Omega \nonumber\\
	&= (\mf n-1)\sum_{i=0}^m\sum_{j=0}^{m-i}\binom{m}{i}\binom{m-i}{j} L^{(i+1)}_{\alpha\beta} ( \lie_{\xi}^{(j)}g^{\beta\mu} )\nabla_{\mu}\lie_{\xi}^{(m-i-j)}\Omega,\label{mcLalpham}
\end{align}
so (recall $L_{\alpha\beta}^{(k)}\st{[k+2]}{=}0$, by \eqref{Lm-1m+1})
\begin{equation*}
	\mc{L}_{\alpha}^{(m+1)} \st{[m+1]}{=} (\mf n-1)\left(L^{(m+1)}_{\alpha\beta} g^{\beta\mu}\nabla_{\mu}\Omega + mL^{(m)}_{\alpha\beta} \big(g^{\beta\mu}\nabla_{\mu}\lie_{\xi}\Omega + (\lie_{\xi}g^{\beta\mu})\nabla_{\mu}\Omega\big)\right).
\end{equation*}
Now we note that $g^{\beta\mu}\nabla_{\mu}\Omega \st{\scri}{=} \sigma^{(1)}\nu^{\beta}$ (by \eqref{inversemetric}) and
\begin{align*}
g^{\beta\mu}\nabla_{\mu}\lie_{\xi}\Omega + (\lie_{\xi}g^{\beta\mu})\nabla_{\mu}\Omega & \st{\scri}{=} g^{\beta\mu}\nabla_{\mu}\lie_{\xi}\Omega - \mc{K}^{\beta\mu}\nabla_{\mu}\Omega\\
&\stb[\eqref{inversemetric}]{\scri}{=} \big(P^{bc}e_b^{\beta}e_c^{\mu}+\xi^{\beta}\nu^{\mu}+\xi^{\mu}\nu^{\beta}\big)\nabla_{\mu}\lie_{\xi}\Omega -\sigma^{(1)} \nu^{\mu} \mc{K}_{\mu}{}^{\beta}\\
&\stb[\eqref{Kupnu}]{\scri}{=} \left(P^{bc}\big(\nablacero_c\sigma^{(1)}-2\sigma^{(1)}\r_c\big) +n^b \left(\sigma^{(2)}-\dfrac{1}{2}\sigma^{(1)} n(\elltwo)\right)\right)e_b^{\beta}\\
&\qquad +\big(\lie_n\sigma^{(1)}+2\sigma^{(1)}\kappa_n\big)\xi^{\beta}.
\end{align*}
Hence,
\begin{align}
	\mc{L}_{\alpha}^{(m+1)} &\stb[\scri]{[m+1]}{=} (\mf n-1)\Bigg(\sigma^{(1)}L^{(m+1)}_{\alpha\beta} n^b e_b^{\beta} +m \big(\lie_n\sigma^{(1)}+2\sigma^{(1)}\kappa_n\big)\xi^{\beta}L^{(m)}_{\alpha\beta}\nonumber\\
	& \qquad\qquad +m\left( P^{bc}\big(\nablacero_c\sigma^{(1)}-2\sigma^{(1)}\r_c\big)+n^b\left(\sigma^{(2)}-\dfrac{1}{2}\sigma^{(1)} n(\elltwo)\right)\right)e_b^{\beta}L^{(m)}_{\alpha\beta}\Bigg).\label{mcLalpham[]}
\end{align}

\begin{prop}
	\label{propL}
	Let $\mc{L}^{(m+1)}_a$ and $\dot{\mc{L}}^{(m+1)}$ be defined as in \eqref{Lm}. Then, 
\begin{align}
\dfrac{	2\mf n}{\sigma^{(1)}} \dot{\mc{L}}^{(1)} &=-2(\mf n-1)\kappa^{(2)} -2(\mf n-2)\lie_n\big(\tr_P\bY\big) + 2\big(\tr_P\bU-(\mf n-2) \kappa_n \big)\tr_P\bY  \nonumber\\
&\quad\, -2(\mf n-1)P^{ab}P^{cd}\U_{ac}\Y_{bd} + 2(\mf n-2)\div\br +(2\mf n-3)\div\bs  \nonumber\\
&\quad\, +\big(2(2\mf n-3)\kappa_n+(\mf n-1)\tr_P\bU\big)n(\elltwo) - 2(2\mf n-3)P(\br,\br) -(4\mf n-5)P(\bs,\bs) \nonumber\\
&\quad\, - 4(2\mf n-3)P(\br,\bs)- \tr_P\Rcero,\label{dotLgeneral}\\
\dfrac{1}{\sigma^{(1)}}\mc{L}^{(1)}_a = \mc{R}_{ab}n^b &= -\lie_n(\r_b-\s_b) - \nablacero_b \kappa_n - (\tr_P\bU) (\r_b-\s_b) - \nablacero_b (\tr_P\bU)  \nonumber\\
&\quad\, + P^{cd}\nablacero_c\U_{bd} - 2P^{cd}\U_{bd}s_c,\label{La1general}
\end{align}	
and for every $m\ge 1$,
	\begin{align}
\mc{L}^{(m+1)}_a  &\st{(m+1)}{=} - \sigma^{(1)} \lie_n \r_a^{(m+1)} + \left(m \lie_n\sigma^{(1)} - \tr_P\bU\right)\r_a^{(m+1)} - \sigma^{(1)}\nablacero_a \kappa^{(m+1)}  + \dfrac{m\kappa^{(m+1)}}{\mf n}\nablacero_a\sigma^{(1)} ,	\label{Lam}\\
\mf n \dot{\mc{L}}^{(m+1)} &\st{(m+1)}{=} -(\mf n-1)\sigma^{(1)}\kappa^{(m+2)} -(\mf n-2)\sigma^{(1)} \lie_n\big(\tr_P\bY^{(m+1)}\big) -(\mf n-1)\sigma^{(1)} P^{ab}P^{cd}\U_{ac}\Y^{(m+1)}_{bd} \nonumber\\
&\, +\left(\big(2m(1-\mf n)+2-\mf n\big)\sigma^{(1)}\kappa_n -m\mf{n} \lie_n\sigma^{(1)} +\sigma^{(1)}\tr_P\bU\right)\tr_P\bY^{(m+1)} +\dot{\mathscr R}_{\mc L}^{(m)},	\label{dotLm}
	\end{align}
where $\dot{\mathscr R}_{\mc L}^{(m)}$ is an explicit tensor that depends linearly on $\br^{(m+1)}$ as well as on lower order transverse derivatives and that we do not write for simplicity (it can be easily read out by making explicit all the calculations in the proof).
	\begin{proof}
The definition $\L_{\alpha}\d \left(R_{\alpha\beta}-\frac{1}{2\mf n} R g_{\alpha\beta}\right)\nabla^{\beta}\Omega$ and $\nabla_{\mu}\Omega\st{\scri}{=} \sigma^{(1)}\nu_{\mu}$ gives $$\mc{L}^{(1)}_{\alpha} =\sigma^{(1)}\left(R_{\alpha\beta}-\dfrac{1}{2\mf n}R g_{\alpha\beta}\right)n^b e_b^{\beta}.$$

 The contraction with $\xi^{\alpha}$ yields \eqref{dotLgeneral} after replacing \eqref{dotRn} and \eqref{scal}, and that with $e_a^{\alpha}$ gives, after inserting \eqref{constraintn}, \eqref{La1general}. To obtain \eqref{Lam} we contract \eqref{mcLalpham[]} with $e^{\alpha}_a$ and use $L_{ab}^{(m)}n^b \st{(m+1)}{=}0$ (by \eqref{Lmm+1}),
\begin{align*}
\dfrac{1}{\mf n-1}\mc{L}^{(m+1)}_a &\st{(m+1)}{=} \sigma^{(1)}L_{ab}^{(m+1)}n^b + m\big(\lie_n\sigma^{(1)}+2\sigma^{(1)}\kappa_n\big) \dot{L}^{(m)}_a +m P^{bc} \big(\nablacero_c\sigma^{(1)}-2\sigma^{(1)}\r_c\big) L^{(m)}_{ab}.
\end{align*}	
This becomes \eqref{Lam} after inserting \eqref{Lmmn}, \eqref{dLmm+1} and \eqref{PLmm+1}. In order to prove \eqref{dotLm} we contract \eqref{mcLalpham[]} with $\xi^{\alpha}$, 
		\begin{equation*}
			\begin{aligned}
\dfrac{1}{\mf n-1} \dot{\mc{L}}^{(m+1)}&\st{(m+1)}{=} \sigma^{(1)}\dot{L}^{(m+1)}_b n^b +m \big(\lie_n\sigma^{(1)}+2\sigma^{(1)}\kappa_n\big)\ddot{L}^{(m)} +m P^{bc}\big(\nablacero_c\sigma^{(1)}-2\sigma^{(1)}\r_c\big)\dot{L}_b^{(m)} \\
&\qquad\, + \left(\sigma^{(2)}-\dfrac{1}{2}\sigma^{(1)} n(\elltwo)\right)\dot{L}^{(m)}_b n^b.
			\end{aligned}
		\end{equation*}
Inserting \eqref{dLmn}, \eqref{ddLmm+1}, \eqref{dLmm+1} and \eqref{dLmnm+1} and using $P^{bc}\ell_b =-\elltwo n^c$,	\eqref{dotLm} is established.
	\end{proof}
\end{prop}


Next we compute the tensors $\mc{Q}_{ab}^{(m)}$, $\dot{\mc Q}_a^{(m)}$ and $\ddot{\mc Q}^{(m)}$ to the leading order. To do that, we apply $\lie_{\xi}^{(m)}$ to \eqref{defC} and use Proposition \ref{derivadashess},
\begin{equation}
	\label{Calphabeta}
	\begin{aligned}
		\dfrac{1}{\mf n-1}\Q^{(m+1)}_{\alpha\beta} &= \nabla_{\alpha}\nabla_{\beta}\lie_{\xi}^{(m)}\Omega - \sum_{k=0}^{m-1}\binom{m}{k+1} \nabla_{\sigma}\big(\lie_{\xi}^{(k)}\Omega\big) \Sigma^{(m-k)}{}^{\sigma}{}_{\beta\alpha} + \sum_{k=0}^{m}\binom{m}{k} \big(\lie_{\xi}^{(k)}\Omega\big) L_{\alpha\beta}^{(m-k+1)}.
	\end{aligned}
\end{equation}
To perform the calculation it is convenient to introduce the following symmetric tensors (recall that $\Sigma^{\sigma}{}_{\beta\alpha}$ is symmetric in $\alpha,\beta$, so the same holds for $\Sigma^{(k)\sigma}{}_{\beta\alpha}$) $$\text{I}^{(m+1)}_{\alpha\beta} \d  \nabla_{\alpha}\nabla_{\beta}\lie_{\xi}^{(m)}\Omega,\qquad \text{II}^{(m+1)}_{\alpha\beta}\d -\sum_{k=0}^{m-1}\binom{m}{k+1} \nabla_{\sigma}\big(\lie_{\xi}^{(k)}\Omega\big) \Sigma^{(m-k)}{}^{\sigma}{}_{\alpha\beta},$$ $$\text{III}^{(m+1)}_{\alpha\beta}\d \sum_{k=0}^{m}\binom{m}{k} \big(\lie_{\xi}^{(k)}\Omega\big) L_{\alpha\beta}^{(m-k+1)}.$$ 

Given that $\Sigma^{(k)}$ involves at most $m+1$ derivatives of $g$ and that $\Omega=0$, $\nabla_{\mu}\Omega\st{\scri}{=}\sigma^{(1)}\nu_{\mu}$ at $\scri$, we can write
\begin{align}
\text{II}^{(m+1)}_{\alpha\beta} &\st{[m]}{=} -\big(\nabla^{\sigma}\Omega\big) \Sigma^{(m)}_{\sigma\alpha\beta} - m \big(\nabla^{\sigma}\lie_{\xi}\Omega \big) \Sigma^{(m-1)}_{\sigma\alpha\beta}\stb[\scri]{[m]}{=} - \sigma^{(1)}\nu^{\sigma}\Sigma^{(m)}_{\sigma\alpha\beta} - m \big(\nabla^{\sigma}\lie_{\xi}\Omega \big) \Sigma^{(m-1)}_{\sigma\alpha\beta},\label{II0}\\
\text{II}^{(m+1)}_{\alpha\beta} &\stb[\scri]{[m+1]}{=} - \sigma^{(1)}\nu^{\sigma} \Sigma^{(m)}_{\sigma\alpha\beta}.\label{II}
\end{align}
Similarly, $L^{(m)}$ involves at most $m+1$ derivatives of $g$, so
\begin{align}
\text{III}^{(m+1)}_{\alpha\beta} &\st{[m]}{=} \Omega L_{\alpha\beta}^{(m+1)}+m(\lie_{\xi}\Omega) L_{\alpha\beta}^{(m)} + \dfrac{m(m-1)\lie_{\xi}^{(2)}\Omega}{2} L_{\alpha\beta}^{(m-1)},\label{III0}\\
\text{III}^{(m+1)}_{\alpha\beta} &\st{[m+1]}{=} \Omega L_{\alpha\beta}^{(m+1)} + m(\lie_{\xi}\Omega) L_{\alpha\beta}^{(m)}.\label{III}
\end{align}

Using this and the formulas in Propositions \ref{proppullback}, \ref{propB3} and \ref{prop_schouten}, the computation of $\mc{Q}_{ab}^{(m)}$, $\dot{\mc Q}_a^{(m)}$ and $\ddot{\mc Q}^{(m)}$ will be straightforward. Observe again that $\ddot{\Q}^{(m)}$ is computed to one order less, since this is all that will be needed.

\begin{prop}
	\label{propQ}
	Let $\mc{Q}_{ab}^{(m)}$, $\dot{\mc Q}_a^{(m)}$ and $\ddot{\mc Q}^{(m)}$ be as in \eqref{Cm} and $m\ge 2$. Then,
\begin{align}
\Q_{ab}^{(m+1)} &\st{(m)}{=} (\mf n-1)\sigma^{(m+1)}\U_{ab} +  (\mf n-1-2m)\sigma^{(1)}\lie_n\Y^{(m)}_{ab}+ 4m\sigma^{(1)}P^{cd}\U_{c(a}\Y_{b)d}^{(m)}\nonumber \\
&+ m\left((\mf n-1)\big(\lie_n\sigma^{(1)}+\sigma^{(1)}\kappa_n\big) + (\mf n-1-2m)\sigma^{(1)}\kappa_n -\sigma^{(1)}\tr_P\bU \right)\Y_{ab}^{(m)}  \nonumber\\
&+\dfrac{m}{\mf n}\Big(\sigma^{(1)}\Big(\kappa^{(m+1)}+2\lie_n\big(\tr_P\bY^{(m)}\big) + \big(2m\kappa_n+\tr_P\bU\big)\tr_P\bY^{(m)}+ P^{ab}P^{cd}\U_{ac}\Y^{(m)}_{bd} \Big)\gamma_{ab}\nonumber\\
&-m\sigma^{(1)}\big(\tr_P\bY^{(m)}\big)\U_{ab} +\wt{\mathscr{R}}_{ab}^{(m)},\label{Cmab}\\
\dot\Q_{a}^{(m+1)} &\st{(m)}{=} (\mf n-1)\big(\nablacero_a \sigma^{(m+1)} - \sigma^{(m+1)}(\r-\s)_a -mP^{bc}\Y_{ab}^{(m)}\nablacero_c\sigma^{(1)}\big)- (\mf n-1-m)\sigma^{(1)}\r_a^{(m+1)} \nonumber\\
&  + m\sigma^{(1)}\left(P^{bc}\nablacero_b\Y^{(m)}_{ac}-\nablacero_a\big(\tr_P\bY^{(m)}\big)+2P^{bc}\r_b\Y_{ac}^{(m)} + \big(\tr_P\bY^{(m)}\big) (\r-\s)_a\right)\nonumber\\
& + \dfrac{m\sigma^{(1)}}{\mf n} \left(\kappa^{(m+1)}+2\lie_n\big(\tr_P\bY^{(m)}\big)+\big(2m\kappa_n+\tr_P\bU\big)\tr_P\bY^{(m)} + P^{bc}P^{df}\U_{bd}\Y^{(m)}_{cf}\right)\ell_a \nonumber\\
&+ \wt{\mathscr R}^{(m)}_a,\label{dotCam}\\
\ddot\Q^{(m+1)} &\st{(m+1)}{=} (\mf n-1)\sigma^{(m+2)} - m\sigma^{(1)}\tr_P\bY^{(m+1)} + \dfrac{m\sigma^{(1)}\elltwo}{\mf n}\kappa^{(m+1)},\label{ddotCm}
\end{align}
	where $\mc{\wt O}^{(m)}_{ab}$ and $\mc{\wt O}^{(m)}_a$ are tensors that depend on $\br^{(m)}$, $\sigma^{(m)}$ and lower order terms and we do not write for simplicity (they can be easily read out by performing all the calculations in the proof explicitly).
		%
		\begin{proof}
To prove each identity in \eqref{Cmab}-\eqref{ddotCm} we contract the three pieces $\text{I}^{(m+1)}_{\alpha\beta}$, $\text{II}^{(m+1)}_{\alpha\beta}$ and $\text{III}^{(m+1)}_{\alpha\beta}$ with $e_a^{\alpha}e_b^{\beta}$, $e_a^{\alpha}\xi^{\beta}$ and $\xi^{\alpha}\xi^{\beta}$. The contraction of $\text{I}^{(m+1)}_{\alpha\beta}$ with $e_a^{\alpha}e_b^{\beta}$ gives, after using identity \eqref{identity1}, $$\text{I}^{(m+1)}_{ab} =\nablacero_a\nablacero_b\sigma^{(m)} +\lie_n(\sigma^{(m)})\Y_{ab} + \sigma^{(m+1)}\U_{ab}.$$

 Contracting \eqref{II0} with $e_a^{\alpha}e_b^{\beta}$, inserting \eqref{inversemetric} and recalling $\Sigma^{(m-1)}_{cab} \st{(m)}{=} 0$ (by \eqref{Sigmamcab}) gives
			\begin{align*}
				e^{\alpha}_a e_b^{\beta}\text{II}^{(m+1)}_{\alpha\beta} & \st{(m)}{=}  -\sigma^{(1)}\Sigma^{(m)}_{cab}n^c - m (\lie_n\sigma^{(1)}) \big({}^{(1)}\Sigma^{(m-1)}\big)_{ab}.
			\end{align*}
Inserting \eqref{apC1}, \eqref{apC5} and taking into account \eqref{apCextra}, $$\text{II}^{(m+1)}_{ab} \st{(m)}{=} \sigma^{(1)}\lie_n\Y_{ab}^{(m)} + \big(2m\sigma^{(1)}\kappa_n +m \lie_n\sigma^{(1)}\big)\Y_{ab}^{(m)}-2\sigma^{(1)}\nablacero_{(a}\r_{b)}^{(m)}+2\sigma^{(1)}\kappa^{(m)}\Y_{ab}.$$ 

Finally, the contraction of \eqref{III0} with $e_a^{\alpha}e_b^{\beta}$ is (recall $\Omega\st{\scri}{=}0$) $$\text{III}^{(m+1)}_{ab} \st{(m)}{=} m\sigma^{(1)}L_{ab}^{(m)}+\dfrac{m(m-1)\sigma^{(2)}}{2}L_{ab}^{(m-1)},$$ 

where the first term is given by \eqref{Lmm} and the second by \eqref{Lmm+1}. Since $\Q^{(m)}_{ab}=(\mf n-1)\big(\text{I}_{ab}^{(m)}+\text{II}_{ab}^{(m)}+\text{III}_{ab}^{(m)}\big)$ we simply need to add the three terms and simplify to get \eqref{Cmab}.\\

 To prove \eqref{dotCam} we repeat the same steps. Firstly, we contract $\text{I}_{\alpha\beta}^{(m+1)}$ with $e_a^{\alpha}\xi^{\beta}$ and use \eqref{identity2} to get
			\begin{equation}
				\label{I}
e_a^{\alpha}\xi^{\beta}\text{I}^{(m+1)}_{\alpha\beta} = \nablacero_a\sigma^{(m+1)}-\sigma^{(m+1)}(\r-\s)_a-V^b{}_a \nablacero_b \sigma^{(m)}.
			\end{equation} 
Secondly, the contraction of \eqref{II0} with $e_a^{\alpha}\xi^{\beta}$ gives, after inserting \eqref{inversemetric} and recalling that $\big({}^{(1,3)}\Sigma^{(m-1)}\big)_a\st{(m)}{=} 0$ (by \eqref{apC6} and \eqref{apCextra}),
			\begin{align*}
e_a^{\alpha}\xi^{\beta}\text{II}^{(m+1)}_{\alpha\beta} &\st{(m)}{=} - \sigma^{(1)} \big({}^{(3)}\Sigma^{(m)}\big)_{ca}n^c - mP^{cd}\big({}^{(3)}\Sigma^{(m-1)}\big)_{ca} \nablacero_d\sigma^{(1)} - m\sigma^{(2)}\big({}^{(3)}\Sigma^{(m-1)}\big)_{ca} n^c\\
				&\st{(m)}{=} -\sigma^{(1)}\r^{(m+1)}_a+2m\sigma^{(1)} P^{bc}\r_b\Y_{ac}^{(m)} - m P^{cd}\Y^{(m)}_{ca}\nablacero_d\sigma^{(1)} \\
				&\qquad + \mbox{terms depending on } \br^{(m)} \mbox{ and hypersurface data},
			\end{align*}
where in the second line we used \eqref{3Sigmaca} and \eqref{apC10}. To compute $e^{\alpha}_a\xi^{\beta}\text{III}^{(m+1)}_{\alpha\beta} $ we contract \eqref{III0} with $e^{\alpha}_a\xi^{\beta}$, $$e^{\alpha}_a\xi^{\beta}\text{III}^{(m+1)}_{\alpha\beta} \st{(m)}{=}m\sigma^{(1)}\dot{L}^{(m)}_a + \dfrac{m(m-1)\sigma^{(2)}}{2}\dot{L}^{(m-1)}_a. $$ 

The explicit forms of $\dot{L}^{(m)}_a$ and $\dot{L}^{(m-1)}_a$ are given in \eqref{dLmm} and \eqref{dLmm+1}. Using as before $\dot\Q^{(m)}_a=(\mf n-1)e_a^{\alpha}\xi^{\beta}\big(\text{I}_{\alpha\beta}^{(m)}+\text{II}_{\alpha\beta}^{(m)}+\text{III}_{\alpha\beta}^{(m)}\big)$ we find \eqref{dotCam} after adding the three terms and simplifying. \\

 The proof of \eqref{ddotCm} is analogous. Contracting $\text{I}^{(m+1)}_{\alpha\beta}$ with $\xi^{\alpha}\xi^{\beta}$ and using \eqref{identity3} gives $$\xi^{\alpha}\xi^{\beta}\text{I}^{(m+1)}_{\alpha\beta} = \sigma^{(m+2)}.$$ 
 
 The corresponding term with $\text{II}^{(m+1)}_{\alpha\beta}$ vanishes because the contraction of \eqref{II} with $\xi^{\alpha}\xi^{\beta}$ is $$\xi^{\alpha}\xi^{\beta}\text{II}^{(m+1)}_{\alpha\beta} \st{(m+1)}{=} - \sigma^{(1)}\big({}^{(2,3)}\Sigma^{(m)}\big)_c n^c \st{(m+1)}{=} 0,$$ where in the last equality we used \eqref{apCextra} and \eqref{apC6}. Finally, the term with $\text{III}^{(m+1)}_{\alpha\beta}$ is obtained by contracting \eqref{III} with $\xi^{\alpha}\xi^{\beta}$ and inserting \eqref{ddLmm+1}, which gives $$\xi^{\alpha}\xi^{\beta}\text{III}^{(m+1)}_{\alpha\beta} \st{(m+1)}{=} m\sigma^{(1)} \ddot{L}^{(m)} \st{(m+1)}{=} \dfrac{m\sigma^{(1)}}{\mf n-1} \left(-\tr_P\bY^{(m+1)} + \dfrac{\kappa^{(m+1)}}{\mf n}\elltwo\right).$$ 
 
 The expression $\ddot\Q^{(m)}=(\mf n-1)\xi^{\alpha}\xi^{\beta}\big(\text{I}_{\alpha\beta}^{(m)}+\text{II}_{\alpha\beta}^{(m)}+\text{III}_{\alpha\beta}^{(m)}\big)$ yields \eqref{ddotCm} at once.
		\end{proof}
	\end{prop}

As in other cases, the lowest orders $\Q^{(1)}$ and $\Q^{(2)}$ require a specific treatment (note that they are excluded from Proposition \ref{propQ} by the condition $m\ge 2$). To obtain the former we simply contract $\Q_{\alpha\beta} \st{\scri}{=}(\mf n-1)\nabla_{\alpha}\nabla_{\beta}\Omega$ with $e_a^{\alpha}e_b^{\beta}$, $e_a^{\alpha}\xi^{\beta}$ and $\xi^{\alpha}\xi^{\beta}$ and use \eqref{identity1}-\eqref{identity3},
\begin{equation}
	\label{Q1general}
\Q_{ab}^{(1)} =\sigma^{(1)}\U_{ab},\qquad \dot{\Q}^{(1)}_a = \nablacero_a\sigma^{(1)} -\sigma^{(1)}(\r-\s)_a,\qquad 	\ddot{\Q}^{(1)} = \sigma^{(2)} .
\end{equation}
To compute the latter we evaluate the exact expression \eqref{Calphabeta} with $m=1$ and use $\nabla_{\sigma}\Omega \st{\scri}{=} \sigma^{(1)}\nu_{\sigma}$ to get $$\dfrac{1}{\mf n-1}\Q^{(2)}_{\alpha\beta} \st{\scri}{=} \nabla_{\alpha}\nabla_{\beta}\lie_{\xi}\Omega - \sigma^{(1)}\Sigma_{\sigma\alpha\beta} \nu^{\sigma} + \sigma^{(1)} L_{\alpha\beta}.$$ 

The contraction of this expression with $e_a^{\alpha}e_b^{\beta}$, $e_a^{\alpha}\xi^{\beta}$ and $\xi^{\alpha}\xi^{\beta}$ can be evaluated from \eqref{identity1}-\eqref{identity3}, \eqref{sigmaprop1}-\eqref{sigmaprop3} and \eqref{L1}-\eqref{ddL1}. The result is
\begin{align}
\hspace{-0.25cm}\Q^{(2)}_{ab} & = (\mf n-3)\sigma^{(1)}\lie_n\Y_{ab} + \big((\mf n-1)\big(\lie_n\sigma^{(1)}+\sigma^{(1)}\kappa_n\big) + \sigma^{(1)}\big((\mf n-3)\kappa_n-\tr_P\bU\big)\big)\Y_{ab}- (\tr_P\bY)\U_{ab}\nonumber\\
&+ 2\sigma^{(1)}P^{cd}\U_{d(a}\left(2\Y_{b)c}+\F_{b)c}\right)-2\sigma^{(1)}(\mf n-2)\nablacero_{(a}\r_{b)} +2\sigma^{(1)}\big(2\r_{(a}\s_{b)}-\r_a\r_b\big)- \dfrac{\sigma^{(1)} R}{2\mf n}\gamma_{ab}\nonumber\\
&+\sigma^{(1)}\big(\nablacero_{(a}\s_{b)}- \s_a\s_b+\Rcero_{(ab)}\big)+(\mf n-1)\left(\nablacero_a\nablacero_b\sigma^{(1)}+ \left(\sigma^{(2)}-\dfrac{1}{2}\sigma^{(1)}n(\elltwo)\right)\U_{ab}\right),\label{Q2general}\\
\hspace{-0.25cm}\dot\Q^{(2)}_{a} & =-(\mf n-2)\sigma^{(1)}\r_a^{(2)} +(\mf n-1) V^c{}_a\big(2\sigma^{(1)}\r_c-\nablacero_c\sigma^{(1)}\big)+\left(\dfrac{\mf n-2}{2}\sigma^{(1)} n(\elltwo)-(\mf n-1)\sigma^{(2)}\right)(\r-\s)_a\nonumber\\
&+(\mf n-1)\nablacero_a\sigma^{(2)}-\sigma^{(1)}\left(P^{bc}\big(A_{bca}+(\r+\s)_b(\Y+\F)_{ac}\big)+\dfrac{R}{2\mf n}\ell_a-\dfrac{1}{2}\kappa_n\nablacero_a\elltwo\right),\label{dQ2general}\\
\hspace{-0.25cm}\ddot\Q^{(2)} & = (\mf n-1)\sigma^{(2)} - \sigma^{(1)}\tr_P\bY^{(2)} + \sigma^{(1)}P^{ab}P^{cd}(\Y+\F)_{ac}(\Y+\F)_{bd} + \sigma^{(1)}P(\br-\bs,d\elltwo) \nonumber\\
&\quad\, -\dfrac{R}{2\mf n}\sigma^{(1)}\elltwo,\label{ddQ2general}
\end{align}
where $R$ is given by \eqref{scal} and needs not be written out explicitly.\\

We conclude this appendix by computing the scalars $f^{(m+1)}$, obtained after applying $\lie_{\xi}^{(m)}$ to $f=g^{\alpha\beta}\nabla_{\alpha}\Omega\nabla_{\beta}\Omega$ and using \eqref{derivada},
\begin{equation}
	\label{fm+1}
	\lie_{\xi}^{(m)}f = \sum_{k=0}^m \binom{m}{k} \big(\lie_{\xi}^{(k)}g^{\alpha\beta}\big) \sum_{j=0}^{m-k}\binom{m-k}{j} \nabla_{\alpha}\big(\lie_{\xi}^{(j)}\Omega\big)\nabla_{\beta}\big(\lie_{\xi}^{(m-k-j)}\Omega\big).
\end{equation}
Again, the lowest orders $f^{(1)}$, $f^{(2)}$ and $f^{(3)}$ need a separate treatment. Obviously $f^{(1)}=0$ because $\scri$ is null. To compute $f^{(2)}$ we particularize \eqref{fm+1} to $m=1$ so that $$\lie_{\xi}f = \big(\lie_{\xi} g^{\alpha\beta}\big)
\nabla_{\alpha} \Omega \nabla_{\beta} \Omega + 2 \nabla^{\alpha} \Omega \nabla_{\alpha} \lie_{\xi} \Omega.$$ 

The first term is $(\lie_{\xi}g^{\alpha\beta})\nabla_{\alpha}\Omega\nabla_{\beta}\Omega= 2(\sigma^{(1)})^2 \kappa_n$ (by \eqref{liegnu} and $\Omega\st{\scri}{=}0$) and the second term is $2\sigma^{(1)}\lie_n\sigma^{(1)}$ because $\nabla^{\alpha}\Omega \st{\scri}{=} \sigma^{(1)}\nu^{\alpha}$. Thus,
\begin{equation}
	\label{f2general}
	f^{(2)} = 2\sigma^{(1)} \big(\sigma^{(1)}\kappa_n +\lie_n\sigma^{(1)}\big).
\end{equation}
We now compute $f^{(3)}$. The expanded form of \eqref{fm+1} for $m=2$ is $$f^{(3)} = 2 \nabla^{\alpha}\Omega\nabla_{\alpha}(\lie_{\xi}^{(2)}\Omega) +2 g^{\alpha\beta} \nabla_{\alpha}(\lie_{\xi}\Omega)\nabla_{\beta}(\lie_{\xi}\Omega) + 4 (\lie_{\xi}g^{\alpha\beta})\nabla_{\alpha}(\lie_{\xi}\Omega)\nabla_{\beta}\Omega+(\lie^{(2)}_{\xi}g^{\alpha\beta}) \nabla_{\alpha}\Omega\nabla_{\beta}\Omega.$$ 

Now, using $\nabla^{\alpha}\Omega \st{\scri}{=} \sigma^{(1)}\nu^{\alpha}$, together with \eqref{liegnu} and
\begin{align*}
	g^{\alpha\beta}\nabla_{\alpha}(\lie_{\xi}\Omega)\nabla_{\beta}(\lie_{\xi}\Omega) &\stb[\eqref{inversemetric}]{\scri}{=} P^{ab}\nablacero_a\sigma^{(1)}\nablacero_b\sigma^{(1)} +2\sigma^{(2)}\lie_n\sigma^{(1)},\\
	(\lie^{(2)}_{\xi}g^{\alpha\beta})\nabla_{\alpha}\Omega\nabla_{\beta}\Omega &\st{\scri}{=} (\sigma^{(1)})^2\nu^{\alpha}\nu^{\beta}\big(2g^{\mu\nu}\mc{K}_{\alpha\mu}\mc{K}_{\beta\nu}-\mc{K}^{(2)}_{\alpha\beta}\big)\\
	&\hspace{-0.3cm}\stb[\eqref{inversemetric},\eqref{Kupnu}]{\scri}{=} (\sigma^{(1)})^2\big( 8P(\br,\br) -4\kappa_n n(\elltwo) +2\kappa^{(2)}\big),
\end{align*}
it follows that
\begin{align}
	f^{(3)} &=2\sigma^{(1)}\big(\sigma^{(1)}\kappa^{(2)} + \lie_n\sigma^{(2)}\big) + 2\big(2\sigma^{(2)}-\sigma^{(1)}n(\elltwo)\big)\big(\lie_n\sigma^{(1)}+2\sigma^{(1)}\kappa_n\big) \nonumber\\
	&\quad\,  +8\sigma^{(1)}P\big(\br,\sigma^{(1)}\br-d\sigma^{(1)}\big) + 2P^{ab}\nablacero_a\sigma^{(1)}\nablacero_b\sigma^{(1)}.\label{f3general}
\end{align}
We now deal with the rest of the cases in the following proposition.
\begin{prop}
	\label{propf}
	Let $f\d g^{\alpha\beta}\nabla_{\alpha}\Omega\nabla_{\beta}\Omega$. Then, for any $m\ge 2$,
	\begin{equation}
		\label{fm2}
		f^{(m+1)}\st{(m)}{=} 2\sigma^{(1)} \big(\lie_n\sigma^{(m)} + \sigma^{(1)} \kappa^{(m)}\big) +2m\big(\lie_n\sigma^{(1)}+2\sigma^{(1)}\kappa_n\big) \sigma^{(m)}.
	\end{equation}
	\begin{proof}
The case $m=2$ is \eqref{f3general}, which after ignoring terms of order one and zero becomes \eqref{fm2}. For $m\ge 3$, the only terms in \eqref{fm+1} that have a chance to depend on $\bY^{(m)}$ or $\sigma^{(m)}$ are $k=0$ (and $j=0,1,m-1,m$), $k=1$ (and $j=0,m-1$) and $k=m$, namely 
\begin{align*}
	f^{(m+1)} &\st{[m]}{=} \nabla^{\beta}\Omega \nabla_{\beta}\big(\lie^{(m)}_{\xi}\Omega\big)+2mg^{\alpha\beta}\nabla_{\alpha}\big(\lie_{\xi}\Omega\big)\nabla_{\beta}\big(\lie_{\xi}^{(m-1)}\Omega\big)+ 2m\big(\lie_{\xi}g^{\alpha\beta}\big)\nabla_{\alpha}\big(\lie_{\xi}^{(m-1)}\Omega\big)\nabla_{\beta}\Omega \\
	&\quad\, + \big(\lie_{\xi}^{(m)}g^{\alpha\beta}\big) \nabla_{\alpha}\Omega\nabla_{\beta}\Omega,
\end{align*}
where we used the fact that $m\ge 3$ because we have used that the terms with $k=0$ and $j=0,1,m-1,m$ are all different. When $m=2$ we would be overcounting (but it turns out that the extra factor compensates, so \eqref{fm2} is valid also for $m=2$). Now one uses $\nabla^{\beta}\Omega \st{\scri}{=} \sigma^{(1)}\nu^{\beta}$ in the first term, $\nabla^{\alpha}\big(\lie_{\xi}^{(m-1)}\Omega\big) \st{(m)}{=}\sigma^{(m)}\nu^{\alpha}$ in the second, and in the third and fourth 
\begin{align}
(\lie_{\xi}g^{\alpha\beta})\nabla_{\beta}\Omega &\st{\scri}{=} -\sigma^{(1)}\mc{K}^{\alpha}{}_{\beta}\nu^{\beta} \st{\eqref{Kupnu}}{=} -\sigma^{(1)}\left(2P^{cd}\r_c + \dfrac{1}{2}n(\elltwo)n^d\right)e_d^{\alpha} +2\sigma^{(1)}\kappa_n\xi^{\alpha},\label{liegnu}\\
\big(\lie_{\xi}^{(m)}g^{\alpha\beta}\big) \nabla_{\alpha}\Omega\nabla_{\beta}\Omega &\st{\scri}{=}(\sigma^{(1)})^2 \big(\lie_{\xi}^{(m)}g^{\alpha\beta}\big)\nu_{\alpha}\nu_{\beta} \st{(m)}{=} -(\sigma^{(1)})^2 \mc{K}^{(m)}_{ab} n^a n^b = 2(\sigma^{(1)})^2\kappa^{(m)}.\nonumber
\end{align}
After adding up the four terms one gets $$f^{(m+1)}\st{(m)}{=} 2\sigma^{(1)}\lie_n\sigma^{(m)}+2m(\lie_n\sigma^{(1)})\sigma^{(m)}+4m\sigma^{(1)}\kappa_n\sigma^{(m)}+2(\sigma^{(1)})^2\kappa^{(m)},$$ 

which is \eqref{fm2} for $m\ge 3$. 
	\end{proof}
\end{prop}

\section{Higher order Raychaudhuri equation}
\label{appendixD}	

In this appendix we compute the contraction $\L^{(m+1)}_a n^a$ to one order less, i.e. up to order $m$. This computation requires knowing $\mc R^{(m+1)}_{ab} n^a n^b$ up to order $m$ and turns out to be quite laborious using the general formalism of hypersurface data. To keep the computations to a reasonable length we have decided to perform the computation using a particular gauge and assuming that the hypersurface is totally geodesic and admits a foliation by cross-sections. The result presented here will be sufficient for the purposes of this paper.

\begin{prop}
Let $\mc H$ be a null hypersurface embedded in $(\mc M,g)$ and assume the existence of a foliation $\{\mc S_u\}_{u\in\real}$ of $\mc H$ by cross-sections. Let $\xi$ be the (uniquely defined) rigging satisfying (i) $g(\xi,\xi)\st{\mc H}{=}0$ and (ii) $\Phi^{\star}\bm\xi=du$. Suppose in addition that $\lie_n\bg=0$. Then,
	\begin{equation}
		\label{Raychaudhuri2}
		\begin{aligned}
			\mc{R}^{(m+1)}_{ab}n^a n^b & \st{(m)}{=} 2m\kappa_n\kappa^{(m+1)} -\lie_n^{(2)}\big(\tr_P\bY^{(m)}\big) +\kappa_n\lie_n\big(\tr_P\bY^{(m)}\big) \\
			&\qquad + \big(2\kappa_n^2-\lie_n\kappa_n\big)m\tr_P\bY^{(m)}+ \mc{O}_{\mc R},
		\end{aligned}
	\end{equation}
	where $\mc{O}_{\mc R}$ is a scalar that depends on $\br^{(m)}$ and lower order terms.
	\begin{proof}
We construct Gaussian null coordinates \cite{moncrief1983symmetries} $\{t,u,x^A\}$ from $\mc H=\{t=0\}$ in which the metric takes the form $$g = 2 dt du + \phi du^2 + 2\beta_A dx^A du + \mu_{AB} dx^A dx^B,$$ 

where $\phi$, $\bm\beta$ and $\mu$ are, respectively, a function, a one-form and a 2-covariant, symmetric tensor satisfying $\phi(t=0)=0$, $\bm\beta(t=0)=0$ and $\mu_{AB}(t=0)=\gamma_{AB}$. In these coordinates, the rigging is $\xi=\partial_t$ and $\nu=\partial_u$. Denoting the derivative w.r.t. $t$ at $t=0$ with a dot, it is easy to check that, at $t=0$, $\kappa_n=-\frac{1}{2}\dot\phi$, $\r_A=\frac{1}{2}\dot{\bm\beta}$ and $\Y_{AB}=\frac{1}{2}\dot\mu_{AB} $. If we denote by $\dot{f}^{(m)}$ the $m$-th $t$-derivative at $t=0$, one also cheeks easily that $\kappa^{(m)}=-\frac{1}{2}\dot\phi^{(m)}$, $\r^{(m)}_A=\frac{1}{2}\dot{\bm\beta}^{(m)}$ and $\Y_{AB}^{(m)}=\frac{1}{2}\dot\mu_{AB}^{(m)}$. We also note that $P^{ab}=\mu^{AB}\delta_A^a\delta_B^b$ and $n^a=\delta^a_u$. In these coordinates, the $(u,u)$ component of the Ricci tensor is given by \cite{moncrief1983symmetries}
\begin{align*}
	\mc{R}_{uu} & = \dfrac{1}{2}\dotr{\left( \phi \dot{\phi}+\beta^A\big(D_A\phi -2\beta_A'-\dot{\phi}\beta_A \big)\right)} + \dfrac{1}{4}\left( \phi' +\phi \dot{\phi}+\beta^A\big(D_A\phi -2\beta_A'-\dot{\phi}\beta_A\big) \right) \mu^{AB}\dot{\mu}_{AB} \\
	&\quad\, +\dfrac{1}{2}D^A\left(2\beta_A'-D_A\phi + \dot{\phi}\beta_A\right) -\dfrac{1}{4}\dot{\phi}\mu^{AB}\mu_{AB}' -\dfrac{1}{2}\mu^{AB}\mu_{AB}'' + \dfrac{1}{4}\mu^{AC}\mu^{BD}\mu_{AB}'\mu_{CD}' \\
	&\quad\, -\dfrac{1}{2}\big(\dot{\phi}-\beta^A\dot{\beta}_A\big)^2 - \mu^{AC}\mu^{BD}D_{[B}\beta_{A]}D_{[C}\beta_{D]}\\
	&\quad\, -\dfrac{1}{2}\dot{\beta}^C\big(2D_C\phi +\phi\dot{\beta}_C -\beta^A \big(4D_{[C}\beta_{A]}+\beta_A \dot{\beta}_C\big) - 2\beta_C' \big),
\end{align*}
where the prime means derivative w.r.t. $u$ and $D$ is the Levi-Civita connection of the induced metric in the codimension-two surfaces $u=const.$, $t=const.$. Obviously this connection depends on $u$ and $t$ (we do not reflect this dependence for notational simplicity). Note however that at $t=0$, the connection is independent of $u$ because $\gamma_{AB}$ does not depend on $u$ (because of
$\lie_{n} \gamma =0$). So, when an expression is evaluated at $t=0$ (e.g. all expressions with an $(m)$ on top of the equal sign), $D$ will mean the covariant derivative associated to the metric $\gamma_{AB}$. The strategy now is to take $m$ derivatives along $t$ and keep only the terms depending on $\dot{\phi}^{(m+1)}$, $\dot{\bm\beta}^{(m+1)}$, $\dot{\mu}^{(m+1)}$ and $\dot{\mu}^{(m)}$. This requires commuting $\partial_t$ with $D$ using a formula analogous to \eqref{propMarc},
\begin{align*}
\big[\partial_t^{(m)},D_A\big] T^{B_1\cdots B_q}{}_{C_1\cdots C_p} =  \sum_{k=0}^{m-1}\binom{m}{k+1}&\left(\sum_{j=1}^{q}\big(\dot T^{(m-k-1)}\big)^{B_1\cdots B_{j-1}D B_{j+1}\cdots B_q}{}_{C_1\cdots C_p}\Xi^{(k+1)}{}^{B_j}{}_{DA}\right. \\
&\left.- \sum_{i=1}^{p}\big(\dot T^{(m-k-1)}\big)^{B_1\cdots B_q}{}_{C_1\cdots C_{i-1}D C_{i+1}\cdots C_p}\Xi^{(k+1)}{}^{D}{}_{C_i A}\right),
\end{align*}
where $\Xi^A{}_{BC} = D_{(B}\dot\mu_{C)}{}^A - \dfrac{1}{2}D^A\dot\mu_{BC}$. Then, $\Xi^{(k)A}{}_{BC}$ will depend on $\dot\mu^{(j)}$ with $j=1,...,k$. Consequently, for any tensor field $T$ satisfying $T|_{t=0}=0$, the previous identity implies $$\partial_t^{(m)}D_A T^{B_1\cdots B_q}{}_{C_1\cdots C_p} \st{(m)}{=} D_A  \big(\dot T^{(m)}\big)^{B_1\cdots B_q}{}_{C_1\cdots C_p}.$$ 

This is the case, in particular, of $D_A\phi$, $\beta_A$, $\beta_A'$, $\mu_{AB}'$ and $\mu_{AB}''$. Using this rule and that $\mu_{AB}'(t=0)=\beta_A(t=0)=\phi(t=0)=0$, the $m$-th $t$-derivative of $\mc{R}_{uu}$ at $t=0$ is
		\begin{align*}
\mc{R}^{(m+1)}_{uu}=\dot{\mc R}^{(m)}_{uu} & \st{(m)}{=} \dfrac{1}{2}m\dot{\phi}\dot{\phi}^{(m+1)} + \dfrac{1}{4}m\big(\dot{\phi}'+\dot{\phi}^2\big)\mu^{AB}\dot{\mu}^{(m)}_{AB}  -\dfrac{1}{4}\dot{\phi}\mu^{AB}\dot{\mu}_{AB}^{(m)}{}'-\dfrac{1}{2}\mu^{AB}\dot{\mu}^{(m)}_{AB}{}''\\
			&\quad\,  +\mbox{terms depending on } \dot{\bm\beta}^{(m)} \mbox{ and } \dot\phi^{(m)},
		\end{align*}
which is \eqref{Raychaudhuri2} after replacing $\dot\phi=-2\kappa_n$, $\dot\phi^{(m+1)}=-2\kappa^{(m+1)}$, $\dot\mu^{(m)}_{AB} = 2\Y^{(m)}_{AB}$ and taking into account $\gamma_{AB}'=0$.
	\end{proof}
\end{prop}

We now use the previous proposition to compute $\L^{(m+1)}_a n^a$ up to order $m$ under the same assumptions on $\scri$ and the rigging.

\begin{prop}
	\label{cor_Ln}
Assume $\scri$ admits a foliation of cross-sections $\{\mc S_u\}_{u\in\real}$ and let $\xi$ be the rigging satisfying $g(\xi,\xi)\st{\scri}{=}0$ and $\Phi^{\star}\bm\xi= du$. Suppose in addition that $\lie_n\bg=0$. Then,
	\begin{equation}
		\label{Lnunordenmenos}
		\mc{L}_a^{(m+1)}n^a  \st{(m)}{=}  H_1\kappa^{(m+1)} - \sigma^{(1)}\lie_n^{(2)}\big(\tr_P\bY^{(m)}\big) +H_2\lie_n\big(\tr_P\bY^{(m)}\big) + H_3\tr_P\bY^{(m)} + \mc{O}_{\mc L},
	\end{equation}
	where $\mc{O}_{\mc L}$ is a scalar that depends on $\br^{(m)}$ and lower order terms and whose explicit form is not relevant for this paper, and 
\begin{gather*}
H_1\d -\dfrac{m(\mf n-1)\lie_n\sigma^{(1)}}{\mf n},\qquad H_2\d -(2m-1)\sigma^{(1)}\kappa_n - m(\mf n-2)\lie_n\sigma^{(1)},\\
H_3\d \dfrac{(2m-\mf n)m}{\mf n}\lie_n\sigma^{(1)} -m\sigma^{(1)}\lie_n\kappa_n.
\end{gather*}
	\begin{proof}
We begin by computing $\L_{\alpha}^{(m+1)}$ to order $[m]$. From \eqref{mcLalpham} and Proposition \ref{prop_schouten}, the only terms that could depend on $\bY^{(m)}$ are $i=m$, $i=m-1$ ($j=0,1$) and $i=m-2$ ($j=0,1,2$), namely
\begin{align*}
\L_{\alpha}^{(m+1)} & \st{[m]}{=} (\mf n-1) \Bigg(L^{(m+1)}_{\alpha\beta}\nabla^{\beta}\Omega + m L_{\alpha\beta}^{(m)}\left( g^{\beta\mu}\nabla_{\mu}\lie_{\xi}\Omega+(\lie_{\xi}g^{\beta\mu})\nabla_{\mu}\Omega\right)\\
&\qquad\qquad\quad + \dfrac{m(m-1)}{2} L^{(m-1)}_{\alpha\beta}\big(g^{\beta\mu}\nabla_{\mu}\lie^{(2)}_{\xi}\Omega + 2(\lie_{\xi}g^{\beta\mu})\nabla_{\mu}\lie_{\xi}\Omega + (\lie_{\xi}^{(2)}g^{\beta\mu})\nabla_{\mu}\Omega\big)\Bigg)
\end{align*}
Contracting with $\nu^{\alpha}$, evaluating at $\scri$ and using as usual $\nabla^{\beta}\Omega\st{\scri}{=}\sigma^{(1)}\nu^{\beta}$ we get
\begin{align*}
	\mc{L}_a^{(m+1)}n^a & \st{(m)}{=} (\mf n-1)\left(\sigma^{(1)}L^{(m+1)}_{\alpha\beta}\nu^{\alpha}\nu^{\beta} + m \nu^{\alpha}L_{\alpha\beta}^{(m)}\left( g^{\beta\mu}\nabla_{\mu}\lie_{\xi}\Omega+(\lie_{\xi}g^{\beta\mu})\nabla_{\mu}\Omega\right) \right.\\
	&\qquad \qquad + \left. \dfrac{m(m-1)}{2} \nu^{\alpha}L^{(m-1)}_{\alpha\beta}\big(g^{\beta\mu}\nabla_{\mu}\lie^{(2)}_{\xi}\Omega + 2(\lie_{\xi}g^{\beta\mu})\nabla_{\mu}\lie_{\xi}\Omega + (\lie_{\xi}^{(2)}g^{\beta\mu})\nabla_{\mu}\Omega\big)\right).
\end{align*}
It is easier to split the computation into three pieces, namely $\mbox{I} = (\mf n-1)\sigma^{(1)}L^{(m+1)}_{\alpha\beta}\nu^{\alpha}\nu^{\beta}$, $\mbox{II} = (\mf n-1) m \nu^{\alpha}L_{\alpha\beta}^{(m)}\left( g^{\beta\mu}\nabla_{\mu}\lie_{\xi}\Omega+(\lie_{\xi}g^{\beta\mu})\nabla_{\mu}\Omega\right)$ and $$\mbox{III} = \dfrac{m(m-1)(\mf n-1)}{2} \nu^{\alpha}L^{(m-1)}_{\alpha\beta}\Big(g^{\beta\mu}\nabla_{\mu}\lie^{(2)}_{\xi}\Omega + 2(\lie_{\xi}g^{\beta\mu})\nabla_{\mu}\lie_{\xi}\Omega + (\lie_{\xi}^{(2)}g^{\beta\mu})\nabla_{\mu}\Omega\Big),$$ 

and compute each term separately. Those involving only $\br^{(m)}$ or $\kappa^{(m)}$ are not needed in explicit form. We shall gather all of them into the tensor $\mc{O}_{\mc L}$ in the statement of the proposition. For the calculation we shall write $\cdots$ to mean additional terms of this form. For the first term $\mbox{I}$ we contract \eqref{Lm+1ordm} (with $m+1$ instead of $m$) with $\nu^{\alpha}\nu^{\beta}$ and use $\mc{K}_{\alpha\beta}\nu^{\alpha}\nu^{\beta}=-2\kappa_n$ and $\mc{K}^{(2)}_{\alpha\beta}\nu^{\alpha}\nu^{\beta}=-2\kappa^{(2)}$ to get $$\mbox{I} \st{(m)}{=} \sigma^{(1)}\left(\mc R_{ab}^{(m+1)}n^{a}n^{b}+\dfrac{1}{\mf n}\left(m\kappa_n R^{(m)} + \dfrac{m(m-1)}{2}\kappa^{(2)} R^{(m-1)} \right)\right).$$ 

Inserting \eqref{Raychaudhuri2} as well as $$R^{(m)} \st{(m)}{=} -2\kappa^{(m+1)} -4\lie_n(\tr_P\bY^{(m)})-4m\kappa_n\tr_P\bY^{(m)} + \cdots,\qquad R^{(m-1)} \st{(m)}{=}\cdots,$$ 

we arrive at 
\begin{align*}
\mbox{I} &\st{(m)}{=} \dfrac{2m(\mf n-1)\sigma^{(1)}\kappa_n}{\mf n}\kappa^{(m+1)}- \sigma^{(1)}\lie_n^{(2)}\big(\tr_P\bY^{(m)}\big) + \dfrac{\mf n-4m}{\mf n}\kappa_n\sigma^{(1)}\lie_n\big(\tr_P\bY^{(m)}\big)\nonumber\\
	&\qquad + \left(\dfrac{2(\mf n-2m)\kappa_n^2}{\mf n} - \lie_n\kappa_n\right)m\sigma^{(1)} \tr_P\bY^{(m)} + \cdots.
\end{align*} 
To compute $\mbox{II}$ we now use that (cf. \eqref{Lmmn}-\eqref{dLmn})
\begin{align}
L_{ab}^{(m)}n^a&\st{(m)}{=}\cdots\label{piece2}\\
\dot{L}^{(m)}_a n^a &\st{(m)}{=} \dfrac{1}{\mf n(\mf n-1)}\Big((1-\mf n)\kappa^{(m+1)} +(2-\mf n)\lie_n(\tr_P\bY^{(m)}) + (2m-\mf n)\kappa_n \tr_P\bY^{(m)}\Big)+\cdots.\label{piece3}
\end{align}
Inserting these into the definition of $\mbox{II}$ and using \eqref{inversemetric} and \eqref{liegnu} gives
\begin{align*}
\mbox{II} & = (\mf n-1) m \nu^{\alpha}L_{\alpha\beta}^{(m)}\left( g^{\beta\mu}\nabla_{\mu}\lie_{\xi}\Omega+(\lie_{\xi}g^{\beta\mu})\nabla_{\mu}\Omega\right)\\
& \st{(m)}{=} (\mf n-1) m \big(\lie_n\sigma^{(1)}+2\sigma^{(1)}\kappa_n\big)\dot{L}^{(m)}_a n^a\\
& \st{(m)}{=}\dfrac{m\big(\lie_n\sigma^{(1)}+2\sigma^{(1)}\kappa_n\big)}{\mf n} \Big((1-\mf n)\kappa^{(m+1)} +(2-\mf n)\lie_n(\tr_P\bY^{(m)}) + (2m-\mf n)\kappa_n \tr_P\bY^{(m)}\Big)+\cdots.
\end{align*}
Finally, it is clear from \eqref{piece2}-\eqref{piece3} that $\mbox{III}=\cdots$. To show \eqref{Lnunordenmenos} we only need to add $\L_a^{(m+1)}n^a\st{(m)}{=}\mbox{I}+\mbox{II}+\mbox{III}$ and simplify.
	\end{proof}
\end{prop}

\end{appendices}

	\begingroup
	\let\itshape\upshape
	
	\renewcommand{\bibname}{References}
	\bibliographystyle{acm}
	\bibliography{library} 
	
\end{document}